%%%%%%%%%%%%%%%%%%%%%%%%%%%%%%%%%%%%%%%%%%%%%%%%%%%%%%%%%%%%%%%%%%%%%
%%                                                                 %%
%% Please do not use \input{...} to include other tex files.       %%
%% Submit your LaTeX manuscript as one .tex document.              %%
%%                                                                 %%
%% All additional figures and files should be attached             %%
%% separately and not embedded in the \TeX\ document itself.       %%
%%                                                                 %%
%%%%%%%%%%%%%%%%%%%%%%%%%%%%%%%%%%%%%%%%%%%%%%%%%%%%%%%%%%%%%%%%%%%%%

%%\documentclass[referee,sn-basic]{sn-jnl}% referee option is meant for double line spacing

%%=======================================================%%
%% to print line numbers in the margin use lineno option %%
%%=======================================================%%

%%\documentclass[lineno,sn-basic]{sn-jnl}% Basic Springer Nature Reference Style/Chemistry Reference Style

%%======================================================%%
%% to compile with pdflatex/xelatex use pdflatex option %%
%%======================================================%%

%%\documentclass[pdflatex,sn-basic]{sn-jnl}% Basic Springer Nature Reference Style/Chemistry Reference Style

%%\documentclass[sn-basic]{sn-jnl}% Basic Springer Nature Reference Style/Chemistry Reference Style
\documentclass[sn-mathphys,iicol]{sn-jnl}% Math and Physical Sciences Reference Style
%%\documentclass[sn-aps]{sn-jnl}% American Physical Society (APS) Reference Style
%%\documentclass[sn-vancouver]{sn-jnl}% Vancouver Reference Style
%%\documentclass[sn-apa]{sn-jnl}% APA Reference Style
%%\documentclass[sn-chicago]{sn-jnl}% Chicago-based Humanities Reference Style
%%\documentclass[sn-standardnature]{sn-jnl}% Standard Nature Portfolio Reference Style
%%\documentclass[default]{sn-jnl}% Default
%%\documentclass[default,iicol]{sn-jnl}% Default with double column layout

%%%% Standard Packages
%%<additional latex packages if required can be included here>
%%%%

\catcode`\|=12\relax  %% Required for Springer Nature Template to compile; gives errors on Tikz and pgf packs otherwise

\usepackage{tikz}
\usetikzlibrary{patterns}
\usetikzlibrary{chains}
\usetikzlibrary{shadows.blur}
\usetikzlibrary{plotmarks}
\usetikzlibrary{shapes}
\usetikzlibrary{arrows}
\usetikzlibrary{positioning}
\usetikzlibrary{automata}
\usetikzlibrary{calc}
%\usetikzlibrary{decorations.pathmorphing}
%\usepackage{extarrows}
%\usepackage{mdsymbol}

\usepackage{pgf}
\usepackage{pgfplots}
\usepackage{pgfplotstable}
\usepackage{booktabs}
\usepackage{colortbl}

\usepackage{alltt}

\algdef{SE}[DOWHILE]{Do}{doWhile}{\algorithmicdo}[1]{\algorithmicwhile\ #1}%

%% \makeatletter
%% \newcommand{\dashedrightarrow}[1][2pt]{%
%%   \settowidth{\@tempdima}{$\rightarrow$}\rightarrow% typeset arrow
%%   \makebox[-\@tempdima]{\hskip-1.5ex\color{white}\rule[0.5ex]{#1}{1pt}}% typeset overlay
%%   \phantom{\rightarrow}% advance appropriate horizontal distance
%% }
%% \makeatother

%% \newcounter{sarrow}
%% \newcommand\xrsquigarrow[1]{%
%% \stepcounter{sarrow}%
%% \begin{tikzpicture}[decoration=snake]
%% \node (\thesarrow) {\strut{\ensuremath {#1}}};
%% \draw[->,decorate] (\thesarrow.south west) -- (\thesarrow.south east);
%% \end{tikzpicture}%
%% }

%% \makeatletter
%% \newcommand\thefontsize[1]{{#1 The current font size is: \f@size pt\par}}
%% \makeatother

\makeatletter
\newcommand\llarge{\@setfontsize\llarge{14}{16}}
\makeatother

\newcommand{\integers}{\mathbb{Z}}

\newcommand{\ppre}{\ensuremath{\mathsf{Pre}}}

\newcommand{\ltrue}{\mathbf{tt}}
\newcommand{\lfalse}{\mathbf{ff}}

\newcommand{\domain}{\mathcal{D}}

\newcommand{\smem}{\mathcal{S}}

%algorithm names

%tool names
\newcommand{\ourtool}{\textsc{Vajra}}
\newcommand{\tiler}{\textsc{Tiler}}
\newcommand{\freqhorn}{\textsc{FreqHorn}}
\newcommand{\expl}{\textsc{expl}}
\newcommand{\viap}{\textsc{VIAP}}
\newcommand{\veriabs}{\textsc{VeriAbs}}
\newcommand{\symdiff}{\textsc{SymDiff}}
\newcommand{\zthree}{\textsc{Z3}}

\newcommand{\booster}{\textsc{Booster}}

\newcommand{\vaphor}{\textsc{Vaphor}}

\newtheorem{lemma}{Lemma}

\newcommand{\PP}{\ensuremath{\mathsf{P}}}

\newcommand{\EE}{\ensuremath{\mathsf{E}}}
\newcommand{\IE}{\ensuremath{\mathsf{IndE}}}
\newcommand{\LBE}{\ensuremath{\mathsf{LBnd}}}

\newcommand{\PB}{\ensuremath{\mathsf{PB}}}
\newcommand{\Stmt}{\ensuremath{\mathsf{St}}}
\newcommand{\Stmta}{\ensuremath{\mathsf{StLF}}}
\newcommand{\scVar}{\ensuremath{v}}
\newcommand{\lpVar}{\ensuremath{\ell}}
\newcommand{\ArVar}{\ensuremath{A}}
\newcommand{\OP}{\ensuremath{\mathsf{op}}}
\newcommand{\RelOP}{\ensuremath{\mathsf{relop}}}
\newcommand{\BoolE}{\ensuremath{\mathsf{BoolE}}}
\newcommand{\iif}{\ensuremath{\mathbf{if}}}
\newcommand{\eelse}{\ensuremath{\mathbf{else}}}
\newcommand{\tthen}{\ensuremath{\mathbf{then}}}
\newcommand{\ffor}{\ensuremath{\mathbf{for}}}

\newcommand{\cconst}{\ensuremath{\mathsf{c}}}

\newcommand{\AssignStmts}{\ensuremath{\mathsf{AssignSt}}}

\newcommand{\Unlabeled}{\ensuremath{\mathsf{U}}}

\newcommand{\infocomment}[1]{{\ttfamily\textcolor{darkgray}{\newline$\triangleright$ #1}}}
\renewcommand{\true}{\ensuremath{\mathsf{True}}}
\renewcommand{\false}{\ensuremath{\mathsf{False}}}
\newcommand{\worklist}{\ensuremath{\mathsf{WorkList}}}
\newcommand{\peelcount}{\ensuremath{\mathsf{PeelCount}}}
\newcommand{\peelnodes}{\ensuremath{\mathsf{PeelNodes}}}
\newcommand{\gluenodes}{\ensuremath{\mathsf{GlueNodes}}}
\newcommand{\processednodes}{\ensuremath{\mathsf{ProcessedNodes}}}
\newcommand{\condnodes}{\ensuremath{\mathsf{CondNodes}}}
\newcommand{\affectedvars}{\ensuremath{\mathsf{AffectedVars}}}

\newcommand{\inv}{\ensuremath{\mathsf{Inv}}}

\newcommand{\NonDefV}{\ensuremath \mathit{NDefV}}
\newcommand{\DefV}{\ensuremath \mathit{DefV}}
\newcommand{\defindex}{\ensuremath \mathit{defIndex}}
\newcommand{\useindex}{\ensuremath \mathit{useIndexSet}}
\newcommand{\Branch}[2]{\ensuremath \mathit{Branch}({#1},{#2})}
\newcommand{\Dep}[2]{\ensuremath \mathit{Dep}({#1},{#2})}
\newcommand{\PPP}[2]{\ensuremath \PP_{{#1}}^\star}
\newcommand{\Peel}[1]{\ensuremath \mathsf{Peel}({#1})}
\newcommand{\Path}[1]{\ensuremath \pi(\sigma, {#1})}
\newcommand{\NodeP}{n}
\newcommand{\Skip}{\ensuremath{\mathsf{skip}}}
\newcommand{\BoolCond}{\ensuremath{\mathsf{C}}}
\newcommand{\ProgFrag}{\ensuremath{\mathsf{S}}}
\newcommand{\TRPP}{\ensuremath{\mathsf{T}}}
\newcommand{\branchnodes}{\ensuremath{\mathsf{BranchNodes}}}

%% \newcommand{\change}[1]{{\textcolor{blue} {#1}}}
%% \newcommand{\new}[1]{{\textcolor{magenta} {#1}}}
%% \newcommand{\alert}[1]{{\textcolor{red} {#1}}}

%%%%%=============================================================================%%%%
%%%%  Remarks: This template is provided to aid authors with the preparation
%%%%  of original research articles intended for submission to journals published 
%%%%  by Springer Nature. The guidance has been prepared in partnership with 
%%%%  production teams to conform to Springer Nature technical requirements. 
%%%%  Editorial and presentation requirements differ among journal portfolios and 
%%%%  research disciplines. You may find sections in this template are irrelevant 
%%%%  to your work and are empowered to omit any such section if allowed by the 
%%%%  journal you intend to submit to. The submission guidelines and policies 
%%%%  of the journal take precedence. A detailed User Manual is available in the 
%%%%  template package for technical guidance.
%%%%%=============================================================================%%%%

\jyear{2022}%

%% as per the requirement new theorem styles can be included as shown below
\theoremstyle{thmstyleone}%
\newtheorem{theorem}{Theorem}%  meant for continuous numbers
%%\newtheorem{theorem}{Theorem}[section]% meant for sectionwise numbers
%% optional argument [theorem] produces theorem numbering sequence instead of independent numbers for Proposition
% 
%%\newtheorem{proposition}{Proposition}% to get separate numbers for theorem and proposition etc.

\theoremstyle{thmstyletwo}%
\newtheorem{example}{Example}%

\theoremstyle{thmstylethree}%
\newtheorem{definition}{Definition}%

\raggedbottom
%%\unnumbered% uncomment this for unnumbered level heads

\begin{document}

\title[Article Title]{Full-Program Induction: Verifying Array Programs sans Loop Invariants}

%%=============================================================%%
%% Prefix	-> \pfx{Dr}
%% GivenName	-> \fnm{Joergen W.}
%% Particle	-> \spfx{van der} -> surname prefix
%% FamilyName	-> \sur{Ploeg}
%% Suffix	-> \sfx{IV}
%% NatureName	-> \tanm{Poet Laureate} -> Title after name
%% Degrees	-> \dgr{MSc, PhD}
%% \author*[1,2]{\pfx{Dr} \fnm{Joergen W.} \spfx{van der} \sur{Ploeg} \sfx{IV} \tanm{Poet Laureate} 
%%                 \dgr{MSc, PhD}}\email{iauthor@gmail.com}
%%=============================================================%%

\author[1]{\fnm{Supratik} \sur{Chakraborty}}\email{supratik@cse.iitb.ac.in}

\author[1]{\fnm{Ashutosh} \sur{Gupta}}\email{akg@cse.iitb.ac.in}

\author*[1,2]{\fnm{Divyesh} \sur{Unadkat}}\email{divyesh.unadkat@tcs.com}

\affil[1]{\orgdiv{Computer Science and Engineering}, \orgname{Indian Institute of Technology Bombay}, \orgaddress{\street{Main Gate Rd, IIT Area, Powai}, \city{Mumbai}, \postcode{400076}, \state{Maharashtra}, \country{India}}}

\affil*[2]{\orgname{TCS Research}, \orgaddress{\street{54-B Hadapsar Industrial Estate, Hadapsar}, \city{Pune}, \postcode{411013}, \state{Maharashtra}, \country{India}}}

\abstract{ Arrays are commonly used in a variety of software to store
  and process data in loops.  Automatically proving safety properties
  of such programs that manipulate arrays is challenging.  We present
  a novel verification technique, called \emph{full-program
  induction}, for proving (a sub-class of) quantified as well as
  quantifier-free properties of programs manipulating arrays of
  parametric size $N$.  Instead of inducting over individual loops,
  our technique inducts over the entire program (possibly containing
  multiple loops) directly via the program parameter $N$.  The
  technique performs non-trivial transformations of the given program
  and pre-conditions during the inductive step.  The transformations
  assist in effectively reducing the assertion checking problem by
  transforming a program with multiple loops to a program which has
  fewer and simpler loops or is loop-free.  Significantly,
  \emph{full-program induction} does not require generation or use of
  loop-specific invariants.  To assess the efficacy of our technique,
  we have developed a prototype tool called {\ourtool}.  We
  demonstrate the performance of {\ourtool} vis-a-vis several
  state-of-the-art tools on a large set of array manipulating
  benchmarks from the international software verification competition
  (SV-COMP) and on several programs inspired by algebraic functions
  that perform polynomial computations.  }

\keywords{Full-program induction, inductive proof, Hoare triple,
  array programs, difference program, difference pre-condition,
  quantified assertions, loop invariant}

%%\pacs[JEL Classification]{D8, H51}

%%\pacs[MSC Classification]{35A01, 65L10, 65L12, 65L20, 65L70}

\maketitle

\section{Introduction}
\label{sec:intro}
Use of software controlled systems in industrial and household
appliances is constantly increasing.  Functionalities of such software
are programmed with extensive use of loops and conditional statements
that manipulate different data structures such as arrays, lists, and
vectors to store and process data during its operation.  Programs with
loops manipulating arrays are quite common in many such applications.
These programs are expected to be of immensely high-quality as their
erroneous functioning can cause adversities to businesses as well as
human lives.  Thus, ensuring correctness of these programs is of
paramount importance.  Unfortunately, assertion checking in such
programs is, in general, undecidable.  Existing tools therefore use a
combination of verification techniques that work well for certain
classes of programs and assertions, and yield conservative results
otherwise.

In this paper, we present a new verification technique,
called \emph{full-program induction}, to add to this arsenal of
techniques.  Specifically, we focus on programs with loops
manipulating arrays, where the size of each array is a symbolic
integer parameter $N~(> 0)$.  We allow (a sub-class of) quantified and
quantifier-free pre- and post-conditions that may depend on the
symbolic parameter $N$.  Thus, the problem we wish to solve can be
viewed as checking the validity of a parameterized Hoare triple
$\{\varphi(N)\} \;\PP_N\; \{\psi(N)\}$ for all values of $N~(> 0)$,
where the program $\PP_N$ computes with arrays of size $N$, and $N$ is
a free variable in $\varphi(\cdot)$ and $\psi(\cdot)$.

Like earlier verification approaches \cite{kind}, our technique also
relies on mathematical induction to reason about programs with loops.
However, the way in which the inductive claim is formulated and proved
differs significantly from the previous techniques.  Specifically, (i)
we \emph{induct on the full program} (possibly containing multiple
loops) with parameter $N$ and not on iterations of individual loops in
the program, (ii) we perform \emph{non-trivial correct-by-construction
transformation of the given program and the pre-condition}, whenever
feasible, to simplify the inductive step of reasoning, (iii)
we \emph{strengthen the pre- and post-condition simultaneously during
the inductive step} using the auxiliary inductive predicates obtained
by employing Dijkstra's weakest pre-condition computation, (iv)
we \emph{recursively apply} the technique to prove the inductive step
and most importantly (v) we \emph{do not require explicit or implicit
loop-specific inductive invariants} to be provided by the user or
generated by a solver (viz. by constrained Horn clause
solvers \cite{chc,quic3,freqhorn} or recurrence
solvers \cite{viap,aligators}).  The combination of these factors
often reduces reasoning about a program with multiple loops to
reasoning about one with fewer (sometimes even none) and ``simpler''
loops, thereby simplifying proof goals.  In this paper, we demonstrate
this, focusing on programs with sequentially composed, but non-nested
loops.

\subsection{Motivating Examples}

We present a couple of examples to illustrate our technique and
showcase its salient features.  The first example presents the basic
ideas behind full-program induction.  The second example highlights
various nuanced features of the technique and is used as a running
example in this paper.

\begin{figure}[h]
\begin{alltt}
// assume(\(\true\))

1. for (int t1=0; t1<N; t1=t1+1) \{
2.   if (t1==0) \{ A[t1] = 6; \}
3.   else \{ A[t1] = A[t1-1] + 6; \}
4. \}

5. for (int t2=0; t2<N; t2=t2+1) \{
6.   if (t2==0) \{ B[t2] = 1; \}
7.   else \{ B[t2] = B[t2-1] + A[t2-1]; \}
8. \}

9. for (int t3=0; t3<N; t3=t3+1) \{
10.  if (t3==0) \{ C[t3] = 0; \}
11.  else \{ C[t3] = C[t3-1] + B[t3-1]; \}
12.\}

// assert(\(\forall\)i \(\in\) [0,N), C[i] = i\(\cube\))
\end{alltt}
\caption{Original Hoare triple}
\label{fig:ex}
\end{figure}

Fig. \ref{fig:ex} shows an example of a Hoare triple, where the pre-
and post-conditions are specified using {\tt assume} and {\tt assert}
statements.  This triple effectively verifies the formula
$\sum\limits_{j=0}^{i-1} \left(1 + \sum\limits_{k=0}^{j-1}
6\cdot(k+1)\right) = i^3$ for all $i \in \{0 \ldots N-1\}$, and for
all $N > 0$.  Although each loop in Fig. \ref{fig:ex} is simple, their
sequential composition makes it difficult even for state-of-the-art
tools like {\viap} \cite{viap}, {\veriabs} \cite{veriabs20},
{\freqhorn} \cite{freqhorn}, {\tiler} \cite{sas17},
{\vaphor} \cite{vaphor}, or {\booster} \cite{booster} to prove the
post-condition correct.  In fact, none of the above tools succeed in
automatically proving the quantified post-condition in
Fig. \ref{fig:ex}.  In contrast, our technique \emph{full-program
induction} proves the post-condition in Fig. \ref{fig:ex} correct
within a few seconds.

\begin{figure}[h]
\begin{alltt}
// assume(\(\true\))

1.  A[0] = 6;
2.  B[0] = 1;
3.  C[0] = 0;

// assert((C[0] = 0\(\cube\)) \(\wedge\)
//        (B[0] = 1\(\cube\) - 0\(\cube\)) \(\wedge\)
//        (A[0] = 2\(\cube\) - 2\(\times\)1\(\cube\) + 0\(\cube\)))
\end{alltt}
\caption{Base-case Hoare triple}
\label{fig:ex-base}
\end{figure}

Full-program induction reduces checking the validity of the Hoare
triple in Fig. \ref{fig:ex} to checking the validity of two
``simpler'' Hoare triples, represented in Figs. \ref{fig:ex-base}
and \ref{fig:ex-ind}.  The base case of our inductive reasoning is
shown in Fig. \ref{fig:ex-base}, where every loop in the program is
statically unrolled a fixed number of times after instantiating the
program parameter $N$ to a small constant value (here $N=1$).  As the
induction hypothesis, we assume that the Hoare triple
$\{\varphi(N-1)\}$ $\;{\PP_{N-1}}\;$ $\{\psi(N-1)\}$ holds for values
of $N > 1$.  Note that this assumption does not relate to a specific
loop in the program, but to the entire program ${\PP_N}$.  For the
motivating example, the induction hypothesis states that the entire
Hoare triple in Fig. \ref{fig:ex}, after substituting $N$ with $N-1$,
holds.  Notice that the induction hypothesis is on the entire program
including all three loops and not on individual loops.  The inductive
step of the reasoning shown in Fig. \ref{fig:ex-ind} proves the
post-condition, by automatically generating the computation to be
performed after the program with parameter $N-1$ has executed and
strengthening the pre- and post-conditions using auxiliary predicates.
Note that all the program statements in Fig. \ref{fig:ex-ind} have
syntactic counterparts in Fig. \ref{fig:ex}, but this may not be the
case in general.  We conceptualize the computation in the inductive
step using the notions of {\em difference program} and {\em difference
pre-condition} in the later sections.  Effectively, we reasoned about
three sequentially composed loops in Fig. \ref{fig:ex} together,
without the need for any implicitly or explicitly specified loop
invariants.  We defer a discussion of how our technique computes these
Hoare triples and how auxiliary predicates are generated to
iteratively strengthen the pre- and post-conditions to
Sect. \ref{sec:algorithms}, where we present the details of our
algorithms.

\begin{figure}[h]
\begin{alltt}
// assume(
//  (N > 1) \(\wedge\) (C_Nm1[N-2] = (N-2)\(\cube\)) \(\wedge\)
//  (B_Nm1[N-2] = (N-1)\(\cube\) - (N-2)\(\cube\)) \(\wedge\)
//  (A_Nm1[N-2] = N\(\cube\) - 2\(\times\)(N-1)\(\cube\) + (N-2)\(\cube\))
// )

1.  A[N-1] = A_Nm1[N-2] + 6;
2.  B[N-1] = B_Nm1[N-2] + A_Nm1[N-2];
3.  C[N-1] = C_Nm1[N-2] + B_Nm1[N-2];

// assert(
//  (C[N-1] = (N-1)\(\cube\)) \(\wedge\)
//  (B[N-1] = N\(\cube\) - (N-1)\(\cube\)) \(\wedge\)
//  (A[N-1] = (N+1)\(\cube\) - 2\(\times\)N\(\cube\) + (N-1)\(\cube\))
// )
\end{alltt}
\caption{Inductive-step Hoare triple}
\label{fig:ex-ind}
\end{figure}

It is important to mention a few things here to highlight the
simplifications illustrated by the Hoare triples in
Figs. \ref{fig:ex-base} and \ref{fig:ex-ind} that resulted from the
application of the full-program induction technique on the problem in
Fig. \ref{fig:ex}.  First, the programs in Figs. \ref{fig:ex-base}
and \ref{fig:ex-ind} are loop-free.  Second, their pre- and
post-conditions are quantifier-free.  Third, the validity of these
Hoare triples (Figs. \ref{fig:ex-base} and \ref{fig:ex-ind}) can be
easily proved, e.g. by bounded model checking \cite{bmc} with a
back-end SMT solver like {\zthree} \cite{z3}.  Fourth, the value
computed in each iteration of each loop in Fig. \ref{fig:ex} is
data-dependent on previous iterations of the respective loops as well
as on the value computed in previous loops.  Even though none of these
loops can be trivially translated to a set of parallel assignments,
our method still succeeds in automating the inductive step of the
analysis.  Last, we did not require any specialized constraint solving
techniques like recurrence solving, theory of uninterpreted functions
or constrained Horn clause solving to verify these Hoare triples, thus
making our technique orthogonal to these approaches when proving
properties of array programs.

%% \subsection{\new{Example with Stringent Difference Computation}}

Now consider the Hoare triple shown in Fig. \ref{fig:ss}.  The program
updates a scalar variable {\tt S} and an array variable {\tt A}.  The
first loop adds the value of each element in array {\tt A} to variable
{\tt S}.  The second loop adds the value of {\tt S} to each element of
{\tt A}.  The last loop aggregates the updated content of {\tt A} in
{\tt S}.  The pre-condition $\varphi(N)$ is a universally quantified
formula on array {\tt A} stating that each element has the value $1$.
We need to establish the post-condition $\psi(N)$, which is a
predicate on {\tt S} and {\tt N}.  Note that the post-condition has
non-linear terms making it quite challenging to prove.  We will use
the Hoare triple in Fig. \ref{fig:ss} as our running example to
illustrate important aspects of the full-program induction technique.

%% \begin{wrapfigure}[17]{r}{0.38\textwidth}
%% \vspace{-8ex}

\begin{figure}[h]
%% \begin{minipage}{0.53\linewidth}
\begin{alltt}
// assume(\(\forall\)i\(\in\)[0,N) A[i] = 1)

1.  S = 0;
2.  for(i=0; i<N; i++) \{
3.    S = S + A[i];
4.  \}

5.  for(i=0; i<N; i++) \{
6.    A[i] = A[i] + S;
7.  \}

8.  for(i=0; i<N; i++) \{
9.    S = S + A[i];
10. \}

// assert(S = N \(\times\) (N+2))
\end{alltt}
%% \end{minipage}
%% \begin{minipage}{0.46\linewidth}
%% \caption
%% { Running example for illustrating the algorithms to verify the given
%%   post-condition.  Here, the pre-condition $\varphi(N)$ is a
%%   quantified formula on array {\tt A}.  The program updates a scalar
%%   {\tt S} and an array {\tt A}.  The first loop aggregates the content
%%   of array {\tt A} in {\tt S}.  The second loop adds the value of {\tt
%%     S} in each cell of {\tt A}.  The last loop again aggregates the
%%   update content of {\tt A} in {\tt S}.  The post-condition $\psi(N)$
%%   says that {\tt S} is a non-linear function of {\tt N}.  }
%% \end{minipage}
\caption{Running example}
\label{fig:ss}
\end{figure}

%% \end{wrapfigure}

  %% Running example to demonstrate the technique

Since the program $\PP_N$ updates the same scalar variable {\tt S} and
the array {\tt A} in multiple sequentially composed loops, we rename
these scalars and arrays such that each loop in $\PP_N$ updates its own
copy of scalar variables and arrays.  This ensures that when
$\PP_{N-1}$ terminates we have access to the values of these variables
after each loop in the program.  The renamed program is shown in the
Hoare triple in Fig. \ref{fig:ex-ss-motex}(a).

In the base case of our inductive reasoning, we instantiate the
parameter $N$ to a small constant value (say $N=1$).  As a result,
every loop in the program $\PP_N$ in Fig. \ref{fig:ex-ss-motex}(a) can
be statically unrolled a fixed number of times.  The resulting Hoare
triple can be easily compiled to a first-order logic formula and
verified using an SMT solver.  As the induction hypothesis, we assume
that the Hoare triple $\{\varphi(N-1)\}$ $\;{\PP_{N-1}}\;$
$\{\psi(N-1)\}$, shown in Fig. \ref{fig:ex-ss-motex}(b), holds for
values of $N > 1$.  This Hoare triple is obtained by substituting $N$
with $N-1$ in the entire Hoare triple in
Fig. \ref{fig:ex-ss-motex}(a).

\begin{figure*}[tb]
%\hrule\hrule
%\vrule\vrule
\begin{minipage}{0.30\linewidth}
\begin{alltt}
// assume(\(\forall\)i\(\in\)[0,N) A[i]=1)

1.  S = 0;
2.  for(i=0; i<N; i++) \{
3.    S = S + A[i];
4.  \}

5.  for(i=0; i<N; i++) \{
6.    A1[i] = A[i] + S;
7.  \}

8.  S1 = S;
9.  for(i=0; i<N; i++) \{
10.   S1 = S1 + A1[i];
11. \}

// assert(S1=N\(\times\)(N+2))
\end{alltt}
\begin{center}
  (a)
\end{center}
\end{minipage}
\vrule\vrule
\begin{minipage}{0.35\linewidth}
\begin{alltt}
// assume(\(\forall\)i\(\in\)[0,N-1) A[i]=1)

1.  S_Nm1 = 0;
2.  for(i=0; i<N-1; i++) \{
3.    S_Nm1 = S_Nm1 + A[i];
4.  \}

5.  for(i=0; i<N-1; i++) \{
6.    A1_Nm1[i] = A[i]+S_Nm1;
7.  \}

8.  S1 = S;
9.  for(i=0; i<N-1; i++) \{
10.   S1_Nm1=S1_Nm1+A1_Nm1[i];
11. \}

// assert(S1_Nm1=(N-1)\(\times\)(N+1))
\end{alltt}
\begin{center}
  (b)
\end{center}
\end{minipage}
\vrule\vrule
\begin{minipage}{0.35\linewidth}
\begin{alltt}
// assume(N>1 \(\wedge\) A[N-1]=1
// \(\wedge\) S1_Nm1=(N-1)\(\times\)(N+1)
// \(\wedge\) \(\forall\)i\(\in\)[0,N-1) A1_Nm1[i]=N
// \(\wedge\) S_Nm1=N-1)

1.  S = S_Nm1 + A[N-1];

2.  for(i=0; i<N-1; i++) \{
3.    A1[i] = A1_Nm1[i] + 1;
4.  \}
5.  A1[N-1] = A[N-1] + S;

6.  S1 = S1_Nm1 + A[N-1];
7.  S1 = S1 + (N-1);
8.  S1 = S1 + A1[N-1];

// assert(S1=N\(\times\)(N+2) \(\wedge\) S=N
// \(\wedge\) \(\forall\)i\(\in\)[0,N) A1[i]=N+1)
\end{alltt}
\begin{center}
  (c)
\end{center}
\end{minipage}
%\hrule\hrule
%\vrule\vrule
\vspace{1ex}
\caption{(a) Given Hoare triple on $\PP_N$ (b) induction hypothesis
  Hoare triple on $\PP_{N-1}$ and (c) inductive step Hoare triple on
  $\partial \PP_N$ after simplification and strengthening}
\label{fig:ex-ss-motex}
\end{figure*}

The Hoare triple in Fig. \ref{fig:ex-ss-motex}(c) is computed for the
inductive step.  Intuitively, the difference program $\partial \PP_N$
recovers the effect of the computation in $\PP_N$ on all scalar
variables and arrays after the computation in $\PP_{N-1}$ has been
performed.  It includes the iterations of a loop in $\PP_N$ that are
missed by $\PP_{N-1}$.  When program statements are impervious to the
value of $N$, the values computed in such statements are the same in
$\PP_N$ and $\PP_{N-1}$, and hence, they may not need any modification.
However, $\partial \PP_N$ may contain code to ``rectify'' values of
variables and arrays that have different values at corresponding
statements in $\PP_N$ vis-a-vis $\PP_{N-1}$.  The code, possibly
consisting of loops, to rectify the values of variables and arrays is
further simplified whenever possible.  Consequently, not all program
statements of $\PP_N$ in Fig. \ref{fig:ex-ss-motex}(a) may have a
syntactic counterpart in Fig. \ref{fig:ex-ss-motex}(c) and vice-versa.
We present in detail the algorithms for the computation and
simplification of the difference program in Sects. \ref{sec:diff-prog}
and \ref{sec:simp-diff}.  The inductive step may not be immediately
established, in which case we strengthen the pre- and post-conditions
using automatically inferred auxiliary predicates as shown in
Fig. \ref{fig:ex-ss-motex}(c).

\subsection{Beyond Loop-Invariant based Proofs}

Techniques based on synthesis and use of loop invariants are popularly
used to reason about programs with loops.  These techniques have been
successfully applied to verify different classes of array manipulating
programs,
viz. \cite{Gopan,Halbwachs,Rival,ArrayCousotCL11,Gulwani,Srivastava09,Dirk07,Jhala,freqhorn}.
If we were to prove the assertion in Fig. \ref{fig:ex} using such
techniques, it would be necessary to use appropriate loop-specific
invariants for each of the three loops in Fig. \ref{fig:ex}. The
weakest loop invariants needed to prove the post-condition in this
example are: $\forall i \in [0,t1)\; (A[i] = 6i + 6)$ for the first
loop (lines $1$-$4$), $\forall j \in [0,t2)\; (B[j] = 3j^2 + 3j +
1) \wedge (A[j] = 6j + 6)$ for the second loop (lines $5$-$8$), and
$\forall k \in [0,t3)\; (C[k] = k^3) \wedge (B[k] = 3k^2 + 3k + 1)$
for the third loop (lines $9$-$12$).  Notice that these invariants are
quantified and have non-linear terms.  Similarly, the weakest loop
invariants needed to prove the post-condition for the program in
Fig. \ref{fig:ex-ss-motex}(a) are: $\forall j \in [0,i)\; (A[j] =
1) \wedge (S = j)$ for the first loop (lines $2$-$4$), $\forall k \in
[0,i)\; (A1[k] = N + 1) \wedge (A[k] = 1) \wedge (S = N)$ for the
second loop (lines $5$-$7$), and $\forall l \in [0,i)\; (A1[l] =
N+1) \wedge (S1 = l \times (N+1) + N)$ for the third loop (lines
$9$-$11$).

Unfortunately, automatically deriving such quantified non-linear
inductive invariants for each loop is far from trivial.
Template-based invariant generators, viz. \cite{houdini,daikon}, are
among the best-performers when generating such complex invariants.
However, their abilities are fundamentally limited by the set of
templates from which they choose.  We therefore choose not to depend
on inductive loop-invariants at all in our work.  Instead, we make use
of inductive pre- and post-conditions -- a notion that is related to,
yet significantly different from loop-specific invariants.
Specifically, inductive pre- and post-conditions are computed for the
entire program, possibly consisting of multiple loops, instead of for
each loop in the program.

%\subsection{Variation in the Generated Proofs}

As is clear from the discussion above, the primary difference between
a proof generated by an invariant synthesis technique and the proof
generated by our method is that we no longer need loop-specific safe
inductive invariants.  Instead, we generate and verify the Hoare
triples shown in Figs. \ref{fig:ex-base} and \ref{fig:ex-ind}
considering the entire program $\PP_N$ in Fig. \ref{fig:ex} and the
Hoare triple shown in Fig. \ref{fig:ex-ss-motex}(c) considering the
entire program $\PP_N$ in Fig. \ref{fig:ex-ss-motex}(a).
Automatically generating these Hoare triples in some cases may be more
difficult than automatically generating inductive invariants for each
loop and vice versa.  However, as demonstrated by the motivating
examples, there are several complex programs, where it may be easier
to generate these Hoare triples than compute safe inductive invariants
for individual loops.  It is a considerable challenge for verification
techniques to be able to automatically generate these invariants and
to the best of our knowledge none of the current state-of-the-art
techniques do so.

\subsection{Effectiveness of Full-Program Induction}

We have implemented the full-program induction technique in a
prototype tool called {\ourtool}.  Written in C++, the tool is built
on top of a compiler framework (LLVM/CLANG \cite{clang}) and uses an
off-the-shelf SMT solver ({\zthree} \cite{z3}) at the back-end.  Our
experiments show that the full-program induction technique is able to
solve several difficult problem instances, which other techniques
either fail to solve, or can solve only with the help of sophisticated
recurrence solvers.  {\ourtool} is significantly more efficient as
compared to other tools on a set of benchmarks.

Needless to say, each approach has its own strengths and limitations,
and the right choice always depends on the problem at hand.
Full-program induction is no exception, and despite its several
strengths, it has its own limitations, which we discuss in detail in
Sect.~\ref{sec:limitations}.

The full-program induction technique is orthogonal to other
verification approaches proposed in literature, making it suitable to
be a part of an arsenal of verification techniques.  It has already
been incorporated within a verification tool, namely
{\veriabs} \cite{veriabs20}.  Since the 2020 edition of the
international software verification competition (SV-COMP), {\veriabs}
invokes full-program induction (via our tool {\ourtool}) in its
pipeline of techniques for verifying programs with arrays from the set
of benchmarks in the verification competition
(refer \cite{veriabs20}).

\subsection{Primary Contributions of our Work}

This paper is a revised and extended version of \cite{tacas20}.  Our
main contributions can be summarized as follows:

\begin{enumerate}
\item We introduce \emph{full-program induction} as a technique for
      reasoning about assertions in programs with loops manipulating
      arrays with parametric size bounds. Full-program induction does
      not need loop-specific invariants in order to prove assertions,
      even when the program contains multiple sequentially composed
      loops.
\item We describe practical algorithms for performing \emph{full-program
      induction}.  We elaborate the generalized algorithms for
      computing the difference program and the difference
      pre-condition.
\item We present a new algorithm to compute a progress measure,
      based on the characteristics of the difference program. This
      gives a measure of how easy it is to prove the inductive step
      of our technique using constraint solving based techniques like
      bounded model checking.
\item We give rigorous proofs of correctness for the presented algorithms.
      We demonstrate these algorithms using a running example.
\item We present generalizations of the full-program induction technique
      to programs with multiple parameters and loops with increasing and/or
      decreasing loop counters.
\item We describe a prototype tool {\ourtool} that implements the
      algorithms for performing full-program induction, using (i) the
      compiler framework \textsc{LLVM/CLANG} for analysis and
      transformation of the input program and (ii) an off-the-shelf
      SMT solver, viz. {\zthree}, at the back-end to discharge
      verification conditions.
\item We present an extensive experimental evaluation on a large suite
      of benchmarks that manipulate arrays.  {\ourtool} outperforms
      the state-of-the-art tools {\viap}, {\veriabs}, {\booster},
      {\vaphor}, and {\freqhorn}, on the set of benchmark programs.
\end{enumerate}

Several contributions listed above are beyond those presented in
\cite{tacas20}.  These include the contributions $2$, $3$, $4$, $5$,
and $7$.

The remainder of the paper is structured as follows.  In
Sect. \ref{sec:overview}, we give a formal overview of the
full-program induction technique.  Sect. \ref{sec:prelim} presents the
syntactic restrictions on the program as well as the pre- and
post-conditions and the representation of programs as control flow
graphs.  Sect. \ref{sec:diffcomp} discusses the algorithms for
computing the difference program and the difference pre-condition, as
well as the pre-requisite analyses and transformations.  In
Sect. \ref{sec:algorithms}, we present the algorithms for full-program
induction, prove their correctness and demonstrate each algorithm on
the running example.  In Sect. \ref{sec:progress}, we give an
algorithm to check whether the recursive application of our technique
will eventually be able to verify the given program.
Sect. \ref{sec:generalized} talks of the generalizations of our
technique in different settings.  Sect. \ref{sec:experiments} presents
the implementation of our technique in {\ourtool}, its evaluation on a
set of benchmarks and comparison vis-a-vis state-of-the-art tools.  In
Sect. \ref{sec:related}, we discuss the related techniques from
literature.  Finally, Sect. \ref{sec:conc} presents concluding remarks
on our work and possible future directions.

\section{Overview of Full-Program Induction}
\label{sec:overview}
\begin{wrapfigure}[15]{r}{0.13\textwidth}
%\vspace{-4ex}
\centering

\resizebox{0.13\textwidth}{!}{

\tikzset{every picture/.style={line width=0.75pt}} %set default line width to 0.75pt        

\begin{tikzpicture}[x=0.75pt,y=0.75pt,yscale=-1,xscale=1]
%uncomment if require: \path (0,300); %set diagram left start at 0, and has height of 300

%Shape: Cloud [id:dp754300947332163] 
  \draw  [line width=3, fill=white] (24.46,33.67) .. controls (23.63,27.14) and (26.38,20.67) .. (31.55,17.02) .. controls (36.72,13.36) and (43.4,13.15) .. (48.76,16.49) .. controls (50.66,12.69) and (54.13,10.07) .. (58.14,9.41) .. controls (62.14,8.76) and (66.19,10.15) .. (69.08,13.17) .. controls (70.69,9.73) and (73.87,7.41) .. (77.48,7.05) .. controls (81.09,6.69) and (84.62,8.33) .. (86.81,11.39) .. controls (89.74,7.74) and (94.38,6.2) .. (98.75,7.44) .. controls (103.12,8.68) and (106.41,12.48) .. (107.21,17.19) .. controls (110.79,18.22) and (113.78,20.86) .. (115.39,24.41) .. controls (117.01,27.96) and (117.09,32.09) .. (115.63,35.71) .. controls (119.16,40.58) and (119.98,47.07) .. (117.8,52.76) .. controls (115.61,58.44) and (110.75,62.47) .. (105.02,63.34) .. controls (104.98,68.68) and (102.22,73.58) .. (97.8,76.15) .. controls (93.39,78.72) and (88.01,78.56) .. (83.74,75.73) .. controls (81.92,82.13) and (76.8,86.83) .. (70.59,87.81) .. controls (64.38,88.79) and (58.2,85.88) .. (54.71,80.33) .. controls (50.43,83.06) and (45.3,83.85) .. (40.47,82.51) .. controls (35.64,81.18) and (31.53,77.82) .. (29.04,73.22) .. controls (24.67,73.76) and (20.45,71.36) .. (18.46,67.2) .. controls (16.48,63.04) and (17.16,58.02) .. (20.17,54.62) .. controls (16.27,52.18) and (14.28,47.35) .. (15.24,42.64) .. controls (16.2,37.94) and (19.88,34.42) .. (24.38,33.92) ;
\draw  [line width=3, fill=white] (20.17,54.62) .. controls (22.01,55.77) and (24.13,56.29) .. (26.26,56.11)(29.04,73.22) .. controls (29.96,73.1) and (30.85,72.86) .. (31.71,72.5)(54.71,80.33) .. controls (54.07,79.3) and (53.53,78.21) .. (53.1,77.06)(83.74,75.73) .. controls (84.07,74.57) and (84.29,73.37) .. (84.38,72.15)(105.02,63.34) .. controls (105.06,57.66) and (102.02,52.46) .. (97.2,49.97)(115.63,35.71) .. controls (114.85,37.65) and (113.66,39.36) .. (112.15,40.73)(107.21,17.19) .. controls (107.35,17.97) and (107.41,18.76) .. (107.4,19.56)(86.81,11.39) .. controls (86.08,12.3) and (85.48,13.31) .. (85.03,14.41)(69.08,13.17) .. controls (68.69,14) and (68.4,14.87) .. (68.21,15.77)(48.76,16.49) .. controls (49.89,17.19) and (50.94,18.04) .. (51.88,19.01)(24.46,33.67) .. controls (24.58,34.57) and (24.76,35.46) .. (25.01,36.33) ;
%Shape: Cloud [id:dp8136891963369282] 
\draw  [line width=3, fill=white] (24.46,230.67) .. controls (23.63,224.14) and (26.38,217.67) .. (31.55,214.02) .. controls (36.72,210.36) and (43.4,210.15) .. (48.76,213.49) .. controls (50.66,209.69) and (54.13,207.07) .. (58.14,206.41) .. controls (62.14,205.76) and (66.19,207.15) .. (69.08,210.17) .. controls (70.69,206.73) and (73.87,204.41) .. (77.48,204.05) .. controls (81.09,203.69) and (84.62,205.33) .. (86.81,208.39) .. controls (89.74,204.74) and (94.38,203.2) .. (98.75,204.44) .. controls (103.12,205.68) and (106.41,209.48) .. (107.21,214.19) .. controls (110.79,215.22) and (113.78,217.86) .. (115.39,221.41) .. controls (117.01,224.96) and (117.09,229.09) .. (115.63,232.71) .. controls (119.16,237.58) and (119.98,244.07) .. (117.8,249.76) .. controls (115.61,255.44) and (110.75,259.47) .. (105.02,260.34) .. controls (104.98,265.68) and (102.22,270.58) .. (97.8,273.15) .. controls (93.39,275.72) and (88.01,275.56) .. (83.74,272.73) .. controls (81.92,279.13) and (76.8,283.83) .. (70.59,284.81) .. controls (64.38,285.79) and (58.2,282.88) .. (54.71,277.33) .. controls (50.43,280.06) and (45.3,280.85) .. (40.47,279.51) .. controls (35.64,278.18) and (31.53,274.82) .. (29.04,270.22) .. controls (24.67,270.76) and (20.45,268.36) .. (18.46,264.2) .. controls (16.48,260.04) and (17.16,255.02) .. (20.17,251.62) .. controls (16.27,249.18) and (14.28,244.35) .. (15.24,239.64) .. controls (16.2,234.94) and (19.88,231.42) .. (24.38,230.92) ;
\draw  [line width=3, fill=white] (20.17,251.62) .. controls (22.01,252.77) and (24.13,253.29) .. (26.26,253.11)(29.04,270.22) .. controls (29.96,270.1) and (30.85,269.86) .. (31.71,269.5)(54.71,277.33) .. controls (54.07,276.3) and (53.53,275.21) .. (53.1,274.06)(83.74,272.73) .. controls (84.07,271.57) and (84.29,270.37) .. (84.38,269.15)(105.02,260.34) .. controls (105.06,254.66) and (102.02,249.46) .. (97.2,246.97)(115.63,232.71) .. controls (114.85,234.65) and (113.66,236.36) .. (112.15,237.73)(107.21,214.19) .. controls (107.35,214.97) and (107.41,215.76) .. (107.4,216.56)(86.81,208.39) .. controls (86.08,209.3) and (85.48,210.31) .. (85.03,211.41)(69.08,210.17) .. controls (68.69,211) and (68.4,211.87) .. (68.21,212.77)(48.76,213.49) .. controls (49.89,214.19) and (50.94,215.04) .. (51.88,216.01)(24.46,230.67) .. controls (24.58,231.57) and (24.76,232.46) .. (25.01,233.33) ;
%Shape: Rectangle [id:dp8914297397737705] 
\draw  [line width=3, fill=white, draw=black] (16,111) -- (123,111) -- (123,180) -- (16,180) -- cycle;

% Text Node
\draw (69,44) node  [font=\LARGE]  {$\boldsymbol{\varphi}${\bf(N)}};
% Text Node
\draw (68,241) node  [font=\LARGE]  {$\boldsymbol{\psi}${\bf(N)}};
% Text Node
\draw (71,147) node  [font=\LARGE]  {$\mathbf{P_{N}}$};

\end{tikzpicture}

}
\caption{Goal}
\label{fig:ht-pn}
\end{wrapfigure}

We now elaborate on the core idea behind the {\em full-program
induction} technique.  Our goal is to check the validity of the
parameterized Hoare triple $\{\varphi(N)\}$ $\;\PP_{N}\;$
$\{\psi(N)\}$ for all $N > 0$.  A visual representation of this Hoare
triple is shown in Fig. \ref{fig:ht-pn}, where the clouds represent
(possibly quantified) formulas and boxes represent programs/code
fragments.

Intuitively, at a conceptual level, our approach works like any other
inductive reasoning technique.  However, the induction is over the
entire program, via the program parameter $N$, and not on the
individual loops in the program.

We first check the base case, where we verify that the parameterized
Hoare triple holds for some small values of $N$, say $0 < N \le M$.
We rely on an important, yet reasonable, assumption that can be stated
as follows: \emph{For every value of $N~(> 0)$, every loop in
${\PP_N}$ can be statically unrolled a number (say $f(N)$) of times
that depends only on $N$, to yield a loop-free program
$\widehat{\PP_N}$ that is semantically equivalent to ${\PP_N}$.}  Note
that this does not imply that reasoning about loops can be translated
into loop-free reasoning.  In general, $f(N)$ is a non-constant
function, and hence, the number of unrollings of loops in $\PP_N$ may
strongly depend on $N$.  In our experience, loops in a vast majority
of array manipulating programs (including Figs. \ref{fig:ex} and \ref{fig:ss}
and all our benchmarks) satisfy the above assumption.  Consequently, the base
case of our induction reduces to checking a Hoare triple for a
loop-free program.  Checking a Hoare triple for a loop-free program is
easily achieved by compiling the pre-condition, program and
post-condition into an SMT formula, whose (un)satisfiability can be
checked with an off-the-shelf back-end SMT solver.

\begin{wrapfigure}[15]{r}{0.16\textwidth}
%\vspace{1ex}
\centering

\resizebox{0.13\textwidth}{!}{

\tikzset{every picture/.style={line width=0.75pt}} %set default line width to 0.75pt        

\begin{tikzpicture}[x=0.75pt,y=0.75pt,yscale=-1,xscale=1]
%uncomment if require: \path (0,300); %set diagram left start at 0, and has height of 300

%Shape: Cloud [id:dp754300947332163] 
  \draw  [line width=3, fill=white] (24.46,33.67) .. controls (23.63,27.14) and (26.38,20.67) .. (31.55,17.02) .. controls (36.72,13.36) and (43.4,13.15) .. (48.76,16.49) .. controls (50.66,12.69) and (54.13,10.07) .. (58.14,9.41) .. controls (62.14,8.76) and (66.19,10.15) .. (69.08,13.17) .. controls (70.69,9.73) and (73.87,7.41) .. (77.48,7.05) .. controls (81.09,6.69) and (84.62,8.33) .. (86.81,11.39) .. controls (89.74,7.74) and (94.38,6.2) .. (98.75,7.44) .. controls (103.12,8.68) and (106.41,12.48) .. (107.21,17.19) .. controls (110.79,18.22) and (113.78,20.86) .. (115.39,24.41) .. controls (117.01,27.96) and (117.09,32.09) .. (115.63,35.71) .. controls (119.16,40.58) and (119.98,47.07) .. (117.8,52.76) .. controls (115.61,58.44) and (110.75,62.47) .. (105.02,63.34) .. controls (104.98,68.68) and (102.22,73.58) .. (97.8,76.15) .. controls (93.39,78.72) and (88.01,78.56) .. (83.74,75.73) .. controls (81.92,82.13) and (76.8,86.83) .. (70.59,87.81) .. controls (64.38,88.79) and (58.2,85.88) .. (54.71,80.33) .. controls (50.43,83.06) and (45.3,83.85) .. (40.47,82.51) .. controls (35.64,81.18) and (31.53,77.82) .. (29.04,73.22) .. controls (24.67,73.76) and (20.45,71.36) .. (18.46,67.2) .. controls (16.48,63.04) and (17.16,58.02) .. (20.17,54.62) .. controls (16.27,52.18) and (14.28,47.35) .. (15.24,42.64) .. controls (16.2,37.94) and (19.88,34.42) .. (24.38,33.92) ;
\draw  [line width=3, fill=white] (20.17,54.62) .. controls (22.01,55.77) and (24.13,56.29) .. (26.26,56.11)(29.04,73.22) .. controls (29.96,73.1) and (30.85,72.86) .. (31.71,72.5)(54.71,80.33) .. controls (54.07,79.3) and (53.53,78.21) .. (53.1,77.06)(83.74,75.73) .. controls (84.07,74.57) and (84.29,73.37) .. (84.38,72.15)(105.02,63.34) .. controls (105.06,57.66) and (102.02,52.46) .. (97.2,49.97)(115.63,35.71) .. controls (114.85,37.65) and (113.66,39.36) .. (112.15,40.73)(107.21,17.19) .. controls (107.35,17.97) and (107.41,18.76) .. (107.4,19.56)(86.81,11.39) .. controls (86.08,12.3) and (85.48,13.31) .. (85.03,14.41)(69.08,13.17) .. controls (68.69,14) and (68.4,14.87) .. (68.21,15.77)(48.76,16.49) .. controls (49.89,17.19) and (50.94,18.04) .. (51.88,19.01)(24.46,33.67) .. controls (24.58,34.57) and (24.76,35.46) .. (25.01,36.33) ;
%Shape: Cloud [id:dp8136891963369282] 
\draw  [line width=3, fill=white] (24.46,230.67) .. controls (23.63,224.14) and (26.38,217.67) .. (31.55,214.02) .. controls (36.72,210.36) and (43.4,210.15) .. (48.76,213.49) .. controls (50.66,209.69) and (54.13,207.07) .. (58.14,206.41) .. controls (62.14,205.76) and (66.19,207.15) .. (69.08,210.17) .. controls (70.69,206.73) and (73.87,204.41) .. (77.48,204.05) .. controls (81.09,203.69) and (84.62,205.33) .. (86.81,208.39) .. controls (89.74,204.74) and (94.38,203.2) .. (98.75,204.44) .. controls (103.12,205.68) and (106.41,209.48) .. (107.21,214.19) .. controls (110.79,215.22) and (113.78,217.86) .. (115.39,221.41) .. controls (117.01,224.96) and (117.09,229.09) .. (115.63,232.71) .. controls (119.16,237.58) and (119.98,244.07) .. (117.8,249.76) .. controls (115.61,255.44) and (110.75,259.47) .. (105.02,260.34) .. controls (104.98,265.68) and (102.22,270.58) .. (97.8,273.15) .. controls (93.39,275.72) and (88.01,275.56) .. (83.74,272.73) .. controls (81.92,279.13) and (76.8,283.83) .. (70.59,284.81) .. controls (64.38,285.79) and (58.2,282.88) .. (54.71,277.33) .. controls (50.43,280.06) and (45.3,280.85) .. (40.47,279.51) .. controls (35.64,278.18) and (31.53,274.82) .. (29.04,270.22) .. controls (24.67,270.76) and (20.45,268.36) .. (18.46,264.2) .. controls (16.48,260.04) and (17.16,255.02) .. (20.17,251.62) .. controls (16.27,249.18) and (14.28,244.35) .. (15.24,239.64) .. controls (16.2,234.94) and (19.88,231.42) .. (24.38,230.92) ;
\draw  [line width=3, fill=white] (20.17,251.62) .. controls (22.01,252.77) and (24.13,253.29) .. (26.26,253.11)(29.04,270.22) .. controls (29.96,270.1) and (30.85,269.86) .. (31.71,269.5)(54.71,277.33) .. controls (54.07,276.3) and (53.53,275.21) .. (53.1,274.06)(83.74,272.73) .. controls (84.07,271.57) and (84.29,270.37) .. (84.38,269.15)(105.02,260.34) .. controls (105.06,254.66) and (102.02,249.46) .. (97.2,246.97)(115.63,232.71) .. controls (114.85,234.65) and (113.66,236.36) .. (112.15,237.73)(107.21,214.19) .. controls (107.35,214.97) and (107.41,215.76) .. (107.4,216.56)(86.81,208.39) .. controls (86.08,209.3) and (85.48,210.31) .. (85.03,211.41)(69.08,210.17) .. controls (68.69,211) and (68.4,211.87) .. (68.21,212.77)(48.76,213.49) .. controls (49.89,214.19) and (50.94,215.04) .. (51.88,216.01)(24.46,230.67) .. controls (24.58,231.57) and (24.76,232.46) .. (25.01,233.33) ;
%Shape: Rectangle [id:dp8914297397737705] 
\draw  [line width=3, fill=white, draw=black] (16,111) -- (123,111) -- (123,180) -- (16,180) -- cycle;

% Text Node
\draw (69,44) node  [font=\LARGE]  {$\boldsymbol{\varphi}${\bf(N-1)}};
% Text Node
\draw (68,241) node  [font=\LARGE]  {$\boldsymbol{\psi}${\bf(N-1)}};
% Text Node
\draw (71,147) node  [font=\LARGE]  {$\mathbf{P_{N\text{-}1}}$};

\end{tikzpicture}

}
\caption{Hypothesis}
\label{fig:ht-pnm1}
\end{wrapfigure}
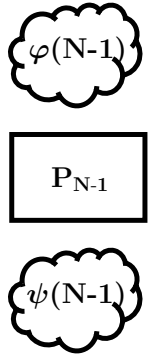

Next we hypothesize that $\{\varphi(N-1)\}$ $\;\PP_{N-1}\;$
$\{\psi(N-1)\}$ holds for some $N > M$, visually depicted in
Fig. \ref{fig:ht-pnm1}.  A few things are worth mentioning
here. First, the entire Hoare triple is assumed not just the formula
in the post-condition.  Second, the assumption is not on a specific
loop in the program, but the entire program $\PP_N$.  Third, the
change in the parameter from $N$ to $N-1$ is uniform across the entire
Hoare triple and not on a specific part there-off.

We then try to show that the induction hypothesis implies
$\{\varphi(N)\}$ $\;\PP_{N}\;$ $\{\psi(N)\}$.  While this sounds
simple in principle, there are several technical difficulties
en-route.  Our contribution lies in overcoming these difficulties
algorithmically for a large class of programs and assertions, thereby
making {\em full-program induction} a viable and competitive technique
for proving properties of array manipulating programs.

The inductive step is the most complex one, and is the focus of the
rest of the paper.  Recall that the inductive hypothesis asserts that
$\{\varphi(N-1)\}$ $\;{\PP_{N-1}}\;$ $\{\psi(N-1)\}$ is valid.  To
make use of this hypothesis in the inductive step, we must relate the
validity of $\{\varphi(N)\}$ $\;{\PP_N}\;$ $\{\psi(N)\}$ to that of
$\{\varphi(N-1)\}$ $\;{\PP_{N-1}}\;$ $\{\psi(N-1)\}$.  We propose
doing this, whenever possible, via two key notions -- that of
``difference'' program and ``difference'' pre-condition.

\input{htfig-diff}

Given a parameterized program ${\PP_N}$, intuitively the
``difference'' program ${\partial \PP_N}$ is one such that
$\{\varphi(N)\}$ $\;{\PP_N}\;$ $\{\psi(N)\}$ holds iff
$\{\varphi(N)\}$ $\; {\PP_{N-1}};{\partial \PP_N}\;$ $\{\psi(N)\}$
holds, where ``;'' denotes sequential composition.  Refer to
Fig. \ref{fig:ht-diff} for a visual representation of the Hoare triple
after the decomposition of $\PP_N$ into $\PP_{N-1}$ and
$\partial \PP_N$.  We will use this interpretation of a ``difference''
program in the subsequent parts of this paper.

%\begin{wrapfigure}[8]{r}{0.28\textwidth}
%\vspace{-4ex}
\begin{figure}[H]
\centering

\resizebox{0.30\textwidth}{!}{

\tikzset{every picture/.style={line width=0.75pt}} %set default line width to 0.75pt        

\begin{tikzpicture}[x=0.75pt,y=0.75pt,yscale=-1,xscale=1]
%uncomment if require: \path (0,177); %set diagram left start at 0, and has height of 177

%Shape: Rectangle [id:dp24762954207662957] 
  \draw  [line width=3, fill=white] (176,13.31) -- (283,13.31) -- (283,82.31) -- (176,82.31) -- cycle ;
%Shape: Rectangle [id:dp3347074583701173] 
  \draw  [line width=3, fill=white] (176,99.31) -- (283,99.31) -- (283,153.31) -- (176,153.31) -- cycle ;
%Shape: Rectangle [id:dp053099322217391] 
  \draw  [line width=3, fill=white] (14,51.31) -- (121,51.31) -- (121,120.31) -- (14,120.31) -- cycle ;

% Text Node
\draw (69,87) node  [font=\LARGE]  {$\mathbf{P_{N}}$};
% Text Node
\draw (231,49) node  [font=\LARGE]  {$\mathbf{P_{N\text{-}1}}$};
% Text Node
\draw (231,125) node  [font=\LARGE]  {$\boldsymbol{\partial}\mathbf{P_{N}}$};
% Text Node
\draw (133,82) node [anchor=north west][inner sep=0.75pt]  [font=\Huge]  {$\boldsymbol{\equiv}$};

\end{tikzpicture}

}
\caption{Decomposition of $\PP_N$ and semantic equivalence}
\label{fig:ht-pequiv}
\end{figure}
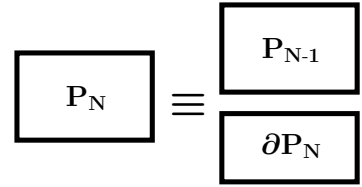
%\end{wrapfigure}

A simple way of ensuring the correctness of this transformation is by
having a difference program ${\partial \PP_N}$ such that the
sequential composition ${\PP_{N-1}};{\partial \PP_N}$ is semantically
equivalent to ${\PP_N}$.  Decomposition of $\PP_N$ into $\PP_{N-1}$
and $\partial \PP_N$ is visually depicted in Fig. \ref{fig:ht-pequiv}.
It ensures that, upon termination, same program state is reached by
both $\PP_N$ and ${\PP_{N-1}};{\partial \PP_N}$.  The given
post-condition may not be impacted by the entire program state, and
hence, the semantic equivalence alluded to here may not be required,
in general.  Thus, the semantic equivalence of the decomposition is a
strong condition.  It is referred here only for the ease of explaining
the inductive setup and for an intuitive demonstration of soundness of
the decomposition.  For the purposes of full-program induction
semantic equivalence is not really necessary, and we do not refer to
this interpretation of the ``difference'' program further due to its
restrictive nature.

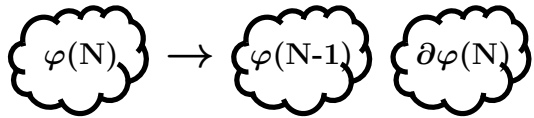
\begin{figure}[H]
%\vspace{2ex}
\centering

\resizebox{0.45\textwidth}{!}{

\tikzset{every picture/.style={line width=0.75pt}} %set default line width to 0.75pt        

\begin{tikzpicture}[x=0.75pt,y=0.75pt,yscale=-1,xscale=1]
%uncomment if require: \path (0,108); %set diagram left start at 0, and has height of 108

%Shape: Cloud [id:dp30501040020095993] 
  \draw  [line width=3, fill=white] (20.46,35.67) .. controls (19.63,29.14) and (22.38,22.67) .. (27.55,19.02) .. controls (32.72,15.36) and (39.4,15.15) .. (44.76,18.49) .. controls (46.66,14.69) and (50.13,12.07) .. (54.14,11.41) .. controls (58.14,10.76) and (62.19,12.15) .. (65.08,15.17) .. controls (66.69,11.73) and (69.87,9.41) .. (73.48,9.05) .. controls (77.09,8.69) and (80.62,10.33) .. (82.81,13.39) .. controls (85.74,9.74) and (90.38,8.2) .. (94.75,9.44) .. controls (99.12,10.68) and (102.41,14.48) .. (103.21,19.19) .. controls (106.79,20.22) and (109.78,22.86) .. (111.39,26.41) .. controls (113.01,29.96) and (113.09,34.09) .. (111.63,37.71) .. controls (115.16,42.58) and (115.98,49.07) .. (113.8,54.76) .. controls (111.61,60.44) and (106.75,64.47) .. (101.02,65.34) .. controls (100.98,70.68) and (98.22,75.58) .. (93.8,78.15) .. controls (89.39,80.72) and (84.01,80.56) .. (79.74,77.73) .. controls (77.92,84.13) and (72.8,88.83) .. (66.59,89.81) .. controls (60.38,90.79) and (54.2,87.88) .. (50.71,82.33) .. controls (46.43,85.06) and (41.3,85.85) .. (36.47,84.51) .. controls (31.64,83.18) and (27.53,79.82) .. (25.04,75.22) .. controls (20.67,75.76) and (16.45,73.36) .. (14.46,69.2) .. controls (12.48,65.04) and (13.16,60.02) .. (16.17,56.62) .. controls (12.27,54.18) and (10.28,49.35) .. (11.24,44.64) .. controls (12.2,39.94) and (15.88,36.42) .. (20.38,35.92) ;
  \draw  [line width=3, fill=white] (16.17,56.62) .. controls (18.01,57.77) and (20.13,58.29) .. (22.26,58.11)(25.04,75.22) .. controls (25.96,75.1) and (26.85,74.86) .. (27.71,74.5)(50.71,82.33) .. controls (50.07,81.3) and (49.53,80.21) .. (49.1,79.06)(79.74,77.73) .. controls (80.07,76.57) and (80.29,75.37) .. (80.38,74.15)(101.02,65.34) .. controls (101.06,59.66) and (98.02,54.46) .. (93.2,51.97)(111.63,37.71) .. controls (110.85,39.65) and (109.66,41.36) .. (108.15,42.73)(103.21,19.19) .. controls (103.35,19.97) and (103.41,20.76) .. (103.4,21.56)(82.81,13.39) .. controls (82.08,14.3) and (81.48,15.31) .. (81.03,16.41)(65.08,15.17) .. controls (64.69,16) and (64.4,16.87) .. (64.21,17.77)(44.76,18.49) .. controls (45.89,19.19) and (46.94,20.04) .. (47.88,21.01)(20.46,35.67) .. controls (20.58,36.57) and (20.76,37.46) .. (21.01,38.33) ;
%Shape: Cloud [id:dp48927554547189556] 
  \draw  [line width=3, fill=white] (191.46,35.67) .. controls (190.63,29.14) and (193.38,22.67) .. (198.55,19.02) .. controls (203.72,15.36) and (210.4,15.15) .. (215.76,18.49) .. controls (217.66,14.69) and (221.13,12.07) .. (225.14,11.41) .. controls (229.14,10.76) and (233.19,12.15) .. (236.08,15.17) .. controls (237.69,11.73) and (240.87,9.41) .. (244.48,9.05) .. controls (248.09,8.69) and (251.62,10.33) .. (253.81,13.39) .. controls (256.74,9.74) and (261.38,8.2) .. (265.75,9.44) .. controls (270.12,10.68) and (273.41,14.48) .. (274.21,19.19) .. controls (277.79,20.22) and (280.78,22.86) .. (282.39,26.41) .. controls (284.01,29.96) and (284.09,34.09) .. (282.63,37.71) .. controls (286.16,42.58) and (286.98,49.07) .. (284.8,54.76) .. controls (282.61,60.44) and (277.75,64.47) .. (272.02,65.34) .. controls (271.98,70.68) and (269.22,75.58) .. (264.8,78.15) .. controls (260.39,80.72) and (255.01,80.56) .. (250.74,77.73) .. controls (248.92,84.13) and (243.8,88.83) .. (237.59,89.81) .. controls (231.38,90.79) and (225.2,87.88) .. (221.71,82.33) .. controls (217.43,85.06) and (212.3,85.85) .. (207.47,84.51) .. controls (202.64,83.18) and (198.53,79.82) .. (196.04,75.22) .. controls (191.67,75.76) and (187.45,73.36) .. (185.46,69.2) .. controls (183.48,65.04) and (184.16,60.02) .. (187.17,56.62) .. controls (183.27,54.18) and (181.28,49.35) .. (182.24,44.64) .. controls (183.2,39.94) and (186.88,36.42) .. (191.38,35.92) ;
  \draw  [line width=3, fill=white] (187.17,56.62) .. controls (189.01,57.77) and (191.13,58.29) .. (193.26,58.11)(196.04,75.22) .. controls (196.96,75.1) and (197.85,74.86) .. (198.71,74.5)(221.71,82.33) .. controls (221.07,81.3) and (220.53,80.21) .. (220.1,79.06)(250.74,77.73) .. controls (251.07,76.57) and (251.29,75.37) .. (251.38,74.15)(272.02,65.34) .. controls (272.06,59.66) and (269.02,54.46) .. (264.2,51.97)(282.63,37.71) .. controls (281.85,39.65) and (280.66,41.36) .. (279.15,42.73)(274.21,19.19) .. controls (274.35,19.97) and (274.41,20.76) .. (274.4,21.56)(253.81,13.39) .. controls (253.08,14.3) and (252.48,15.31) .. (252.03,16.41)(236.08,15.17) .. controls (235.69,16) and (235.4,16.87) .. (235.21,17.77)(215.76,18.49) .. controls (216.89,19.19) and (217.94,20.04) .. (218.88,21.01)(191.46,35.67) .. controls (191.58,36.57) and (191.76,37.46) .. (192.01,38.33) ;
%Shape: Cloud [id:dp5860431564975028] 
  \draw  [line width=3, fill=white] (313.46,35.67) .. controls (312.63,29.14) and (315.38,22.67) .. (320.55,19.02) .. controls (325.72,15.36) and (332.4,15.15) .. (337.76,18.49) .. controls (339.66,14.69) and (343.13,12.07) .. (347.14,11.41) .. controls (351.14,10.76) and (355.19,12.15) .. (358.08,15.17) .. controls (359.69,11.73) and (362.87,9.41) .. (366.48,9.05) .. controls (370.09,8.69) and (373.62,10.33) .. (375.81,13.39) .. controls (378.74,9.74) and (383.38,8.2) .. (387.75,9.44) .. controls (392.12,10.68) and (395.41,14.48) .. (396.21,19.19) .. controls (399.79,20.22) and (402.78,22.86) .. (404.39,26.41) .. controls (406.01,29.96) and (406.09,34.09) .. (404.63,37.71) .. controls (408.16,42.58) and (408.98,49.07) .. (406.8,54.76) .. controls (404.61,60.44) and (399.75,64.47) .. (394.02,65.34) .. controls (393.98,70.68) and (391.22,75.58) .. (386.8,78.15) .. controls (382.39,80.72) and (377.01,80.56) .. (372.74,77.73) .. controls (370.92,84.13) and (365.8,88.83) .. (359.59,89.81) .. controls (353.38,90.79) and (347.2,87.88) .. (343.71,82.33) .. controls (339.43,85.06) and (334.3,85.85) .. (329.47,84.51) .. controls (324.64,83.18) and (320.53,79.82) .. (318.04,75.22) .. controls (313.67,75.76) and (309.45,73.36) .. (307.46,69.2) .. controls (305.48,65.04) and (306.16,60.02) .. (309.17,56.62) .. controls (305.27,54.18) and (303.28,49.35) .. (304.24,44.64) .. controls (305.2,39.94) and (308.88,36.42) .. (313.38,35.92) ;
  \draw  [line width=3, fill=white] (309.17,56.62) .. controls (311.01,57.77) and (313.13,58.29) .. (315.26,58.11)(318.04,75.22) .. controls (318.96,75.1) and (319.85,74.86) .. (320.71,74.5)(343.71,82.33) .. controls (343.07,81.3) and (342.53,80.21) .. (342.1,79.06)(372.74,77.73) .. controls (373.07,76.57) and (373.29,75.37) .. (373.38,74.15)(394.02,65.34) .. controls (394.06,59.66) and (391.02,54.46) .. (386.2,51.97)(404.63,37.71) .. controls (403.85,39.65) and (402.66,41.36) .. (401.15,42.73)(396.21,19.19) .. controls (396.35,19.97) and (396.41,20.76) .. (396.4,21.56)(375.81,13.39) .. controls (375.08,14.3) and (374.48,15.31) .. (374.03,16.41)(358.08,15.17) .. controls (357.69,16) and (357.4,16.87) .. (357.21,17.77)(337.76,18.49) .. controls (338.89,19.19) and (339.94,20.04) .. (340.88,21.01)(313.46,35.67) .. controls (313.58,36.57) and (313.76,37.46) .. (314.01,38.33) ;

% Text Node
\draw (65,46) node  [font=\LARGE]  {$\boldsymbol{\varphi}${\bf(N)}};
% Text Node
\draw (234,46) node  [font=\LARGE]  {$\boldsymbol{\varphi}${\bf(N-1)}};
% Text Node
\draw (358,46) node  [font=\LARGE]  {$\boldsymbol{\partial \varphi}${\bf(N)}};
% Text Node
\draw (128,40) node [anchor=north west][inner sep=0.75pt]  [font=\Huge]  {$\boldsymbol{\rightarrow}$};

\end{tikzpicture}

}
\caption{Difference pre-condition}
\label{fig:ht-dpre}
\end{figure}

The ``difference'' pre-condition ${\partial \varphi(N)}$ is a formula
such that the following conditions hold.

\begin{enumerate}
\item $\varphi(N)$ $\rightarrow$ $(\varphi(N-1)$ $\odot$
      ${\partial \varphi(N)})$, where the boolean operator $\odot$ is
      $\wedge$ when $\varphi(N)$ is a universally quantified formula
      and it is $\vee$ when $\varphi(N)$ is a existentially quantified
      formula. We depict this decomposition of $\varphi(N)$ into
      $\varphi(N-1)$ and ${\partial \varphi(N)}$ in
      Fig. \ref{fig:ht-dpre}.
\item The execution of $\PP_{N-1}$ does not affect the truth of
      ${\partial \varphi(N)}$.  This can be visualized using
      Fig. \ref{fig:ht-diff}, where the dashed line indicates the
      propagation of the difference pre-condition across $\PP_{N-1}$
      when it is not affected.
\end{enumerate}

Computing the ``difference'' program ${\partial \PP_N}$ and the
``difference'' pre-condition ${\partial \varphi(N)}$ is not easy in
general.  In Sect. \ref{sec:algorithms}, we discuss ways to overcome
these problems and challenges.

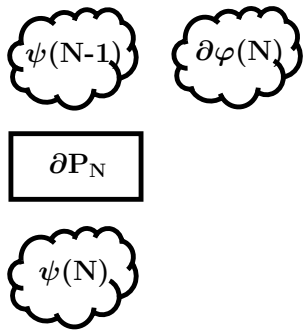
\begin{wrapfigure}[14]{r}{0.28\textwidth}
%\vspace{-2ex}
\centering

\resizebox{0.26\textwidth}{!}{

\tikzset{every picture/.style={line width=0.75pt}} %set default line width to 0.75pt        

\begin{tikzpicture}[x=0.75pt,y=0.75pt,yscale=-1,xscale=1]
%uncomment if require: \path (0,297); %set diagram left start at 0, and has height of 297

%Shape: Cloud [id:dp16651501713306416] 
  \draw  [line width=3, fill=white] (21.46,40.67) .. controls (20.63,34.14) and (23.38,27.67) .. (28.55,24.02) .. controls (33.72,20.36) and (40.4,20.15) .. (45.76,23.49) .. controls (47.66,19.69) and (51.13,17.07) .. (55.14,16.41) .. controls (59.14,15.76) and (63.19,17.15) .. (66.08,20.17) .. controls (67.69,16.73) and (70.87,14.41) .. (74.48,14.05) .. controls (78.09,13.69) and (81.62,15.33) .. (83.81,18.39) .. controls (86.74,14.74) and (91.38,13.2) .. (95.75,14.44) .. controls (100.12,15.68) and (103.41,19.48) .. (104.21,24.19) .. controls (107.79,25.22) and (110.78,27.86) .. (112.39,31.41) .. controls (114.01,34.96) and (114.09,39.09) .. (112.63,42.71) .. controls (116.16,47.58) and (116.98,54.07) .. (114.8,59.76) .. controls (112.61,65.44) and (107.75,69.47) .. (102.02,70.34) .. controls (101.98,75.68) and (99.22,80.58) .. (94.8,83.15) .. controls (90.39,85.72) and (85.01,85.56) .. (80.74,82.73) .. controls (78.92,89.13) and (73.8,93.83) .. (67.59,94.81) .. controls (61.38,95.79) and (55.2,92.88) .. (51.71,87.33) .. controls (47.43,90.06) and (42.3,90.85) .. (37.47,89.51) .. controls (32.64,88.18) and (28.53,84.82) .. (26.04,80.22) .. controls (21.67,80.76) and (17.45,78.36) .. (15.46,74.2) .. controls (13.48,70.04) and (14.16,65.02) .. (17.17,61.62) .. controls (13.27,59.18) and (11.28,54.35) .. (12.24,49.64) .. controls (13.2,44.94) and (16.88,41.42) .. (21.38,40.92) ;
  \draw  [line width=3, fill=white] (17.17,61.62) .. controls (19.01,62.77) and (21.13,63.29) .. (23.26,63.11)(26.04,80.22) .. controls (26.96,80.1) and (27.85,79.86) .. (28.71,79.5)(51.71,87.33) .. controls (51.07,86.3) and (50.53,85.21) .. (50.1,84.06)(80.74,82.73) .. controls (81.07,81.57) and (81.29,80.37) .. (81.38,79.15)(102.02,70.34) .. controls (102.06,64.66) and (99.02,59.46) .. (94.2,56.97)(112.63,42.71) .. controls (111.85,44.65) and (110.66,46.36) .. (109.15,47.73)(104.21,24.19) .. controls (104.35,24.97) and (104.41,25.76) .. (104.4,26.56)(83.81,18.39) .. controls (83.08,19.3) and (82.48,20.31) .. (82.03,21.41)(66.08,20.17) .. controls (65.69,21) and (65.4,21.87) .. (65.21,22.77)(45.76,23.49) .. controls (46.89,24.19) and (47.94,25.04) .. (48.88,26.01)(21.46,40.67) .. controls (21.58,41.57) and (21.76,42.46) .. (22.01,43.33) ;
%Shape: Cloud [id:dp025543364666123836] 
  \draw  [line width=3, fill=white] (21.46,220.67) .. controls (20.63,214.14) and (23.38,207.67) .. (28.55,204.02) .. controls (33.72,200.36) and (40.4,200.15) .. (45.76,203.49) .. controls (47.66,199.69) and (51.13,197.07) .. (55.14,196.41) .. controls (59.14,195.76) and (63.19,197.15) .. (66.08,200.17) .. controls (67.69,196.73) and (70.87,194.41) .. (74.48,194.05) .. controls (78.09,193.69) and (81.62,195.33) .. (83.81,198.39) .. controls (86.74,194.74) and (91.38,193.2) .. (95.75,194.44) .. controls (100.12,195.68) and (103.41,199.48) .. (104.21,204.19) .. controls (107.79,205.22) and (110.78,207.86) .. (112.39,211.41) .. controls (114.01,214.96) and (114.09,219.09) .. (112.63,222.71) .. controls (116.16,227.58) and (116.98,234.07) .. (114.8,239.76) .. controls (112.61,245.44) and (107.75,249.47) .. (102.02,250.34) .. controls (101.98,255.68) and (99.22,260.58) .. (94.8,263.15) .. controls (90.39,265.72) and (85.01,265.56) .. (80.74,262.73) .. controls (78.92,269.13) and (73.8,273.83) .. (67.59,274.81) .. controls (61.38,275.79) and (55.2,272.88) .. (51.71,267.33) .. controls (47.43,270.06) and (42.3,270.85) .. (37.47,269.51) .. controls (32.64,268.18) and (28.53,264.82) .. (26.04,260.22) .. controls (21.67,260.76) and (17.45,258.36) .. (15.46,254.2) .. controls (13.48,250.04) and (14.16,245.02) .. (17.17,241.62) .. controls (13.27,239.18) and (11.28,234.35) .. (12.24,229.64) .. controls (13.2,224.94) and (16.88,221.42) .. (21.38,220.92) ;
  \draw  [line width=3, fill=white] (17.17,241.62) .. controls (19.01,242.77) and (21.13,243.29) .. (23.26,243.11)(26.04,260.22) .. controls (26.96,260.1) and (27.85,259.86) .. (28.71,259.5)(51.71,267.33) .. controls (51.07,266.3) and (50.53,265.21) .. (50.1,264.06)(80.74,262.73) .. controls (81.07,261.57) and (81.29,260.37) .. (81.38,259.15)(102.02,250.34) .. controls (102.06,244.66) and (99.02,239.46) .. (94.2,236.97)(112.63,222.71) .. controls (111.85,224.65) and (110.66,226.36) .. (109.15,227.73)(104.21,204.19) .. controls (104.35,204.97) and (104.41,205.76) .. (104.4,206.56)(83.81,198.39) .. controls (83.08,199.3) and (82.48,200.31) .. (82.03,201.41)(66.08,200.17) .. controls (65.69,201) and (65.4,201.87) .. (65.21,202.77)(45.76,203.49) .. controls (46.89,204.19) and (47.94,205.04) .. (48.88,206.01)(21.46,220.67) .. controls (21.58,221.57) and (21.76,222.46) .. (22.01,223.33) ;
%Shape: Rectangle [id:dp9438986744996721] 
  \draw  [line width=3, fill=white] (13,116) -- (120,116) -- (120,170) -- (13,170) -- cycle ;
%Shape: Cloud [id:dp4832491072806173] 
  \draw  [line width=3, fill=white] (156.46,39.67) .. controls (155.63,33.14) and (158.38,26.67) .. (163.55,23.02) .. controls (168.72,19.36) and (175.4,19.15) .. (180.76,22.49) .. controls (182.66,18.69) and (186.13,16.07) .. (190.14,15.41) .. controls (194.14,14.76) and (198.19,16.15) .. (201.08,19.17) .. controls (202.69,15.73) and (205.87,13.41) .. (209.48,13.05) .. controls (213.09,12.69) and (216.62,14.33) .. (218.81,17.39) .. controls (221.74,13.74) and (226.38,12.2) .. (230.75,13.44) .. controls (235.12,14.68) and (238.41,18.48) .. (239.21,23.19) .. controls (242.79,24.22) and (245.78,26.86) .. (247.39,30.41) .. controls (249.01,33.96) and (249.09,38.09) .. (247.63,41.71) .. controls (251.16,46.58) and (251.98,53.07) .. (249.8,58.76) .. controls (247.61,64.44) and (242.75,68.47) .. (237.02,69.34) .. controls (236.98,74.68) and (234.22,79.58) .. (229.8,82.15) .. controls (225.39,84.72) and (220.01,84.56) .. (215.74,81.73) .. controls (213.92,88.13) and (208.8,92.83) .. (202.59,93.81) .. controls (196.38,94.79) and (190.2,91.88) .. (186.71,86.33) .. controls (182.43,89.06) and (177.3,89.85) .. (172.47,88.51) .. controls (167.64,87.18) and (163.53,83.82) .. (161.04,79.22) .. controls (156.67,79.76) and (152.45,77.36) .. (150.46,73.2) .. controls (148.48,69.04) and (149.16,64.02) .. (152.17,60.62) .. controls (148.27,58.18) and (146.28,53.35) .. (147.24,48.64) .. controls (148.2,43.94) and (151.88,40.42) .. (156.38,39.92) ;
  \draw  [line width=3, fill=white] (152.17,60.62) .. controls (154.01,61.77) and (156.13,62.29) .. (158.26,62.11)(161.04,79.22) .. controls (161.96,79.1) and (162.85,78.86) .. (163.71,78.5)(186.71,86.33) .. controls (186.07,85.3) and (185.53,84.21) .. (185.1,83.06)(215.74,81.73) .. controls (216.07,80.57) and (216.29,79.37) .. (216.38,78.15)(237.02,69.34) .. controls (237.06,63.66) and (234.02,58.46) .. (229.2,55.97)(247.63,41.71) .. controls (246.85,43.65) and (245.66,45.36) .. (244.15,46.73)(239.21,23.19) .. controls (239.35,23.97) and (239.41,24.76) .. (239.4,25.56)(218.81,17.39) .. controls (218.08,18.3) and (217.48,19.31) .. (217.03,20.41)(201.08,19.17) .. controls (200.69,20) and (200.4,20.87) .. (200.21,21.77)(180.76,22.49) .. controls (181.89,23.19) and (182.94,24.04) .. (183.88,25.01)(156.46,39.67) .. controls (156.58,40.57) and (156.76,41.46) .. (157.01,42.33) ;

% Text Node
\draw (65,51) node  [font=\LARGE]  {$\boldsymbol{\psi}${\bf(N-1)}};
% Text Node
\draw (65,231) node  [font=\LARGE]  {$\boldsymbol{\psi}${\bf(N)}};
% Text Node
\draw (68,142) node  [font=\LARGE]  {$\boldsymbol{\partial}\mathbf{P_{N}}$};
% Text Node
\draw (201,50) node  [font=\LARGE]  {$\boldsymbol{\partial \varphi}${\bf(N)}};

\end{tikzpicture}

}
\caption{Inductive step}
\label{fig:ht-indstep}
\end{wrapfigure}

Assuming we have ${\partial \PP_N}$ and ${\partial \varphi(N)}$ with
the properties stated above, the proof obligation $\{\varphi(N)\}$
$\;\PP_N\;$ $\{\psi(N)\}$ can now be reduced to proving the Hoare
triples $\{\varphi(N-1)\}$ $\;\PP_{N-1}\;$ $\{\psi(N-1)\}$ and
$\{\psi(N-1) \wedge {\partial \varphi(N)}\}$ $\;{\partial \PP_{N}}\;$
$\{\psi(N)\}$ in the inductive step.  Both these Hoare triples can be
easily visualized from Fig. \ref{fig:ht-diff}.
The first Hoare triple follows from the inductive hypothesis
(Fig. \ref{fig:ht-pnm1}), and hence, is available for free.  Thus, the
inductive step is reduced to proving the second Hoare triple as shown
in the Fig. \ref{fig:ht-indstep}.

Proving the inductive step may require strengthening the
pre-condition, say by a formula $\ppre(N-1)$, in general.  Since we
are in the inductive step of mathematical induction, we formulate the
new proof sub-goal in such a case as
$\{(\psi(N-1) \wedge \ppre(N-1)) \wedge {\partial \varphi(N)}\}$
${\partial \PP_{N}}$ $\{\psi(N) \wedge \ppre(N)\}$.  While this is
somewhat reminiscent of loop invariants, observe that $\ppre(N)$
is \emph{not} really a loop-specific invariant.  Instead, it is
analogous to computing an invariant for the entire program, possibly
containing multiple loops.  Specifically, the above process
strengthens both the pre- and post-condition of $\{\psi(N-1) \wedge
{\partial \varphi(N)}\}$ ${\partial \PP_{N}}$ $\{\psi(N)\}$
simultaneously using $\ppre(N-1)$ and $\ppre(N)$, respectively.
Fig. \ref{fig:ht-wp} shows the Hoare triple after the strengthening
step.  The strengthened post-condition of the resulting Hoare triple
may, in turn, require a new pre-condition $\ppre'(N-1)$ to be
satisfied. This process of strengthening the pre- and post-conditions
of the Hoare triple involving $\partial{\PP_N}$ can be iterated until
a fix-point is reached, i.e. no further pre-conditions are needed for
the parameterized Hoare triple to hold.  While the fix-point was
quickly reached for all benchmarks we experimented with, we also
discuss how to handle cases where the above process may not converge
easily.  Note that since we effectively strengthen the pre-condition
of the Hoare triple in the inductive step, for the overall induction
to go through, it is also necessary to check that the strengthened
assertions hold at the end of each base case check.  Automatically
computing $\ppre(N)$ to strengthen the pre- and post-conditions of the
Hoare triple may not always be straight forward, especially when the
difference program ${\partial \PP_{N}}$ has loops.  In such cases, we
recursively apply our technique on the Hoare triple
$\{\psi(N-1) \wedge {\partial \varphi(N)}\}$ ${\partial \PP_{N}}$
$\{\psi(N)\}$ generated during the inductive step.  This helps our
technique converge when the generated
difference program has one or more loops.  We check if the recursive
invocation of our technique will yield beneficial results using a
progress measure influenced by several characteristics of the
difference program.

\input{htfig-wp}

The technique outlined above is called \emph{full-program induction},
and the following theorem is the basis for the soundness
of \emph{full-program induction}.

\begin{theorem}
\label{thm:full-prog-ind-sound}
Given $\{\varphi(N)\}$ $\;\PP_N\;$ $\{\psi(N)\}$, suppose the
following are true:
\begin{enumerate}
\item For $N > 1$, $\{\varphi(N)\}$ $\;\PP_{N-1};{\partial \PP_N}\;$
      $\{\psi(N)\}$ holds iff $\{\varphi(N)\}$ $\;\PP_{N}\;$
      $\{\psi(N)\}$ holds.
\item For $N > 1$, there exists a formula ${\partial \varphi(N)}$ such
      that
      \begin{itemize}
        \item[(a)] ${\partial \varphi(N)}$ doesn't refer to any program
        variable or array element modified in $\PP_{N-1}$, and
        \item[(b)] $\varphi(N)$ $\rightarrow$ $\varphi(N-1)$ $\wedge$
        ${\partial \varphi(N)}$.
      \end{itemize}
\item There exists an integer $M \ge 1$ and a parameterized formula $\ppre(M)$
      such that
      \begin{itemize}
        \item[(a)] $\{\varphi(N)\}$ $\;\PP_N\;$ $\{\psi(N)\}$ holds for $0 < N \le M$,
        \item[(b)] $\{\varphi(M)\}$ $\;\PP_M\;$ $\{\psi(M)\wedge\ppre(M)\}$ holds, and
        \item[(c)] $\{\psi(N-1) \wedge \ppre(N-1) \wedge {\partial \varphi(N)}\}$
        $\;{\partial \PP_N}\;$ $\{\psi(N) \wedge \ppre(N)\}$ holds for $N > M$.
      \end{itemize}
\end{enumerate}
Then $\{\varphi_N\}$ $\;\PP_N\;$ $\{\psi_N\}$ holds for all $N \ge 1$.
\end{theorem}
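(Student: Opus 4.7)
I would prove the theorem by mathematical induction on $N$, but with a strengthened induction hypothesis that carries along the auxiliary predicate $\ppre$. Specifically, I would first establish the auxiliary claim
\[
 \{\varphi(N)\} \;\PP_N\; \{\psi(N) \wedge \ppre(N)\} \text{ for all } N \ge M,
\]
and then recover the theorem for all $N \ge 1$ by combining (a) condition (3)(a) in the low range $0 < N \le M$ and (b) post-condition weakening applied to the auxiliary claim for $N \ge M$.

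The base case $N = M$ of the auxiliary claim is immediate from condition (3)(b). For the inductive step at $N > M$, I assume $\{\varphi(N-1)\} \;\PP_{N-1}\; \{\psi(N-1) \wedge \ppre(N-1)\}$ and argue as follows. Take any pre-state satisfying $\varphi(N)$. By condition (2)(b), this state also satisfies $\varphi(N-1) \wedge \partial \varphi(N)$. Applying the induction hypothesis on the $\varphi(N-1)$ conjunct, the state reached after executing $\PP_{N-1}$ satisfies $\psi(N-1) \wedge \ppre(N-1)$. By condition (2)(a), the formula $\partial\varphi(N)$ mentions no variable or array entry written by $\PP_{N-1}$, so a standard frame argument shows $\partial\varphi(N)$ still holds at that intermediate state. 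Hence condition (3)(c) applies, and executing $\partial\PP_N$ from there yields a final state satisfying $\psi(N) \wedge \ppre(N)$. We have thus shown $\{\varphi(N)\} \;\PP_{N-1};\partial\PP_N\; \{\psi(N) \wedge \ppre(N)\}$, and condition (1) then transfers this to the desired $\{\varphi(N)\} \;\PP_N\; \{\psi(N) \wedge \ppre(N)\}$, completing the inductive step.

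The main obstacle is justifying the last use of condition (1): as stated it mentions post-condition $\psi(N)$, but in the inductive step we need the equivalence for the strengthened post-condition $\psi(N) \wedge \ppre(N)$. I would address this by reading condition (1) as a schema that captures semantic equivalence of $\PP_N$ and $\PP_{N-1};\partial\PP_N$ with respect to any post-condition over their shared observable state, and state this reading explicitly at the start of the proof; under this reading the application goes through by simply instantiating the schema with $\psi(N) \wedge \ppre(N)$. A secondary technical point is the frame-rule step appealing to condition (2)(a): one must verify that a formula whose free variables are disjoint from the write-set of $\PP_{N-1}$ is preserved by its execution, which is standard but worth stating as a lemma. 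The remaining pieces, i.e.\ weakening the auxiliary claim from $\psi(N) \wedge \ppre(N)$ to $\psi(N)$ and gluing with condition (3)(a) over the range $N \le M$, are routine.
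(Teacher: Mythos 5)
Your proposal is correct and follows essentially the same route as the paper's proof: an induction anchored at $N=M$ via condition 3(b) on the strengthened triple $\{\varphi(N)\}\;\PP_N\;\{\psi(N)\wedge\ppre(N)\}$, with the inductive step assembled from 2(b), the induction hypothesis, a frame argument from 2(a), condition 3(c), and condition 1, followed by weakening and gluing with 3(a) for $0 < N \le M$. Your explicit observation that condition 1 must be read as a semantic-equivalence schema applicable to the strengthened post-condition $\psi(N)\wedge\ppre(N)$ (not just $\psi(N)$) is a genuine subtlety that the paper's own proof uses silently, so flagging and resolving it is a point in your favor rather than a defect.
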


\begin{proof}
For $0 < N \le M$, condition 3(a) (the base case) ensures that
$\{\varphi(N)\}$ $\;\PP_N\;$ $\{\psi(N)\}$ holds.  For $N > M$, note
that by virtue of conditions 1 and 2(b), $\{\varphi(N)\}$ $\;\PP_N\;$
$\{\psi(N)\}$ holds if $\{\varphi(N-1) \wedge {\partial \varphi(N)}\}$
$\;\PP_{N-1};{\partial \PP_N}\;$ $\{\psi(N) \wedge \ppre(N)\}$ holds.
With $\psi(N-1) \wedge \ppre(N-1)$ as a mid-condition, and by virtue
of condition 2(a), the latter Hoare triple holds for $N > M$ if
$\{\varphi(M)\}$ $\;\PP_{M}\;$ $\{\psi(M) \wedge \ppre(M)\}$ holds and
$\{\psi(N-1) \wedge \ppre(N-1) \wedge {\partial \varphi(N)}\}$
$\;{\partial \PP_N}\;$ $\{\psi(N) \wedge \ppre(N)\}$ holds for all $N
> M$.  Both these triples are seen to hold by virtue of conditions
3(b) and (c).
\end{proof}

\section{Preliminaries}
\label{sec:prelim}
%\subsection{Preliminaries}
%\label{sec:prelim}

\begin{figure*}[tb]
\begin{center}
\begin{tabular}{rcl}
 \PB    & ::= & \Stmt \\
 \Stmt  & ::= & {\AssignStmts} ~$\mid$~ \Stmt~;~\Stmt ~$\mid$~ {\iif}(\BoolE) {\tthen} {\Stmt} {\eelse} \Stmt ~$\mid$~
                {\ffor} ({\lpVar} := 0; {\lpVar} $<$ \LBE; {\lpVar} := {\lpVar}+1) \{{\Stmta}\}\\
 \AssignStmts & ::= & {\scVar} = {\EE} ~$\mid$~ {\ArVar}[\IE] = \EE \\
 \Stmta & ::= & {\AssignStmts} ~$\mid$~ \Stmta~;~\Stmta ~$\mid$~ {\iif}(\BoolE) {\tthen} {\Stmta} {\eelse} \Stmta \\
 \EE    & ::= & {\EE} ~\OP~ {\EE} ~$\mid$~ {\ArVar}[\IE] ~$\mid$~ {\scVar}
                ~$\mid$~ {\lpVar} ~$\mid$~ {\cconst} ~$\mid$~ $N$ \\
 \IE    & ::= & {\IE} ~\OP~ {\IE} ~$\mid$~ {\scVar} ~$\mid$~ {\lpVar} ~$\mid$~ {\cconst} ~$\mid$~ $N$ \\
 \LBE    & ::= & {\LBE} ~\OP~ {\LBE} ~$\mid$~ {\cconst} ~$\mid$~ $N$ \\
 \OP    & ::= & + ~$\mid$~ - ~$\mid$~ $\times$ ~$\mid$~ $\div$ ~$\mid$~ $\ldots$ \\
 \BoolE & ::= & {\EE} ~$\mathsf{relop}$ {\EE} ~$\mid$~ {\BoolE} $\mathsf{AND}$ {\BoolE} ~$\mid$~ $\mathsf{NOT}$ {\BoolE} ~$\mid$~ {\BoolE} $\mathsf{OR}$ {\BoolE} \\
 \RelOP & ::= & $==$ ~$\mid$~ $<$ ~$\mid$~ $\leq$ ~$\mid$~ $>$ ~$\mid$~ $\geq$ \\
\end{tabular}
\end{center}
\label{fig:grammar}
\caption{Program grammar}
\end{figure*}

%% \noindent
We consider array manipulating programs generated by the grammar shown
in Fig. \ref{fig:grammar} (adapted from \cite{sas17}).  This grammar
restricts programs to have non-nested loops.  Specifically, programs
generated starting from {\Stmta} are loop free.  The non-terminal
{\Stmt} can generate programs with loops but their bodies are
generated from {\Stmta}, thereby forbidding nesting of loops.  Note
also that expressions for indexing arrays are generated from the
non-terminal {\IE}, and such expressions cannot refer to other array
elements.  However, this is not really a restriction on the expressive
power of programs since every array index expression that depends on
other array elements, say $A[e]$, can be replaced by an array index
expression that depends on temporary variables, say $v$, that are
pre-assigned to the respective array elements, viz. $A[e]$.  For
example, {\tt A[B[i]] = C[D[i]];} can be rewritten as {\tt v1 = B[i];
  v2 = D[i]; A[v1] = C[v2];}.  Finally, note that loop bound
expressions are generated using the non-terminal $\LBE$, and such
expressions can only involve constants and the parameter $N$.  While
the above restrictions limit the class of programs to which our
technique currently applies, there is still a large collection of
useful programs, with possibly long sequences of loops, that are
included in the scope of our work.  In reality, our technique also
applies to a sub-class of programs with nested loops and with loop
bound expressions that involve scalar variables. However,
characterizing this sub-class of programs through a grammar is a bit
unwieldy, and we avoid doing so for reasons of clarity.

A program $\PP_N$ is a tuple $(\mathcal{V}, \mathcal{L}, \mathcal{A},
{\PB}, N)$, where $\mathcal{V}$ is a set of scalar variables,
$\mathcal{L} \subseteq \mathcal{V}$ is a set of loop counter
variables, $\mathcal{A}$ is a set of array variables, ${\PB}$ is the
program body, and $N$ is a special symbol denoting a positive integer
parameter. In the grammar shown above, we assume ${\ArVar} \in
\mathcal{A}$, ${\scVar} \in \mathcal{V} \setminus \mathcal{L}$,
${\lpVar} \in \mathcal{L}$ and ${\cconst}\in \integers$.  Furthermore,
``$\RelOP$'' is assumed to be one of the relational operators and
``$\OP$'' is an arithmetic operator.  We discuss more about the
operators supported by our technique in Sect. \ref{sec:diff-prog}.  We
also assume that each loop $L$ has a unique loop counter variable
$\ell$ that is initialized at the beginning of $L$ and is incremented
by $1$ at the end of each iteration. Assignments in the body of $L$
are assumed not to update $\ell$.  Finally, for each loop with
termination condition $\ell < \LBE$, we assume that $\LBE$ is an
expression in terms of $N$.  We denote by $k_L(N)$ the number of times
loop $L$ iterates in the program with parameter $N$.

We admit Hoare triples of the form $\{\varphi(N)\}$ $\;\PP_N\;$
$\{\psi(N)\}$, where $\varphi(N)$ and $\psi(N)$ are either universally
quantified, existentially quantified or quantifier-free formulas of
the form $\forall I\,$
$\left(\Phi(I,N) \implies \Psi(\mathcal{A}, \mathcal{V}, I,
N)\right)$, $\exists I\,$ $\left(\Phi(I,
N) \wedge \Psi(\mathcal{A}, \mathcal{V}, I, N)\right)$ and
$\Upsilon(\mathcal{A}, \mathcal{V}, N)$ respectively.  In the above,
$I$ is an array index variable, $\Phi$ is a quantifier-free formula in
the theory of arithmetic over integers, and $\Psi$ and $\Upsilon$ are
quantifier-free formulas in the combined theory of arrays and
arithmetic over integers.  Our technique also works for conjunctions
and disjunctions of such formulas as pre- and post-conditions in some
cases as discussed in Sect. \ref{sec:diff-pre}.

\iffalse

\input{ex-ss}  %% Running example to demonstrate the technique

\begin{example}
The program in Fig. \ref{fig:ss} is generated using the grammar shown
here.  It updates a scalar variable {\tt S} and an array variable {\tt
A}.  The first loop adds the value of each element in array {\tt A} to
variable {\tt S}.  The second loop adds the value of {\tt S} to each
element of {\tt A}.  The last loop aggregates the updated content of
{\tt A} in {\tt S}.  The pre-condition $\varphi(N)$ is a universally
quantified formula on array {\tt A} stating that each element has the
value $1$.  We need to establish the post-condition $\psi(N)$, which
is a predicate on {\tt S} and {\tt N}.  Note that the post-condition
has non-linear terms and is quite challenging to prove.  We will use
the Hoare triple in Fig. \ref{fig:ss} as our running example to
illustrate the important aspects of our verification algorithm.
\end{example}

\fi

%\paragraph{\bfseries Control Flow Graph Representation.}
\subsection{Tracking Control Flow}
\label{sec:cfg}

We represent a program $\PP_N$ using its {\em control flow graph} (or
CFG) $G_C = (\mathit{Locs}, CE, \mu)$, where $Locs$ denotes the set of
control locations (nodes) of the program, $CE \subseteq \mathit{Locs}
\times \mathit{Locs} \times \{\ltrue, \lfalse, \Unlabeled\}$ are the
control-flow edges, and $\mu: \mathit{Locs} \rightarrow
{\AssignStmts}$ $\union$ ${\BoolE}$ annotates every node in $Locs$
with either an assignment statement (of the form ${\scVar} = \EE$ or
${\ArVar}[\IE] = \EE$) from those represented by {\AssignStmts}, or a
Boolean expression from those represented by {\BoolE}.  Two
distinguished control locations, called $n_{start}$ and $n_{end}$ in
$\mathit{Locs}$ represent the entry and exit points of the program.
An edge $(n_1, n_2, label)$ represents flow of control from $n_1$ to
$n_2$ without any other intervening node. It is labeled $\ltrue$ or
$\lfalse$ if $\mu(n_1)$ is a Boolean condition, and is labeled
$\Unlabeled$ otherwise.  If $\mu(n_1)$ is a Boolean condition, there
are two outgoing edges from $n_1$, labeled $\ltrue$ and $\lfalse$
respectively, and control flows from $n_1$ to $n_2$ along $(n_1, n_2,
label)$ only if $\mu(n_1)$ evaluates to $label$. If $\mu(n_1)$ is an
assignment statement, there is a single outgoing edge from $n_1$, and
it is labeled $\Unlabeled$.  Henceforth, we use CFG to refer to a
control flow graph, and use $\PP_N$ to refer to both a program and its
CFG, when there is no confusion.

\begin{wrapfigure}[22]{r}{0.23\textwidth}
%% \begin{figure}[h]
%% \vspace{-4ex}
\begin{center}
{\scriptsize
  \begin{tikzpicture}[%
    ->,
%    shorten >=2pt,
    >=stealth,
    node distance=0.8cm,
    noname/.style={%
      ellipse,
      very thick,
%      blur shadow,
      fill=white,
      minimum width=1em,
      minimum height=1em,
      draw
    }
  ]
    \node[noname] (S)              {Start};
    \node[noname] (1) [below=of S] {1};
    \node[noname] (2) [below=of 1] {2};
    \node[noname] (3) [right=of 2] {3};
    \node[noname] (4) [right=of 3] {4};
    \node[noname] (5) [below=of 2] {5};
    \node[noname] (6) [right=of 5] {6};
    \node[noname] (7) [right=of 6] {7};
    \node[noname] (8) [below=of 5] {8};
    \node[noname] (9) [right=of 8] {9};
    \node[noname] (10) [right=of 9] {10};
    \node[noname] (E) [below=of 8] {End};

    % \begin{scope}[thick,blur shadow={shadow xshift=8pt, shadow blur steps=6, shadow opacity=70},-latex]
    \begin{scope}[very thick,-latex]
    \path
    (S) edge                   node {} (1)
    (1) edge                   node {} (2)

    (2) edge                   node [above, xshift=-2mm] {$~~~~\ltrue$} (3)
    (3) edge                   node {} (4)
    (4) edge [bend right=55pt] node {} (2)
    (2) edge node {$\lfalse~~~~$} (5)

    (5) edge                   node [above, xshift=-2mm] {$~~~~\ltrue$} (6)
    (6) edge                   node {} (7)
    (7) edge [bend right=55pt] node {} (5)
    (5) edge node {$\lfalse~~~~$} (8)

    (8) edge                   node [above, xshift=-2mm] {$~~~~\ltrue$} (9)
    (9) edge                   node {} (10)
    (10) edge [bend right=55pt] node {} (8)
    (8) edge node {$\lfalse~~~~$} (E);
    \end{scope}
  \end{tikzpicture}
}
\end{center}
%% \vspace{1ex}
\caption{CFG of Fig. \ref{fig:ss}}
\label{fig:cfg}
%% \end{figure}
\end{wrapfigure}  %% Tikz to generate the CFG of the running example 

A CFG may have cycles due to the presence of loops in the program.
A \emph{back-edge} of a loop is an edge from the node corresponding to
the last statement in the loop body to the node representing the loop
head.  An \emph{exit-edge} is an edge from the loop head to a node
outside the loop body. An \emph{incoming-edge} is an edge to the loop
head from a node outside the loop body. We assume that every loop has
exactly one \emph{back-edge}, one \emph{incoming-edge} and
one \emph{exit-edge}.
%% \todo{For technical reasons, and without loss
%% of generality, we also assume that the \emph{exit-edge} of a loop
%% always goes to a ``nop'' node (say, having a statement {\tt x = x;}).}

A node $n$ in a control flow graph \emph{strictly post-dominates} a
node $m$ if all control flow paths from node $m$ pass through $n$
before reaching the exit node and $m$ is not the same as $n$.  The
\emph{immediate post-dominator} of node $n$ is a node that strictly
post-dominates $n$ but does not strictly post-dominate any other node
that strictly post-dominates $n$.

For every node $n$ in the CFG, we use $\mathit{def}(n)$ and
$\mathit{uses}(n)$ to refer to the set of scalar variables and arrays
(not loop counter variables) that are defined and used, respectively,
in the statement or boolean expression at $n$.  We include the
symbolic parameter $N$ in the set $\mathit{uses}(n)$ if the statement
at node $n$ makes use of $N$.  Since the parameter $N$ cannot be
re-defined by any program generated according to the grammar in
Fig.~\ref{fig:grammar}, it never appears in $\mathit{def}(n)$ for any
node $n$.  If $A$ represents an array in $\mathit{def}(n)$, we use
$\defindex(A,n)$ to refer to the index expression of the element of
$A$ updated at $n$.  Similarly, if $A \in \mathit{uses}(n)$, we use
$\useindex(A,n)$ to refer to the set of index expression(s) of
element(s) of $A$ read at $n$.

\begin{example}
The CFG of the program in Fig. \ref{fig:ss} is shown in
Fig. \ref{fig:cfg}.  The nodes are numbered such that they coincide
with the line numbers in the program.  The graph has three cycles each
corresponding to a loop in the given program. \{ $(1,2)$, $(2,5)$,
$(5,8)$ \} are \emph{incoming-edges}, \{ $(4,2)$, $(7,5)$, $(10,8)$ \}
are \emph{back-edges} and \{ $(2,5)$, $(5,8)$, $(8,End)$ \} are
\emph{exit-edges}.

Node $2$ strictly post-dominates nodes $1$, $4$ and $Start$.  Node $2$
is an immediate post-dominator of $1$ and $4$.  Node $8$ strictly
post-dominates all nodes except itself and $End$.  Node $8$ is an
immediate post-dominator of $5$ and $10$.  $End$ node strictly
post-dominates all nodes except itself.  $Start$ node does not
strictly post-dominate any other node.  Similarly, the post-domination
relations for other nodes can be computed.

The set of scalars and arrays defined at nodes $1$, $3$ and $9$ is
$\mathit{def}(n)$ := \{{\tt S}\} and the set at node $6$ is
$\mathit{def}(n)$ = \{{\tt A}\}.  The index of {\tt A} updated at node
$6$ is $\defindex$({\tt A},$6$) = {\tt i}.  The set of scalars and
arrays used at nodes $3$, $6$ and $9$ is $\mathit{uses}(n)$ = \{{\tt
  S, A}\}.  For node $1$, as there are no uses of scalars or arrays,
$\mathit{uses}(n)$ = $\emptyset$.  The set of indices of array {\tt A}
used at nodes $3$, $6$ and $9$ is $\useindex$({\tt A},$n$) = \{{\tt
  i}\}.
\end{example}

\section{Difference Computation}
\label{sec:diffcomp}

%\begin{wrapfigure}[8]{r}{0.28\textwidth}
%\vspace{-4ex}
\begin{figure*}[!t]
\centering

\resizebox{0.80\textwidth}{!}{

\tikzset{every picture/.style={line width=0.75pt}} %set default line width to 0.75pt        

\begin{tikzpicture}[x=0.75pt,y=0.75pt,yscale=-1,xscale=1]
%uncomment if require: \path (0,332); %set diagram left start at 0, and has height of 332

%Rounded Rect [id:dp7946699248442919] 
\draw  [fill={rgb, 255:red, 255; green, 255; blue, 255 }  ,fill opacity=1 ][line width=2.25]  (142,30.8) .. controls (142,20.42) and (150.42,12) .. (160.8,12) -- (260.53,12) .. controls (270.92,12) and (279.33,20.42) .. (279.33,30.8) -- (279.33,87.2) .. controls (279.33,97.58) and (270.92,106) .. (260.53,106) -- (160.8,106) .. controls (150.42,106) and (142,97.58) .. (142,87.2) -- cycle ;
%Straight Lines [id:da7024182021069955] 
\draw [line width=2.25]    (34.67,59) -- (135.67,59) ;
\draw [shift={(140.67,59)}, rotate = 180] [fill={rgb, 255:red, 0; green, 0; blue, 0 }  ][line width=0.08]  [draw opacity=0] (16.07,-7.72) -- (0,0) -- (16.07,7.72) -- (10.67,0) -- cycle    ;
%Rounded Rect [id:dp3869805598172751] 
\draw  [fill={rgb, 255:red, 255; green, 255; blue, 255 }  ,fill opacity=1 ][line width=2.25]  (388,30.8) .. controls (388,20.42) and (396.42,12) .. (406.8,12) -- (506.53,12) .. controls (516.92,12) and (525.33,20.42) .. (525.33,30.8) -- (525.33,87.2) .. controls (525.33,97.58) and (516.92,106) .. (506.53,106) -- (406.8,106) .. controls (396.42,106) and (388,97.58) .. (388,87.2) -- cycle ;
%Straight Lines [id:da496611402480142] 
\draw [line width=2.25]    (280.67,59) -- (381.67,59) ;
\draw [shift={(386.67,59)}, rotate = 180] [fill={rgb, 255:red, 0; green, 0; blue, 0 }  ][line width=0.08]  [draw opacity=0] (16.07,-7.72) -- (0,0) -- (16.07,7.72) -- (10.67,0) -- cycle    ;
%Rounded Rect [id:dp8936564613349415] 
\draw  [fill={rgb, 255:red, 255; green, 255; blue, 255 }  ,fill opacity=1 ][line width=2.25]  (388,229.8) .. controls (388,219.42) and (396.42,211) .. (406.8,211) -- (506.53,211) .. controls (516.92,211) and (525.33,219.42) .. (525.33,229.8) -- (525.33,286.2) .. controls (525.33,296.58) and (516.92,305) .. (506.53,305) -- (406.8,305) .. controls (396.42,305) and (388,296.58) .. (388,286.2) -- cycle ;
%Straight Lines [id:da9632035041777468] 
\draw [line width=2.25]    (285.67,258) -- (386.67,258) ;
\draw [shift={(280.67,258)}, rotate = 0] [fill={rgb, 255:red, 0; green, 0; blue, 0 }  ][line width=0.08]  [draw opacity=0] (16.07,-7.72) -- (0,0) -- (16.07,7.72) -- (10.67,0) -- cycle    ;
%Rounded Rect [id:dp32177630174870087] 
\draw  [fill={rgb, 255:red, 255; green, 255; blue, 255 }  ,fill opacity=1 ][line width=2.25]  (634,229.8) .. controls (634,219.42) and (642.42,211) .. (652.8,211) -- (752.53,211) .. controls (762.92,211) and (771.33,219.42) .. (771.33,229.8) -- (771.33,286.2) .. controls (771.33,296.58) and (762.92,305) .. (752.53,305) -- (652.8,305) .. controls (642.42,305) and (634,296.58) .. (634,286.2) -- cycle ;
%Straight Lines [id:da734293582009252] 
\draw [line width=2.25]    (531.67,258) -- (632.67,258) ;
\draw [shift={(526.67,258)}, rotate = 0] [fill={rgb, 255:red, 0; green, 0; blue, 0 }  ][line width=0.08]  [draw opacity=0] (16.07,-7.72) -- (0,0) -- (16.07,7.72) -- (10.67,0) -- cycle    ;
%Straight Lines [id:da15071773896437524] 
\draw [line width=2.25]    (703.67,106) -- (703.67,203.42) ;
\draw [shift={(703.67,208.42)}, rotate = 270] [fill={rgb, 255:red, 0; green, 0; blue, 0 }  ][line width=0.08]  [draw opacity=0] (16.07,-7.72) -- (0,0) -- (16.07,7.72) -- (10.67,0) -- cycle    ;
%Straight Lines [id:da6570807589526793] 
\draw [line width=2.25]    (525.67,59) -- (626.67,59) ;
\draw [shift={(631.67,59)}, rotate = 180] [fill={rgb, 255:red, 0; green, 0; blue, 0 }  ][line width=0.08]  [draw opacity=0] (16.07,-7.72) -- (0,0) -- (16.07,7.72) -- (10.67,0) -- cycle    ;
%Rounded Rect [id:dp13493447015744686] 
\draw  [fill={rgb, 255:red, 255; green, 255; blue, 255 }  ,fill opacity=1 ][line width=2.25]  (633,28.8) .. controls (633,18.42) and (641.42,10) .. (651.8,10) -- (751.53,10) .. controls (761.92,10) and (770.33,18.42) .. (770.33,28.8) -- (770.33,85.2) .. controls (770.33,95.58) and (761.92,104) .. (751.53,104) -- (651.8,104) .. controls (641.42,104) and (633,95.58) .. (633,85.2) -- cycle ;
%Rounded Rect [id:dp03186957042390359] 
\draw  [fill={rgb, 255:red, 255; green, 255; blue, 255 }  ,fill opacity=1 ][line width=2.25]  (142,229.8) .. controls (142,219.42) and (150.42,211) .. (160.8,211) -- (260.53,211) .. controls (270.92,211) and (279.33,219.42) .. (279.33,229.8) -- (279.33,286.2) .. controls (279.33,296.58) and (270.92,305) .. (260.53,305) -- (160.8,305) .. controls (150.42,305) and (142,296.58) .. (142,286.2) -- cycle ;
%Straight Lines [id:da8701858248020855] 
\draw [line width=2.25]    (39.67,258) -- (140.67,258) ;
\draw [shift={(34.67,258)}, rotate = 0] [fill={rgb, 255:red, 0; green, 0; blue, 0 }  ][line width=0.08]  [draw opacity=0] (16.07,-7.72) -- (0,0) -- (16.07,7.72) -- (10.67,0) -- cycle    ;

% Text Node
\draw (160,50) node [anchor=north west][inner sep=0.75pt]   [align=left] {\textbf{\textit{{\llarge Renaming}}}};
% Text Node
\draw (5,48) node [anchor=north west][inner sep=0.75pt]   [align=left] {\llarge $\PP_{N}$};
% Text Node
\draw (418,50) node [anchor=north west][inner sep=0.75pt]   [align=left] {\textbf{\textit{{\llarge Peeling}}}};
% Text Node
\draw (323,33) node [anchor=north west][inner sep=0.75pt]   [align=left] {\llarge $\PP_{N}^{r}$};
% Text Node
\draw (400,232) node [anchor=north west][inner sep=0.75pt]   [align=left] {\textbf{\textit{{\llarge Generating}}}\\\textbf{\textit{{\llarge Difference}}}\\\textbf{\textit{{\llarge Program}}}};
% Text Node
\draw (-97,249) node [anchor=north west][inner sep=0.75pt]   [align=left] {\llarge Simplified $\boldsymbol\partial \PP_{N}$};
% Text Node
\draw (650,232) node [anchor=north west][inner sep=0.75pt]   [align=left] {{\llarge \textbf{\textit{Identifying}}}\\{\llarge \textbf{\textit{Affected}}}\\{\llarge \textbf{\textit{Variables}}}};
% Text Node
\draw (571,230) node [anchor=north west][inner sep=0.75pt]   [align=left] {\llarge $\PP_{N}^{p}$};
% Text Node
\draw (546,266) node [anchor=north west][inner sep=0.75pt]  [font=\small] [align=left] {\textbf{AffectedVars}};
% Text Node
\draw (673,144) node [anchor=north west][inner sep=0.75pt]   [align=left] {\llarge $\PP_{N}^{p}$};
% Text Node
\draw (708,148) node [anchor=north west][inner sep=0.75pt]   [align=left] {\llarge $G_{D}$};
% Text Node
\draw (571,33) node [anchor=north west][inner sep=0.75pt]   [align=left] {\llarge $\PP_{N}^{p}$};
% Text Node
\draw (648,40) node [anchor=north west][inner sep=0.75pt]   [align=left] {{\llarge \textbf{\textit{Computing}}}\\{\llarge \textbf{\textit{DDG}}}};
% Text Node
\draw (155,232) node [anchor=north west][inner sep=0.75pt]   [align=left] {\textbf{\textit{{\llarge Simplifying}}}\\\textbf{\textit{{\llarge Difference}}}\\\textbf{\textit{{\llarge Program}}}};
% Text Node
\draw (324,234) node [anchor=north west][inner sep=0.75pt]   [align=left] {\llarge $\boldsymbol\partial \PP_{N}$};

\end{tikzpicture}

}
\caption{Sequence of steps for the computation of the difference
  program $\partial \PP_N$.}
\label{fig:flow-overview}
\end{figure*}
%\end{wrapfigure}

In this section, we focus on the generation of two crucial components
for performing \emph{full-program induction}, namely (i)
the \emph{difference program} ${\partial \PP_N}$ and (ii)
the \emph{difference pre-condition} ${\partial \varphi(N)}$.

Computing the \emph{difference program} ${\partial \PP_N}$ is a
non-trivial endeavor.  Fig. \ref{fig:flow-overview} presents a high
level overview of the sequence of steps involved in the generation of
a difference program.  We first carefully rename the variables and
arrays such that each loop in the renamed program refers to its own
copy of variables/arrays.  Note that this is similar in spirit to SSA
renaming, although there are important differences that will become
clear in Sect. \ref{sec:renaming}.  We next peel the last (in some
cases the last few) iteration(s) of each loop in the program such that
the remaining part of each loop in the peeled version of $\PP_N$
iterates exactly the same number of times as the corresponding loop in
$\PP_{N-1}$.  Throughout this paper, we use the term \emph{peel} to
denote the last (or last few as the case may be) iteration(s) of a
loop that have been removed from the loop.  The motivation for this
peeling is that the difference program can often be constructed by
moving the peels of individual loops to the end of the program and
stitching them up in appropriate ways, as will be discussed in detail
in Sects. \ref{sec:diff-prog:peel} and \ref{sec:diff-prog:affected}.
In order to ensure that the semantics of the program is preserved even
after moving the peels to the end of the program, we need to do a
careful analysis of the data dependencies between between variables
and array elements updated/read in statements within loops and those
updated/read in the peeled iterations.  This is achieved by computing
a customized data dependence graph, details of which are presented in
Sect. \ref{sec:refine-pdg}.  In general, variables and array elements
in the program $\PP_N$ can have data/control dependencies on the
parameter $N$ beyond those attributable to the iteration counts of
loops being possibly determined by $N$.  We call such variables/array
elements as ``affected'' by $N$ and identify them using a special
data-flow analysis and the data dependencies computed above.  Details
of this analysis are presented in Sect. \ref{sec:affected}.  Finally,
we move the peels of loops to the end of the program and use the
information about data dependencies and affected variables computed
above to appropriately stitch and modify them to obtain an unoptimized
version of the difference program.  In general, this modification may
involve adding carefully constructed loops in the difference program
itself.  It turns out that the difference program obtained in this way
can often be significantly optimized using simple optimization
techniques.  This includes things like pruning superfluous
computational steps and accelerating loops among others.  We include
this optimization as the last step in our flow for generating the
difference program.

Our empirical studies show that for the success of full-program
induction, it is very important that the difference program
$\partial \PP_N$ be ``simpler'' (defined more precisely later) than
the original program $\PP_N$.  The optimizations enabled by the
affected variable analysis and the simplification of the difference
program are crucial to actualize this requirement.  Each of the above
steps, depicted in Fig. \ref{fig:flow-overview}, is elaborated in
Sects. \ref{sec:renaming} - \ref{sec:simp-diff}.

%\input{pre-process} %% Elaboration of the renaming method 
%\paragraph{\bfseries Renaming Program Variables.}
\subsection{Renaming Variables and Arrays}
\label{sec:renaming}

Recall that our proposed approach requires us to construct a
difference program $\partial \PP_N$ such that $\{\varphi(N)\}$
$\;{\PP_N}\;$ $\{\psi(N)\}$ holds iff $\{\varphi(N)\}$ $\;
{\PP_{N-1}};{\partial \PP_N}\;$ $\{\psi(N)\}$ holds (condition $1$ of
Theorem \ref{thm:full-prog-ind-sound}).  A natural (though not
necessary) way to do this is to construct $\partial \PP_N$ such that
both $\PP_N$ and $\PP_{N-1}; \partial \PP_N$ modify all relevant
scalar variables and arrays in exactly the same way.  Note, however,
that $\PP_N$ may update the same scalar variable or array in multiple
sequentially composed loops.  Therefore, when $\PP_{N-1}$ terminates
and $\partial \PP_N$ starts executing (in $\PP_{N-1}; \partial
\PP_N$), we may no longer have access to the values of scalar
variables and arrays that resulted after individual loops in
$\PP_{N-1}$ terminated.  In general, this makes it difficult to
construct $\partial \PP_N$ compositionally from the peels of
individual loops while ensuring that $\PP_{N-1}; \partial \PP_N$ has
the same effect as $\PP_N$ on all relevant scalar variables and
arrays.  To circumvent this problem, we propose to pre-process $\PP_N$
such that each loop in $\PP_N$ updates its own ``private'' copy of
scalar variables and arrays.  We add \emph{glue code} to copy the
values of these scalar variables and arrays after one loop ends and
before the next one begins.  We also rename the variables/arrays
referred in the post-condition $\psi(N)$ to their versions
corresponding to the last loop in the program.  As we show later, this
eases the construction of $\partial \PP_N$, and also helps in
inductive strengthening of the pre- and post-conditions.

It is important to note here that static single assignment (SSA)
\cite{ssa} is a well-known technique for renaming scalar variables
such that a variable is updated at most once in a program.  Similarly,
array SSA renaming has been studied earlier in the context of
compilers to achieve similar goals \cite{arrayssa}.  Unlike SSA
renaming, we do not have the stringent requirement of a single update
in the whole program.  For our method to function successfully, we
only require each loop to update its own copy of a scalar/array
variable.  We note that these well-studied techniques can be easily
adapted for our purposes.

In the following discussion, we define the \emph{collapsed CFG of a
  program} $\PP_N$ as the CFG obtained by collapsing all nodes and
edges in the body of each loop of $\PP_N$ into a single node
identified with the loop-head.  Given the syntactic restrictions on
the input programs as discussed in Sect. \ref{sec:prelim}, the
collapsed CFG is a finite directed acyclic graph (DAG).  This DAG has
finitely many paths, and along each such path, there is a total
ordering of all collapsed loops appearing along the path.  For
notational clarity, we henceforth use $vA$ (as opposed to $v$ for a
scalar variable and $A$ for an array) as a combined symbolic name to
refer to a scalar variable or array, depending on the context.  At
each node $n$ of the collapsed CFG, we rename each scalar/array $vA$
to $vA^{n}$.  Note that when $n$ is a loop-head, this amounts to
renaming all scalars/arrays in the body of the loop as well (due to
collapsing of nodes in the loop body).  To ensure the correct flow of
data values between nodes of the collapsed CFG, we create fresh nodes
called \emph{glue nodes} whenever required, and add program statements
in these glue nodes that effectively copy values of the appropriate
scalars/arrays from one node of the collapsed CFG to another.

  %% corresponding to $n'$ to their respective versions corresponding to
  %% $n$.  For every scalar variable $v^{n'}$, this amounts to
  %% introducing a statement $v^{n} = v^{n'};$ at the glue node.  For
  %% every array $A$, we introduce the loop {\tt for(i=0; i<f(N); i++) \{
  %%   $A^{n}$[i] = $A^{n'}$[i]; \}} at the glue node, where {\tt f(N)}
  %% denotes the size of array $A$ in $\PP_N$.

We assume the availability of a function \textsc{Rename}, that
generates a program with the necessary renaming as described, while
ensuring correct data flow.  For notational convenience, we call the
renamed program corresponding to $\PP_N$ as $\PP_N^r$.  The interested
reader can find a detailed algorithm for \textsc{Rename} in
\cite{divyesh-phdthesis}.

\begin{figure*}[tb]
  \begin{minipage}{0.12\linewidth}
    ~
  \end{minipage}
\begin{minipage}{0.38\linewidth}
\begin{alltt}
// assume(\(\forall\)i\(\in\)[0,N) A[i] = 1)

1.  S = 0;
2.  for(i=0; i<N; i++) \{
3.    S = S + A[i];
4.  \}

5.  for(i=0; i<N; i++) \{
6.    A[i] = A[i] + S;
7.  \}


8.  for(i=0; i<N; i++) \{
9.    S = S + A[i];
10. \}

// assert(S = N \(\times\) (N+2))
\end{alltt}
\begin{center}
  (a)
\end{center}
\end{minipage}
\begin{minipage}{0.38\linewidth}
\begin{alltt}
// assume(\(\forall\)i\(\in\)[0,N) A[i] = 1)

1.  S = 0;
2.  for(i=0; i<N; i++) \{
3.    S = S + A[i];
4.  \}

5.  for(i=0; i<N; i++) \{
6.    A1[i] = A[i] + S;
7.  \}
{\color{blue}
8.  S1 = S;   // Glue code}
9.  for(i=0; i<N; i++) \{
10.   S1 = S1 + A1[i];
11. \}

// assert(S1 = N \(\times\) (N+2))
\end{alltt}
\begin{center}
  (b)
\end{center}
\end{minipage}
\begin{minipage}{0.1\linewidth}
  ~
\end{minipage}
\caption{(a) Running example and (b) Hoare triple with renamed program}
\label{fig:ss-rename}
\end{figure*}

\begin{lemma} \label{lemma:rename:no-overwriting}
  Let $n$ be a node in the collapsed CFG of $\PP_N$.  In every
  execution of the renamed program $\PP_N^r$ in which control flows
  through $n$, the scalar variable/array $vA^n$ is not updated after
  the execution exits $n$.
\end{lemma}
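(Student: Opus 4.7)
The plan is to argue purely from the structure of the collapsed CFG and the renaming discipline described just before the lemma. First, I would observe that the collapsed CFG of $\PP_N$ is a finite DAG: every loop of $\PP_N$ has been contracted to its loop head, so all cycles of the original CFG have been absorbed into single collapsed nodes. Consequently, along any execution of $\PP_N^r$, once control leaves a collapsed node $n$ it visits only collapsed nodes that are strict DAG-descendants of $n$, and it never re-enters $n$.

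Next, I would use the renaming convention itself. By the construction of \textsc{Rename}, every scalar/array access in the statements belonging to collapsed node $n$ (which, when $n$ is a loop head, includes every statement in the body of that loop) uses the renamed name $vA^n$. Moreover, the only new statements introduced by \textsc{Rename} outside the original collapsed nodes live in glue nodes, and each glue node $g$ sitting on the edge from $n_1$ to $n_2$ contains only copy statements of the form $vA^{n_2} \sassign vA^{n_1}$; such statements update $vA^{n_2}$ but leave $vA^{n_1}$ unchanged. Hence the only nodes whose execution can assign to $vA^n$ are (i) $n$ itself and (ii) a glue node whose target is $n$.

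I would then combine the two observations. Fix an execution that passes through $n$, and consider the moment at which control first leaves $n$ permanently: for a non-loop node this is its single outgoing transition, and for a loop-head node it is the iteration at which the loop exit condition fires. After that moment, every subsequently executed statement lies either in a collapsed DAG-descendant $n' \neq n$ of $n$, or in a glue node whose target is such an $n'$; by the previous paragraph none of these statements writes to $vA^n$. Therefore $vA^n$ is not updated after control exits $n$, which is exactly the claim.

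The step I expect to require the most care is making precise, for a loop-head collapsed node $n$, what it means for control to \emph{exit} $n$: updates to $vA^n$ performed in later iterations of the loop happen \emph{before} this exit point and are therefore not counter-examples. Once we fix the convention that exiting a loop-head node means traversing its unique outgoing edge in the collapsed CFG (equivalently, the loop's exit-edge in the original CFG), the argument above goes through cleanly; the only remaining bookkeeping is checking that \textsc{Rename}'s glue-code construction indeed introduces no writes to variables superscripted by ancestor nodes, which follows directly from the copy-only form of glue statements.
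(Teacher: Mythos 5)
Your proof is correct and follows essentially the same route as the paper's: acyclicity of the collapsed CFG guarantees that after exiting $n$ control never returns to $n$ or to the glue nodes feeding into it, and the renaming discipline guarantees those are the only places $vA^n$ can be written. Your extra care about what ``exit'' means for a loop-head node is a reasonable elaboration of a point the paper leaves implicit, but it does not change the argument.
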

\begin{proof}
  Since the collapsed CFG of $\PP_N^r$ is acyclic, once control flow
  exits node $n$, it cannot come back to either $n$ or to any node
  $n'$ that has a control flow path to $n$.  The proof now follows
  from the observation that renaming ensures that any scalar
  variable/array renamed $vA^n$ can only be updated in glue nodes
  immediately leading to node $n$ or in node $n$ itself.
\end{proof}

For convenience of exposition, we will henceforth refer to the
property formalized in Lemma \ref{lemma:rename:no-overwriting} as the
\emph{``no-overwriting''} property of renamed programs.  For a node
$n$ that corresponds to a collapsed loop in the collapsed CFG of
$\PP_N$, we will also use the notation $\mu(n)$ to denote the entire
loop represented by $n$ in the subsequent discussion.

\begin{lemma} \label{lemma:rename:sound}
  $\{\varphi(N)\}$ $\;\PP_N\;$ $\{\psi(N)\}$ holds iff
  $\{\varphi^r(N)\}$ $\;\PP^r_N\;$ $\{\psi^r(N)\}$ holds.
\end{lemma}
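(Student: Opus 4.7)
The plan is to establish that \textsc{Rename} produces a program semantically equivalent to the original, modulo a simple state correspondence induced by the renaming, and then lift this to the Hoare triple equivalence.

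First, I would set up a notion of \emph{corresponding states}. Fix an execution $\sigma$ of $\PP_N$ from a state satisfying $\varphi(N)$, and an execution $\sigma^r$ of $\PP_N^r$ from the analogous state. For every node $n$ in the collapsed CFG, let the \emph{exit-state value} of a scalar/array $vA$ at $n$ in $\sigma$ be its value immediately after control leaves $n$. I claim that in $\sigma^r$ the variable $vA^n$ holds exactly this value from the moment control leaves $n$ until the program terminates. This is the key invariant that, once established, trivially yields the lemma: $\varphi^r(N)$ is $\varphi(N)$ with each $vA$ replaced by its version at the pre-state (i.e.\ the node preceding the first use) and $\psi^r(N)$ is $\psi(N)$ with each $vA$ replaced by its version at the last node at which it is updated along the terminating control-flow path, so the renamed pre-/post-conditions are satisfied in $\sigma^r$ iff the originals are satisfied in $\sigma$.

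To prove the invariant, I would proceed by induction on the topological order of nodes in the collapsed CFG (which is a DAG by the grammar restrictions of Sect.~\ref{sec:prelim}). For the base case, the entry node is trivial since no scalar/array has been updated yet and the renaming at $n_{start}$ simply aliases the inputs. For the inductive step, consider a node $n$ and assume the invariant holds for every predecessor. There are two cases:
\begin{itemize}
\item If $n$ is an ordinary node (atomic assignment or Boolean condition), then by the renaming rule every scalar/array $vA$ read in $\mu(n)$ has been replaced by $vA^{m}$ where $m$ is the appropriate predecessor (either $n$ itself if $vA$ was updated within $n$, or the immediate predecessor via glue code); the inductive hypothesis gives these variables the correct values, so the update performed at $n$ in $\sigma^r$ writes to $vA^n$ exactly the value that $\sigma$ writes to $vA$ at $n$.
\item If $n$ represents a collapsed loop, the same argument applies iteration by iteration: all scalars/arrays updated in the loop body refer to $vA^n$ throughout, and all scalars/arrays read that were last defined outside the loop refer to their pre-loop names, populated by glue code immediately preceding $n$. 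Hence the loop in $\PP_N^r$ computes, on $vA^n$, precisely what the loop in $\PP_N$ computes on $vA$.
\end{itemize}
Finally, Lemma~\ref{lemma:rename:no-overwriting} guarantees that once $n$ is exited, $vA^n$ is never overwritten, so its value persists to termination, completing the induction.

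The hard part, and where care is needed, is the bookkeeping around glue nodes: I must verify that \textsc{Rename} inserts a glue assignment $vA^{n'} := vA^{m}$ on \emph{every} incoming edge from $m$ to $n'$ whenever $vA$ is live at $n'$ and $m$ is the most recent node where $vA$ was named differently. This is a purely syntactic property of the renaming algorithm of \cite{divyesh-phdthesis}, so I would cite that construction rather than reproving it, but I must make explicit that this property of \textsc{Rename} is what licenses the inductive step above. The reverse direction (from $\PP_N^r$ to $\PP_N$) follows by the same argument read backwards, since the renaming is a syntactic bijection on the underlying statements once the glue assignments are projected out.
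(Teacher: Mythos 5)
Your proof is correct but follows a genuinely different route from the paper's. The paper argues proof-theoretically: it posits, at every node $n$ of the collapsed CFG, a pre-/post-invariant pair $\inv^{pre}_n$, $\inv^{post}_n$ whose per-node Hoare triples compose to entail $\{\varphi(N)\}\;\PP_N\;\{\psi(N)\}$, observes that renaming preserves the validity of each local triple, and recomposes the renamed triples. You instead argue operationally: you set up a simulation between executions of $\PP_N$ and $\PP_N^r$, induct on the topological order of the collapsed DAG to show that $vA^n$ carries the exit-state value of $vA$ at $n$, and use Lemma~\ref{lemma:rename:no-overwriting} to propagate that value to termination. Your version is more elementary and self-contained in one respect --- it does not presuppose the existence of suitable intermediate invariants (which the paper leaves implicit, essentially an appeal to relative completeness of Hoare logic) --- but it pays for this by having to make the glue-code bookkeeping explicit, which both you and the paper ultimately discharge by citing the \textsc{Rename} construction of \cite{divyesh-phdthesis}. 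One small point to tighten: your description of $\psi^r(N)$ as renaming each $vA$ to its version at the last updating node \emph{along the terminating path} is path-dependent, whereas the actual renaming of the post-condition is static (to the version corresponding to the last loop/node in the collapsed DAG); in the presence of branches these agree only because the glue code merges versions at join points, so that reliance should be stated alongside the glue-edge property you already flag.
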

\begin{proof}\textit{(Sketch)}
Suppose the Hoare triple $\{\varphi(N)\}$ $\;\PP_N\;$ $\{\psi(N)\}$
holds.  Then, at every node $n$ in the collapsed CFG of $\PP_N$ there
exists a pre-condition invariant $\inv^{pre}_n$ and a post-condition
invariant $\inv^{post}_n$, such that (i) the Hoare triple
$\{\inv^{pre}_n\}$ $\;\mu(n)\;$ $\{\inv^{post}_n\}$ at node $n$ holds
(ii) the composition of these Hoare triples entails $\{\varphi(N)\}$
$\;\PP_N\;$ $\{\psi(N)\}$.  Note that once renaming is done, the Hoare
triple obtained by renaming variables/arrays in $\inv^{pre}_n$ and
$\inv^{post}_n$ and by replacing $\mu(n)$ with the corresponding
renamed program statement(s) holds iff $\{\inv^{pre}_n\}$ $\;\mu(n)\;$
$\{\inv^{post}_n\}$ holds.  Composing these renamed Hoare triples at
all the nodes in the collapsed CFG of $\PP^r_N$ proves the forward
direction of the lemma.  The case in the reverse direction is similar.
\end{proof}

We end this subsection with an illustration of the program
transformation achieved by applying the renaming strategy mentioned
above.  For convenience we replicate our running example from
Fig. \ref{fig:ss} in Fig. \ref{fig:ss-rename}(a).

\begin{example}
Consider the program shown in Fig. \ref{fig:ss-rename}(a).  This
program has multiple sequentially composed loops that update a scalar
{\tt S} and an array {\tt A}. The transformed program after renaming
the scalar and array variables using function \textsc{Rename} is shown
in Fig. \ref{fig:ss-rename}(b), where we have used simple names for
the renamed versions of {\tt S} and {\tt A} for ease of readability.
Notice that we rename the array {\tt A} in the second loop to {\tt
  A1}, and rename the variable {\tt S} in the third loop to {\tt S1}.
The statement at line 8 in Fig. \ref{fig:ss-rename}(b) is the glue
code to copy values from one version of the renamed scalar/array
(variable {\tt S} in our program) to another version of the same
scalar/array (version {\tt S1}).  We avoid creating new versions of
{\tt S} and {\tt A} for statements and loops that do not update them.
Values are read directly from the version of {\tt S} and {\tt A} that
reaches the access location.  This helps in reducing the glue code
required for renaming quite significantly.
\end{example}

  %% Section on renaming scalars and arrays

%\paragraph{\bfseries Peeling the Loops.}
\subsection{Peeling the Loops}
\label{sec:peeling}

\begin{algorithm*}[!t]
  \caption{\textsc{PeelAllLoops}$\left((\mathit{Locs}, \mathit{CE}, \mu): \mbox{ program } \PP_N \right)$}
  \label{alg:peelloops}
%  \scriptsize
  \begin{algorithmic}[1]
    \State $\PP^p_N := (\mathit{Locs}^p, \mathit{CE}^p, \mu^p)$, where $\mathit{Locs}^p = \mathit{Locs}$, $\mathit{CE}^p = \mathit{CE}$, $\mu^p = \mu$; \Comment{$\PP^p_N$ is a copy of $\PP_N$}
    \State $\peelnodes$ := $\varnothing$;
    \For{each loop $L \in \textsc{Loops}( \PP^p_N )$}
      \label{alg:line:canon-peel-loop}
      \State Let $k_L({N})$ be the expression for iteration count of $L$ in $\PP^p_N$;
      \State $\peelcount := \textsc{Simplify}(k_L({N})-k_L({N-1}))$;  \label{peel:line:count}
      \If{$\peelcount$ is not a constant} \label{peel:line:check}
        \State {\bf throw} ``Failed to peel non-constant number of iterations''; \label{peel:line:fail}
      \EndIf
      \State $\langle\PP^p_N, \mathit{Locs}'\rangle := \textsc{PeelSingleLoop}(\PP^p_N, L, k_L({N-1}), \peelcount)$; \label{peel:line:peelloop}
      \infocomment{We assume availability of function \textsc{PeelSingleLoop}, for example, from a compiler framework like LLVM.}
      \infocomment{It transforms loop $L$ so that last $\peelcount$ iterations of $L$ are peeled.}
      \infocomment{Updated CFG and newly created CFG nodes for the peeled iterations are returned.}
      \State $\peelnodes$ := $\peelnodes$ $\union$ $\mathit{Locs}'$;  \label{peel:line:collect-peel-nodes}
    \EndFor
    \State \Return $\langle\PP^p_N, \peelnodes\rangle$;  \label{peel:line:return}
  \end{algorithmic}
\end{algorithm*}
 %% Algorithm for peeling the loops in the program

Recall from Sect. \ref{sec:overview} that our induction strategy
requires us to use $\PP_{N-1}; \partial \PP_N$ in place of $\PP_N$
when proving the Hoare triple $\{\varphi(N)\}\;\PP_N\;\{\psi(N)\}$.
In general, the parameter $N$ may determine the number of times each
loop in $\PP_N$ iterates (see, for example,
Fig. \ref{fig:ss-rename}(b)).  Therefore, the count of iterations of a
loop in $\PP_{N-1}$ may differ from the corresponding count in
$\PP_N$.  Relating $\PP_N$ and $\PP_{N-1}$ requires taking into
account such differences of loop iterations.  Towards this end, we
transform $\PP_N$ by \emph{peeling} the last few iterations of each
loop as needed, so that corresponding loops in $\PP_{N-1}$ and the
transformed $\PP_N$ iterate the same number of times.  This is done by
function \textsc{PeelAllLoops} shown in Algorithm \ref{alg:peelloops}.
The algorithm first makes a copy, viz. $\PP^p_N$, of the non-collapsed
input CFG $\PP_N$.  Let $\textsc{Loops}(\PP^p_N)$ denote the set of
loops of $\PP^p_N$, and let $k_L(N)$ and $k_L(N-1)$ denote the number
of times loop $L$ iterates in $\PP^p_N$ and $\PP_{N-1}$ respectively.
The difference $k_L(N) - k_L(N-1)$, computed in line
\ref{peel:line:count}, gives the extra iteration count of loop $L$ in
$\PP^p_N$.  If this difference is not a constant, we currently report
a failure of our technique (line \ref{peel:line:fail}).  For example,
consider a loop in $\PP_N$ with the counter $i$ initialized to $0$ and
the loop termination condition ``$i < N^2$''.  The corresponding loop
in $\PP_{N-1}$ has the same initialization but the termination
condition is ``$i < (N-1)^2$''.  Thus, $k_L(N) = N^2$ and $k_L(N-1) =
(N-1)^2$ and the difference of these iteration counts is ``$2 \times N
+1$''.  Our technique is unable to handle such cases currently.  Note
that such cases cannot arise if the upper bounds of all loops in
$\PP_N$ are linear functions of $N$.

The routine \textsc{PeelSingleLoop} transforms loop $L$ of $\PP^p_N$
as follows: it replaces the termination condition $(\ell < k_L(N))$ of
$L$ by $(\ell < k_L(N-1))$.  It also peels the last $(k_L(N) -
k_L(N-1))$ iterations of $L$ and adds control flow edges such that the
peeled iterations are executed immediately after the loop body is
iterated $k_L(N-1)$ times.  Effectively, \textsc{PeelSingleLoop} peels
the last $(k_L(N)-k_L(N-1))$ iterations of loop $L$ in $\PP^p_N$. The
transformed CFG is returned as the updated $\PP^p_N$ in line
\ref{peel:line:peelloop}.  In addition, \textsc{PeelSingleLoop} also
returns the set $Locs'$ of all CFG nodes newly added while peeling the
loop $L$.  We accumulate these newly added nodes for loops in the set
$\peelnodes$ in line \ref{peel:line:collect-peel-nodes}.  Henceforth,
we call all nodes in $\peelnodes$ as \emph{peeled nodes}, all other
nodes in the CFG as \emph{non-peeled nodes}, and the CFG resulting
from the invocation of \textsc{PeelAllLoops} as a \emph{peeled program}.
This function \textsc{PeelAllLoops} returns the peeled program $\PP^p_N$
and the set of peeled nodes $\peelnodes$ in line
\ref{peel:line:return}.  

We now state several useful properties of peeled programs.

\begin{lemma}\label{lemma:branches-n-peels}
  Let $n \in \mathit{Locs}^p$ be a node in the peel of loop $L$, and
  let $n_h$ be the loop-head of loop $L$.  For every $n' \in
  \mathit{Locs}^p$ that is not in the peel, if there is a control flow
  path in $\PP_N^p$ from $n'$ to $n$, the path necessarily passes
  through $n_h$.
\end{lemma}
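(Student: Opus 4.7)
The plan is to argue directly from the structural guarantees of the construction performed by \textsc{PeelSingleLoop}, as described in Algorithm~\ref{alg:peelloops}. Intuitively, the peel nodes form a subgraph whose sole point of entry from the rest of the CFG is an incoming edge originating at the loop head $n_h$, namely the (redirected) exit edge of the modified loop $L$. Once this structural fact is in place, the lemma follows immediately: any control flow path that starts outside the peel and ends inside it must traverse the unique entry edge into the peel region, and hence must pass through $n_h$.

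The first step is to inspect the transformation performed by \textsc{PeelSingleLoop} on $L$, and to catalog precisely which edges in $\mathit{CE}^p$ are added or removed relative to $\PP_N$. Specifically: (i) the termination condition at $n_h$ is replaced by $\ell < k_L(N-1)$, leaving the back-edge and the two outgoing edges of $n_h$ intact at the CFG level; (ii) the exit edge of $L$ is redirected from its original target (the immediate post-dominator of $n_h$) to the first newly created peel node; (iii) the peel nodes in $Locs'$ are wired internally as a straight-line copy of the body of $L$, repeated $k_L(N)-k_L(N-1)$ times; and (iv) the last peel node is connected by a single outgoing edge to the original exit target of $L$. No other edges are added between non-peel nodes and peel nodes.

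The second step is to use this to identify the set of entry edges into $\peelnodes$ for this particular $L$, i.e.\ edges $(u,v,\cdot) \in \mathit{CE}^p$ with $u \notin \peelnodes$ and $v \in \peelnodes$ (restricted to the peel of $L$). By the enumeration above, the only such edge is the redirected exit edge of $L$, whose source is $n_h$. (Note that peels belonging to other loops $L'\neq L$ do not contribute entry edges into the peel of $L$, because \textsc{PeelSingleLoop} touches only the loop it is invoked on.) Hence the peel of $L$ has a unique entry node from outside, reached only via an edge from $n_h$.

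The final step is the path argument. Consider any control flow path $\pi$ in $\PP_N^p$ from a non-peel node $n'$ to a peel node $n$ in the peel of $L$. Since $n' \notin \peelnodes$ and $n \in \peelnodes$, there is a first edge on $\pi$ that crosses from a non-peel node into $\peelnodes$ (the peel of $L$, to be precise; the peel of another loop is irrelevant here, as $n$ belongs to $L$'s peel and, by the same entry-edge analysis applied to any other loop $L'$, one cannot reach $L$'s peel from $L'$'s peel without first leaving $\peelnodes$). By the second step this crossing edge must originate at $n_h$, so $\pi$ passes through $n_h$, as required. The only mildly delicate point is ruling out the possibility of sneaking into the peel of $L$ via the peel of some other loop without revisiting $n_h$; this is handled by observing that the outgoing edge of the last node in any peel leads back to the non-peel region, so any transit through another loop's peel exits to non-peel territory before it can reach the peel of $L$, reducing us to the case already considered.
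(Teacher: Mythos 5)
Your proposal is correct and follows essentially the same route as the paper's proof: both rest on the single observation that the only entry edge into the peel of $L$ originates at the loop-head $n_h$ (the paper phrases this as ``the sole predecessor of the first node in the peel must be the loop-head node $n_h$''), from which the path claim is immediate. Your version merely makes the edge-by-edge bookkeeping of \textsc{PeelSingleLoop} and the exclusion of entry via other loops' peels explicit, which the paper leaves implicit.
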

\begin{proof}
  The proof follows from the observation that the peel of a loop $L$
  must necessarily execute after the loop $L$ has itself executed
  $k_L(N-1)$ times. Hence, the sole predecessor of the first node in
  the peel must be the loop-head node $n_h$. It follows that every
  control flow path from $n'$ to $n$ where $n'$ is not in the peel
  must pass through $n_h$.
\end{proof}

Peeling of loops can destroy the no-overwriting property (as mentioned
in Lemma~\ref{lemma:rename:no-overwriting}), since the same
variable/array can get updated in a loop $L$ and also in its peel.
However, a weaker variant of the no-overwriting property continues to
hold, as described below.

\begin{lemma}\label{lemma:peel:no-overwriting}
  Let $n$ be a node in the collapsed CFG of $\PP_N$.  In every
  execution of the renamed and peeled program $\PP^p_N$ in which
  control flows through $n$, the following hold.
  \begin{enumerate}
  \item If $n$ is not a collapsed node, the no-overwriting property as
    in Lemma~\ref{lemma:rename:no-overwriting} holds for all scalar
    variables/arrays $vA^n$.
  \item If $n$ is a node representing a collapsed loop $L$, the scalar
    variable/array $vA^n$ is not updated at any subsequent node along
    the execution, except possibly in the peel of $L$.
  \end{enumerate}
\end{lemma}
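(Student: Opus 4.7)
The plan is to reduce the statement to the already-established no-overwriting property of the renamed (but unpeeled) program, namely Lemma~\ref{lemma:rename:no-overwriting}, by carefully tracking how \textsc{PeelAllLoops} alters the set of write sites for each renamed variable/array $vA^n$.

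First, I would observe that \textsc{PeelSingleLoop}, when applied to a loop $L$ in $\PP^r_N$, only introduces new CFG nodes that are exact copies of nodes in the body of $L$; no assignment is inserted anywhere outside of $L$ and its peel. Consequently, the set of nodes that write to any particular renamed scalar/array $vA^m$ is contained in the union of (i) the original write sites in $\PP^r_N$ and (ii) duplicates of those sites inside the peel of some loop $L$. Moreover, by the renaming discipline, the duplicates inside the peel of $L$ can only update variables of the form $vA^{n_L}$, where $n_L$ is the collapsed-CFG node representing $L$ itself.

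For case~1, suppose $n$ is not a collapsed-loop node. Then no loop body mentions $vA^n$, and hence no peel of any loop can mention $vA^n$ either. The only write sites for $vA^n$ in $\PP^p_N$ are therefore the original ones from $\PP^r_N$, for which Lemma~\ref{lemma:rename:no-overwriting} already gives the conclusion: once control leaves $n$, it cannot return to $n$ or to any ancestor of $n$ in the collapsed DAG, so $vA^n$ is never updated again.

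For case~2, suppose $n$ represents a collapsed loop $L$. The write sites for $vA^n$ in $\PP^p_N$ are exactly the original write sites inside $L$ (or in glue nodes immediately feeding $L$) together with their duplicates inside the peel of $L$. The original sites pose no problem by Lemma~\ref{lemma:rename:no-overwriting} applied to $\PP^r_N$: after control exits the loop $L$ it cannot re-enter the body of $L$. It remains to show that after control leaves the peel of $L$, no further write to $vA^n$ occurs. This is where Lemma~\ref{lemma:branches-n-peels} is useful: any control-flow path that would reach a peel node of some other loop from a location past the peel of $L$ must pass through that other loop's head, and that other loop's body only writes to its own renamed copies, not $vA^n$. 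Hence the only subsequent writes to $vA^n$, if any, occur inside the peel of $L$ itself, which is exactly what the lemma allows.

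The main obstacle, as I see it, is being precise about what ``subsequent node along the execution'' means in the presence of the acyclic collapsed CFG versus the cyclic peeled CFG: I would want to argue in terms of the collapsed-CFG topological order (which is preserved by peeling, since peels are attached immediately after the loop-head of their loop) so that ``subsequent'' has an unambiguous meaning. Once that bookkeeping is in place, the two cases fall out from the invariant that peeling creates duplicate writes only for a loop's own renamed variables, combined with Lemmas~\ref{lemma:rename:no-overwriting} and~\ref{lemma:branches-n-peels}.
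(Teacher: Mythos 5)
Your proposal is correct and follows essentially the same route as the paper, whose entire proof is the one-line remark that the claim ``follows from the same reasoning as used in the proof of Lemma~\ref{lemma:rename:no-overwriting}'' (acyclicity of the collapsed CFG plus the renaming discipline restricting write sites of $vA^n$ to glue nodes feeding $n$, the node $n$ itself, and now the peel of $n$ when $n$ is a collapsed loop). You merely make that reasoning explicit by tracking where \textsc{PeelSingleLoop} can duplicate write sites; the only nitpick is that in case~1 a glue \emph{loop} immediately feeding $n$ could have a peel that also writes $vA^n$, but since that peel still executes before control reaches $n$, your conclusion is unaffected.
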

\begin{proof}
  Follows from the same reasoning as used in the proof of
  Lemma~\ref{lemma:rename:no-overwriting}.
\end{proof}

We will henceforth refer to the property formalized in Lemma
\ref{lemma:peel:no-overwriting} as the \emph{``no-overwriting''}
property of the renamed and peeled program $\PP_N^p$.

\begin{lemma}
  \label{lemma:alg-close-cond-branch}
  In the peeled program $\PP_N^p$, each conditional branch node in a
  peel of a loop has an immediate post-dominator in the same peel.
\end{lemma}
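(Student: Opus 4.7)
The plan is to exploit two facts: (i) the body of any loop $L$ in $\PP_N^p$ is generated from the non-terminal $\Stmta$, which admits only assignments, sequential composition, and $\iif$-$\tthen$-$\eelse$ statements and in particular forbids nested loops; and (ii) \textsc{PeelSingleLoop} realises a peel by creating $\peelcount$ fresh copies of the body of $L$, chained sequentially, and records all newly created nodes in $\peelnodes$ (line \ref{peel:line:collect-peel-nodes} of Algorithm \ref{alg:peelloops}).

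First, I would show by a routine structural induction over $\Stmta$ that the CFG fragment compiled from any $\Stmta$-derivation is single-entry/single-exit, and that every conditional branch node inside such a fragment has an immediate post-dominator lying in the same fragment, namely the node at which the $\tthen$- and $\eelse$-branches reconverge. The base cases (assignments) contain no branches; for $\Stmta_1;\Stmta_2$ the exit of the first sub-fragment equals the entry of the second, so any branch of $\Stmta_1$ that previously post-dominated at its own exit now post-dominates at the entry of $\Stmta_2$, which still belongs to the combined fragment; the $\iif$ case adds a branch whose two sides are themselves SESE fragments glued at a fresh merge node, which serves as the immediate post-dominator. Combined with (ii), this already handles every conditional in a peeled copy whose merge in the un-peeled body lay strictly inside the body.

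The remaining case is the boundary case: a top-level conditional of a peeled copy, whose merge in the un-peeled body coincided with the loop back-edge. Here the key observation is that \textsc{PeelSingleLoop}, to splice $\peelcount$ copies of the body together and finally to route control out of the peel to the original successor of the loop, must introduce for each copy a fresh sink node at which all internal branches reconverge before proceeding to either the next copy or out of the peel; being freshly created, this sink is returned in $\mathit{Locs}'$ and hence placed in $\peelnodes$. This sink acts as the immediate post-dominator of any top-level conditional of its copy and lies in the peel. Combining the two cases yields the lemma.

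The main obstacle is precisely this boundary case: a careless peeling implementation could let the $\tthen$/$\eelse$ edges of a top-level conditional in the last copy jump directly to the first node after the loop, putting the post-dominator outside the peel. The argument therefore depends on reading \textsc{PeelSingleLoop} as materialising an explicit per-copy merge/sink node; once that is pinned down, the rest is a straightforward structural induction inherited from the $\Stmta$ grammar.
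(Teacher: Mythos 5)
Your proposal is correct and follows essentially the same route as the paper's proof: both rest on the observation that the grammar for $\Stmta$ (no \emph{break}/\emph{goto}-style jumps, loop bodies are syntactically complete single-entry/single-exit programs) forces every conditional in a loop body to reconverge inside that body, and that the peel is a verbatim copy of the body, so the property transfers. The paper states this in three sentences without your explicit structural induction or your discussion of the boundary/merge-node case, so your version is simply a more careful elaboration of the same argument rather than a different one.
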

\begin{proof}
  The syntactic restrictions on the input program, imposed by the
  grammar shown in Fig.~\ref{fig:grammar}, do not admit \emph{break,
    continue, goto, exit} and \emph{return} statements.  Since a loop
  body is also syntactically a complete program, conditional branch
  nodes in the body of the loop, if any, always have an immediate
  post-dominator node within the body of the same loop.  The peel of a
  loop is obtained by creating a copy of the loop body (using the
  function \textsc{PeelSingleLoop} invoked on line
  \ref{peel:line:peelloop} of routine \textsc{PeelAllLoops} in
  Algorithm \ref{alg:peelloops}).  Thus, the conditional branch nodes,
  if any, in the peel have an immediate post-dominator node within the
  same peel.
\end{proof}

Finally, the following lemma asserts that peeling does not change the
Hoare semantics of programs.

\begin{lemma}
\label{lemma:alg-peelloops}
$\{\varphi_N\}\;\PP_N\;\{\psi_N\}$ holds iff
$\{\varphi_N\}\;\PP^p_N\;\{\psi_N\}$ holds.
\end{lemma}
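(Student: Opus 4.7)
The plan is to establish the lemma by showing that peeling is a semantics-preserving transformation, and then invoke the fact that Hoare triples are preserved under such transformations. Concretely, I would argue that for every input state satisfying $\varphi_N$, the executions of $\PP_N$ and $\PP^p_N$ produce identical sequences of state updates, and in particular the same final state; from this, the biconditional on Hoare triples is immediate.

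First, I would focus on a single invocation of \textsc{PeelSingleLoop} on a loop $L$ with counter $\ell$, bound $k_L(N)$, and constant peel count $c = k_L(N) - k_L(N-1)$ (note that Algorithm~\ref{alg:peelloops} only succeeds when $c$ is a constant, so $c \ge 0$ is fixed). I would show by a straightforward semantic argument that executing the transformed code, which (i)~iterates the loop body of $L$ for $\ell = 0, 1, \ldots, k_L(N-1) - 1$ under the modified termination condition $\ell < k_L(N-1)$, and then (ii)~executes the $c$ peeled copies of the body corresponding to $\ell = k_L(N-1), \ldots, k_L(N)-1$ via the newly added control-flow edges, produces exactly the same trace of assignments as the original loop iterating $\ell = 0, \ldots, k_L(N)-1$. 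The only subtlety is that the peeled iteration blocks must use the correct values of $\ell$ (as constants, since the peel count is constant), and that control correctly falls through from the last peeled iteration to the original immediate successor of $L$; both are guaranteed by the construction in \textsc{PeelSingleLoop}. Hence the input/output relation of the transformed fragment equals that of $L$.

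Next, I would lift this local equivalence to the whole program. Since \textsc{PeelAllLoops} iterates over the loops of $\PP_N$ and applies \textsc{PeelSingleLoop} independently to each loop (the grammar of Fig.~\ref{fig:grammar} forbids nesting, so distinct loops occupy disjoint fragments of the CFG), the global transformation is a composition of local semantics-preserving rewrites in disjoint contexts. A simple structural induction on the sequence of rewrites — or equivalently, on the syntactic structure of $\PP_N$ generated by the grammar, with the base case being single assignments and the inductive cases being sequential composition, conditionals, and loops — shows that the input/output relation of $\PP^p_N$ coincides with that of $\PP_N$ on all program states.

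Finally, since $\PP_N$ and $\PP^p_N$ induce the same input/output relation on states, for any pair of state predicates $\varphi_N$ and $\psi_N$ the Hoare triple $\{\varphi_N\}\,\PP_N\,\{\psi_N\}$ holds iff $\{\varphi_N\}\,\PP^p_N\,\{\psi_N\}$ holds, which is exactly the statement of the lemma. The main obstacle is ensuring that the control-flow surgery in \textsc{PeelSingleLoop} is correctly described — in particular, that the peeled block really is reached only after the modified loop terminates having executed $k_L(N-1)$ iterations, and that it transfers control to the original post-loop successor of $L$. Once this is made precise (and it is, by construction of \textsc{PeelSingleLoop} as invoked on line~\ref{peel:line:peelloop} of Algorithm~\ref{alg:peelloops}), the remainder of the argument is routine.
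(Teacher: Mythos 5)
Your proposal is correct and takes the same approach as the paper, which simply observes that peeling each loop preserves program semantics and concludes immediately; you have merely spelled out the details of why \textsc{PeelSingleLoop} preserves the input/output relation locally and why disjointness of non-nested loops lets this lift to the whole program. Nothing in your elaboration departs from or adds a genuinely different idea to the paper's one-line argument.
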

\begin{proof}
  Follows immediately from the observation that peeling each loop
  preserves the semantics of the program.
\end{proof}
%%     {\em fold} and {\em unfold} operations on loops as performed by
%% compilers are known to not change the program semantics.  We formalize
%% peeling of loop iterations using the {\em fold} and {\em unfold}
%% operations.  Let $\PP^p_N$ be the program obtained by peeling
%% iterations from each loop $L \in \PP_N$.  Let $k_L({N})$ represent the
%% number of times loop $L$ iterates.  We {\em unfold} all the $k_L({N})$
%% iterations of the loop $L$.  Next, we {\em fold} all but the last
%% $k_L({N})-k_L({N-1})$ iterations of the loop $L$.  This is
%% semantically equivalent to peeling $k_L({N})-k_L({N-1})$ iterations
%% from the loop $L$.  Since {\em unfold} and {\em fold} operations are
%% semantics preserving, $\PP_N \equiv \PP^p_N$.  Thus, $\{\varphi(N)\}$
%% $\PP_N$ $\{\psi(N)\}$ holds iff $\{\varphi(N)\}$ $\PP^p_N$
%% $\{\psi(N)\}$ holds.
%% \end{proof}    

%% \begin{wrapfigure}[20]{r}{0.38\textwidth}
%% \vspace{-2ex}

\begin{figure}[h]
%% \begin{minipage}{0.53\linewidth}
\begin{alltt}
// assume(\(\forall\)i\(\in\)[0,N) A[i] = 1)

1.  S = 0;
2.  for(i=0; i<N-1; i++) \{
3.    S = S + A[i];
4.  \}
5.  S = S + A[N-1];

6.  for(i=0; i<N-1; i++) \{
7.    A1[i] = A[i] + S;
8.  \}
9.  A1[N-1] = A[N-1] + S;

10. S1 = S;
11. for(i=0; i<N-1; i++) \{
12.   S1 = S1 + A1[i];
13. \}
14. S1 = S1 + A1[N-1];

// assert(S1 = N \(\times\) (N+2))
\end{alltt}
%% \end{minipage}
%% \begin{minipage}{0.46\linewidth}
%% \caption
%% { Program obtained executing Algorithm~\ref{alg:peelloops} for peeling
%%   the loops in Fig.~\ref{fig:ss-ssa}.  The upper bound expression of
%%   each the loops is $N$.  Hence, the number of iteration to be peeled
%%   is $N - (N-1) = 1$. Thus, the last iteration is peeled from each
%%   loop. \newline The routine appends the peeled statements after each
%%   loop and updates the upper bound for each loop in the returned
%%   program $\PP^p_N$. It returns the set of statements at lines \{5, 9,
%%   14\} as peeled nodes.  }
%% \end{minipage}
\caption{Hoare triple containing a program with peeled loops}
\label{fig:ss-peel}
\end{figure}

%% \end{wrapfigure}
  %% Running example with the loops peeled

\begin{example}
  We execute function \textsc{PeelAllLoops} on the renamed version of
  our running example, shown in Fig. \ref{fig:ss-rename}(b).  The
  resulting program is shown in Fig. \ref{fig:ss-peel}.  The algorithm
  first computes the number of iterations to be peeled from a loop in
  the program, given by $\peelcount$.  The upper bound expression of
  each loop in the program is $N$.  Hence, the number of iterations to
  be peeled is $N - (N-1)$ $=$ $1$.  In other words, only the last
  iteration is to be peeled from each loop.  The function appends the
  statements in the peeled iteration after each loop and updates the
  upper bound expressions of each loop in the resulting program, as
  shown in Fig. \ref{fig:ss-peel}.  The algorithm also returns the set
  of peeled nodes, i.e. CFG nodes corresponding to the statements at
  lines $5$, $9$, and $14$.
\end{example}
  %% Section on peeling the loops in the program

%\paragraph{\bfseries Refined Program Dependence Graph.}
\subsection{Tracking Data Dependencies}
\label{sec:refine-pdg}

%% In our technique, we need to precisely know if data and control
%% dependencies exists between values of array elements computed in
%% statements within loops and those computed in the peeled iterations,
%% for a subsequent analysis.  Working with the program where the loops
%% are peeled may result in slightly different data-structures being
%% generated from other analysis, specifically the program dependence
%% graph (PDG)~\cite{pdg} which is very useful in various
%% analysis~\cite{pdg2}.  In this section we describe a method to refine
%% the PDG generated using an off-the-shelf compiler, by trying to
%% disprove the data dependence edges between staements in the program
%% with peeled loop iterations.

As discussed in Sect.~\ref{sec:cfg}, the flow of control in a
program can be conveniently represented by a CFG.  A CFG, however,
does not immediately provide information about data dependencies
between program statements.  We use a separate \emph{data dependence
  graph}, or DDG, to summarize data dependencies between relevant
statements in a program.  Our primary purpose in constructing such a
DDG is to understand the dependencies of and from statements that are
executed in $\PP_N$ but not in $\PP_{N-1}$.  These are related to the
peeled statements described in Sect.~\ref{sec:peeling}, and
determine what must eventually go into the difference program
$\partial \PP_N$, so that $\PP_N$ and $\PP_{N-1};\partial \PP_N$ have
the same effect on arrays and scalar variables.

While there are several notions of data dependence used in the
literature (see \cite{ddg1,ddg2} for details), we use a fairly simple
notion that best serves our purpose.  We say that there is a
read-after-write data dependence from $n_1$ to $n_2$ if the statement
at $n_2$ uses a data value that is potentially generated by the
statement at $n_1$.  There is another kind of data dependence that is
peculiar to our approach that also needs special handling.  It may so
happen that the glue code inserted between two nodes during renaming
has a loop, say $L_1$, that updates an array $A$ that is also
subsequently updated in another loop, say $L_2$ in non-glue code.  If
the peel of $L_1$ potentially updates an element of $A$ that is also
updated in the non-peeled part of $L_2$, then we have a
write-after-write dependence between a statement in the peel of a
(glue) loop and subsequent statement in the non-peeled part of another
(non-glue) loop.  We call such a dependence
\emph{non-peeled-write-after-peeled-write} dependence.  Since we
intend to move peels to the end of the program to construct a
difference program, this kind of dependence poses a problem.
Therefore, we explicitly identify such
\emph{non-peeled-write-after-peeled-write} dependencies below.  Given
the way our renaming operates, it is easy to see that such a
dependence can only arise for arrays and not for scalars.

Note that in the above case when we have a glue loop followed by a
non-glue loop updating the same array, there may also be
write-after-write dependencies between the non-peeled (resp. peeled)
part of the glue loop and the non-peeled (resp. peeled) part of the
non-glue loop.  However, such dependencies are preserved if we move
peels of all loops to the end of the program to construct a difference
program.  Therefore, such write-after-write dependencies do not pose
any problem for our purposes, and hence we do not keep track of them.
Furthermore, due to the way our renaming operates, it can be seen that
write-after-read dependencies can never arise between nodes of the
collapsed CFG.

Formally, a DDG is a directed graph $G_D = (\mathit{Locs},
\mathit{DE}, \mu)$, where $\mathit{Locs}$ and $\mu$ are exactly as in
the definition of a CFG, and $\mathit{DE} \subseteq \mathit{Locs}
\times \mathit{Locs}$ represents \emph{read-after-write} and
\emph{non-peeled-write-after-peeled-write} dependencies between
statements in the program.  Since our primary interest is in using
data dependencies to and from peeled statements for purposes of
constructing difference programs, and since loops have a very specific
form in our programs of interest (the grammar in
Fig.~\ref{fig:grammar} allows only loop counter to be updated in a
loop-head node), it suffices to restrict our attention to data
dependencies between distinct non-loop-head nodes in the peeled
program.

Several existing compilers generate \emph{program dependence graph},
or PDG from a given input program, and a DDG can be extracted from
such a PDG~\cite{pdg}.  Standard dataflow analysis techniques are
usually used to identify data dependencies when constructing a
PDG~\cite{pdg,pdg2}.  One needs to be particularly careful when
identifying dependence between statements updating and accessing array
elements, since it is not only the same array name that must be
involved in the update and access, but also the same element in the
array.  The problem is further compounded by the fact that array
indices can be arbitrary expressions in general.  While vectorizing
compilers can compute precise dependencies with array index expressions
using sophisticated {\em dependence tests}~\cite{kennedy-book}, it is
not always the case that these are implemented in non-vectorizing
compilers.  A conservative generation of DDG may contain spurious data
dependence edges, which, in our context, can lead to the construction
of a difference program that is more complex than what is needed.

Let $n$ and $n'$ be two CFG (hence also DDG) nodes.  A conservative
way of generating DDG edges is to add the edge $(n, n')$ to
$\mathit{DE}$ if there is a control flow path $\pi: (n=n_1, n_2,
\ldots, n_{t-1}, n_t = n')$ in the CFG such that one of the following
conditions hold.
\begin{enumerate}
  \item [D1:] $\big(\mathit{def}(n) \cap \mathit{uses}(n')\big)
    \setminus \bigcup_{i=2}^{t-1} \mathit{def}(n_i)$ contains a scalar
    variable $v$, or
  \item [D2:] $\mathit{def}(n)$ contains an array $A$ such that
    \begin{enumerate}
    \item Either of the following conditions hold:
      \begin{enumerate}
      \item $A \in \mathit{uses}(n')$ and there is a common value that
        the index expression $\defindex(A,n)$ and some index
        expression in $\useindex(A,n')$ can have.
        \item $n$ $\in$ $\peelnodes$ and $n'$ $\not \in$ $\peelnodes$
          and $A \in \mathit{def}(n')$ and there is a common value
          that both the index expressions $\defindex(A,n)$ and
          $\defindex(A,n')$ can have.
      \end{enumerate}
      \item Some elements of $A$ are potentially not updated along the
        path $\pi$.
    \end{enumerate}
    \iffalse
  \item [D3:] $\mathit{def}(n) \cap \mathit{def}(n')$ contains an
    array $A$ such that
    \begin{enumerate}
      \item $n$ is a $\peelnodes$ and $n'$ is not a $\peelnodes$, and
      \item There is a common value that both index expressions
        $\defindex(A,n)$ and $\defindex(A,n')$ can have, and
      \item Some elements of $A$ are potentially not updated along the
        path $\pi$.
    \end{enumerate}
    \fi
\end{enumerate}

\begin{lemma}\label{lem:DDG-conditions}
  For $n, n' \in \mathit{Locs}$ such that $n'$ is reachable from $n$
  in the CFG, if neither condition D1 nor condition D2 holds, then
  there is no read-after-write or non-peeled-write-after-peeled-write
  dependence from $n$ to $n'$.
\end{lemma}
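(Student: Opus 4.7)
The natural route is a proof by contraposition: assume there is either a read\nobreakdash-after\nobreakdash-write or a non\nobreakdash-peeled\nobreakdash-write\nobreakdash-after\nobreakdash-peeled\nobreakdash-write dependence from $n$ to $n'$ in $\PP_N^p$, and exhibit a CFG path $\pi \colon (n = n_1, n_2, \ldots, n_t = n')$ that witnesses condition D1 or D2. Since every such dependence is defined in terms of a concrete execution that writes a value at $n$ and later observes or clobbers it at $n'$, the first step is to pick any execution $\sigma$ that realizes the dependence and take $\pi$ to be the CFG path traversed by $\sigma$ from $n$ to $n'$. Reachability of $n'$ from $n$ in the CFG, which is assumed in the statement, guarantees that such a $\sigma$ induces a well-formed $\pi$.

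Next I would split into three sub-cases according to the kind of dependence. (i)~\emph{Scalar read-after-write:} Some scalar $v$ satisfies $v \in \mathit{def}(n) \cap \mathit{uses}(n')$, and along $\sigma$ the definition of $v$ at $n$ is not killed by any intermediate redefinition, i.e.\ $v \notin \bigcup_{i=2}^{t-1}\mathit{def}(n_i)$. This is exactly condition~D1. (ii)~\emph{Array read-after-write:} Some element $A[e]$ is written at $n$ and read at $n'$. Then $A \in \mathit{def}(n)$ with $\defindex(A,n)$ evaluating to $e$ on $\sigma$, and $A \in \mathit{uses}(n')$ with some expression in $\useindex(A,n')$ also evaluating to $e$; hence D2(a)(i) holds. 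Moreover $A[e]$ itself is not overwritten by any intermediate node along $\pi$, so in particular not every element of $A$ is necessarily updated along $\pi$, giving D2(b). (iii)~\emph{Non\nobreakdash-peeled\nobreakdash-write\nobreakdash-after\nobreakdash-peeled\nobreakdash-write:} By definition $n \in \peelnodes$, $n' \notin \peelnodes$, and $A \in \mathit{def}(n) \cap \mathit{def}(n')$ with a common realizable index $e$ (giving D2(a)(ii)); and for the two writes to interact as a dependence, the write at $n$ must not already be erased by an intermediate write at that same index, yielding D2(b) by the same reasoning as in~(ii).

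The one step that needs a bit of care is (iii), because one must be clear about what makes a write-after-write ``dependence'' between a peeled and a non-peeled node. The reading used in the paper is the one needed for difference-program construction: if we plan to relocate all peeled statements to the end of the program, the order of the two writes matters precisely when there is an execution in which the write at $n$ is not already overwritten by an intermediate write to the same index before reaching $n'$. Under that reading the argument is parallel to the read-after-write case. I expect the main obstacle to be purely notational: stating ``some elements of $A$ are potentially not updated along $\pi$'' in a way that clearly captures ``the particular index~$e$ witnessing the dependence is not killed along $\pi$'', rather than the weaker literal reading. Once that convention is fixed, each of the three sub-cases is a one-line consequence of the concrete execution $\sigma$ chosen at the start, and the three cases together exhaust the possible dependencies tracked in the DDG, completing the contrapositive.
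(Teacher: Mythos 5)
Your proposal is correct and is essentially the paper's own argument: the paper proves the statement by contradiction (assume neither D1 nor D2 holds yet a dependence exists) and runs through the same three sub-cases — scalar read-after-write yielding D1, array read-after-write yielding D2(a)(i) and D2(b) via the unkilled witnessing element, and non-peeled-write-after-peeled-write yielding D2(a)(ii) and D2(b) — which is just the contrapositive phrased differently. Your extra remarks on instantiating a concrete execution and on reading D2(b) as ``the witnessing index is not killed along $\pi$'' match the paper's reasoning and add no divergence.
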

\begin{proof}
  We prove the lemma by contradiction.  Suppose, if possible, neither
  D1 nor D2 holds and yet there is a \emph{read-after-write}
  dependence due to the data value generated at $n$ being potentially
  used at $n'$. There are two cases to consider.
  \begin{itemize}
  \item If the data value pertains to a scalar variable $v$ that is
    updated at $n$ and accessed at $n'$, then there must be a control
    flow path $\pi$ from $n$ to $n'$ along which $v$ is not updated at
    any intermediate node.  This implies condition D1 is satisfied --
    a contradiction!
  \item Suppose the data value pertains to an element of array $A$
    that is updated at $n$ and accessed at $n'$.  Let the index
    expression of the array element updated at $n$ be $e$, and let the
    corresponding index expression of the same element accessed at
    $n'$ be $e'$.  Clearly, both $e$ and $e'$ can assume the same
    value (the concrete index of the element of $A$ under
    consideration), and there is a control flow path $\pi$ from $n$ to
    $n'$ along which this specific array element has not been updated.
    This implies that both the conditions D2(a)(i) and D2(b) are
    satisfied, and hence condition D2 is satisfied -- a contradiction!
  \end{itemize}

  Suppose, if possible, neither D1 nor D2 holds and yet there is a
  \emph{non-peeled-write-after-peeled-write} dependence from $n$ to
  $n'$.  Suppose the data value pertains to an element of array $A$
  that is updated at nodes $n$ and $n'$ where $n$ $\in$ $\peelnodes$
  and $n'$ $\not \in$ $\peelnodes$.  Let the index expression of the
  array element updated at $n$ be $e$, and let the corresponding index
  expression of the same element updated at $n'$ be $e'$.  Clearly,
  both $e$ and $e'$ can assume the same value (the concrete index of
  the element of $A$ under consideration), and there is a control flow
  path $\pi$ from $n$ to $n'$ along which this specific array element
  has not been updated.  This implies that both the conditions
  D2(a)(ii) and D2(b) are satisfied, and hence, condition D2 is
  satisfied -- a contradiction!
\end{proof}

Condition D2(b) above is not easy to check in general.  However, for
programs generated by the grammar in Fig.~\ref{fig:grammar}, it is
possible to detect that condition D2(b) is violated in special cases.
As an example, if there is a loop that updates all elements of array
$A$ in every control flow path from $n$ to $n'$, then indeed condition
D2(b) is violated.  For purposes of this paper, we use this special
case as a sufficient condition to detect violation of condition D2(b),
and conservatively assume that the condition is potentially satisfied
in all other cases.  Needless to say, a more precise analysis can be
done with additional computational effort to reduce the degree of
conservativeness in the above check for condition D2(b).  We defer
such an improved analysis to future work.  We now look at how
condition D2(a) is checked.  Recall from Sect.~\ref{sec:prelim} that
the array indices in our programs can only be expressions in terms of
constants, scalar variables, the loop counter variables and the
parameter $N$.  Furthermore, our programs do not have nested loops.
Therefore, at most one loop counter variable can appear in an array
index expression.  Specifically, if $e$ is the index expression
$\defindex(A, n)$, and if node $n$ is part of a loop $L$ with loop
counter $\ell$, then $e$ depends in general on $\ell$, $N$ and a set
of scalar constants.  Otherwise, i.e. if node $n$ is not part of a
loop, $e$ depends on $N$ and a set of scalar constants.  A similar
reasoning applies for array index expression(s) in $\useindex(A, n')$
as well.  Condition D2(a) is satisfied if the constraint $(e = e')$
has a model, i.e.  is satisfiable, for some index expression $e' \in
\useindex(A,n')$, subject to the following conditions:
\begin{itemize}
  \item Loop counters $\ell$ and $\ell'$ must have values within their
    respective loop bounds.
  \item If both $n$ and $n'$ are part of the same loop, then $\ell \le
    \ell'$ (update at $n$ cannot happen in an iteration after access
    at $n'$).
  \item Every scalar variable $v$ that appears in both $e$ and $e'$
    and is updated along some control flow path from $n$ to $n'$ is
    renamed in $e'$ to a fresh variable (since the values of $v$ in
    $e$ and $e'$ may be different).
\end{itemize}

%% How about giving a name to the definition of depends on relation
%% like inter-loop dependence??

%% \begin{definition}
%% \label{def:loopdepends}
%% We say that $S_2$ \textbf{depends} on $S_1$ if there exists values of
%% $\ell_1, \ell_2$ such that $0 \leq \ell_1 < k_{L_1}(N)$ and $0 \leq
%% \ell_2 < k_{L_2}(N)$ and $f(\ell_1, N) = g(\ell_2, N)$.
%% \end{definition}

\begin{algorithm*}[!t]
  \caption{\textsc{ComputeDDG}($(\mathit{Locs}, \mathit{CE}, \mu)$: program $\PP_N$, $\peelnodes$: peeled statements)}
  \label{alg:compute-ddg}
%  \scriptsize
  \begin{algorithmic}[1]
    \State $\mathit{DE}$ := $\emptyset$;  
    \For{each $n, n' \in \mathit{Locs}\setminus \mathit{LoopHeads}$ s.t. $n \neq n'$ and
      $n \stackrel{\mathit{Locs}}{\rightsquigarrow} n'$}\Comment{$n'$ reachable from $n$ in CFG}
      \State $\mathit{D1Sat}$ := $\false$; $\mathit{D2Sat}$ := $\false$; \Comment{Initializing flags to indicate if condition D1/D2 holds}
      \State $S$ := $\mathit{def}(n) \cap \mathit{uses}(n')$;\Comment{\parbox[t]{0.6\textwidth}{Set of scalar variables/arrays that potentially introduce data dependence between $n$ and $n'$}} \label{alg:line:initS}
      %% Updating set S also may be required
      \If{$n \in \peelnodes \wedge n' \not\in \peelnodes$} $S := S \cup \big( \mathit{def}(n) \cap \mathit{def}(n')\big)$; \label{alg:line:appendS} \EndIf \Comment{For non-peeled-write-after-peeled-write dependence}
      \infocomment{Check for condition D1}
         \For{each scalar variable $v \in S$} \label{alg:line:checkford1start}
            \State $\NonDefV$ := $\{n'' \mid n'' \neq n, n'' \neq n',  v \not\in \mathit{def}(n'')\}$;\Comment{Nodes other than $n$, $n'$ that don't define $v$} \label{alg:line:nondefvdefn}
          %% \State $\NonDefV$ := $\emptyset$; \Comment{Initialize set of nodes $n''$ along paths from $n$ to $n'$ that don't define $v$} \label{alg:line:nondefvstart}
          %% \For{each $n'' \in \mathit{Locs}$ s.t. $(n'' \not\in \{n,n'\}) ~\wedge~ (n \stackrel{\mathit{Locs}}{\rightsquigarrow} n'') ~\wedge~ (n'' \stackrel{\mathit{Locs}}{\rightsquigarrow} n')$}
          %%   \If{$v \not\in \mathit{def}(n'')$}
          %%     \State $\NonDefV$ := $\NonDefV \cup \{n''\}$;
          %%   \EndIf
          %% \EndFor \label{alg:line:nondefvend}

          \If{$n \stackrel{\NonDefV}{\rightsquigarrow} n'$}\Comment{Potential data dependence $(n, n')$ due to variable $v$}
            \State $\mathit{D1Sat}$ := $\true$; \textbf{break}; \label{alg:line:foundd1sat}       
          %% \Else ~~$S$ := $S \setminus \{v\}$; \Comment{No data dependence $(n, n')$ due to variable $v$} \label{alg:line:removevfroms}    %% Modifying `S' a bad idea. Line not required as update to S not useful later.
          \EndIf
        
        \EndFor \label{alg:line:checkford1end}
        %%
        %% \If{$S$ contains a scalar variable $v$} \Comment{Condition D1 satisfied}
        %%   \State $\mathit{DE}$ := $\mathit{DE} \cup (n, n')$;\Comment{Potential data dependence from $n$ to $n'$ due to variable $v$}
        %%   \State \textbf{continue};
        %%
        %% \EndIf
      \infocomment{Check for condition D2 if D1 isn't already satisfied}
      \If{(\textbf{not} $\mathit{D1Sat}$)}  
        \For{each array $A \in S$}
          \State $\phi_n$ := $\true$; $\phi_{n'}$ := $\true$;\Comment{Initializing loop bound constraints for $n$ and $n'$} \label{alg:line:d2aconstraintstart}
          \If{$n$ is part of loop $L$ with loop counter $\ell$} $\phi_n$ := $(0 \leq \ell < k_{L})$; \label{alg:line:range-start}
          \EndIf
          \If{$n'$ is part of a loop $L'$ with loop counter $\ell'$}
            $\phi_{n'}$ := $(0 \leq \ell' < k_{L'})$;
            \If{$L'$ same as $L$}
               $\phi_{n'}$ := $\phi_{n'} \wedge (\ell \leq \ell')$; \Comment{$n$ and $n'$ in the same loop}
            \EndIf
          
          \EndIf \label{alg:line:range-end}

          \State $e$ := $\defindex(A,n)$; \Comment{Index expression used to update element of $A$ at $n$} \label{alg:line:renamestart}
          \State $U := \useindex(A, n')$;
          \If{$n \in \peelnodes \wedge n' \not\in \peelnodes$} $U := U \cup \{ \defindex(A, n') \}$; \EndIf
          \For{each scalar variable $v$ that appears in both $e$ and some $e' \in U$}
            \State $\DefV$ := $\{n'' \mid n'' \neq n, n'' \neq n', v \in \mathit{def}(n'')\}$; \Comment{Nodes other than $n$, $n'$ that define $v$}
            %% Atleast one node from DefV must occur along the path n --> n'
            \If{$\exists n'' \in \DefV \wedge n \rightsquigarrow n'' \wedge n'' \rightsquigarrow n'$} \Comment{$v$ modified along a path from $n$ to $n'$}
             \State $e'$ := $e'[v/v_{\mathit{fresh}}]$; \Comment{Rename variable $v$ in $e'$ with fresh variable $v_{\mathit{fresh}}$}
            \EndIf
          \EndFor   \label{alg:line:renameend}
%          \State $\zeta$ := $\exists \ell, \ell'. \; \big(\phi_n \wedge \phi_{n'} \wedge \bigvee_{e \in \useindex(A,n')}(\defindex(A, n) = e)\big)$; \label{alg:line:overlapformula}

          \If{\textbf{not} \textsc{IsSat}$\big(\phi_n \wedge \phi_{n'} \wedge \bigvee_{e' \in U}(e = e')\big)$} \label{alg:line:checkoverlap}
             {\bf continue};  \Comment{D2(a) violated for array $A$}\label{alg:line:removeedge}
             %% \State {\bf continue};  \Comment{Condition D2(a) violated for array $A$}\label{alg:line:removeedge}
          \EndIf  \label{alg:line:d2aconstraintend}
          \If{$\exists$ loop $L''$ with loop-head $n''$ s.t. ($n, n'$ not in $L''$) $~\wedge$ ($n \stackrel{\mathit{Locs}\setminus \{n''\}}{\not\rightsquigarrow} n'$)} \label{alg:line:d2bcheck1}
            \If{$L''$ necessarily updates all elements of $A$} \label{alg:line:d2bcheck2}
              \State {\bf continue};   \Comment{Condition D2(b) violated for array $A$} \label{alg:line:d2bviolation}
            \EndIf
            \EndIf

          \State $\mathit{D2Sat}$ := $\true$; \textbf{break}; \label{alg:line:d2sattrue} \Comment{Potential data dependence $(n, n')$ due to array $A$}
          %% \If{\textbf{not} \textsc{IsSat}$\big(\phi_n \wedge \phi_{n'} \wedge \bigvee_{e' \in \useindex(A,n')}(e = e')\big)$} \label{alg:line:checkoverlap}
          %%   \State $S$ := $S \setminus \{A\}$; \Comment{Condition D2(a) violated for array $A$}\label{alg:line:removeedge}

          %% \ElsIf{$\exists$ loop $L''$ with loop-head $n''$ s.t. ($n, n'$ not in $L''$) $~\wedge$ ($n \stackrel{\mathit{Locs}\setminus \{n''\}}{\not\rightsquigarrow} n'$)} \label{alg:line:d2bcheck1}
          %%   \If{$L''$ necessarily updates all elements of $A$} \label{alg:line:d2bcheck2}
          %%     \State $S$ := $S \setminus \{A\}$;            \Comment{Suffices to violate condition D2(b) for array $A$} \label{alg:line:d2bviolation}
          %%   \EndIf
          %% \EndIf
          %% \If{$A \in S$} \Comment{Potential data dependence $(n, n')$ due to array $A$}
          %%   \State $\mathit{D2Sat}$ := $\true$; \textbf{break}; \label{alg:line:d2sattrue}
          %% \EndIf
        \EndFor
      \EndIf    
      \If{$\big(\mathit{D1Sat} \vee \mathit{D2Sat}\big)$} \label{alg:line:d1d2check}
          $\mathit{DE}$ := $\mathit{DE} \cup \{(n, n')\}$;
       \EndIf \label{alg:line:d1d2checkend}
    \EndFor \label{alg:line:loopedges-end}
    \State \Return $(\mathit{Locs}, DE, \mu)$;
  \end{algorithmic}
\end{algorithm*}
  %% Algorithm to Compute the Refined PDG

Function \textsc{ComputeDDG}, shown in Algorithm
\ref{alg:compute-ddg}, constructs the DDG for an input program $\PP_N$
represented using its CFG $(\mathit{Locs}, \mathit{CE}, \mu)$.  We use
the notation $n \stackrel{X}{\rightsquigarrow} n'$ to denote that
there is a control flow path from $n$ to $n'$ in the CFG that passes
through intermediate nodes in $X \subseteq \mathit{Locs}$.
\textsc{ComputeDDG} proceeds by initializing the set of data
dependence edges $DE$ to $\emptyset$, and by checking for every pair
of distinct non-loop-head nodes $(n, n')$ such that $n
\stackrel{\mathit{Locs}}{\rightsquigarrow} n'$, whether condition D1
or D2 referred to above is satisfied.  If either one of the conditions
is satisfied, it adds $(n, n')$ to $DE$
(lines~\ref{alg:line:d1d2check}--\ref{alg:line:d1d2checkend}).

The set $S$ of scalar variables and arrays that potentially introduce
data dependence from $n$ to $n'$ is initialized to $\mathit{def}(n)
\cap \mathit{uses}(n')$ in line~\ref{alg:line:initS}.  If $n$ is a
peeled node and $n'$ is a non-peeled node, then we append $S$ with
$\mathit{def}(n) \cap \mathit{def}(n')$ in
line~\ref{alg:line:appendS}.  Subsequently, the check for D1 is done
in the loop in
lines~\ref{alg:line:checkford1start}--\ref{alg:line:checkford1end}.
In each iteration of this loop, we choose a scalar variable $v$ from
the set $S$ and check whether there exists a control flow path from
$n$ to $n'$ such that no intermediate node along the path updates $v$.
The latter check is implemented by first collecting all nodes $n''$
(other than $n$ and $n'$) that doesn't update $v$ in the set
$\NonDefV$ (line~\ref{alg:line:nondefvdefn}).  If there is a control
flow path from $n$ to $n'$ that passes through intermediate nodes in
$\NonDefV$, the value of $v$ updated at $n$ can reach the use of $v$
at $n'$.  In this case, there is a potential data dependence of $n'$
on $n$ through $v$ and condition D1 is satisfied.  We therefore set
$\mathit{D1Sat}$ to $\true$ and abort the search over additional
scalar variables $v$ in $S$ (line~\ref{alg:line:foundd1sat}).
Otherwise, there is no dependence of $n'$ on $n$ through $v$.
%% , and $v$ is removed from $S$ (line~\ref{alg:line:removevfroms}).

If the flag $\mathit{D1Sat}$ is not set to $\true$ even after
iterating over all scalar variables in $S$, we turn to checking if
condition D2 can be satisfied. Towards this end, we iterate over all
array names $A$ remaining in $S$, and formulate a constraint to check
if condition D2(a) is satisfied
(lines~\ref{alg:line:d2aconstraintstart}--\ref{alg:line:d2aconstraintend}).
This condition effectively checks if it is possible for the index
expression $\defindex(A,n)$ to have the same value as any index
expression $e' \in \useindex(A, n')$ or if it is possible for the
index expression $\defindex(A,n)$ to have the same value as the index
expression $\defindex(A, n')$ when $n$ is a peeled node and $n'$ is a
non-peeled node.  If not, the update/read of array $A$ at $n$ and
$n'$, cannot be for the same element, and hence, there is no data
dependence $(n, n')$ through $A$.  As discussed above, to check if
condition D2(a) is satisfied, we must conjoin loop bound constraints
for loop counter variables in case $n$ or $n'$ is present in a loop
(lines~\ref{alg:line:d2aconstraintstart}--\ref{alg:line:range-end}),
and rename every scalar variable $v$ in expressions $e' \in
\useindex(A, n')$ that also appears in the index expression
$\defindex(A, n)$, if $v$ is potentially re-defined in a control flow
path from $n$ to $n'$.  The renaming of scalar variables, if needed,
is done in lines~\ref{alg:line:renamestart}--\ref{alg:line:renameend}.
The call to $\textsc{IsSat}$ in line~\ref{alg:line:checkoverlap} is an
invocation of an SMT solver that tells us whether the constraint fed
to it as argument is satisfiable, i.e. has a model.  If not, condition
D2(a), and hence D2, is violated.  Otherwise, we check in
lines~\ref{alg:line:d2bcheck1} and \ref{alg:line:d2bcheck2} if there
exists a loop $L''$ not containing $n$ and $n'$ that necessarily
executes as control flows from $n$ to $n'$ ($n
\stackrel{\mathit{Locs}\setminus\{n''\}}{\not\rightsquigarrow} n'$
checks this), and in which all elements of the array $A$ are updated.
Recall from the grammar in Fig.~\ref{fig:grammar} that all loops in
our programs are \textbf{for} loops with a loop counter that
increments by $1$ in each operation, and cannot be updated in the body
of the loop.  For such programs, it is sometimes easy to identify if a
loop $L''$ is indeed updating all elements of an array $A$.  If we
cannot determine whether $L''$ necessarily updates all elements of
$A$, we conservatively assume that it does not and the check in
line~\ref{alg:line:d2bcheck2} fails.  If both the checks in
lines~\ref{alg:line:d2bcheck1} and \ref{alg:line:d2bcheck2} succeed,
we conclude that condition D2(b), and hence D2, has been violated.  In
all other cases, we conservatively assume that D2 is satisfied, and
set $\mathit{D2Sat}$ to $\true$ in line~\ref{alg:line:d2sattrue}.  In
such cases, we also abort the search over additional array variables
$A$ in $S$.
\begin{lemma}\label{lem:compute-ddg-correct}
  Given a program represented as $(\mathit{Locs}, CE, \mu)$, let
  $(\mathit{Locs}, DE, \mu)$ be the DDG computed by
  \textsc{ComputeDDG}.  For every pair of distinct non-loop-head nodes
  $n, n' \in \mathit{Locs}$, if $(n, n') \not\in DE$, there is no
  read-after-write or non-peeled-write-after-peeled-write data
  dependence from $n$ to $n'$.
\end{lemma}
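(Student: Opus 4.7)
The plan is to argue the contrapositive using Lemma~\ref{lem:DDG-conditions}: it suffices to show that whenever \textsc{ComputeDDG} leaves $(n,n')$ out of $DE$, neither condition D1 nor condition D2 holds for $(n,n')$. Observe from line~\ref{alg:line:d1d2check} that $(n,n') \notin DE$ exactly when both $\mathit{D1Sat}$ and $\mathit{D2Sat}$ remain $\false$ throughout the iteration dealing with $(n,n')$. So I would structure the proof as two independent soundness claims, one for each flag, and close by invoking Lemma~\ref{lem:DDG-conditions}.

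For $\mathit{D1Sat}$, I would proceed by contradiction: suppose D1 holds via a scalar $v$ and a path $\pi = (n = n_1,\ldots,n_t = n')$ with $v \notin \bigcup_{i=2}^{t-1}\mathit{def}(n_i)$. Then $v \in \mathit{def}(n)\cap\mathit{uses}(n')\subseteq S$ (line~\ref{alg:line:initS}), and every interior node of $\pi$ belongs to $\NonDefV$ as defined on line~\ref{alg:line:nondefvdefn}. Hence the reachability test $n \stackrel{\NonDefV}{\rightsquigarrow} n'$ succeeds when $v$ is picked, and $\mathit{D1Sat}$ is set to $\true$, contradicting the assumption.

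For $\mathit{D2Sat}$, suppose D2 holds via some array $A$. Then $A \in \mathit{def}(n)$ and either $A \in \mathit{uses}(n')$, or $n \in \peelnodes$, $n' \notin \peelnodes$, and $A \in \mathit{def}(n')$; lines~\ref{alg:line:initS}--\ref{alg:line:appendS} ensure $A \in S$ in either case. Because D2(a) holds, there is a concrete assignment to the loop counters and to the scalars occurring in the index expressions that makes $\defindex(A,n)$ coincide with some $e' \in U$. I would argue that this assignment (after mapping each redefined scalar $v$ to $v_{\mathit{fresh}}$ in the natural way) satisfies $\phi_n \wedge \phi_{n'} \wedge \bigvee_{e' \in U}(e=e')$, so the SMT check on line~\ref{alg:line:checkoverlap} returns $\true$. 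Because D2(b) holds, no loop $L''$ separating $n$ from $n'$ updates all elements of $A$, and the sufficient-condition test on line~\ref{alg:line:d2bcheck2} fails; hence line~\ref{alg:line:d2sattrue} is reached and $\mathit{D2Sat}$ is set to $\true$, again a contradiction.

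The main obstacle is the renaming step on lines~\ref{alg:line:renamestart}--\ref{alg:line:renameend}: I must check that it is \emph{sound}, i.e.\ never turns a genuinely satisfiable constraint into an unsatisfiable one. The key observation is that a scalar $v$ is renamed to a fresh copy in $e'$ only when $v$ is potentially redefined along some path from $n$ to $n'$; for such a $v$, the value read at $n'$ may differ from the value written at $n$, so decoupling them with $v_{\mathit{fresh}}$ enlarges rather than shrinks the space of feasible index equalities. Scalars that cannot be redefined between $n$ and $n'$ are left untouched, so their forced equality is preserved. This makes the SMT query in line~\ref{alg:line:checkoverlap} at least as satisfiable as the underlying semantic question about D2(a), which is exactly what soundness demands.
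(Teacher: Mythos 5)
Your proposal is correct and follows essentially the same route as the paper: the paper's own proof is a one-sentence appeal to the fact that \textsc{ComputeDDG} implements the checks for D1, D2(a) and D2(b) directly, so that the result follows from Lemma~\ref{lem:DDG-conditions}. You have simply spelled out the details of that reduction (including the soundness of the variable renaming in the D2(a) query), which the paper leaves implicit.
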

\begin{proof}
  Since function \textsc{ComputeDDG} implements the checks for
  conditions D1, D2(a) and D2(b) in a straightforward manner, the
  proof follows from Lemma~\ref{lem:DDG-conditions}.
\end{proof}
%% Note that lines 2-14 of \textsc{ComputeRefinedPDG}
%% removes data dependence edges between nodes of $G$ that do not satisfy
%% Definition~\ref{def:loopdepends}.

We conclude this subsection with an illustration of DDG edges computed
by \textsc{ComputeDDG} for our running example.

\begin{figure}[h]
  \begin{center}
    \scalebox{0.6}{
    \begin{tikzpicture}
      [
        ->,
        >=stealth,
        node distance=1cm,
        % start chain = going below,
        noname/.style={
          ellipse,
          very thick,
%          blur shadow,
          fill=white,
          minimum width=1em,
          minimum height=1em,
          draw
        }
      ]

      %% Node placement in the data dependence graph
      \node[noname] (s)               {Start};
      \node[noname] (1)  [right=of s]  {1};
      \node[noname] (2)  [below=of 1]  {2};
      \node[noname] (3)  [right=of 2]  {3};
      \node[noname] (4)  [right=of 3]  {4};
      \node[noname] (5)  [below=of 2]  {5};
      \node[noname] (6)  [below=of 5]  {6};
      \node[noname] (7)  [right=of 6]  {7};
      \node[noname] (8)  [right=of 7]  {8};
      \node[noname] (9)  [below=of 6]  {9};
      \node[noname] (10) [below=of 9]  {10};
      \node[noname] (11) [below=of 10] {11};
      \node[noname] (12) [right=of 11] {12};
      \node[noname] (13) [right=of 12] {13};
      \node[noname] (14) [below=of 11] {14};
      \node[noname] (e)  [left=of 14] {End};

      %% Control flow edge placement in the data dependence graph
      % \begin{scope}[thick,rounded corners,blur shadow={shadow xshift=8pt, shadow blur steps=4, shadow opacity=80},-latex]
      \begin{scope}[very thick,rounded corners,-latex]
        \path
        (s)  edge (1)
        (1)  edge (2)
        (2)  edge node [above, xshift=-2mm] {$~~~\ltrue$} (3)
        (3)  edge (4)
        (4)  edge [bend right=55pt] node {} (2)
        (2)  edge node [above, xshift=2mm,yshift=-2mm] {$\lfalse$} (5)
        (5)  edge (6)      
        (6)  edge node [above, xshift=-2mm] {$~~~\ltrue$} (7)
        (7)  edge (8)
        (8)  edge [bend right=55pt] node {} (6)        
        (6)  edge node [above, xshift=2mm,yshift=-2mm] {$\lfalse$} (9)
        (9)  edge (10)
        (10) edge (11)
        (11) edge node [above, xshift=-2mm] {$~~~\ltrue$} (12)
        (12) edge (13)
        (13) edge [bend right=55pt] node {} (11)
        (11) edge node [above, xshift=2mm,yshift=-2mm] {$\lfalse$} (14)
        (14) edge (e)
        ;
        %% \draw (l11.215) -- ++(0,-0.3) -| ([xshift=-3mm]l11.west) |-
        %% ([yshift=3mm]l1.145) -- (l1.145);
        %% \draw (l22.205) -- ++(0,-0.3) -| ([xshift=-9mm]l22.west) |-
        %% ([yshift=3mm]l2.155) -- (l2.155);
      \end{scope}

      %% Data flow edge placement in the data dependence graph
      \begin{scope}[dashed,very thick,rounded corners,-latex]
        \path
        (1.east) edge [bend left=30] node [xshift=1mm, yshift=2mm] {\tt S} (3)
        (1) edge [bend right=40] node [xshift=-2mm] {\tt S} (5)
        (3) edge [bend left=30] node [xshift=3mm] {\tt S} (5.30)
        (5) edge [bend left=30] node [xshift=1mm, yshift=2mm] {\tt S} (7)
        (5) edge [bend right=40] node [xshift=-2mm] {\tt S} (9)
        (5) edge [bend right=90] node [xshift=-2mm] {\tt S} (10.west)
        (7) edge node [xshift=3mm] {\tt A1} (12)
        (9) edge [bend right=90] node [xshift=-3mm] {\tt A1} (14.150)
        (10.east) edge [bend left=30] node [xshift=1mm, yshift=2mm] {\tt S1} (12.140)
        (10) edge [bend right=40] node [xshift=-3mm] {\tt S1} (14)
        (12) edge [bend left=30] node [xshift=3mm] {\tt S1} (14)
        ;
      \end{scope}
      
    \end{tikzpicture}
    }
  \end{center}
  \caption{CFG (solid edges) and DDG (dashed edges) of peeled program in Fig.~\ref{fig:ss-peel}}
  \label{fig:draw-ddg}
\end{figure}

\begin{example}
  Our running example with peeled loops is shown in
  Fig.~\ref{fig:ss-peel}.  The CFG for this program is shown using
  solid edges in Fig.~\ref{fig:draw-ddg}. For convenience of
  exposition, we have named nodes such that node $n_i$ in the CFG of
  this program corresponds to the statement at line $i$ of the peeled
  program, with two special nodes $n_{start}$ and $n_{end}$, as usual.
  If we execute function \textsc{ComputeDDG} on this CFG, we obtain
  the data dependence edges shown using dashed edges in
  Fig.~\ref{fig:draw-ddg}.  For ease of understanding, each DDG edge
  $(n, n')$ is also labeled by a scalar variable/array that is
  responsible for the data dependence of $n'$ on $n$.  Thus, DDG edges
  $(n_1, n_3), (n_1, n_5), (n_3, n_5), (n_5, n_7), (n_5, n_9)$ and
  $(n_5, n_{10})$ represent data dependence through the scalar
  variable {\tt S} and edges $(n_{10}, n_{12}), (n_{10}, n_{14}),
  (n_{12}, n_{14})$ represent data dependence through the scalar
  variable {\tt S1}.  In all these cases, condition $D1$ holds.
  Similarly, DDG edges $(n_7, n_{12}), (n_{9}, n_{14})$ represent data
  dependence through the array {\tt A1}, since conditions D2(a) and
  D2(b) hold in these cases.  Note that edge $(n_7, n_{14})$
  (resp. $(n_9, n_{12})$) is not added although ${\mathtt A1} \in
  \mathit{def}(n_7) \cap \mathit{uses}(n_{14})$ (resp. $\in
  \mathit{def}(n_9) \cap \mathit{uses}(n_{12})$) because condition
  D2(a) fails in this case.  To see why D2(a) fails, notice that
  $\phi_7 := 0 \leq \mathtt{i} < \mathtt{N}-1$ and $\phi_{14} :=
  \true$.  The expressions used to define and access array
  $\mathtt{A}$ are $\defindex(\mathtt{A},7) := \mathtt{i}$ and
  $\useindex(\mathtt{A}, 14) := \{\mathtt{N}-1\}$.  The constraint $0
  \leq \mathtt{i} < \mathtt{N}-1 \wedge \mathtt{i} = \mathtt{N}-1$,
  computed in line~\ref{alg:line:checkoverlap} of
  Algorithm~\ref{alg:compute-ddg}, is unsatisfiable.  This violates
  D2(a)(i).  Note that, in this example there are no
  non-peeled-write-after-peeled-write dependencies.
\end{example}

\subsection{Identifying ``Affected'' Variables}
\label{sec:affected}

Recall that every loop $L$ originally present in $\PP_N$ iterates
$k_L(N-1)$ times in $\PP_{N-1}$ and $k_L(N)$ times in $\PP_N$.  The
iterations missed by $\PP_{N-1}$ are represented by the peeled
statements computed by function \textsc{PeelAllLoops}.  It is natural
to expect the difference program $\partial \PP_N$ to contain the
peeled statements, perhaps with some adaptations, if $\PP_{N-1};
\partial \PP_N$ is to have the same effect as $\PP_N$ on all relevant
scalar variables and arrays.  However, statements that are present in
both $\PP_N$ and $\PP_{N-1}$ may also differ in their semantics, and
therefore require ``rectification'' in $\partial \PP_N$.  For example,
statements like {\tt x = N;} and {\tt if (x > N)} in $\PP_N$ become
{\tt x = N-1;} and {\tt if (x > N-1)} respectively, in $\PP_{N-1}$.
Clearly, the corresponding statements in $\PP_N$ and $\PP_{N-1}$ in
the above examples have different semantics.  We say that such
statements, though present in both $\PP_{N-1}$ and $\PP_N$, are
``affected'' by the parameter $N$, and potentially need to be
``rectified'' in $\partial \PP_N$.  Our goal in this subsection is to
identify all relevant scalar variables/arrays that are potentially
affected in this sense, i.e. they are updated by versions of the same
statement in $\PP_N$ and $\PP_{N-1}$ but can potentially result in
different values being assigned due to the change in the parameter
$N$.  We use the data dependence information computed in the previous
subsection to identify such variables and arrays, which we also call
\emph{affected variables/arrays}.  Once these variables/arrays are
identified, we can proceed to generate the difference program
$\partial \PP_N$ for effecting any rectification that may be needed.

\begin{algorithm*}[!t]
  \caption{\textsc{ComputeAffected}\big($(\mathit{Locs}^p, CE^p, \mu^p)$: peeled program $\PP_N^p$, $\peelnodes$: peeled statements\big)}
  \label{alg:pdg-affected}
%  \scriptsize
  \begin{algorithmic}[1]
    \State $(\mathit{Locs}^p, DE^p, \mu^p)$ := \textsc{ComputeDDG}($(\mathit{Locs}^p, CE^p, \mu^p)$, $\peelnodes$);  \label{affected:line:pdg}
    \State $\affectedvars$ := $\emptyset$;  \Comment{Initialize $\affectedvars$}  \label{affected:line:init}
    \infocomment{Initialize $\worklist$ with non-peeled nodes of CFG that either use $N$ directly or have data dependence on peeled nodes} 
    \State $\worklist$ := $\{n \mid n \in \mathit{Locs}^p\setminus \peelnodes, N \in \mathit{uses}(n)$ or $\exists n'.\;n' \in \peelnodes \wedge (n',n) \in DE^p\}$; \label{affected:line:init-wl}
    \State $\processednodes$ := $\emptyset$;\label{affected:line:processednodesinit}
    \While{$\worklist$ is not empty}  \label{affected:line:start-while}
      \State Remove a node $n$ from the head of $\worklist$;\label{affected:line:removen}
      \State $\processednodes$ := $\processednodes \cup \{n\}$;\label{affected:line:addntoprocessed}
      \If{$\mu(n)$ is an assignment statement}\label{affected:line:start-case-analysis}
        \State $\affectedvars$ := $\affectedvars \cup \mathit{def}(n)$; \Comment{$\mathit{def}(n)$ potentially affected by $N$}\label{affected:line:add-defs}
        \For{all nodes $n'$ s.t. $n' \in \mathit{Locs}^p\setminus\peelnodes$, $n' \not\in \processednodes$ and $(n,n') \in DE^p$}
          \State $\worklist$ := \textsc{AppendToList}($\worklist$, $n'$);\label{affected:line:updateworklist1}
        \EndFor    
      \ElsIf{$\mu(n)$ is a branch condition}
      %% \State \new{$n_i := \textsc{ImmediatePostDominator}(n)$};
      %% \For{all nodes $n'$ s.t. $n' \in \mathit{Locs}^p\setminus\peelnodes$, $n' \not\in \processednodes$ and $n \stackrel{\mathit{Locs}^p \new{\setminus n_i}}{\rightsquigarrow} n'$}
        \For{all nodes $n'$ s.t. $n' \in \mathit{Locs}^p\setminus\peelnodes$, $n' \not\in \processednodes$ and $n \stackrel{\mathit{Locs}^p}{\rightsquigarrow} n'$}
          \State $\worklist$ := \textsc{AppendToList}($\worklist$, $n'$);\label{affected:line:updateworklist2}
        \EndFor    
      \Else ~~\textbf{continue}; \Comment{$n$ is a loop-head; don't do anything for loop-heads} \label{affected:line:skip-loop-heads}
      \EndIf \label{affected:line:end-case-analysis}
    \EndWhile \Comment{end of while loop processing elements of $\worklist$} \label{affected:line:end-while}
   \State \Return $\affectedvars$;  \label{affected:line:return}
  \end{algorithmic}
\end{algorithm*}
 
  %% Algorithm to Compute Affected Variables

Function \textsc{ComputeAffected}, shown in
Algorithm~\ref{alg:pdg-affected}, computes the set of affected scalar
variables/arrays of a peeled program $\PP_N^p$, represented by its CFG
$(\mathit{Locs}^p, CE^p, \mu^p)$.  Besides the CFG, the function also
takes as input the set of CFG nodes corresponding to peeled
statements, denoted $\peelnodes$.  Recall that such a set is obtained
when function \textsc{PeelAllLoops} is invoked.
\textsc{ComputeAffected} starts by constructing the data dependence
graph using function \textsc{ComputeDDG}
(line~\ref{affected:line:pdg}).  The set of data dependence edges thus
obtained is represented by $DE^p$.  We use $\affectedvars$ to denote
the set of affected variables and arrays of $\PP_N$, and initialize it
to the empty set in line~\ref{affected:line:init}.  We also maintain a
list of nodes $n$ in $\worklist$ such that the semantics of the
program statement in $\mu(n)$ is potentially affected (directly or
indirectly) by $N$.  This worklist is initialized in
line~\ref{affected:line:init-wl} with all non-peeled nodes $n$ that
either (i) have $N$ in $\mathit{uses}(n)$, or (ii) are potentially
data dependent on a peeled node, i.e.  $\exists n'.\; n' \in
\peelnodes$ and $(n', n) \in DE^p$.  The exclusion of peeled nodes
from the worklist is justified by the observation that these
statements are present in $\PP^p_N$ but not in $\PP_{N-1}$.
Therefore, these must necessarily appear (possibly with modifications)
in $\partial \PP_N$, and no additional analysis is needed to identify
these statements or variables/arrays updated by them. We also keep
track of all non-peeled nodes that have been processed so far in the
set $\processednodes$, initialized in
line~\ref{affected:line:processednodesinit}.

The loop in
lines~\ref{affected:line:start-while}--\ref{affected:line:end-while}
iterates over the worklist, processing one node at a time to identify
affected scalar variables and arrays.  We remove the node $n$ at the
head of the worklist and add it to $\processednodes$ in
lines~\ref{affected:line:removen}--\ref{affected:line:addntoprocessed}.
If $\mu^p(n)$ is an assignment statement, we conservatively consider
the scalar variable or array updated at $n$ to be potentially affected
(marked in line~\ref{affected:line:add-defs}).  We also add all as-yet
unprocessed nodes $n'$ that have a data dependence on $n$ to the
worklist in line~\ref{affected:line:updateworklist1}.  This accounts
for nodes that are potentially affected because they use a value that
is generated at node $n$.  If node $n$ corresponds to a conditional
branch statement, we conservatively consider all non-peeled nodes $n'$
that are reachable from $n$ in the CFG of $\PP_N^p$ as potentially
affected by $N$.  If such a node $n'$ hasn't been processed yet, we
add it to the worklist in line~\ref{affected:line:updateworklist2}.
Finally, if $n$ corresponds to a loop-head, we skip the identification
of affected variables from $n$
(line~\ref{affected:line:skip-loop-heads}).  This is justified since
the special form of loops allowed by the grammar in
Fig.~\ref{fig:grammar} permits only loop counter variables to be
updated in a loop-head, and loop counter variables are not relevant
for the post-conditions we wish to prove.  The overall set of affected
scalar variables/arrays is iteratively computed until there are no
nodes left in the worklist to process.  Since the CFG of $\PP_N^p$ has
only a finite number of nodes, and since no node is processed more
than once (thanks to the book-keeping done using the set
$\processednodes$), the loop in
lines~\ref{affected:line:start-while}--\ref{affected:line:end-while}
is guaranteed to terminate.

\begin{lemma}\label{lemma:add-worklist}
  Let $\PP_N^p$ be a peeled program fed as input to the function
  \textsc{ComputeAffected}.  Let $n$ be a node in $\PP_N^p$ such that
  some scalar variable/array in $\mathit{uses}(n)$ is transitively
  data/control dependent on $N$ or on a peeled node in $\PP_N^p$.
  Then $n$ is added to $\worklist$ during the execution of function
  \textsc{ComputeAffected}.
\end{lemma}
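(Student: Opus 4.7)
The plan is to prove the lemma by strong induction on the length $k$ of a shortest transitive data/control dependency chain from $N$ (or some peeled node) to a variable/array appearing in $\mathit{uses}(n)$. To set this up, I would first formalize ``transitive dependency'' in terms of DDG edges computed by \textsc{ComputeDDG} (for data dependence) and CFG reachability from branch-condition nodes (for control dependence), so that a chain is a finite alternating sequence of such elementary dependencies.

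For the base case $k=0$, either $N \in \mathit{uses}(n)$ directly, or some element of $\mathit{uses}(n)$ is defined at a peeled node $n'$ with $(n',n) \in DE^p$. Either way, the initialization in line~\ref{affected:line:init-wl} inserts $n$ into $\worklist$, which settles the base case (note $n$ is non-peeled, as assumed throughout the algorithm's propagation).

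For the inductive step, fix a shortest chain of length $k > 0$. It passes through some intermediate node $m$ such that the sub-chain from $N$ or a peeled node to $m$ has length strictly less than $k$, and $n$ depends on $m$ either by a DDG edge $(m,n) \in DE^p$ (data dependence) or by $m$ being a branch condition with $m \stackrel{\mathit{Locs}^p}{\rightsquigarrow} n$ (control dependence). The induction hypothesis, applied to $m$ (after observing that its used variable inherits the shorter chain), guarantees that $m$ is inserted into $\worklist$. Since the while loop in lines~\ref{affected:line:start-while}--\ref{affected:line:end-while} eventually removes and processes every enqueued node, $m$ will be processed. I then case split on $\mu^p(m)$: if $m$ is an assignment, line~\ref{affected:line:updateworklist1} enqueues every unprocessed non-peeled $n'$ with $(m,n') \in DE^p$, covering the data-dependence case; if $m$ is a branch condition, line~\ref{affected:line:updateworklist2} enqueues every unprocessed non-peeled successor in the CFG, covering the control-dependence case. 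If $n \in \processednodes$ already when $m$ is processed, then $n$ was added to $\worklist$ even earlier, and we are likewise done.

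The main obstacle is the loop-head case, because line~\ref{affected:line:skip-loop-heads} explicitly makes \textsc{ComputeAffected} skip such nodes, so dependency propagation through a loop-head is not handled by the case split above. I would discharge this by appealing to the syntactic restrictions from Sect.~\ref{sec:prelim}: a loop-head in our grammar has $\mathit{def}(m) = \{\ell\}$ with $\ell$ a loop counter, and loop counter variables are explicitly excluded from $\mathit{def}$ and $\mathit{uses}$ in the definitions of Sect.~\ref{sec:cfg}. Hence no DDG edge $(m, n)$ arises from a loop-head $m$, so the data-dependence sub-case cannot be routed through $m$. For the control-dependence sub-case, if a shortest chain routes through a loop-head $m$ whose bound involves $N$, then $m$ itself was placed into $\worklist$ during initialization (since $N \in \mathit{uses}(m)$), and one may replace the chain by a strictly shorter one obtained by observing that any body node whose used variable truly depends on $N$ must also depend on an assignment node inside the loop body that is itself reachable by a shorter chain; the induction hypothesis then handles that assignment node, and the subsequent enqueueing step adds $n$. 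Finite termination of the while loop (guaranteed by $|\mathit{Locs}^p| < \infty$ and the bookkeeping in $\processednodes$) ensures all these insertions actually occur before \textsc{ComputeAffected} returns.
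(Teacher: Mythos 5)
Your proof takes essentially the same route as the paper's: induction on the length of the transitive dependency chain, with the base case discharged by the worklist initialization in line~\ref{affected:line:init-wl} of \textsc{ComputeAffected} and the inductive step discharged by the enqueueing in lines~\ref{affected:line:updateworklist1} and~\ref{affected:line:updateworklist2} when the chain's penultimate node is removed from the worklist. Your extra treatment of the loop-head case addresses a point the paper's proof silently skips over; it is welcome, although the claim that a control dependence routed through an $N$-dependent loop bound can always be replaced by a strictly shorter chain through a body assignment is asserted rather than argued.
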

\begin{proof}
  A transitive data/control dependence as referred to in the lemma can
  be represented by a sequence of nodes $n_{i_1}, n_{i_2}, \ldots,
  n_{i_k} (= n)$, where either (i) $k=1$ and $N \in \mathit{uses}(n)$
  or (ii) $k>1$ and $\mathit{uses}(n_{i_j})$ is data/control dependent
  on $n_{i_{j-1}}$, for $2 \leq j \leq k$.  By Lemma
  \ref{lem:compute-ddg-correct}, in case (ii), there is an edge from
  $n_{i_j-1}$ to $n_{i_j}$, for $2 \leq j \leq k$, in the data
  dependence graph computed at line \ref{affected:line:pdg} of
  function \textsc{ComputeAffected}.  We now prove the claim by
  induction on $k$.

  We consider two base cases of the induction.  If $k=1$, we know that
  $N \in \mathit{uses}(n)$.  Hence, $n$ is added to $\worklist$ in
  line \ref{affected:line:init-wl} of the function
  \textsc{ComputeAffected}.  If $k=2$, either $n_{i_1}$ is a peeled
  node or $N \in \mathit{uses}(n_{i_1})$.  In the former case,
  $n=n_{i_2}$ is added to $\worklist$ in line
  \ref{affected:line:init-wl} of \textsc{ComputeAffected}.  Otherwise,
  $n_{i_1}$ is added to $\worklist$ in line
  \ref{affected:line:init-wl} and must be removed from $\worklist$ in
  a later iteration before function \textsc{ComputeAffected}
  terminates.  In the iteration in which $n_{i_1}$ is removed from
  $\worklist$, the node $n=n_{i_2}$ is added to $\worklist$ either due
  to data dependence (line \ref{affected:line:updateworklist1} of
  \textsc{ComputeAffected}) or control dependence (line
  \ref{affected:line:updateworklist2} of \textsc{ComputeAffected}) of
  $\mathit{uses}(n)$ on $n_{i_1}$.

  We next hypothesize that, for every transitive data/control
  dependence on $N$ or on a peeled node, represented by the sequence
  of nodes $(n_{i_1}, n_{i_2}, \ldots, n_{i_j})$, where $2 \leq j \leq
  k-1$, the node $n_{i_j}$ is added to $\worklist$ during the
  execution of \textsc{ComputedAffected}.

  For the inductive step, consider a transitive data/control
  dependence on $N$ or on a peeled node, represented by the sequence
  of nodes $(n_{i_1}, n_{i_2}, \ldots, n_{i_k})$, where $n_{i_k} = n$.
  This implies that $\mathit{uses}(n_{i_{k-1}})$ is also data/control
  dependence on $N$ or on a peeled node.  Now by the inductive
  hypothesis, $n_{i_{k-1}}$ must be added to $\worklist$ in some
  iteration of the loop in lines \ref{affected:line:start-while} --
  \ref{affected:line:end-while} of \textsc{ComputedAffected}.  In the
  iteration in which $n_{i_{k-1}}$ is removed from $\worklist$, the
  node $n=n_{i_k}$ is added to $\worklist$ either due to data
  dependence (line \ref{affected:line:updateworklist1} of
  \textsc{ComputeAffected}) or control dependence (line
  \ref{affected:line:updateworklist2} of \textsc{ComputeAffected}) of
  $\mathit{uses}(n_{i_k})$ on $n_{i_{k-1}}$.
\end{proof}

In order to study additional properties of function
\textsc{ComputeAffected}, we need to introduce some additional
notation.  Given a peeled program $\PP_N^p$ generated by
\textsc{PeelAllLoops}, let $\PPP{N}{N-1}$ denote the program obtained
by removing the peels of all loops from $\PP_N^p$.  Clearly,
$\PPP{N}{N-1}$ is identical to the un-peeled program $\PP_N$ (fed as
input to \textsc{PeelAllLoops}), but with all instances of $N$ in
upper bound expressions of loops replaced by $N-1$.  The program
$\PPP{N}{N-1}$ is also closely related, but not identical, to
$\PP_{N-1}$.  Indeed, $\PP_{N-1}$ has all occurrences of $N$ (not just
those appearing in upper bound expressions of loops) in $\PP_N$
replaced by $N-1$, whereas only upper bound expressions of loops are
modified to obtain $\PPP{N}{N-1}$.  As an example, the loop in
Fig. \ref{fig:ex-loop}(a) in the program $\PP_N$ transforms to the
loop in Fig. \ref{fig:ex-loop}(b) in the program $\PPP{N}{N-1}$ and to
the loop in Fig. \ref{fig:ex-loop}(c) in the program $\PP_{N-1}$.

\begin{figure}[h]
\begin{alltt}
  for (l=0; l<N; l=l+1) \{
    if (x < N) \{ x = x + N; \}
  \}
\end{alltt}
\begin{center}(a)\end{center}
\begin{alltt}
  for (l=0; l<N-1; l=l+1) \{
    if (x < N) \{ x = x + N; \}
  \}
\end{alltt}
\begin{center}(b)\end{center}
\begin{alltt}
  for (l=0; l<N-1; l=l+1) \{
    if (x < N-1) \{ x = x + N-1; \}
  \}
\end{alltt}
\begin{center}(c)\end{center}
\caption{A loop in (a) $\PP_N$, (b) $\PPP{N}{N-1}$, and (c) $\PP_{N-1}$}
\label{fig:ex-loop}
\end{figure}

Since there is a bijection between the nodes in the CFGs of $\PP_N$
and $\PP_{N-1}$, and similarly between the nodes in the CFGs of
$\PP_{N}$ and $\PPP{N}{N-1}$, there exists a bijection between the
nodes in the CFGs of $\PP_{N-1}$ and $\PPP{N}{N-1}$ as well.  It is
also easy to see that since programs generated by the grammar in
Fig.~\ref{fig:grammar} only allow loop bound expressions that depend
on constants and $N$, if $L$ and $L'$ are corresponding loops in
$\PP_{N-1}$ and $\PPP{N}{N-1}$ respectively, then both $L$ and $L'$
iterate exactly the same number, i.e. $k_L(N-1)$, times.

Let $n_0, n_1, n_2, \ldots, n_t$ denote nodes in the CFG of
$\PP_{N-1}$, and let $n_0', n_1', n_2', \ldots, n_t'$ denote the
corresponding nodes (per the bijection) in the CFG of $\PPP{N}{N-1}$.
For notational convenience, we let $n_0$ and $n_t$ be the start and
end nodes respectively of the CFG of $\PP_{N-1}$, and similarly for
$n_0'$ and $n_t'$.  Let $\sigma$ denote an arbitrary initial state,
i.e. valuation of all scalar variables and arrays, from which we wish
to start executing $\PP_{N-1}$ and $\PPP{N}{N-1}$.  Since programs
generated by the grammar in Fig.~\ref{fig:grammar} are deterministic,
there is exactly one control flow path, say $\pi$, in the CFG of
$\PP_{N-1}$ that corresponds to the execution of $\PP_{N-1}$ starting
from $\sigma$. A similar argument holds for $\PPP{N}{N-1}$, and let
$\pi'$ be the corresponding path in its CFG.  In the following
discussion, we use $n_{i_j}$ to denote the $j^{th}$ node starting from
$n_{i_0}$ in $\pi$, where $i_0 = 0$ and $0 \le j < |\pi|$.  The
interpretation of $n_{i_j}'$ in the context of $\pi'$ is analogous.

\begin{lemma}\label{lem:helper1}
  Let $\pi$ (resp. $\pi'$) be the path in the CFG of $\PP_{N-1}$
  (resp.  $\PPP{N}{N-1}$) that corresponds to the execution of
  $\PP_{N-1}$ (resp.  $\PPP{N}{N-1}$) starting from the state
  $\sigma$.  Let $\widehat{\pi}: (n_{i_0}, n_{i_1}, \ldots, n_{i_j})$
  be a prefix of the path $\pi$.  Suppose, upon termination of
  \textsc{ComputeAffected}, no conditional branch node in
  $\widehat{\pi}$ is present in $\processednodes$. Then, $(n_{i_0}',
  n_{i_1}', \ldots, n_{i_j}')$ must be a prefix of $\pi'$.
\end{lemma}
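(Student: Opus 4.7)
The plan is to prove the lemma by induction on $j$, the length of the prefix $\widehat{\pi}$, but strengthened with an auxiliary state-equivalence invariant that I need to carry through the induction. Specifically, I will prove simultaneously that (a) $(n_{i_0}', \ldots, n_{i_j}')$ is a prefix of $\pi'$ and (b) at each corresponding pair $n_{i_k}, n_{i_k}'$ for $k \le j$, the states reached in the two executions agree on every scalar variable and every array element whose value at that point is not transitively data/control dependent on $N$ or on a peeled node. The base case $j=0$ is immediate since both executions begin at the (identified) entry node from the same state $\sigma$.

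For the inductive step I would split on the nature of $n_{i_{j-1}}$. If $n_{i_{j-1}}$ is not a conditional branch (assignment, loop-head, or entry node), its unique CFG successor is determined by the CFG structure, which is identical in $\PP_{N-1}$ and $\PPP{N}{N-1}$ under the bijection, so $n_{i_j}'$ must follow $n_{i_{j-1}}'$ on $\pi'$. If $n_{i_{j-1}}$ is a conditional branch, then by hypothesis $n_{i_{j-1}} \notin \processednodes$; since \textsc{ComputeAffected} terminates with $\worklist$ empty, every node ever inserted into $\worklist$ is present in $\processednodes$ at termination, so $n_{i_{j-1}}$ was never placed on the worklist. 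The contrapositive of Lemma~\ref{lemma:add-worklist} then tells us that no variable or array in $\mathit{uses}(n_{i_{j-1}})$ is transitively data/control dependent on $N$ or on any peeled node. The strengthened inductive hypothesis therefore guarantees that the two states agree on $\mathit{uses}(n_{i_{j-1}})$, so the branch condition evaluates identically, yielding $n_{i_j}' $ as the successor taken on $\pi'$.

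The preservation of the state invariant at the newly reached node requires a bit more care. For an assignment $vA = e$ at $n_{i_{j-1}}$, any component of the written location whose new value is not transitively affected must have had its read-set also free of transitively affected terms, and then the inductive hypothesis on $\mathit{uses}(n_{i_{j-1}})$ forces equal values to be written in both runs. Entries of arrays that are not written at $n_{i_{j-1}}$ are preserved verbatim in both executions. The non-trivial point here is justifying element-wise agreement for arrays rather than merely array-name agreement; this is where Lemma~\ref{lem:compute-ddg-correct} is invoked, so that the absence of a DDG edge rules out propagation of a genuinely affected value from any earlier write to the read at $n_{i_{j-1}}$.

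The main obstacle is formulating the state-equivalence invariant precisely enough that it composes across loop bodies and across the structural difference between $\PPP{N}{N-1}$ (which contains no peel) and $\PP_N^p$ (from which $\processednodes$ was computed). The key observation that makes this manageable is that $\pi$ and $\pi'$ never traverse peeled nodes, and $\worklist$ was seeded in line~\ref{affected:line:init-wl} with exactly the non-peeled nodes that either mention $N$ directly or lie immediately downstream of a peel in the DDG. Thus transitive dependence on peels is conservatively captured the moment propagation leaves the peel region, which is precisely what is needed to align the invariant with the definition of $\processednodes$ used in the hypothesis of the lemma.
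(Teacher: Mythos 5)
Your proposal is correct and takes essentially the same route as the paper: both arguments reduce the claim to showing that no conditional branch node in the prefix has a transitive data/control dependence on $N$, via the contrapositive of Lemma~\ref{lemma:add-worklist} together with the fact that every node ever placed on $\worklist$ ends up in $\processednodes$. The only difference is that you make explicit, as a strengthened induction invariant, the state-agreement argument that the paper's proof leaves implicit when it asserts that paths can diverge only at a branch whose condition depends on $N$.
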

\begin{proof}
Recall that the CFGs of $\PP_{N-1}$ and $\PPP{N}{N-1}$ are identical
(including all loop bounds) except possibly for the usage of $N$ or an
expression involving $N$ in some conditional branches and/or
assignment statements.  Note that by definition, none of these CFGs
contain any peeled nodes.  Since $\pi$ and $\pi'$ start at
corresponding nodes $n_0$ and $n_0'$ in the respective CFGs, every
subsequent node $n_{i_j}$ in $\pi$ must be matched by the
corresponding node $n_{i_j}'$ in $\pi'$, until a branch node is
encountered along one of the paths and the branch condition
potentially depends on $N$.  To prove the lemma, it therefore suffices
to show that no conditional branch node $n_{i_k}$ in $\widehat{\pi}$
has any transitive data/control dependence on $N$.

Let $n_{i_k}$ be a conditional branch node in $\widehat{\pi}$.  By
Lemma \ref{lemma:add-worklist}, if any scalar variable/array in
$\mathit{uses}(n_{i_k})$ is transitively data/control dependent on
$N$, then $n_{i_k}$ must be added to $\worklist$ at some point during
the execution of function \textsc{ComputeAffected}.  Consequently,
$n_{i_k}$ must also be removed from $\worklist$
(line~\ref{affected:line:removen}) and added to $\processednodes$
(line~\ref{affected:line:addntoprocessed}) before
\textsc{ComputeAffected} terminates. However, this violates the
premise of the claim, i.e. $n_{i_k}$ is not present in
$\processednodes$ on termination of \textsc{ComputeAffected}.
Therefore, no scalar variable/array in $\mathit{uses}(n_{i_k})$ can be
transitively data/control dependent on $N$.  This completes the proof
of the lemma.
\end{proof}

\begin{lemma}\label{lemma:affected-correct}
  Let $\PP_N^p$ be a peeled program fed as input to the function
  \textsc{ComputeAffected}.  Let $vA$ be a scalar variable/array that
  is absent in $\affectedvars$ when \textsc{ComputeAffected}
  terminates.  If $\PP_{N-1}$ and $\PPP{N}{N-1}$ are executed starting
  from the same state $\sigma$, then $vA$ has the same value on
  termination of both programs.
\end{lemma}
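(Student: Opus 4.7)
The plan is to prove a stronger claim by induction on the execution step $j$: at every step of the executions of $\PP_{N-1}$ and $\PPP{N}{N-1}$ starting from $\sigma$, either (a) both executions are at corresponding CFG nodes $n_{i_j}$ and $n_{i_j}'$ and every scalar/array not in $\affectedvars$ holds the same value in both, or (b) the two paths have already diverged at some conditional branch node belonging to $\processednodes$, and from that point on no scalar/array outside $\affectedvars$ is ever written in either execution. The lemma then follows because under either regime every $vA \notin \affectedvars$ ends the two executions with the same value.

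The base case is immediate since both executions start at corresponding start nodes in the common state $\sigma$. For the inductive step under regime (a) at $n = n_{i_j}$, I would split on $\mu(n)$. If $n$ is a loop-head, the loop-bound expressions in $\PP_{N-1}$ and $\PPP{N}{N-1}$ coincide (both obtained by substituting $N-1$ for $N$ in upper bounds) and loop counters, the only variables written at loop-heads, remain synchronized by the induction, so both executions take the same branch. If $\mu(n)$ is an assignment $w = e$ or a conditional branch with $n \notin \processednodes$, then Lemma~\ref{lemma:add-worklist} in contrapositive form rules out any transitive data or control dependence of $\mathit{uses}(n)$ on $N$ or on a peeled node; in particular $N \notin \mathit{uses}(n)$ since otherwise $n$ would have been enqueued at line~\ref{affected:line:init-wl}. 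The crux is to show that every variable or array slot in $\mathit{uses}(n)$ evaluates identically in both executions: if not, I would trace back to the last defining node $n^{\star}$ in the common prefix, observe that the differing assigned values at $n^{\star}$ force $n^{\star} \in \processednodes$ by the same contrapositive reasoning, and verify that the execution itself witnesses condition D1 (for scalars) or conditions D2(a)(i) and D2(b) (for arrays) between $n^{\star}$ and $n$ in the DDG. Hence $(n^{\star}, n) \in DE^p$, so line~\ref{affected:line:updateworklist1} would have enqueued $n$ when $n^{\star}$ was processed, contradicting $n \notin \processednodes$. With all uses agreeing, an assignment to $w \notin \affectedvars$ produces the same value and a branch takes the same direction, so (a) is preserved; if instead $n \in \processednodes$ is an assignment, line~\ref{affected:line:add-defs} places $\mathit{def}(n)$ inside $\affectedvars$, so any discrepancy affects only an already-affected variable.

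The only way to leave regime (a) is via a processed conditional branch at $n \in \processednodes$, where the guards may evaluate differently and the paths fork. At that moment, line~\ref{affected:line:updateworklist2} enqueues every non-peeled CFG-successor of $n$, and by repeated application of the while-loop in \textsc{ComputeAffected}, every non-peeled assignment reachable from $n$ is eventually processed and its defined scalar/array is added to $\affectedvars$. Consequently, beyond the fork every subsequent write in either execution targets only affected variables, placing us in regime (b); the non-affected variables, equal at the fork by the just-preserved (a), then stay frozen through to termination.

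The principal obstacle, as I anticipate, will be making the array case of the ``same value'' sub-step watertight. For a scalar $v$ the control-flow prefix in the current execution directly witnesses a path from $n^{\star}$ to $n$ that avoids redefining $v$, which is exactly condition D1. For an array $A$, I must argue that when the particular element $A[c]$ read at $n$ differs between executions, the concrete index $c$ together with the loop-counter values in the execution supplies a model for $\phi_{n^{\star}} \wedge \phi_{n} \wedge \bigvee_{e' \in \useindex(A, n)}(\defindex(A, n^{\star}) = e')$, certifying satisfiability at line~\ref{alg:line:checkoverlap}, and that no intervening loop on the path updates every element of $A$ (otherwise $n^{\star}$ would not still be the last writer to $A[c]$), ruling out condition D2(b). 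The non-peeled-write-after-peeled-write branch of D2 does not arise, since $\pi$ and $\pi'$ visit only non-peeled nodes; loop counters, which \textsc{ComputeAffected} bypasses at line~\ref{affected:line:skip-loop-heads}, stay synchronized under regime (a) by the loop-head argument above.
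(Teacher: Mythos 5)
Your proof is correct, but it is organized quite differently from the paper's. The paper argues by contradiction, localized to the node $n_{i_j}$ that last writes $vA$: it shows that no branch node in the prefix of $\pi$ ending at $n_{i_j}$ and no node in the dependency cone $\Dep{\pi}{n_{i_j}} \cup \Branch{\pi}{n_{i_j}}$ can be transitively dependent on $N$ (else Lemma~\ref{lemma:add-worklist} would force $vA$ into $\affectedvars$), invokes the separate helper Lemma~\ref{lem:helper1} to conclude that the prefixes of $\pi$ and $\pi'$ coincide, and then observes that the statements labeling the cone are syntactically identical in $\PP_{N-1}$ and $\PPP{N}{N-1}$. You instead run a forward lockstep simulation over execution steps, maintaining the invariant that all non-affected scalars/arrays (and loop counters) agree, with an explicit ``regime (b)'' for the case where a processed branch causes the paths to fork. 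Your route buys two things: it absorbs Lemma~\ref{lem:helper1} into the induction invariant rather than needing it as a standalone ingredient, and it handles the tail of the execution after the last write to $vA$ (and after any divergence) explicitly --- the paper's proof is somewhat terse on why no \emph{later} node in either path can disturb $vA$, whereas your regime (b) argument (everything reachable from a processed branch is processed, hence writes only affected variables) settles it cleanly. The paper's route is more economical in that it never has to reason about divergent suffixes at all, since it shows divergence cannot occur before the write to $vA$ and does not need anything afterwards. One detail worth tightening in your ``crux'' step: the DDG is computed on the \emph{peeled} program $\PP_N^p$, so a control-flow path from $n^{\star}$ to $n$ witnessed by the execution of $\PP_{N-1}$ may, in $\PP_N^p$, be forced through the peel of an intervening loop that redefines the same renamed variable, in which case the edge $(n^{\star}, n)$ can fail condition D1; but then the peel-to-$n$ edge exists and $n$ is enqueued at line~\ref{affected:line:init-wl} anyway, so your contradiction still goes through --- you should say so explicitly, since both your argument and the paper's Lemma~\ref{lemma:add-worklist} gloss over this case.
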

\begin{proof}
  We prove the lemma by contradiction.  If possible, let $\sigma$ be a
  state (i.e, valuation of variables and arrays) such that $vA$ has
  different values on termination of $\PP_{N-1}$ and $\PPP{N}{N-1}$,
  when both programs are executed starting from $\sigma$.  As before,
  we use $\pi$ and $\pi'$ to denote the paths in the CFGs of
  $\PP_{N-1}$ and $\PPP{N}{N-1}$ respectively, that correspond to the
  execution of the respective programs starting from $\sigma$.  Note
  that by definition, none of these CFGs contain any peeled nodes.  We
  consider the following cases.
  \begin{itemize}
  \item If none of $\pi$ and $\pi'$ updates $vA$, the value of $vA$ at
    the end of execution of the two programs is the same as the value
    it had in $\sigma$.  Clearly, the lemma holds in this case.
  \item Suppose node $n_{i_j}$ in $\pi$ updates $vA$.  We define
    $\Branch{\pi}{n_{i_j}}$ to be the set of all nodes $n_{i_k}$ in
    the prefix of $\pi$ ending at $n_{i_j}$ such that $n_{i_k}$
    corresponds to a conditional branch statement.  Similarly,
    $\Dep{\pi}{n_{i_j}}$ is defined to be the set of all nodes
    $n_{i_k}$ in the same prefix of $\pi$ such that there is a path
    through data dependency edges in $DE^p$ from node $n_{i_k}$ to
    either $n_{i_j}$ or to one of the nodes in
    $\Branch{\pi}{n_{i_j}}$.

    If possible, let $n_{i_k}$ be a node in $\Branch{\pi}{n_{i_j}}$
    such that the branch condition in $\mu^p(n_{i_k})$ has a (possibly
    transitive) data dependence on $N$.  By Lemma
    \ref{lemma:add-worklist}, $n_{i_k}$ must be added to $\worklist$
    sometime during the execution of \textsc{ComputeAffected}.  Since
    $n_{i_j}$ is reachable from $n_{i_k}$ along $\pi$, this further
    implies that $n_{i_j}$ must be added to $\worklist$, and
    subsequently $vA \in \mathit{def}(n_{i_j})$ must be added to
    $\affectedvars$ during the execution of \textsc{ComputeAffected}.
    However, we know that $vA$ is not present in $\affectedvars$ on
    termination of \textsc{ComputeAffected}.  Therefore, no branch
    condition in any node $n_{i_k}$ in $\Branch{\pi}{n_{i_j}}$ can be
    transitively data dependent on $N$.  It follows that no node in
    $\Branch{\pi}{n_{i_j}}$ can be added to $\worklist$ during the
    execution of \textsc{ComputeAffected}.  Hence, none of them can be
    present in $\processednodes$ on termination of
    \textsc{ComputeAffected}. By Lemma~\ref{lem:helper1}, it now
    follows that if $(n_{i_0}, n_{i_1}, \ldots, n_{i_j})$ is a prefix
    of $\pi$, then $(n_{i_0}', n_{i_1}', \ldots, n_{i_j}')$ must be a
    prefix of $\pi'$.

    Since the scalar variable/array $vA$ is updated at $n_{i_j}$ in
    $\pi$, the statement at $n_{i_j}'$ in $\pi'$ must also update
    $vA$.  Therefore, (i) $vA$ is updated at $n_{i_j}'$ in $\pi'$, and
    (ii) for every node $n_{i_k}$ in $\Dep{\pi}{n_{i_j}} \cup
    \Branch{\pi}{n_{i_j}}$, the corresponding node $n_{i_k}'$ is
    present in $\Dep{\pi'}{n_{i_j}'} \cup \Branch{\pi'}{n_{i_j}'}$.

    Finally, we argue that no node in $\Dep{\pi}{n_{i_j}}$ can be
    transitively data dependent on $N$.  Indeed, if this was not the
    case, by Lemma \ref{lemma:add-worklist}, $n_{i_j}$ would be added
    to $\worklist$ during the execution of \textsc{ComputeAffected},
    and hence $vA$ would be added to $\affectedvars$.  However, this
    violates the premise that $vA$ is absent in $\affectedvars$.
    Combining this with the result obtained above, we find that as far
    as path $\pi$ is concerned, no node in $\Dep{\pi}{n_{i_j}} \cup
    \Branch{\pi}{n_{i_j}}$ is transitively data dependent on $N$.
    Since modifications, if any, in statements at corresponding nodes
    of $\PP_{N-1}$ and $\PPP{N}{N-1}$ only involve replacing $N-1$ by
    $N$, such modifications preserve the dependence of every node on
    $N$.  Therefore, no node in $\Dep{\pi'}{n_{i_j}'} \cup
    \Branch{\pi'}{n_{i_j}'}$ transitively depends on $N$.  This
    implies that the statements labeling nodes in $\Dep{\pi}{n_{i_j}}
    \cup \Branch{\pi}{n_{i_j}}$ in $\pi$ are identical to the
    statements labeling the corresponding nodes in
    $\Dep{\pi'}{n_{i_j}'} \cup \Branch{\pi'}{n_{i_j}'}$.

    Since both $\PP_{N-1}$ and $\PPP{N}{N-1}$ start from the same
    state $\sigma$, the values of $vA$ computed by $\PP_{N-1}$ after
    executing the sequence of statements corresponding to $(n_{i_0},
    n_{i_1}, \ldots, n_{i_j})$ must therefore be identical to that
    computed by $\PPP{N}{N-1}$ after executing the sequence of
    statements corresponding to $(n_{i_0}', n_{i_1}', \ldots,
    n_{i_j}')$. This proves the lemma.
  \item The case when node $n_{i_j}'$ in $\pi'$ updates $vA$ is
    analogous to the above case.
  \end{itemize}
\end{proof}

%% \iffalse

For a variable/array $vA$ that is not identified as affected, $vA$
cannot be in the $\mathit{def}$ set of a non-peeled node that either
(i) has a transitive data dependence on $N$ or on a peeled node, or
(ii) has a control flow path from a branch node that, in turn, has a
transitive data dependence on $N$ or on a peeled node.  The following
lemma formalizes this property.

\begin{lemma}\label{lemma:not-affected}
Let $\PP_N^p$ be a peeled program fed as input to the function
\textsc{ComputeAffected}.  Let $vA$ be a scalar variable/array that is
absent in $\affectedvars$ after \textsc{ComputeAffected} terminates.
Then, $vA$ $\not \in$ $\mathit{def}(n)$ for every non-peeled node $n$
in $\PP_N^p$ such that some scalar variable/array in
$\mathit{uses}(n)$ is transitively data/control dependent on $N$ or on
a peeled node in $\PP_N^p$.
\end{lemma}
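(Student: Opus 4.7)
The plan is to prove the contrapositive: assuming there exists a non-peeled node $n$ in $\PP_N^p$ with $vA \in \mathit{def}(n)$ and some scalar variable/array in $\mathit{uses}(n)$ transitively data/control dependent on $N$ or on a peeled node, I would show that $vA \in \affectedvars$ when \textsc{ComputeAffected} terminates.

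First, I would invoke Lemma \ref{lemma:add-worklist} directly on $n$. Since the hypothesis matches the premise of that lemma exactly, it immediately yields that $n$ is added to $\worklist$ at some point during the execution of \textsc{ComputeAffected}. Second, I would argue that $n$ must be removed from $\worklist$ and added to $\processednodes$ before termination. The while loop in lines \ref{affected:line:start-while}--\ref{affected:line:end-while} exits only when $\worklist$ is empty, and termination is guaranteed because $\processednodes$ grows monotonically and prevents any already-processed node from being re-inserted via lines \ref{affected:line:updateworklist1} and \ref{affected:line:updateworklist2}. So the iteration in which $n$ is removed at line \ref{affected:line:removen} does occur.

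Third, I would do a case analysis on $\mu^p(n)$ matching the three branches in lines \ref{affected:line:start-case-analysis}--\ref{affected:line:end-case-analysis} and show that the assignment branch must be taken. By the convention established in Section \ref{sec:cfg}, $\mathit{def}(n)$ excludes loop-counter variables, and by the grammar in Figure \ref{fig:grammar}, a loop-head's only defined variable is its loop counter, while a branch-condition node defines nothing. Hence $vA \in \mathit{def}(n)$ with $vA$ a non-loop-counter scalar/array forces $\mu^p(n)$ to be an assignment statement. Consequently line \ref{affected:line:add-defs} is executed for $n$, and since $vA \in \mathit{def}(n)$, we obtain $vA \in \affectedvars$, contradicting our assumption.

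The main obstacle is the case analysis in the third step: I must justify very carefully that only the assignment branch of the algorithm can apply to $n$. This depends on the tight interplay between (i) the grammar's restriction that $\mathit{def}$ of a loop-head is just its loop counter, (ii) the exclusion of loop counters from the definition of $\mathit{def}$ used when tracking affected variables, and (iii) the fact that branch-condition nodes have empty $\mathit{def}$ sets. Everything else in the argument is a direct appeal to Lemma \ref{lemma:add-worklist} together with the termination behavior of the worklist loop, so the delicate part really lies in ensuring these syntactic assumptions rule out the $\mathit{skip}$-style branch on line \ref{affected:line:skip-loop-heads}.
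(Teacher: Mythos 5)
Your proposal is correct and follows essentially the same route as the paper's own proof: invoke Lemma~\ref{lemma:add-worklist} to get $n$ onto $\worklist$, observe it must be processed before termination, and conclude via line~\ref{affected:line:add-defs} that $\mathit{def}(n)\subseteq\affectedvars$, contradicting the assumption on $vA$. Your explicit justification that $vA\in\mathit{def}(n)$ forces the assignment branch (since loop-heads and branch conditions have empty $\mathit{def}$ sets under the paper's conventions) is a detail the paper leaves implicit, but it is the same argument.
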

\begin{proof}
Consider an arbitrary non-peeled node $n$ in $\PP_N^p$ such that some
scalar variable/array in $\mathit{uses}(n)$ has a transitive
data/control dependence on $N$ or on a peeled node.  Then, by Lemma
\ref{lemma:add-worklist}, $n$ must be added to $\worklist$ sometime
during the execution of \textsc{ComputeAffected}.  Consequently, $n$
must also be removed from $\worklist$ (line
\ref{affected:line:removen} of \textsc{ComputeAffected}).  If $n$ is
an assignment node, then $\mathit{def}(n)$ is added to the set
$\affectedvars$ (line \ref{affected:line:add-defs} of
\textsc{ComputeAffected}).  But since $vA$ is absent in
$\affectedvars$, it follows that $vA$ $\not \in$ $\mathit{def}(n)$.
\end{proof}

For clarity of exposition, we will henceforth refer to the property
formalized in Lemma \ref{lemma:not-affected} as the
\emph{``not-affected''} property in our arguments.

\begin{example}
Consider the peeled program in Fig.~\ref{fig:ss-peel} along with its
DDG in Fig.~\ref{fig:draw-ddg}.  Recall that nodes in the DDG are
named such that node $n_i$ corresponds to the statement at line $i$ of
the peeled program.  The value of variable {\tt S} computed in the
peeled node (line $5$ in Fig.~\ref{fig:ss-peel}) of the first loop is
used to define array {\tt A1} in the body of the second loop (line
$7$).  Furthermore, the value of variable {\tt S} computed in line $5$
is used to initialize the value of {\tt S1} in line $10$.  Thus, the
algorithm initializes the worklist with the non-peeled nodes $n_7$ and
$n_{10}$ that are data dependent on the peeled node $n_{5}$.  Array
{\tt A1} and variable {\tt S1} updated at $n_{7}$ and $n_{10}$
respectively are marked as affected in line
\ref{affected:line:add-defs} of \textsc{ComputeAffected}.  We then add
non-peeled nodes that have a data dependence on $n_7$ and $n_{10}$ to
the worklist in line~\ref{affected:line:updateworklist1} of the
algorithm.  Since array {\tt A1} is used in node $n_{12}$ in the third
loop to define the variable {\tt S1}, $n_{12}$ is added to the
worklist. Subsequently, the variable {\tt S1} updated at $n_{12}$ is
marked as affected.  No further non-peeled nodes have any data
dependence on $n_{12}$ and the worklist therefore becomes empty.
Function \textsc{ComputeAffected} therefore terminates with $\{${\tt
  A1}, {\tt S1}$\}$ as the set of potentially affected
variables/arrays.  Note that variable {\tt S} updated in the first
loop (line 3 of Fig.~\ref{fig:ss-peel}) is not marked as affected
since its value doesn't transitively depend on $N$ or on any
variable/array updated in peeled nodes.
\end{example}

  %% Section on computing the affected variables

%\paragraph{\bfseries Generating the Difference Program $\mathbf{\partial \PP_N}$.}
\subsection{Generating the Difference Program $\mathbf{\partial \PP_N}$}
\label{sec:diff-prog}
We now have most of the ingredients to generate $\partial \PP_N$ from
$\PP_N$ such that $\PP_N$ and $\PP_{N-1}; \partial \PP_N$ have the
same effect on scalar variables and arrays of interest.  For
notational convenience, in the remainder of this subsection, we use
$\PP_N$ to denote the renamed version of a given program, and
$\PP_N^p$ to denote the peeled version of the renamed program.

Generating the difference program ${\partial \PP_N}$ from the renamed
and peeled program using the set of affected variables computed
previously is still a daunting task.  In order to help the reader
better visualize the computation of difference program as well as to
simplify the proof of correctness, we present steps involved in the
computation of $\partial \PP_N$ as a sequence of simple program
transformations.  Fig. \ref{fig:trans-seq} presents a high level
overview of this sequence of transformations.  We start with a peeled
program $\PP_N^p$.  We first canonicalize it to a program $\TRPP_N^p$
that consists of a sequence of statements of a specific form
(explained in Sect. \ref{sec:transform}).  The statements in
$\TRPP_N^p$ corresponding to the peels of loops in $\PP_N^p$ are then
moved to the end of $\TRPP_N^p$ to obtain the program $\TRPP_N^o$.
The resulting program $\TRPP_N^o$ can be viewed as the program
$\PPP{N}{N-1}$ followed by the peels of all loops in $\PP_N^p$.  We
call the block of statements corresponding to the peels of all loops
as $\Peel{\PP_N}$.  Finally, if variables/arrays of interest are not
those identified as affected by \textsc{ComputeAffected},
$\PPP{N}{N-1}$ can be replaced with $\PP_{N-1}$.  This allows us to
obtain the difference program as $\Peel{\PP_N}$.  In the subsequent
sections, we present each transformation in detail, describe the
programs generated by them and prove that they preserve the overall
semantics of the program as far as the variables/arrays of interest
are concerned.  This allows us to show that for a large class of
programs wherein the variables/arrays of interest have specific
properties, it is possible to use just the peels of loops in $\PP_N^p$
as the difference program.  This simplifies the computation of
$\partial \PP_N$ significantly.

%% We formally prove the sufficient conditions that permit this
%% simplification.

%\begin{wrapfigure}[8]{r}{0.28\textwidth}
%\vspace{-4ex}
\begin{figure}[!t]
\centering

\resizebox{0.50\textwidth}{!}{

\tikzset{every picture/.style={line width=0.75pt}} %set default line width to 0.75pt        

\begin{tikzpicture}[x=0.75pt,y=0.75pt,yscale=-1,xscale=1]
%uncomment if require: \path (0,313); %set diagram left start at 0, and has height of 313

%Shape: Rectangle [id:dp4318647679673163] 
\draw  [fill={rgb, 255:red, 255; green, 255; blue, 255 }  ,fill opacity=1 ][line width=3]  (86,153) -- (193,153) -- (193,222) -- (86,222) -- cycle ;
%Shape: Rectangle [id:dp4100122704258178] 
\draw  [fill={rgb, 255:red, 255; green, 255; blue, 255 }  ,fill opacity=1 ][line width=3]  (86,225) -- (193,225) -- (193,279) -- (86,279) -- cycle ;
%Shape: Rectangle [id:dp2321651991489927] 
\draw  [fill={rgb, 255:red, 255; green, 255; blue, 255 }  ,fill opacity=1 ][line width=3]  (10,10) -- (117,10) -- (117,79) -- (10,79) -- cycle ;
%Shape: Rectangle [id:dp7372856188198151] 
\draw  [fill={rgb, 255:red, 255; green, 255; blue, 255 }  ,fill opacity=1 ][line width=3]  (286,152) -- (393,152) -- (393,221) -- (286,221) -- cycle ;
%Shape: Rectangle [id:dp9506280245975882] 
\draw  [fill={rgb, 255:red, 255; green, 255; blue, 255 }  ,fill opacity=1 ][line width=3]  (286,225) -- (393,225) -- (393,279) -- (286,279) -- cycle ;
%Shape: Rectangle [id:dp8962795374344739]
\draw  [fill={rgb, 255:red, 255; green, 255; blue, 255 }  ,fill opacity=1 ][line width=3]  (193,10) -- (300,10) -- (300,79) -- (193,79) -- cycle ;
%Shape: Rectangle [id:dp8937150233023192] 
\draw  [fill={rgb, 255:red, 255; green, 255; blue, 255 }  ,fill opacity=1 ][line width=3]  (375,10) -- (482,10) -- (482,79) -- (375,79) -- cycle ;
%Right Arrow [id:dp9178206208568983] 
\draw  [fill={rgb, 255:red, 0; green, 0; blue, 0 }  ,fill opacity=1 ] (134,35) -- (162.4,35) -- (162.4,30) -- (181.33,40) -- (162.4,50) -- (162.4,45) -- (134,45) -- cycle ;
%Right Arrow [id:dp03297101438151928] 
\draw  [fill={rgb, 255:red, 0; green, 0; blue, 0 }  ,fill opacity=1 ] (315,35) -- (343.4,35) -- (343.4,30) -- (362.33,40) -- (343.4,50) -- (343.4,45) -- (315,45) -- cycle ;
%Right Arrow [id:dp3035163814036401] 
\draw  [fill={rgb, 255:red, 0; green, 0; blue, 0 }  ,fill opacity=1 ] (261.33,196) -- (232.93,196) -- (232.93,201) -- (214,191) -- (232.93,181) -- (232.93,186) -- (261.33,186) -- cycle ;
%Right Arrow [id:dp7394596106775374] 
\draw  [fill={rgb, 255:red, 0; green, 0; blue, 0 }  ,fill opacity=1 ] (425.94,101.8) -- (405.86,121.88) -- (409.39,125.42) -- (388.93,131.73) -- (395.25,111.28) -- (398.78,114.81) -- (418.87,94.73) -- cycle ;
%Right Arrow [id:dp6097799622547248] 
\draw   (75.19,94.75) -- (91.48,118.01) -- (95.57,115.14) -- (98.24,136.39) -- (79.19,126.61) -- (83.29,123.75) -- (67,100.48) -- cycle ;
%Shape: Brace [id:dp0693494331318425]
\draw  [line width=2.25]  (405.33,282) .. controls (410,282) and (412.33,279.67) .. (412.33,275) -- (412.33,226) .. controls (412.33,219.33) and (414.66,216) .. (419.33,216) .. controls (414.66,216) and (412.33,212.67) .. (412.33,206)(412.33,209) -- (412.33,157) .. controls (412.33,152.33) and (410,150) .. (405.33,150) ;

% Text Node
\draw (140.98,186.69) node  [font=\Large]  {$\PP_{N-1}$};
% Text Node
\draw (140.98,250.69) node  [font=\Large]  {$\Peel{\PP_N^p}$};
% Text Node
\draw (64.97,43.69) node  [font=\Large]  {$\PP_N^p$};
% Text Node
\draw (340.98,185.69) node  [font=\Large]  {$\PPP{N}{N-1}$};
% Text Node
\draw (340.98,250.69) node  [font=\Large]  {$\Peel{\PP_N^p}$};
% Text Node
\draw (247.98,43.69) node  [font=\Large]  {$\TRPP_N^p$};
% Text Node
\draw (429.98,43.69) node  [font=\Large]  {$\TRPP_N^o$};
% Text Node
\draw (442.5,218.5) node  [font=\Large]  {$\PP_N^o$};

\end{tikzpicture}

}
\caption{Sequence of program transformations to decompose $\PP_N^p$
  into $\PP_{N-1};\Peel{\PP_N^p}$}
\label{fig:trans-seq}
\end{figure}
%\end{wrapfigure}

We continue to use $vA$ to denote a scalar variable or an array
depending on the context.  If $vA$ is an array, the discussion below
applies to every individual element $vA[j]$, where $j$ is an index in
the allowed range of indices of array $vA$.  However, for notational
convenience, we use $vA$ (and not $vA[j]$) to refer to such an array
element in the lemmas below.  Note that this implies that the proof,
once completed, applies to an arbitrary element of the array $vA$, and
hence to the whole of $vA$.

%\paragraph{\bfseries A Transformation to Simplify Reasoning.}
\subsubsection{Canonicalizing the Program}
\label{sec:transform}

In this section, we describe a simple transformation of the program
$\PP_N^p$ that allows us to view the program as a \emph{linear
sequence of statements} of a specific form.  The transformation is
only meant for purposes of simplifying the proofs of lemmas in the
subsequent subsections and making them more approachable.

%% The transformation aids in proving the correctness of difference
%% program computation and use.  The main objective of the program
%% transformation is to simplify the reasoning in the proofs and make
%% them more approachable.

For every program $\PP_N$ that can be generated by the grammar shown
in Sect.~\ref{sec:prelim}, we rewrite the corresponding peeled program
$\PP_N^p$ as a linear sequence of statements of the form:

$${\iif}(\BoolCond) ~ {\tthen} ~ {\ProgFrag} ~ {\eelse} ~ {\Skip},$$

\noindent
where {\Skip} is shorthand for the assignment statement {\tt x = x;}
for an arbitrary scalar variable {\tt x} in $\PP_N^p$.  The program
fragment {\ProgFrag} is either (i) a loop, or (ii) a peel of a loop,
or (iii) an assignment statement outside loops and peels in $\PP_N^p$.
The conditional expression {\BoolCond} is a conjunction of boolean
expressions along the $\true$ (resp. $\false$) branches of all the
conditional branch nodes $b$ within the scope of which the program
fragment {\ProgFrag} occurs in the program $\PP_N^p$.  Since {\Skip}
does not change values of any variables or arrays, we omit the else
part in our subsequent discussion for notational clarity.  Henceforth,
we refer to statements in this form as guarded statements.

We now describe how to construct a program $\TRPP_N^p$ consisting of a
sequence of guarded statements starting from a given peeled program
$\PP_N^p$.  We have already seen earlier how to construct the
collapsed CFG of $\PP_N^p$.  For each loop that has been peeled, we
additionally collapse all nodes in the peel of the loop to a single
node to obtain an even more collapsed CFG $G_C$.  By the restrictions
imposed by our grammar, $G_C$ is necessarily a directed acyclic graph.
We first assign a topological index to all nodes of $G_C$ such that
the index of a node $n$ is strictly larger than the indices of all
nodes that have edges to $n$ in $G_C$.  For each node $n$ in order of
its topological index, we now construct a guarded statement for it as
follows.  If $n$ does not lie within the scope of any branch
statements, we construct the guarded statement `{\iif}($\true$) ~
{\tthen} ~ {\ProgFrag}' where {\ProgFrag} corresponds to the loop,
peel or assignment statement at node $n$.  Otherwise, we conjoin the
conditional expressions along the $\true$ (resp. $\false$) branches of
all the conditional branch nodes $b$ within the scope of which node
$n$ lies in $G_C$.  If this conjoined expression is {\BoolCond}, then
we generate the guarded statement `{\iif}({\BoolCond) ~ {\tthen} ~
{\ProgFrag}' where {\ProgFrag} is as previously defined.  We
illustrate the above construction on an example program with loops and
branches.

\fboxrule=1pt
\fboxsep=1pt

\begin{table*}[h!]
\caption{(a) Program $\PP_N^p$, (b) Transformed Program $\TRPP_N^p$,
  (c) Transformed \& Reordered Program $\TRPP_N^o$ and (d) Program
  $\PP_N^o$ ~/~ Program $\PPP{N}{N-1};\Peel{\PP_N^p}$}
  %% $\TRPP_N^O$ is the same as $\TRPP_N^*;\Peel{\TRPP_N^p}$
\label{tab:ex-trans-pn}
\begin{tabular} { p{7.6cm}  p{6.6cm} }
\begin{alltt}
1.   S = 10;
2.   if(S > 5) \{
3.     S1 = 1;
\fcolorbox{red}{white}{\begin{minipage}{7.5cm}4a.    for(i=0; i<N-1; i++)  //Loop L1
4b.      A1[i] = A[i] + 1;\end{minipage}}
5.     A1[N-1] = A[N-1] + 1; //Peel of L1
6.   \} else \{
7.     S1 = 20;
\fcolorbox{green}{white}{\begin{minipage}{7.5cm}8a.    for(i=0; i<N-1; i++)  //Loop L2
8b.      A1[i] = A[i];\end{minipage}}
9.     A1[N-1] = A[N-1];     //Peel of L2
10.  \}
\fcolorbox{blue}{white}{\begin{minipage}{7.5cm}11a. for(i=0; i<N-1; i++)    //Loop L3
11b.   A2[i] = A1[i] + S1;\end{minipage}}
12.  A2[N-1] = A1[N-1] + S1; //Peel of L3
\end{alltt}
  &
\begin{alltt}
1'.  if(\(\true\)) S = 10;
3'.  if(S > 5) S1 = 1;
4'.  if(S > 5) Loop L1;
5'.  if(S > 5) Peel of L1;
7'.  if(!(S > 5)) S1 = 20;
8'.  if(!(S > 5)) Loop L2; 
9'.  if(!(S > 5)) Peel of L2;
11'. if(\(\true\)) Loop L3;
12'. if(\(\true\)) Peel of L3;
\end{alltt}
\\
\begin{center}(a)\end{center}
&
\begin{center}(b)\end{center}
\\
\begin{alltt}
1'.  if(\(\true\)) S = 10;
3'.  if(S > 5) S1 = 1;
4'.  if(S > 5) Loop L1;
7'.  if(!(S > 5)) S1 = 20;
8'.  if(!(S > 5)) Loop L2; 
11'. if(\(\true\)) Loop L3;

5'.  if(S > 5) Peel of L1;
9'.  if(!(S > 5)) Peel of L2;
12'. if(\(\true\)) Peel of L3;
\end{alltt}
&
\begin{alltt}
1.   S = 10;
2a.  if(S > 5) \{
3.     S1 = 1;
4a.    for(i=0; i<N-1; i++)  //Loop L1
4b.      A1[i] = A[i] + 1;
6a.  \} else \{
7.     S1 = 20;
8a.    for(i=0; i<N-1; i++)  //Loop L2
8b.      A1[i] = A[i];
10a. \}
11a. for(i=0; i<N-1; i++)    //Loop L3
11b.   A2[i] = A1[i] + S1;

2b.  if(S > 5) \{
5.     A1[N-1] = A[N-1] + 1; //Peel of L1
6b.  \} else \{
9.     A1[N-1] = A[N-1];     //Peel of L2
10b. \}
12.  A2[N-1] = A1[N-1] + S1; //Peel of L3
\end{alltt}
\\
\begin{center}(c)\end{center}
&
\begin{center}(d)\end{center}
\\
\end{tabular}
\end{table*}

\begin{example} \label{example:canonical}
Consider the peeled program $\PP_N^p$ shown in
Tab. \ref{tab:ex-trans-pn}(a).  The variables and arrays of the input
program are renamed (as described in Sect. \ref{sec:renaming})
ensuring that along each path of the program the value of a variable
or an array element is accessible till the end of the path and all the
loops in the program are peeled (as described in
Sect. \ref{sec:peeling}).  We number the lines in the program such
that the statements in loops (and in peels) have the same line number
but with an alphabet appended when multiple statements are present.
This numbering allows us to refer to the program statements with a
consistent line number even after collapsing loops and peels.
$\PP_N^p$ has three peeled loops, {\tt L1} (lines $4a$ and $4b$), {\tt
L2} (lines $8a$ and $8b$) and {\tt L3} (lines $11a$ and $11b$).  Loops
{\tt L1} and {\tt L2} along with their peels are within the scope of
the conditional branch statement in line $2$.  Loop {\tt L3} and its
peel are not within the scope of any branch statement.

%% We execute the function \textsc{Transform} on the peeled program
%% $\PP_N^p$ in Tab. \ref{tab:ex-trans-pn}(a)}.

Tab. \ref{tab:ex-trans-pn}(b) shows the program $\TRPP_N^p$ generated
by our transformation.  To distinguish the statements in $\PP_N^p$
from those in $\TRPP_N^p$, the line numbers of statements in
$\TRPP_N^p$ are suffixed with a prime symbol.  As can be seen,
$\TRPP_N^p$ is a linear sequence of guarded statements.  The line
numbers of each guarded statement in $\TRPP_N^p$ matches the line
number of the corresponding statement in $\PP_N^p$.  Notice that every
assignment statement in $\PP_N^p$ appears exactly at one unique
location in $\TRPP_N^p$.
\end{example}

Note that if a variable/array element is used in the conditional
expression at a branch node $b$, then it must have been updated (if at
all) in an assignment node that has a control flow path to $b$.  From
the no-overwriting property as stated in
Lemma \ref{lemma:peel:no-overwriting}, it now follows that the same
variable/array element cannot be updated in any node that has a
control flow path from $b$.  This includes all nodes within the scope
of branch $b$.  This interesting property allows us to prove the
following lemma.

\begin{lemma}\label{lemma:transform-pn-correct}
Let $\TRPP_N^p$ be the canonicalized version of $\PP_N^p$.  Let $n$ be
a node corresponding an assignment statement in the CFG of $\PP_N^p$
(and hence $\TRPP_N^p$).  Let $\pi$ (resp. $\pi'$) be a path in the
collapsed CFG of $\PP_N^p$ (resp. $\TRPP_N^p$) starting from the state
$\sigma$.  Then the following hold.
\begin{enumerate}
\item $n$ is reached along $\pi$ iff $n$ is also reached along $\pi'$.
\item The program state $\sigma_n$ is computed at node $n$ along $\pi$
iff the program state $\sigma_n$ is computed at node $n$ along $\pi'$.
\end{enumerate}
\end{lemma}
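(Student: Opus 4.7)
The plan is to prove both claims together by strong induction on the position $k$ of node $n$ in the sequence of assignment nodes visited along $\pi$. For the base case $k=1$, nothing has been executed before $n$, so the program state on entry to $n$ is $\sigma$ in both executions. The canonical guard for $n$ in $\TRPP_N^p$ is the conjunction of branch conditions of every branch node of $\PP_N^p$ within whose scope $n$ lies; each such condition reads only variables/arrays that have their $\sigma$-values (since no assignment has happened yet). Hence the guard evaluates to $\true$ iff each corresponding branch is taken in the required direction along $\pi$, so $n$ is reached on $\pi'$ iff it is reached on $\pi$; and since the right-hand side of the assignment reads only $\sigma$-values, the resulting state $\sigma_n$ agrees in both executions.

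For the inductive step, assume the claim holds for all assignment nodes at positions $1,\ldots,k-1$ of $\pi$, and consider the assignment node $n_k$. The key structural fact I would invoke is the no-overwriting property (Lemma~\ref{lemma:peel:no-overwriting}): every variable or array element $vA$ read either (i) in the right-hand side of $n_k$, or (ii) in some branch condition whose conjunct appears in the canonical guard of $n_k$, must have been last written at some assignment node $n_j$ with $j<k$ along $\pi$, and cannot be overwritten between $n_j$ and $n_k$ along either program's execution. By the induction hypothesis applied to $n_j$, the value of $vA$ produced at $n_j$ is identical in both executions. Consequently, each conjunct of the canonical guard at $n_k$ evaluates to the same truth value in $\TRPP_N^p$ as the corresponding branch in $\PP_N^p$ does on $\pi$, so $n_k$ is reached on $\pi'$ iff on $\pi$; and the right-hand side evaluates to the same value in both, giving the same post-state $\sigma_{n_k}$.

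The main obstacle is handling the case when $n$ lies inside a loop body or its peel, because canonicalization collapses the loop into a single guarded statement and then re-expands it, so a single outer position $k$ corresponds to many internal iterations. To handle this I would use an auxiliary nested induction on the loop iteration count once the outer guard has evaluated. Two ingredients make the nested argument go through: first, the loop counter is assumed by the grammar (Fig.~\ref{fig:grammar}) to be the only variable updated at the loop-head and is incremented deterministically, so the iteration count $k_L(N)$ in both programs matches; second, conditional branches inside the loop body read only variables that, by renaming and the second clause of Lemma~\ref{lemma:peel:no-overwriting}, are either written earlier in the same iteration (inductive hypothesis inside the iteration) or inherited from before the loop (outer inductive hypothesis). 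Composing the inner and outer inductions gives equality of states at every assignment node inside the loop, and hence on exit from the loop, allowing the outer induction to proceed to the next guarded statement.
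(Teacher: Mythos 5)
Your proof takes essentially the same route as the paper's: induction along the execution prefix, using the no-overwriting property to argue that every conjunct of a canonical guard and every right-hand side reads the same values in $\PP_N^p$ and $\TRPP_N^p$, so reachability and the computed state coincide at each assignment node. The only point worth stating explicitly (which the paper's sketch foregrounds) is that every guarded statement of $\TRPP_N^p$ whose counterpart is \emph{not} on $\pi$ has a guard containing the negation of a branch condition that evaluates to $\true$ along $\pi$, hence is a no-op and cannot perturb the state between consecutive nodes of your induction; your argument supplies the ingredients for this but leaves it implicit.
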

\begin{proof}\textit{(Sketch)}
We consider a path $\pi$ in the collapsed CFG of $\PP_N^p$
corresponding to an execution starting from $\sigma$.  Let $n$ be a
node corresponding to an assignment statement along path $\pi$. Let
$\widehat{\pi}$ be the prefix of $\pi$ that ends at $n$.  We prove by
induction on the length of $\widehat{\pi}$ that $n$ is also reached
along $\pi'$ and the program state computed at $n$ along $\pi$ is the
same as the state at $n$ along $\pi'$.  The proof crucially uses the
fact that the guards of all statements in $\TRPP_N^p$ that do not
correspond to statements in nodes along $\pi$ evaluate to $\false$.
The reasons for this are (i) every such guard has a conjunct that is the
negation of some branch condition $b$ that evaluates to $\true$ along
$\pi$, (ii) the consequence of the no-overwriting property stated
above, and (iii) sequencing of statements in topological index order
in $\TRPP_N^p$.  In particular, (ii) and (iii) above ensure that the
values of all variables/array elements used in the branch node $b$
along $\pi$ are the same as the corresponding values used in $b$ along
$\pi'$.

The converse direction of the proof is similar.
\end{proof}

  %% Section on program transformation to guarded assignments

%\paragraph{\bfseries Reordering Peeled Statements.}
\subsubsection{Reordering the Peels}
\label{sec:reorder}

We now reorder the statements in the program $\TRPP_N^p$ such that all
guarded statements corresponding to peels of loops are executed after
all other guarded statements.  However, the relative ordering among
the guarded statements corresponding to peels is preserved.  We use
$\TRPP_N^o$ to denote the program obtained after this reordering.

\begin{example}
Continuing with the canonicalized program $\TRPP_N^p$ shown in
Tab. \ref{tab:ex-trans-pn}(b), the reordered program $\TRPP_N^o$ is
shown in Tab. \ref{tab:ex-trans-pn}(c).  The line numbers follow the
same pattern described in Example \ref{example:canonical}.  Notice
that in $\TRPP_N^o$ the guarded statements corresponding to peels at
lines $5'$, $9'$ and $12'$ appear after all other statements in
the program.
\end{example}

Let $vA$ be a variable/array in $\PP_N^p$ that is not identified as
affected by function \textsc{ComputeAffected}.  The following lemmas
establish that if programs $\TRPP_N^p$ and $\TRPP_N^o$ are executed
from the same state $\sigma$, they always compute the same value of
$vA$.  Specifically, Lemma \ref{lemma:preserve-dependence} shows that
all data dependencies that potentially have a bearing on the value of
$vA$ are the same in $\TRPP_N^p$ and $\TRPP_N^o$.
Lemma \ref{lemma:sound-reordering-wo-affected} uses this to show that
the value of $vA$ computed by $\TRPP_N^p$ and $\TRPP_N^o$ are the
same.  For clarity of exposition in the following discussion, when we
say that there is a data dependence path from $n_i$ to $n_j$ in a
program, we mean that there is a path from $n_i$ to $n_j$ in the DDG
of the program.  Similarly, when we say that there is a control
dependence path from $n_i$ to $n_j$, we mean that there is a data
dependence path from $n_i$ to a conditional branch node within whose
scope $n_j$ lies.

%% Let $\mathit{DE}^p$ (resp. $\mathit{DE}^o$) be the set of data
%% dependence edges in the DDG of $\TRPP_N^p$ (resp. $\TRPP_N^o$)
%% computed by function \textsc{ComputeDDG}.

\begin{lemma}\label{lemma:preserve-dependence}
Let $vA$ be a scalar variable/array that is absent in $\affectedvars$
when \textsc{ComputeAffected} is invoked on $\PP_N^p$.  Let $n$ be a
node in $\PP_N^p$ such that $vA \in \mathit{def}(n)$.  For every node
$n'$ in $\PP_N^p$, there is a data/control dependence path from $n'$
to $n$ in $\TRPP_N^p$ iff there exists such a path in $\TRPP_N^o$.
\end{lemma}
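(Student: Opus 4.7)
The plan is to prove the iff by a case analysis on whether $n$ is non-peeled or peeled, using three ingredients: (i) the reordering from $\TRPP_N^p$ to $\TRPP_N^o$ preserves the relative order of non-peeled statements and, separately, of peeled statements, (ii) the renaming of Sect.~\ref{sec:renaming} assigns each renamed scalar/array a unique defining region, reinforced by the no-overwriting property in Lemma~\ref{lemma:peel:no-overwriting}, and (iii) Lemma~\ref{lemma:not-affected} applied to the non-affected variable $vA$.

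For the first case, where $n$ is non-peeled, I would invoke Lemma~\ref{lemma:not-affected} to conclude that no variable in $\mathit{uses}(n)$ is transitively data/control dependent on $N$ or on any peeled node. I would iterate this along an arbitrary dependence chain ending at $n$ in $\TRPP_N^p$ to argue that every intermediate node must itself be non-peeled and free of transitive dependence on peeled nodes; by the preservation of relative order on non-peeled statements, the whole chain survives in $\TRPP_N^o$. The converse will be immediate: in $\TRPP_N^o$ every peel comes after every non-peeled statement, so no peeled node can be a DDG-ancestor of the non-peeled $n$, hence any dependence chain terminating at $n$ in $\TRPP_N^o$ stays within non-peeled code and exists in $\TRPP_N^p$ as well.

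For the second case, where $n$ lies in the peel $P_L$ of some loop $L$, I would take any variable $v^R$ used at $n$ and observe that, by renaming, the only CFG locations writing to $v^R$ are the region $R$ and, when $R$ is a loop, its peel $P_R$; Lemma~\ref{lemma:peel:no-overwriting} seals this. Because the reordering preserves the relative orderings of peels and of non-peels separately, both $R$ and $P_R$ precede $P_L$ in $\TRPP_N^p$ and in $\TRPP_N^o$, and so the set of reaching definitions of $v^R$ at $n$, and hence the immediate DDG predecessors of $n$ along $v^R$, will coincide in both programs. I would recur this argument through the intermediate nodes of a dependence chain, each of which lies in some defining region or its peel; the same reasoning will handle control dependence paths because the guards controlling $n$ use renamed variables sharing the same uniqueness-of-source property.

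The hard part will be the peeled case: I must justify that intermediate non-peeled nodes along a dependence chain to a peeled $n$ neither spawn spurious DDG edges nor lose existing ones when peels are moved to the end. The key observation I expect to rely on is that renaming decouples the identity of a defined variable from the position of its definer in the program, so that moving a peel cannot create a new def-use match (the relevant defining region already precedes $n$ in both orderings) and cannot erase an existing one either.
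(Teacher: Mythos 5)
Your proposal is correct and follows essentially the same route as the paper's proof: a case split on whether the endpoints are peeled or non-peeled, using (a) Lemma~\ref{lemma:not-affected} to rule out peeled ancestors on any dependence chain ending at a definer of the non-affected $vA$, (b) preservation of the relative order within the non-peeled and within the peeled statements under reordering, and (c) the no-overwriting property of Lemma~\ref{lemma:peel:no-overwriting} to show that moving peels to the end neither creates nor destroys the relevant def-use matches. Your phrasing of the peeled case in terms of coinciding reaching definitions is just a restatement of the paper's contradiction argument via no-overwriting, so no substantive difference.
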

\begin{proof}
Consider nodes $n$ and $n'$ in $\PP_N^p$ (hence also in $\TRPP_N^p$
and $\TRPP_N^o$).  We consider two cases.
\begin{enumerate}
\item Suppose there is a data/control dependence path
from $n'$ to $n$ in $\TRPP_N^p$.  From our construction of
$\TRPP_N^p$, we know that there exists such a data/control dependence
path in $\PP_N^p$ as well.  We now show that such a data/control
dependence path also exists in $\TRPP_N^o$ by considering two
sub-cases.
\begin{enumerate}
\item Suppose $n$ is a non-peeled node in $\PP_N^p$.  Since $vA$ is not
identified as affected, by the not-affected property
(Lemma \ref{lemma:not-affected}), we have that $n'$ is not a peeled
node.  Therefore, both $n$ and $n'$ are non-peeled nodes.  Since the
relative ordering of all non-peeled nodes is preserved by our
reordering transformation, the data/control dependence between $n$ and
$n'$ continues to exist in $\TRPP_N^o$ as well.
\item Suppose $n$ is a peeled node in $\PP_N^p$.  If $n'$ is also a
peeled node, then since the relative ordering among the peeled nodes
is preserved by reordering, the data/control dependence between $n$
and $n'$ continues to exist in $\TRPP_N^o$.  On the other hand, if
$n'$ is a non-peeled node, since all non-peeled nodes precede all
peeled nodes after reordering, the data/control dependence exists in
$\TRPP_N^o$ in this case as well.
\end{enumerate}
\item  Suppose there is a data/control dependence path from $n'$ to $n$
in $\TRPP_N^o$.  We show that such a dependence path exists in
$\TRPP_N^p$ by considering the following sub-cases.
\begin{enumerate}
\item If both $n'$ and $n$ are non-peeled (resp. peeled) nodes, then
since reordering does not change the relative ordering of the
non-peeled (resp. peeled) nodes, the dependence is present in
$\TRPP_N^p$.
\item The case where $n$ is a non-peeled node and $n'$ is peeled node
cannot arise, since all non-peeled nodes appear before peeled nodes in
$\TRPP_N^o$.
\item If $n'$ is a non-peeled node and $n$ is a peeled node, there are
two further sub-cases.  If $n'$ is ordered before $n$ in $\TRPP_N^p$
then the dependence is present in $\TRPP_N^p$ as well.  Otherwise, we
ask if the dependence from $n'$ to $n$ in $\TRPP_N^o$ is a
read-after-write or write-after-write.  Since $n$ is ordered before
$n'$ in $\TRPP_N^p$, by the no-overwriting property
(Lemma \ref{lemma:peel:no-overwriting}), both $n$ and $n'$ cannot
update the same renamed variable.  Therefore, the dependence from $n'$
to $n$ in $\TRPP_N^o$ cannot be write-after-write, and hence must be
read-after-write.  This requires a variable/array element in
$\mathit{uses}(n)$ to also be present in $\mathit{def}(n')$.  Such a
variable/array element must have been updated (if at all) prior to its
use in node $n$.  Once again, by the no-overwriting property, this
variable/array element cannot be updated by $n'$, which appears after
$n$ in $\TRPP_N^p$.  This completes the proof.
\end{enumerate}
\end{enumerate}
\end{proof}

\begin{lemma} \label{lemma:sound-reordering-wo-affected}
  Let $vA$ be a scalar variable/array that is absent in
  $\affectedvars$ upon invocation of \textsc{ComputeAffected} on
  $\PP_N^p$.  If $\TRPP_N^p$ and $\TRPP_N^o$ are executed from the
  same state $\sigma$, then $vA$ has the same value on termination of
  both programs.
\end{lemma}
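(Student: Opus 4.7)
The plan is to argue that the final value of $vA$ in each program is determined by the same ``slice'' of executed nodes, producing identical intermediate values at each step. First I would dispose of the trivial case: if $vA$ is never updated along either execution, then both programs terminate with $vA = \sigma(vA)$. Otherwise, let $n$ be the last node along the execution of $\TRPP_N^p$ (starting from $\sigma$) whose guard fires and whose assignment updates $vA$; the no-overwriting property (Lemma \ref{lemma:peel:no-overwriting}) together with the canonical sequencing of $\TRPP_N^p$ guarantees $n$ is well defined. The goal is to show that (a) the same node $n$ is reached in $\TRPP_N^o$, (b) the values of all operands used at $n$ agree in the two executions, and hence (c) $vA$ receives the same value.

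The key step is an induction on the depth of the data/control dependence chain terminating at $n$. At the inductive step, I consider a node $m$ in this chain that defines some scalar/array $w$ used (directly or via a branch condition along the chain) in computing $vA$. The crucial observation is that $w \notin \affectedvars$: indeed, \textsc{ComputeAffected} propagates ``affected'' status forward along data and control dependence edges, so if any ancestor $w$ of $vA$ in the dependence chain were affected, then $vA$ itself would have been added to $\affectedvars$ --- contradicting our hypothesis. Hence Lemma \ref{lemma:preserve-dependence} applies to each such $m$, ensuring that every data/control dependence path entering $m$ is present in $\TRPP_N^o$ iff it is present in $\TRPP_N^p$. Combining this with the inductive hypothesis (all values feeding into $m$ coincide across the two executions), I can conclude that $m$'s guard evaluates the same way in both programs and $m$ produces the same value for $w$. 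Applying this reasoning at $n$ itself yields claims (a)--(c).

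The main obstacle will be making this ``dependence-slice induction'' precise in the presence of branch conditions and array aliasing: in particular, showing that two distinct nodes cannot race to update the same array element in opposite orders after reordering. For scalars this is immediate from the no-overwriting property, but for arrays one must additionally invoke the \emph{non-peeled-write-after-peeled-write} edges captured in the DDG of Sect.~\ref{sec:refine-pdg}, which guarantee that any potential write-write conflict between a peeled node and a later non-peeled node is recorded as a dependence and hence preserved (or ruled out) by Lemma \ref{lemma:preserve-dependence}. Once this subtlety is handled, the last updating node $n$, its operand values, and its branch-condition guards all coincide in $\TRPP_N^p$ and $\TRPP_N^o$, and the conclusion that $vA$ terminates with the same value in both programs follows.
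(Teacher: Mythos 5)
Your proposal is correct and takes essentially the same route as the paper: the paper's proof is a one-line appeal to Lemma~\ref{lemma:preserve-dependence} together with the observation that reordering leaves the individual guarded statements unchanged, and your dependence-slice induction is simply a careful unpacking of why those two facts suffice. The extra steps you supply --- that every ancestor of $vA$ in the dependence chain is itself non-affected (so Lemma~\ref{lemma:preserve-dependence} applies at intermediate nodes), and that write-write conflicts on arrays are either recorded as non-peeled-write-after-peeled-write dependencies or excluded for non-affected $vA$ --- are details the paper leaves implicit, not a different argument.
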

\begin{proof}
  Follows from Lemma \ref{lemma:preserve-dependence} and the fact that
  reordering does not change the individual guarded statements in the
  canonicalized program $\TRPP_N^p$.
\end{proof}

\subsubsection{De-canonicalizing the Reordered Program}
\label{sec:reverse-trans}

Recall that our aim is to decompose the program $\PP_N^p$ into two
program fragments, the program $\PP_{N-1}$ and the difference program
$\partial \PP_N$.  We have seen above that the reordering step already
achieves the purpose of moving the guarded statements corresponding to
peels to the end of the program, providing a good candidate for the
difference program $\partial \PP_N$.  However, the part of the
reordered program that precedes the guarded statements corresponding
to peels may not have syntactic similarity with $\PP_{N-1}$ in
general.  In order to remedy this situation, we now ``undo'' the
canonicalization process (as described in Sect. \ref{sec:transform})
that allowed us to view the program $\PP_N^p$ as a linear sequence of
guarded statements.  Specifically, we transform the guarded statements
back to statements of the form that were present in $\PP_N^p$ to begin
with.  We do this separately for the guarded statements corresponding
to peels, and for the part of $\TRPP_N^o$ that precedes these, so that
we obtain a program fragment that is syntactically similar to
$\PP_{N-1}$ followed by a difference program.  In the subsequent
discussion, we call the resulting de-canonicalized program $\PP_N^o$.

\begin{example}
Consider the reordered program $\TRPP_N^o$ from the example shown in
Tab. \ref{tab:ex-trans-pn}(c).  The program $\PP_N^o$ shown in
Tab. \ref{tab:ex-trans-pn}(d) is obtained by de-canonicalization.  The
line numbers follow the pattern similar to the program $\PP_N^p$ as
described in Example \ref{example:canonical}.  It is worth noticing
that, the statements corresponding to peels of loops appear after all
other statements in $\PP_N^o$ and part of program $\PP_N^o$ that
precedes the peels is syntactically similar to $\PP_{N-1}$.
\end{example}

Notice that after de-canonicalization, there may be more conditional
branch nodes in the CFG of $\PP_N^o$ as compared to the CFG of
$\PP_N^p$ but fewer conditional branch nodes as compared to the CFG of
$\TRPP_N^o$.  %% Since the canonicalization transformation is semantics
%% preserving (Lemma \ref{lemma:transform-pn-correct}), the reverse
%% transformation is also semantics preserving, establishing the
%% correctness of de-canonicalization.  Following lemma formalizes this
%% fact.

\begin{lemma} \label{lemma:reverse}
Let $\PP_N^o$ be the de-canonicalized version of $\TRPP_N^o$.  Let $n$
be a node corresponding an assignment statement in the CFG of
$\TRPP_N^o$ (and hence $\PP_N^o$).  Let $\pi$ (resp. $\pi'$) be a path
in the collapsed CFG of $\TRPP_N^o$ (resp. $\PP_N^o$) starting from
the state $\sigma$.  Then the following hold.
\begin{enumerate}
\item $n$ is reached along $\pi$ iff $n$ is also reached along $\pi'$.
\item The program state $\sigma_n$ is computed at node $n$ along $\pi$
iff the program state $\sigma_n$ is computed at node $n$ along $\pi'$.
\end{enumerate}
\end{lemma}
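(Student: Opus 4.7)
The plan is to leverage the near-perfect symmetry between this lemma and Lemma~\ref{lemma:transform-pn-correct}: de-canonicalization is, by construction, the syntactic inverse of canonicalization applied separately to the non-peeled prefix and the peeled suffix of $\TRPP_N^o$. Concretely, whenever a contiguous run of guarded statements in $\TRPP_N^o$ shares a common conjunct in its guards corresponding to the $\ltrue$ (resp. $\lfalse$) side of some original branch $b$, those statements are wrapped back into the then- (resp. else-) branch of $b$, and adjacent guarded statements whose program fragments are the body and peel of the same loop are recombined into that loop followed by its peel. Thus every assignment node $n$ of $\TRPP_N^o$ maps bijectively to an assignment node of $\PP_N^o$ with identical right-hand side, and the set of branch conditions controlling its execution is exactly the set of conjuncts appearing in its guard in $\TRPP_N^o$.

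Given this structural correspondence, I would prove the two claims simultaneously by induction on the length of the prefix $\widehat{\pi}$ of $\pi$ ending at $n$, mirroring the proof sketch of Lemma~\ref{lemma:transform-pn-correct}. In the base case, no prior assignment along $\widehat{\pi}$ has executed, so scalar variables and array elements used either in the guard of $n$ (in $\TRPP_N^o$) or in any branch condition within whose scope $n$ lies (in $\PP_N^o$) hold their initial values from $\sigma$; by construction the guard conjuncts in $\TRPP_N^o$ and the corresponding branch conditions in $\PP_N^o$ are syntactically the same, so they evaluate identically, establishing both reachability and equality of the state $\sigma_n$. For the inductive step, the induction hypothesis gives us that for every earlier assignment node $n_j$ along $\widehat{\pi}$, the same node is reached along $\pi'$ and produces the same state. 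Combined with the no-overwriting property (Lemma~\ref{lemma:peel:no-overwriting}), which tells us that the last writer of any variable/array element used at $n$ or in any controlling branch is unique and appears earlier along the path, we conclude that the values seen at $n$ agree between $\pi$ and $\pi'$; hence $n$ is reached in $\pi'$ iff it is reached in $\pi$, and the computed state at $n$ is the same.

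The reverse direction (a node reached along $\pi'$ is also reached along $\pi$ with the same state) is handled by the same argument run in the opposite direction, since the canonicalization/de-canonicalization mapping is a bijection on assignment nodes and preserves both the assigned expression and the collection of controlling boolean conditions. The only subtlety worth flagging is that $\TRPP_N^o$ is the \emph{reordered} canonical program, so the sequential position of a peeled assignment has changed relative to $\PP_N^p$; however, this does not affect the present lemma, because we only compare $\TRPP_N^o$ with its own de-canonicalization $\PP_N^o$, and the reordering is already baked into both.

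The main obstacle I anticipate is bookkeeping rather than mathematical depth: we must be careful that when the de-canonicalization re-groups guarded statements into \textbf{if}-\textbf{then}-\textbf{else} blocks and loop-plus-peel pairs, the scopes of the reintroduced branches exactly coincide with the maximal runs of guarded statements sharing the relevant guard conjunct in $\TRPP_N^o$. This is where a clean statement of the bijection between assignment nodes, together with an explicit correspondence between guard conjuncts in $\TRPP_N^o$ and branch conditions in $\PP_N^o$, must be spelled out so that the inductive step can appeal uniformly to the no-overwriting property to equate the truth values of guards and branch conditions along corresponding paths.
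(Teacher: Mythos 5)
Your proposal is correct and follows essentially the same route as the paper, which simply remarks that the proof is analogous to that of Lemma~\ref{lemma:transform-pn-correct}: an induction on the prefix of the execution path, using the bijection between assignment nodes and the no-overwriting property to equate guard conjuncts with the reintroduced branch conditions. Your elaboration of the bookkeeping (scopes of re-grouped branches matching maximal runs of shared guard conjuncts) is a faithful expansion of what the paper leaves implicit.
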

\begin{proof}
The proof is similar to that shown in
Lemma \ref{lemma:transform-pn-correct}.
\end{proof}
  %% Section on the reverse transformation

\subsubsection{Peels of Loops as the Difference Program}
\label{sec:diff-prog:peel}

Recall from Sect. \ref{sec:affected} that $\PPP{N}{N-1}$ is
effectively $\PP_N$ with the peels removed.  This is exactly what we
get by de-canonicalizing the part of $\TRPP_N^o$ that precedes the
guarded statements corresponding to peels.  If we call the
de-canonicalized version of the guarded statements corresponding to
peels as $\Peel{\PP_N^p}$ then $\PP_N^o$ can be written as
$\PPP{N}{N-1};\Peel{\PP_N^p}$.

It turns out that $\Peel{\PP_N^p}$ can be constructed directly from
$\PP_N$ without having to go through canonicalization, reordering and
de-canonicalization.  We now describe how to do this.  Recall from
Sect. \ref{sec:peeling} that $\peelnodes$ denotes the set of peeled
nodes in $\PP_N^p$.  Let $\condnodes$ be the set of all non-peeled
conditional branch nodes $b$ such that there is a peeled node $n$
within the scope of the branch $b$.  In other words, if $d$ denotes
the immediate post-dominator of $b$, there is a path from $b$ to $d$
that passes through $n$.  We define $\mathit{Locs}$ to be the set
$\peelnodes \cup \condnodes$.  Only these nodes in the CFG of
$\PP_N^p$ are relevant for the construction of $\Peel{\PP_N^p}$.
Therefore, we construct $\Peel{\PP_N^p}$ by replacing the labels of
all other nodes in the CFG of $\PP_N^p$ by {\Skip}.  Recall that
{\Skip} is a syntactic shorthand for \texttt{x = x;} for a variable
\texttt{x}, as discussed before.  Since a sequence of {\Skip}
statements can be collapsed without changing the program semantics, we
use $\Peel{\PP_N^p}$ to denote the program obtained after this
optimization.

\begin{example}
Consider the peeled program $\PP_N^p$ shown in
Tab. \ref{tab:ex-trans-pn}(a).  The program has a non-peeled
conditional branch statement on line $2$.  The peeled statements on
lines $6$ and $13$ are within the scope of the conditional branch
statement on line $2$.  Thus, $\condnodes$ = \{ $2$ \} and
$\mathit{Locs}$ = \{ $2b$, $5$, $6b$, $9$, $10b$, $12$ \}.  The
program $\Peel{\PP_N^p}$ is the program fragment consisting of the
nodes in the set $\mathit{Locs}$ in Tab. \ref{tab:ex-trans-pn}(d).
Notice that the non-peeled conditional branch node in $\PP_N^p$ (on
line $2$) that has the peeled nodes within its scope is retained in
$\Peel{\PP_N^p}$ along with the peels of loops.
\end{example}

\begin{lemma} \label{lemma:diff-program-without-affected-vars}
  Let $\PP_N^p$ be a peeled program and let $vA$ be a scalar
  variable/array in $\PP_N^p$ that is absent in $\affectedvars$ when
  \textsc{ComputeAffected} is executed on $\PP_N^p$.  If $\PP_N^o$ and
  $\PP_{N-1};\Peel{\PP_N^p}$ are executed from the same state
  $\sigma$, then $vA$ has the same value on termination of both
  programs.
\end{lemma}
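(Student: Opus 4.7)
The plan is to reduce the claim to a chain of previously established semantic preservation results followed by a single invocation of Lemma~\ref{lemma:affected-correct}. Writing $\equiv_{vA}$ for ``produces the same value of $vA$ at termination when started from the same state $\sigma$'', the target equivalence $\PP_N^o \equiv_{vA} \PP_{N-1};\Peel{\PP_N^p}$ naturally splits through the program $\PP_N^p$ itself, since the transformations in Fig.~\ref{fig:trans-seq} preserve semantics on non-affected variables at each step.

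First I would show $\PP_N^p \equiv_{vA} \PP_N^o$ by composing three results already in the paper: (i) $\PP_N^p \equiv_{vA} \TRPP_N^p$ by Lemma~\ref{lemma:transform-pn-correct}, which gives state equivalence at every assignment node and hence at termination; (ii) $\TRPP_N^p \equiv_{vA} \TRPP_N^o$ by Lemma~\ref{lemma:sound-reordering-wo-affected}, invoked with the hypothesis $vA \notin \affectedvars$; and (iii) $\TRPP_N^o \equiv_{vA} \PP_N^o$ by Lemma~\ref{lemma:reverse}. Since $\PP_N^o$ is by construction the concatenation $\PPP{N}{N-1};\Peel{\PP_N^p}$, this chain yields equality of $vA$ on termination of $\PP_N^p$ and of $\PPP{N}{N-1};\Peel{\PP_N^p}$.

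Next I would establish $\PPP{N}{N-1};\Peel{\PP_N^p} \equiv_{vA} \PP_{N-1};\Peel{\PP_N^p}$. Here I would invoke Lemma~\ref{lemma:affected-correct}, applied uniformly to every non-affected scalar/array, to conclude that $\PPP{N}{N-1}$ and $\PP_{N-1}$ leave all non-affected variables in identical states at termination. It then remains to lift this per-variable agreement through the common suffix $\Peel{\PP_N^p}$. To do so I would argue that the peel's effect on a non-affected $vA$ cannot depend on any affected variable: the peel statements are structural copies of loop-body statements present in $\PP_N^p$, so if the peel's computation of $vA$ transitively read an affected variable, then the corresponding non-peeled loop-body node would also read it and would be placed on the worklist of \textsc{ComputeAffected}, forcing $vA$ into $\affectedvars$ by lines~\ref{affected:line:add-defs} and \ref{affected:line:updateworklist1}--\ref{affected:line:updateworklist2} of Algorithm~\ref{alg:pdg-affected}, contradicting the hypothesis.

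The main obstacle will be this last lifting step, because peeled nodes are themselves excluded from the worklist of \textsc{ComputeAffected}. A peeled update to $vA$ may syntactically mention an affected variable in its right-hand-side expression or in the guard of an enclosing branch, so the inference above requires a careful bookkeeping argument rather than a one-line appeal. To handle this cleanly I would prove, as an auxiliary lemma, that for any non-affected $vA$, the transitive data- and control-dependence cone (taken within the renamed and peeled program) of every node of $\Peel{\PP_N^p}$ that can define $vA$ along some execution consists only of non-affected scalars/arrays together with the substituted constant $N-1$ appearing in peel index expressions. Since $N-1$ is fixed once $N$ is fixed, and all remaining inputs are provably equal in the two post-states delivered by $\PPP{N}{N-1}$ and $\PP_{N-1}$, executing the identical peel code from those two states must yield the same terminal value of $vA$, completing the proof.
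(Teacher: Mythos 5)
Your proposal is correct and follows essentially the same route as the paper's proof: the first half is the identical chain through Lemmas~\ref{lemma:transform-pn-correct}, \ref{lemma:sound-reordering-wo-affected} and \ref{lemma:reverse}, and the second half combines Lemma~\ref{lemma:affected-correct} with the same contradiction argument — that a peel node's $\mathit{def}$/$\mathit{uses}$ sets coincide with those of a node in the loop body, so any dependence of the peeled update of $vA$ on an affected variable would force \textsc{ComputeAffected} to mark $vA$ itself as affected. Your proposed auxiliary lemma on dependence cones is just a more explicit packaging of that same step.
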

\begin{proof}
  We break the proof in two parts.  We first show that if $\PP_N^o$
  and $\PPP{N}{N-1};\Peel{\PP_N^p}$ are executed starting from the
  same state $\sigma$, then $vA$ has the same value on termination of
  both programs.  This follows easily from Lemmas
  \ref{lemma:transform-pn-correct},
  \ref{lemma:sound-reordering-wo-affected} and \ref{lemma:reverse}.

  Next we show that if $\PPP{N}{N-1};\Peel{\PP_N^p}$ and
  $\PP_{N-1};\Peel{\PP_N^p}$ are executed from the same state
  $\sigma$, then $vA$ has the same value on termination of both
  programs.  We prove this part by case analysis.

  Suppose the last update to $vA$ in $\PPP{N}{N-1};\Peel{\PP_N^p}$
  happens in a non-peeled node in $\PPP{N}{N-1}$. Then, the proof
  follows immediately from Lemma \ref{lemma:affected-correct}.

  Suppose the last update to $vA$ in $\PPP{N}{N-1};\Peel{\PP_N^p}$
  happens in a peeled node $n$ in $\Peel{\PP_N^p}$.  Let $S$ denote
  the set of variables/arrays $vA'$ such that the updated value of
  $vA$ at node $n$ depends on the values of each $vA' \in S$ on
  termination of $\PPP{N}{N-1}$.  There are two sub-cases to consider.

  If no $vA' \in S$ is identified as affected by
  \textsc{ComputeAffected}, then by Lemma \ref{lemma:affected-correct}
  the value of every such $vA'$ is the same after termination of
  $\PPP{N}{N-1}$ and $\PP_{N-1}$.  This implies that the value of $vA$
  is also same after termination of $\PPP{N}{N-1};\Peel{\PP_N^p}$ and
  $\PP_{N-1};\Peel{\PP_N^p}$.

  Now consider the case where some $vA' \in S$ is identified as
  affected by \textsc{ComputeAffected}.  Let $L$ be a loop in
  $\PP_N^p$ from which the node $n$ is peeled.  From the construction
  of peeled nodes, we know that for every node $n$ in the peel of $L$
  there is a corresponding node $n'$ in the ``uncollapsed'' body of
  loop $L$ such that the $\mathit{def}$ and $\mathit{uses}$ sets of
  the two nodes $n$ and $n'$ coincide.  Since the update to $vA$ at
  node $n$ depends on $vA'$ that is identified as affected, the update
  to $vA$ at node $n'$ in loop $L$ must also depend on the affected
  variable/array $vA'$.  However, this would cause
  \textsc{ComputeAffected} to identify $vA$ as an affected variable.
  This leads to a contradiction since we know $vA$ is not affected.
  This completes the proof.
\end{proof}

%% Comment ends

Lemma \ref{lemma:diff-program-without-affected-vars} allows us to use
$\Peel{\PP_N^p}$ as the difference program $\partial \PP_N$ if none of
the scalar variables and arrays of interest are identified as affected
by \textsc{ComputeAffected}.  This holds true in the case where the
post-condition $\psi^r(N)$ does not refer to any affected
variable/array.  Note that using $\Peel{\PP_N^p}$ as the difference
program $\partial \PP_N$ works even if there are other
variables/arrays (not of interest) that are affected.  However, if
some of our variables/arrays of interest are indeed identified as
affected by \textsc{ComputeAffected}, we must include additional code
in the difference program that effectively ``rectifies'' the values of
affected variables as computed by $\PP_{N-1}$.  We elaborate on this
in the next subsection.

\iffalse
We conclude this subsection by noting a few interesting
correspondences between the nodes in the programs obtained by
different transformations.  For every peeled node in the CFG of
$\PP_N^p$, there is a corresponding peeled node in the CFG of
$\Peel{\PP_N^p}$ and vice-versa.  For every non-peeled conditional
branch node in the CFG of $\Peel{\PP_N^p}$, there is a corresponding
non-peeled node in the CFG of $\PP_N^p$.  Similarly, for every
non-peeled node in the CFG of $\PP_N^p$, there is a corresponding node
in the CFG of $\PPP{N}{N-1}$ as well as in the CFG of $\PP_{N-1}$ and
vice-versa.
\fi

\subsubsection{Generalized Computation of Difference Programs}
\label{sec:diff-prog:affected}

\begin{figure}[h]
\begin{alltt}
// assume(\(\forall\)i\(\in\)[0,N) A[i] = 1)

1.  S = 0;
2.  for(i=0; i<N-1; i++) \{
3.    S = S + A[i];
4.  \}

5.  for(i=0; i<N-1; i++) \{
6.    A1[i] = A[i] + S;
7.  \}

8.  S1 = S;
9.  for(i=0; i<N-1; i++) \{
10.   S1 = S1 + A1[i];
11. \}

12. S = S + A[N-1];
13. A1[N-1] = A[N-1] + S;
14. S1 = S1 + A1[N-1];

// assert(S1 = N \(\times\) (N+2))
\end{alltt}
\caption{Hoare triple with the program $\PP_{N-1};\Peel{\PP_N}$}
\label{fig:ss-incorrect-diff}
\end{figure}

Recall that $\Peel{\PP_N^p}$ was constructed by replacing some of the
nodes in the collapsed CFG of $\PP_N^p$ with {\Skip} and by
simplifying the resulting CFG.  As seen above, this suffices to serve
as the difference program $\partial \PP_N$ if none of the
variables/arrays of interest are identified as affected by function
\textsc{ComputeAffected}.  If, however, some variables/arrays of
interest are identified as affected, $\Peel{\PP_N^p}$ as computed
above may no longer serve as a correct difference program.  To see an
example of this, consider the peeled program in
Fig. \ref{fig:ss-peel}.  If we were to compute $\Peel{\PP_N^p}$ for
this program we would get the program fragment in lines $12 - 14$
shown in Fig. \ref{fig:ss-incorrect-diff}.  In this program, array
{\tt A1} and scalar variable {\tt S1} are identified as affected.
Notice that, the program $\PP_{N-1};\Peel{\PP_N}$ in
Fig. \ref{fig:ss-incorrect-diff} computes incorrect values of {\tt A1}
at line $6$ and {\tt S1} at line $14$.

Interestingly, even in cases like the above example, a correct
difference program can often be constructed by modifying the way in
which $\Peel{\PP_N^p}$ is constructed.  To prevent confusion, we do
not call the program resulting from this modified construction as
$\Peel{\PP_N^p}$.  Instead we call it $\partial \PP_N$.  The
modification referred to above concerns which statements are replaced
by {\Skip} and which are retained but possibly with a change.
Specifically, all assignment statements that update an affected
variable/array but were earlier (while constructing $\Peel{\PP_N^p}$)
replaced by {\Skip} are retained with a possibly changed expression in
the right hand side of the assignment.  Since no new nodes are added
to the CFG of $\PP_N$ in this way of constructing $\partial \PP_N$,
there is a natural injective mapping, say $\beta$, from the nodes in
the CFG of $\partial \PP_N$ to those in the CFG of $\PP_N$.

%% referred to above uses {\Skip} as a replacement for only
%% a subset of the statements that were earlier replaced by {\Skip}
%% (while constructing {$\Peel{\PP_N^p}$).  
%% of (i) changing the right hand
%% side of assignment statements that update an affected variable/array
%% of interest (ii) replacing other statements that are neither

%% Building on the ideas used in the previous subsection, we propose to
%% construct a difference program whose CFG is ``inherited'' from the
%% collapsed CFG of the renamed and peeled program $\PP_N^p$.  We perform
%% simplifications of this control flow structure while computing the
%% difference program.  As part of the simplifications, some loops and
%% statements may be removed and the labels (program statements) at some
%% nodes may be updated.  %% But the rest of the structure of $\partial
%% \PP_N$ stays the same as $\PP_N^p$.
%% However, we do not introduce new nodes in the difference program.
%% Therefore, there is a natural injective mapping from each node in the
%% difference program to a unique node in the peeled program.
%% Thus, these modifications to the CFG of the difference program
%% $\partial \PP_N$ ensure that its control structure is still similar
%% to that of the peeled program $\PP_N^p$.

To understand how the right hand side expressions of assignments may
need to be changed when constructing $\partial \PP_N$, consider an
execution of each of $\PP_N$ and $\PP_{N-1};\partial \PP_N$ starting
from the same initial state $\sigma$.
\begin{definition}
For every node $n$ in the CFG of $\partial \PP_N$ and for every
variable/array element $vA$ we say that $vA$ has a \emph{rectified
  value} at $n$ if its value at $n$ matches the value of $vA$ at
$\beta(n)$.  Otherwise, we say that $vA$ has an \emph{unrectified
  value} at $n$.
\end{definition}

For every node $n$ in $\partial \PP_N$ that updates an affected
variable/array $vA$ of interest, we modify the right hand side of the
assignment (if necessary) such that the right hand side expression
evaluates to the rectified value of $vA$ at $n$.  This expression is
constructed in such a manner that it uses the unrectified value of
$vA$ at $n$ (if the assignment statement was replaced by a {\Skip}) in
its computation of the rectified value.  This construction allows us
to establish an important property of the resulting program $\partial
\PP_N$: every variable/array $vA$ of interest has its rectified value
at every node $n$ in $\partial \PP_N$.

\iffalse
The difference program has three kinds of nodes namely those
corresponding to the peeled loop iterations, the glue loops introduced
during renaming and the non-peeled non-glue statements/loops in the
program.  For the non-peeled non-glue loops we will process each
statement in the loop one at a time.  Statements at these non-peeled
non-glue nodes are of two types, namely branch statements and
assignment statements.  We now discuss how we will process these nodes
while computing the difference.

For a glue node, we check if it copies the value of an affected
variable/array.  If so we retain the glue node with the copy
statements. Otherwise, the node is removed from the CFG of
$\partial \PP_N$.  Hence, the loops corresponding to the glue nodes
are either removed or remained as glue nodes which are treated
separately by our technique.

For the nodes with branch conditions, we check if this node is
affected by the value of $N$.  If so the branch condition may evaluate
to different outcomes in $\PP_N$ and $\PP_{N-1}$, for the same value
of $N$.  When this happens, programs $\PP_N$ and $\PP_{N-1}$ execute
totally unrelated blocks of statements.  In such situations, it is
immensely difficult to compute the difference at the branch condition.
Only when we can determine that the branch condition evaluates to the
same value in $\PP_N$ and $\PP_{N-1}$, we can compute the difference.
\fi

We now elaborate on how we construct the modified right hand side
expression of an assignment statement at node $n$ in $\partial \PP_N$
that updates the affected variable/array $vA$.  We assume that we have
access to the rectified and unrectified values of all variables/arrays
$vA'$ used in the right hand side expression of the assignment
statement at node $\beta(n)$ in $\PP_N$.  The easiest way to do this
would be to construct the right hand side expression exclusively in
terms of the rectified values of $vA'$ at node $n$.  Note that, this
results in a difference program that is as complex as the original
program $\PP_N$.  This defeats our purpose, since full-program
induction can succeed only if $\partial \PP_N$ is ``simpler'' than
$\PP_N$.  Therefore, we do not use this naive method and present an
operator algebra to compute of the rectified value of $vA$ updated in
the assignment statement in terms of its unrectified value and the
``difference'' between the rectified and unrectified values of other
variables/arrays that have a data dependence to $vA$.

Let $\circ$ be a binary operator on a set $\smem$ that denotes the
domain of values of variables/arrays in $\PP_N$.  We say that $e$ is
the right identity element of $\circ$ if $v$ $\circ$ $e$ = $v$ and $e$
is the left identity element if $e$ $\circ$ $v$ = $v$ for each $v \in
\smem$.  We call $v^{-\circ}$ the right inverse element of $v$ under
$\circ$ if $v$ $\circ$ $v^{-\circ}$ = $e$ and we call it the left
inverse element if $v^{-\circ}$ $\circ$ $v$ = $e$ for each $v \in
\smem$.  We say that $\circ$ is an associative operator if $(u \circ
v) \circ w$ = $u \circ (v \circ w)$.  We say that $\circ$ is a
commutative operator if $u \circ v$ = $v \circ u$.  When the operator
$\circ$ is associative, $(u \circ v)^{-\circ}$ = $(v^{-\circ} \circ
u^{-\circ})$.

%% $\langle \circ, \smem\rangle$ forms a {\em group} if $\circ$ is
%% associative, has an identity element $e \in \smem$ and inverse of each
%% element $v \in \smem$ exists.  It forms an {\em abelian group} if
%% $\circ$ is also commutative.

For the following lemmas, we assume that $\circ$ is an associative
operator, there exists a left identity element under $\circ$ in
$\smem$ and each element in $\smem$ has a right inverse under $\circ$.

\begin{lemma}
\label{lemma:diffcomp1}
Let $n$ be a node in $\partial \PP_N$ such that the statement at
$\beta(n)$ in $\PP_N$ is $w$ $:=$ $u$ $\circ$ $v$.  Suppose $w$ is an
affected variable/array of interest.  Let $w_{N-1}$, $u_{N-1}$,
$v_{N-1}$ denote the values of $w$, $u$ and $v$ at the end of
execution of $\PP_{N-1}$.  Let $w_N$, $u_N$, $v_N$ be the rectified
values of $w$, $u$ and $v$ at node $n$.  Then the rectified value of
$w$ is computed as $w_N$ $:=$ $w_{N-1}$ $\circ$ $((v_{N-1})^{-\circ}$
$\circ$ $((u_{N-1})^{-\circ}$ $\circ$ $u_N)$ $\circ$ $v_N)$.
\end{lemma}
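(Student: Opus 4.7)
The plan is to verify the identity purely algebraically, relying on the stated properties of $\circ$ (associativity, existence of a left identity $e$, existence of a right inverse $v^{-\circ}$ for every $v$). First, I would observe that by the operational semantics of assignment, executing the rectified statement at node $n$ yields $w_N = u_N \circ v_N$, and that executing the syntactically identical statement at $\beta(n)$ during $\PP_{N-1}$ yields $w_{N-1} = u_{N-1} \circ v_{N-1}$. These two identities are the only semantic facts needed; the rest of the argument is algebra in $(\smem,\circ)$.

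Next, I would substitute $w_{N-1} = u_{N-1}\circ v_{N-1}$ into the right-hand side of the lemma and regroup using associativity:
\begin{align*}
& w_{N-1} \circ \bigl((v_{N-1})^{-\circ} \circ ((u_{N-1})^{-\circ} \circ u_N) \circ v_N\bigr) \\
&= u_{N-1} \circ v_{N-1} \circ (v_{N-1})^{-\circ} \circ (u_{N-1})^{-\circ} \circ u_N \circ v_N \\
&= u_{N-1} \circ e \circ (u_{N-1})^{-\circ} \circ u_N \circ v_N \\
&= u_{N-1} \circ (u_{N-1})^{-\circ} \circ u_N \circ v_N \\
&= e \circ u_N \circ v_N \;=\; u_N \circ v_N \;=\; w_N.
\end{align*}
The cancellations use the right-inverse law $x \circ x^{-\circ} = e$ twice (once for $v_{N-1}$, once for $u_{N-1}$) and the identity law to absorb the resulting $e$'s.

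The one subtlety is that the simplification $u_{N-1} \circ e \circ (u_{N-1})^{-\circ} = u_{N-1} \circ (u_{N-1})^{-\circ}$ implicitly treats $e$ as a right identity, whereas the paper only postulates a \emph{left} identity. I would resolve this via the standard monoid/group argument: under associativity, the existence of a left identity together with right inverses for every element forces $e$ to be a two-sided identity and $x^{-\circ}$ to be a two-sided inverse. The derivation is short — writing $x' = x^{-\circ}$ and $x'' = (x')^{-\circ}$, one computes $x' \circ x = (x' \circ x) \circ (x' \circ x'') = x' \circ (x \circ x') \circ x'' = x' \circ e \circ x'' = x' \circ x'' = e$, and hence $x \circ e = x \circ (x' \circ x) = (x \circ x') \circ x = e \circ x = x$. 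With two-sided identity in hand, every step of the chain of equalities above is justified.

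The main obstacle is thus not computational but foundational: making the algebraic hypotheses strong enough to support the cancellation. Once the two-sided identity/inverse property is established as a short preliminary lemma, the rest of the proof is a routine chain of rewrites, and the same template will serve (with minor adjustments to orientation) for the companion lemmas that handle non-commutative and commutative variants of $\circ$ encountered later in \textsf{Vajra}.
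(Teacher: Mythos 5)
Your main computation is correct and is essentially the paper's own proof read in reverse: the paper starts from $w_N = e \circ w_N$, inserts $e = w_{N-1}\circ(w_{N-1})^{-\circ}$, and substitutes $w_{N-1}=u_{N-1}\circ v_{N-1}$, $w_N = u_N\circ v_N$ together with $(u\circ v)^{-\circ} = v^{-\circ}\circ u^{-\circ}$ to build up the stated right-hand side, whereas you collapse the right-hand side down to $u_N\circ v_N$ by cancelling $v_{N-1}\circ(v_{N-1})^{-\circ}$ and $u_{N-1}\circ(u_{N-1})^{-\circ}$ directly; the two chains are the same rewrites in opposite order.

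One caveat on your ``subtlety'' paragraph: the worry is unfounded and the proposed fix is actually wrong. Every occurrence of $e$ in your chain has material to its right (at least the trailing $u_N\circ v_N$), so after re-associating as $u_{N-1}\circ\bigl(e\circ((u_{N-1})^{-\circ}\circ u_N\circ v_N)\bigr)$ the \emph{left} identity alone suffices at every step; no two-sided identity is needed. Moreover, the preliminary lemma you invoke is false under the paper's hypotheses: associativity plus a left identity plus \emph{right} inverses does not force a group (take $x\circ y = y$, where every element is a left identity and $e$ is a right inverse of everything, yet $x\circ e = e \neq x$). Your derivation of it is circular --- the first equality $x'\circ x = (x'\circ x)\circ(x'\circ x'')$ already assumes $e$ is a right identity, since $x'\circ x'' = e$. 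Since that digression is not needed for the lemma, the proof stands once it is deleted.
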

\begin{proof}
We proceed as follows:
  \begin{enumerate}
    \item $w_N$ $:=$ $e$ $\circ$ $w_N$ %$\hfill$($e$ is left identity)
    \item $w_N$ $:=$ $(w_{N-1}$ $\circ$ $(w_{N-1})^{-\circ})$ $\circ$ $w_N$ %$\hfill$(definition of identity element)
    \item $w_N$ $:=$ $w_{N-1}$ $\circ$ $((w_{N-1})^{-\circ}$ $\circ$ $w_N)$ %$\hfill$(associativity)
    \item $w_N$ $:=$ $w_{N-1}$ $\circ$ $((u_{N-1}$ $\circ$ $v_{N-1})^{-\circ}$ $\circ$ $(u_N$ $\circ$ $v_N))$
    \item $w_N$ $:=$ $w_{N-1}$ $\circ$ $((v_{N-1})^{-\circ}$ $\circ$ $(u_{N-1})^{-\circ}$ $\circ$ $(u_N$ $\circ$ $v_N))$
    \item $w_N$ $:=$ $w_{N-1}$ $\circ$ $((v_{N-1})^{-\circ}$ $\circ$ $((u_{N-1})^{-\circ}$ $\circ$ $u_N)$ $\circ$ $v_N)$
\end{enumerate}
\end{proof}

Suppose $\circ$ is additionally a commutative operator.  Then the
following lemmas hold.

\begin{lemma}
\label{lemma:diffcomp2}
Under the assumptions stated in Lemma \ref{lemma:diffcomp1}, the
rectified value of $w$ is computed as $w_N$ := $w_{N-1}$ $\circ$
$(u_N$ $\circ$ $(u_{N-1})^{-\circ})$ $\circ$ $(v_N$ $\circ$
$(v_{N-1})^{-\circ})$.
\end{lemma}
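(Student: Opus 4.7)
The plan is to invoke Lemma \ref{lemma:diffcomp1} and then exploit the added commutativity hypothesis to rearrange the factors into the desired shape. Since associativity, a left identity and right inverses already hold, Lemma \ref{lemma:diffcomp1} is directly applicable and gives
\[
w_N \;=\; w_{N-1} \,\circ\, \bigl((v_{N-1})^{-\circ} \circ \bigl((u_{N-1})^{-\circ} \circ u_N\bigr) \circ v_N\bigr).
\]
So the work reduces to massaging the parenthesised expression on the right into $(u_N \circ (u_{N-1})^{-\circ}) \circ (v_N \circ (v_{N-1})^{-\circ})$.

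First I would drop parentheses, justified by associativity, and write the inner expression as the four-fold product $(v_{N-1})^{-\circ} \circ (u_{N-1})^{-\circ} \circ u_N \circ v_N$. Next I would commute $(v_{N-1})^{-\circ}$ past the pair $(u_{N-1})^{-\circ} \circ u_N$; this is legitimate because commutativity of $\circ$ on $\smem$ extends inductively to any finite product. The intermediate form obtained is $(u_{N-1})^{-\circ} \circ u_N \circ (v_{N-1})^{-\circ} \circ v_N$. A further swap of $(u_{N-1})^{-\circ}$ and $u_N$, and of $(v_{N-1})^{-\circ}$ and $v_N$, yields $u_N \circ (u_{N-1})^{-\circ} \circ v_N \circ (v_{N-1})^{-\circ}$. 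Re-parenthesising by associativity then produces $(u_N \circ (u_{N-1})^{-\circ}) \circ (v_N \circ (v_{N-1})^{-\circ})$, which, pre-composed with $w_{N-1}$, is the claimed form of $w_N$.

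There is essentially no obstacle here: the lemma is a corollary of Lemma \ref{lemma:diffcomp1}, with commutativity used purely to permute factors so that each ``correction'' term $(u_N \circ (u_{N-1})^{-\circ})$ and $(v_N \circ (v_{N-1})^{-\circ})$ sits next to its own variable. The only thing worth noting in the write-up is that every rearrangement must still happen inside $\smem$, so that commutativity and the inverse axiom apply unconditionally; this is immediate since $u_N, u_{N-1}, v_N, v_{N-1} \in \smem$ and $\smem$ is closed under $\circ$ and under taking right inverses. Hence the identity holds and the lemma follows.
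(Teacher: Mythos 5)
Your proof is correct and follows essentially the same route as the paper's: start from the identity established in Lemma \ref{lemma:diffcomp1}, then use associativity to flatten the product and commutativity to permute the factors into the grouping $(u_N \circ (u_{N-1})^{-\circ}) \circ (v_N \circ (v_{N-1})^{-\circ})$. The paper's derivation performs exactly these swaps in the same order, so there is nothing to add.
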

\begin{proof}
We continue from the proof of Lemma~\ref{lemma:diffcomp1} and proceed as follows:
  \begin{enumerate}
  \setcounter{enumi}{5}
    \item $w_N$ $:=$ $w_{N-1}$ $\circ$ $((v_{N-1})^{-\circ}$ $\circ$ $((u_{N-1})^{-\circ}$ $\circ$ $u_N)$ $\circ$ $v_N)$
    \item $w_N$ $:=$ $w_{N-1}$ $\circ$ $(((u_{N-1})^{-\circ}$ $\circ$ $u_N)$ $\circ$ $(v_{N-1})^{-\circ}$ $\circ$ $v_N)$
    \item $w_N$ $:=$ $w_{N-1}$ $\circ$ $((u_{N-1})^{-\circ}$ $\circ$ $u_N)$ $\circ$ $((v_{N-1})^{-\circ}$ $\circ$ $v_N)$
    \item $w_N$ $:=$ $w_{N-1}$ $\circ$ $(u_N$ $\circ$ $(u_{N-1})^{-\circ})$ $\circ$ $(v_N$ $\circ$ $(v_{N-1})^{-\circ})$
\end{enumerate}
\end{proof}

To use the equation from Lemma~\ref{lemma:diffcomp2} for statements
with non-commutative operators such as $\{-, \div\}$ often used in
practice, we perform a simple transformation that allows us to use
commutative operators inplace of non-commutative ones.  As an example
of this transformation, consider the expressions $u - v$ and $u / v$.
We transform them into the expressions $u + (-v)$ and $u \times (1/v)$
respectively.  This allows us to use the equation in
Lemma~\ref{lemma:diffcomp2} when for every element $v \in \smem$, the
elements $-v$ and $1/v$ are also in $\smem$.

\begin{lemma}
\label{lemma:diffcomp3}
Let $n$ be a node in $\partial \PP_N$ such that the statement at
$\beta(n)$ in $\PP_N$ is $w$ $:=$ $w$ $\circ$ $v$.  Then the rectified
value of $w$ is computed as $w_N$ := $w_N$ $\circ$ $(v_N$ $\circ$
$(v_{N-1})^{-\circ})$ along with the presumption $w_N = w_{N-1}$.
\end{lemma}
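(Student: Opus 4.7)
The plan is to obtain Lemma~\ref{lemma:diffcomp3} as a specialization of Lemma~\ref{lemma:diffcomp2}, using the fact that the right-hand side of the statement at $\beta(n)$ in $\PP_N$ is now $w \circ v$ rather than $u \circ v$ for a distinct $u$. Concretely, I would substitute $u := w$ in the conclusion of Lemma~\ref{lemma:diffcomp2} to obtain the intermediate equation
\begin{equation*}
w_N \;:=\; w_{N-1} \circ (w_N \circ (w_{N-1})^{-\circ}) \circ (v_N \circ (v_{N-1})^{-\circ}).
\end{equation*}
This reuses, without change, the derivation chain already established in the proof of Lemma~\ref{lemma:diffcomp2}, so no new algebraic work is required at this stage.

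Next, I would simplify the right-hand side using the algebraic assumptions in force for Lemma~\ref{lemma:diffcomp2}, namely that $\circ$ is associative and commutative, that a left identity $e$ exists, and that every element has a right inverse. Applying commutativity allows the grouping $w_{N-1} \circ (w_{N-1})^{-\circ}$, which collapses to $e$ by the inverse law, and then the identity law eliminates $e$, yielding
\begin{equation*}
w_N \;:=\; w_N \circ (v_N \circ (v_{N-1})^{-\circ}).
\end{equation*}
This is syntactically the formula claimed by the lemma, but it has $w_N$ on both sides of the assignment, so I must justify how this formula is actually realized in $\partial \PP_N$.

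The subtle part, and the place where the stated presumption $w_N = w_{N-1}$ enters, is interpreting the $w_N$ appearing on the right. The left-hand $w_N$ denotes the rectified value of $w$ \emph{after} the (retained and modified) assignment at $n$; the right-hand $w_N$ denotes the value of $w$ \emph{just before} executing that statement in $\partial \PP_N$. Since the statement at $\beta(n)$ is of the self-update form $w := w \circ v$, by the construction described before the lemma, no earlier statement in $\partial \PP_N$ has rectified $w$, so upon reaching $n$ the current value of $w$ is exactly the value left by $\PP_{N-1}$, namely $w_{N-1}$. This is precisely the presumption $w_N = w_{N-1}$, and it ensures that the assignment reads the correct unrectified value of $w$ on its right-hand side while producing the rectified value on the left.

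The main obstacle is not algebraic but definitional: one must be careful about the overloaded use of $w_N$ (pre- versus post-state value at $n$) and verify that the premise for invoking Lemma~\ref{lemma:diffcomp2} remains intact when $u$ and $w$ coincide, so that no spurious dependence is introduced by the substitution $u := w$. Once that is clarified, the remainder is a short chain of equational rewrites using associativity, commutativity, inverses, and the identity law, exactly as in the proofs of Lemmas~\ref{lemma:diffcomp1} and~\ref{lemma:diffcomp2}.
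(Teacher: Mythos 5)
Your proposal is correct and follows essentially the same route as the paper: both specialize Lemma~\ref{lemma:diffcomp2} with $u := w$ and then simplify the term $w_{N-1} \circ (w_N \circ (w_{N-1})^{-\circ})$ using the inverse and identity laws together with the presumption $w_N = w_{N-1}$. The only cosmetic difference is that the paper uses the presumption to collapse $(w_N \circ (w_{N-1})^{-\circ})$ to $e$ and then rewrites the leading $w_{N-1}$ back to $w_N$ at the end, whereas you cancel $w_{N-1}$ against its inverse directly via commutativity; your added remarks on the pre-/post-state reading of $w_N$ and on why the presumption holds at $n$ are consistent with how the lemma is applied in \textsc{NodeDiff}.
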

\begin{proof}
Using the given presumption $w_N = w_{N-1}$, we have the definition
of the identity element as: $e$ $=$ $w_N \circ (w_N)^{-\circ}$ $=$
$w_N \circ (w_{N-1})^{-\circ}$. \\

\noindent
As the first step, we use the result from Lemma~\ref{lemma:diffcomp2}
and proceed as follows:
\begin{enumerate}
  \item $w_N$ $:=$ $w_{N-1}$ $\circ$ $(w_N$ $\circ$
  $(w_{N-1})^{-\circ})$ $\circ$ $(v_N$ $\circ$ $(v_{N-1})^{-\circ})$
  \item $w_N$ $:=$ $w_{N-1}$ $\circ$ $e$ $\circ$ $(v_N$ $\circ$ $(v_{N-1})^{-\circ})$
  \item $w_N$ $:=$ $w_{N-1}$ $\circ$ $(v_N$ $\circ$ $(v_{N-1})^{-\circ})$
  \item $w_N$ $:=$ $w_N$ $\circ$ $(v_N$ $\circ$ $(v_{N-1})^{-\circ})$
\end{enumerate}
\end{proof}

It is worth noting that the rectification described in Lemmas
\ref{lemma:diffcomp1}, \ref{lemma:diffcomp2} and \ref{lemma:diffcomp3}
applies not only when the set $\smem$ is integers, i.e. integers are
stored as array elements but even when the set consists of {\em
  matrices, vectors, and polynomials}.  When matrices are stored as
array elements, such arrays are called {\em tensors}.  These are
extensively used in machine learning algorithms.  Further, it applies
to interesting operators such as $\boldsymbol{\oplus}$,
$\boldsymbol{+}$ $mod$ $x$, $\boldsymbol{\times}$ $mod$ $y$ as well as
to other interesting algebraic structures.
%% that form a {\em group} or an {\em abelian group}.
It is also worth mentioning that for a restricted class of programs
our technique extends to computing the differences of programs that
manipulate heaps.

\begin{algorithm*}[!t]
  \caption{\textsc{ProgramDiff}($(\mathit{Locs}, CE, \mu)$: renamed program $\PP_N$, $\gluenodes$: set of glue nodes)}
  \label{alg:diff-generation}
%  \scriptsize
  \begin{algorithmic}[1]
    \State $\langle(\mathit{Locs}^p, CE^p, \mu^p), \peelnodes\rangle$ := \textsc{PeelAllLoops}($(\mathit{Locs}, CE, \mu)$); \label{line:diff:peelloops}
    \State $\affectedvars$ := \textsc{ComputeAffected}$((\mathit{Locs}^p, CE^p, \mu^p), \peelnodes)$; \label{line:diff:affected}
%    \State Let the CFG of $\PP_{N}^p$ be $(\mathit{Locs}^p, CE^p, \mu^p)$;
    \State $ \partial \PP_N := (Locs', CE', \mu') $,
    where $Locs' := Locs^p$, $CE' := CE^p$, and $\mu' := \mu^p$; \label{line:diff:skeletonpn}
    \State $\partial \PP_N := \textsc{CollapseLoopBody}(\partial \PP_N)$;  \label{line:diff:collapse}  \Comment{Collapse nodes and edges of each loop into its loop-head}
    \State $\worklist$ := $(n_{start})$;  \Comment{Add the start node to the worklist} \label{line:diff:init-wl}
    \State $\processednodes$ := $\emptyset$;  \label{line:diff:processednodesinit}
    \While{$\worklist$ is not empty}  \label{line:diff:begin-wl}
      \State Remove a node $n$ from head of $\worklist$;  \label{line:diff:removeonenode}
      \State $\processednodes$ := $\processednodes \cup \{n\}$;  \label{line:diff:addntoprocessed}
      \For{each node $n' \in$ \textsc{Succ}$(n) \setminus \processednodes$} \label{line:diff:copyst}  %% \Comment{\textsc{Succ} gives all the successors of a node}
        \State $\worklist$ := \textsc{AppendToList}($\worklist$, $n'$);  \label{line:diff:append-wl}
      \EndFor \label{line:diff:copyend}
      \If{$n \in \peelnodes$}   \label{line:diff:check-peel}
      %% \Comment{$n$ is a glue node or a peeled node}
        {\bf ~ continue};  \label{line:diff:skip-peel}  \Comment{Difference computation not required}
      %% \ElsIf{$n \in \gluenodes$ $\wedge$ $\neg$\textsc{HasAffectedVars}($n$, $\affectedvars$)}   \label{line:diff:check-glue-loop}
      %%   \State $\partial \PP_N$ := \textsc{RemoveNode}($n$, $\partial \PP_N$);  \label{line:diff:skip-glue-peel} %% \Comment{Remove node $n$ from the DAG of ${\partial \PP_N}$} 
      \ElsIf{\textsc{HasAffectedVars}($n$, $\affectedvars$)} \label{line:diff:check-affected}
        \If{$n \in \gluenodes$} \label{line:diff:check-glue} {\bf ~ continue};  \label{line:diff:retain-affected-glue}  \Comment{Retain the glue loop}
        \ElsIf{$n$ is a loop-head} \label{line:diff:loop-head}
          \State $L := \textsc{UncollapseLoopBody}(n)$;  \label{line:diff:uncollapse-l}  \Comment{Uncollapse the loop-head}
          \For{each node $n' \in \textsc{Nodes}(L)$} \label{line:diff:loopnodes-start}
            \If{\textsc{HasAffectedVars}($n'$, $\affectedvars$)} \label{line:diff:check-affected2}
              \State $\mu'(n') := \textsc{NodeDiff}(n', \mu, \affectedvars)$;  \label{line:diff:stdiff-loop}
            \Else ~ %% \Comment{No affected variables at node $n'$}
              %% \State $(\mathit{Locs}', CE', \mu')$ := \textsc{RemoveNode}($n'$, $(\mathit{Locs}', CE', \mu')$);  \label{line:diff:removen1}
               $\partial \PP_N$ := \textsc{RemoveNode}($n'$, $\partial \PP_N$);  \label{line:diff:removen1} \Comment{No affected variables at node $n'$} %% \Comment{Remove node $n'$ from the DAG of ${\partial \PP_N}$}
            \EndIf
          \EndFor \label{line:diff:loopnodes-end}
        \Else
          ~ $\mu'(n) := \textsc{NodeDiff}(n, \mu, \affectedvars)$;  \label{line:diff:stdiff}
        \EndIf
      \Else \Comment{No affected variables at node $n$}
        %% \State $(\mathit{Locs}', CE', \mu')$ := \textsc{RemoveNode}($n$, $(\mathit{Locs}', CE', \mu')$);  \label{line:diff:removen2}
        \State Let $\condnodes$ be the set of all branch nodes that have a peeled node within its scope;  \label{line:diff:condnodes}
        \If{$n \not \in \condnodes$ } \label{line:diff:skip-branch}
          $\partial \PP_N$ := \textsc{RemoveNode}($n$, $\partial \PP_N$);  \label{line:diff:removen2} %% \Comment{$n$ possibly a collapsed loop node/non-branch node}
        \EndIf
        %% \State $(n_1,n,c_1) := \textsc{IncomingEdge}(n)$; \label{line:diff:removel1}
        %% \State $(n,n_2,c_2):= \textsc{ExitEdge}(n)$;
        %% \State $Edges' := Edges'$ $\union$ $\{(n_1, n_2, c_1)\}$ $\setminus$ $\{(n_1, n, c_1), (n, n_2, c_2)\}$;
        %% \State $Locs' := Locs'$ $\setminus$ $n$; \label{line:diff:removel2}
      \EndIf
    \EndWhile  \label{line:diff:end-wl}
    \State \Return ${\partial \PP_N}$;
  \end{algorithmic}

  \vspace{2ex}
  \textsc{HasAffectedVars}( $n$: node, $\affectedvars$: set of affected variables )
  \begin{algorithmic}[1]
    \If{$\exists vA$ such that $vA \in \mathit{def}(n)$ and $vA \in \affectedvars$}
      ~ \Return $\true$;
    \Else ~ \Return $\false$;
    \EndIf
  \end{algorithmic}

  \vspace{2ex}
  \textsc{NodeDiff}( $n$: node, $\mu$: node labelling function, $\affectedvars$: set of affected variables )
  \begin{algorithmic}[1]
    \If{$\mu(n)$ is of the form $w_{N} := r^1_{N}$ {\tt op} $r^2_{N}$}
      \State \Return $w_N$ $:=$ $w_{N-1}$ $\circ$ $(r^1_N$ $\circ$ $(r^1_{N-1})^{-\circ})$ $\circ$ $(r^2_N$ $\circ$ $(r^2_{N-1})^{-\circ})$; \Comment{Refer Lemma~\ref{lemma:diffcomp2}}
    \ElsIf{$\mu(n)$ is of the form $w_{N} := w_{N}$ {\tt op} $r^1_{N}$ wherein $w_{N}$ is a scalar}
      \State \Return $w_N$ $:=$ $w_{N}$ $\circ$ $(r^1_N$ $\circ$ $(r^1_{N-1})^{-\circ})$; \Comment{Refer Lemma~\ref{lemma:diffcomp3}}
    \Else \Comment{$\mu(n)$ is a conditional statement $C_N$}
      \If{($\exists v$ s.t. $v \in uses(n)$ and $v \in \affectedvars$) $\vee$ ($C_N \not= C_{N-1}$ is satisfiable)}
        \State {\bf throw} ``Branch conditions in $\PP_N$ and $\PP_{N-1}$ may not evaluate to same value'';  \label{line:diff:branch}
      \Else
        ~ \Return $\mu(n)$;
      \EndIf
    \EndIf
  \end{algorithmic}
\end{algorithm*}

The routine \textsc{ProgramDiff} presented in
Algorithm~\ref{alg:diff-generation} shows how the difference program
is computed.  In line~\ref{line:diff:peelloops}, we peel each loop in
the program $\PP_N$ and collect the list of peeled nodes using
Algorithm~\ref{alg:peelloops}.  We then compute the set of affected
variables using Algorithm~\ref{alg:pdg-affected}
(line~\ref{line:diff:affected}).  The difference program $\partial
\PP_N$ inherits the skeletal structure of the peeled program $\PP^p_N$
after peeling each loop (line~\ref{line:diff:skeletonpn}).  Next, we
collapse all nodes and edges in the body of each loop into a single
node identified with the loop-head in the CFG of $\PP^p_N$ using the
function \textsc{CollapseLoopBody} in line \ref{line:diff:collapse}.
The collapsed CFG of the resulting program $\partial \PP_N$ is a DAG
with finitely many paths.  We then initialize a worklist of CFG nodes
with $n_{start}$ in line \ref{line:diff:init-wl}.

The while loop in
lines~\ref{line:diff:begin-wl}--\ref{line:diff:end-wl} performs a
breadth-first top-down traversal over the DAG of $\partial \PP_N$
starting from the node $n_{start}$ and processes one node at a
time. We first remove a node $n$ from the worklist in line
\ref{line:diff:removeonenode}.  We store the nodes that are already
processed by our algorithm in $\processednodes$ (that is
initialized in line \ref{line:diff:processednodesinit}).  We add the
node $n$ removed from the worklist to $\processednodes$ in line
\ref{line:diff:addntoprocessed}.  Next, the loop in
lines~\ref{line:diff:copyst}--\ref{line:diff:copyend} appends each
successor $n'$ of $n$ to the worklist that is not already processed.
We use the routine \textsc{Succ} to obtain the list of successors of
node $n$.

If $n$ is a peeled node, then we retain it as is in the difference
program (line \ref{line:diff:skip-peel}).  Otherwise, we check if any
scalar variable/array used at node $n$ is affected at
line~\ref{line:diff:check-affected}.  We have defined the sub-routine
\textsc{HasAffectedVars} that checks if the scalar variable/array
defined at node $n$ is affected.

For nodes $n$ that refer to an affected variable/array, we do the
following.  We check if a node $n$ is a glue node that refers to an
affected variable/array in line \ref{line:diff:retain-affected-glue}
and retain such nodes as is in the difference program.  Otherwise, we
check if the node $n$ corresponds to a loop-head in line
\ref{line:diff:loop-head}.  We uncollapse the nodes corresponding to a
loop-head that represent the entire loop in line
\ref{line:diff:uncollapse-l}.  We assume that the sub-routine
\textsc{Nodes}($L$) returns the set of CFG nodes in loop $L$.  Next,
the loop in lines
\ref{line:diff:loopnodes-start}--\ref{line:diff:loopnodes-end}
iterates over all nodes $n'$ in the body of $L$ and process one node
at a time.  In line \ref{line:diff:check-affected2}, we check if the
variable/array updated at node $n'$ is affected using function
\textsc{HasAffectedVars}, and compute its rectified value in line
\ref{line:diff:stdiff-loop}, using the function \textsc{NodeDiff}.  If
the variable/array defined at $n'$ is not identified as affected, then
we remove from ${\partial \PP_N}$ nodes $n'$ that do not update an
affected variable/array using the routine \textsc{RemoveNode} in line
\ref{line:diff:removen1}.  For a non-peeled node $n$ that does not
correspond to a loop-head, we compute the rectified value of an
affected variable/array defined at $n$ in line \ref{line:diff:stdiff},
using the function \textsc{NodeDiff}.

For nodes $n$ that are not peeled nodes and do not update an
affected variable/array, we do the following.  We compute the set
$\condnodes$ of conditional branch nodes that have at least one peeled
node within its scope in line \ref{line:diff:condnodes}.  In line
\ref{line:diff:removen2}, we remove from ${\partial \PP_N}$ nodes $n$
(including collapsed loop nodes) that do not update an affected
variable/array and are not in the set $\condnodes$, using the routine
\textsc{RemoveNode}, as they do not need any rectification.

The sub-routine \textsc{NodeDiff} computes the statements that rectify
values of variables/arrays updated at a node.  It determines the type
of statement (assignment, aggregation or branch condition) at the
given node and acts accordingly.  For assignment statements, we
compute the rectified value as shown in Lemma~\ref{lemma:diffcomp2}
and for aggregating statements, we compute the rectified value as
shown in Lemma~\ref{lemma:diffcomp3}.  For the nodes representing a
conditional branch in $\partial \PP_{N}$, we determine if its
conditional expression evaluates to the same value in $\PP_N$ and
$\PP_{N-1}$.  If so, the conditional branch is retained as is in
$\partial \PP_N$.  Otherwise, currently our technique cannot compute
${\partial \PP_N}$ and we report a failure using the {\bf throw}
statement.

To explain the intuition behind the steps of
Algorithm~\ref{alg:diff-generation}, we use the convention that all
variables and arrays of $\PP_{N-1}$ have the suffix {\tt \_Nm1} (for
N-minus-1), while those of $\PP_N$ have the suffix {\tt \_N}.  This
allows us to express variables/arrays of $\PP_N$ in terms of the
corresponding variables/arrays of $\PP_{N-1}$ in a systematic way in
${\partial \PP_N}$, given that the intended composition is
$\PP_{N-1};{\partial \PP_N}$.

For assignment statements, we compute the rectified values as follows.
For every assignment statement of the form {\tt v = E;} in $L$, a
corresponding statement is generated in ${\partial \PP_N}$ that
expresses {\tt v\_N} in terms of {\tt v\_Nm1} and the difference (or
ratio) between versions of variables/arrays that appear as
sub-expressions in {\tt E} in $\PP_{N-1}$ and $\PP_N$.

While the implementation is currently restricted to simple arithmetic
operators ($+, -, \times, \div$), specifically for the ease of
implementation and its use in practice, as previously stated, our
rectification method is general and applies to several operators
beyond the ones mentioned here.  The following example shows the
computation of rectified values of variables/arrays updated in simple
program statements.

\begin{example}
The statement {\tt A\_N[i] = B\_N[i] + v\_N;} in $\PP_{N}$ gives rise
to the statement {\tt A\_N[i] = A\_Nm1[i] + (B\_N[i] + (- B\_Nm1[i]))
  + (v\_N + (- v\_Nm1));} in ${\partial \PP_N}$ that rectifies the
value of {\tt A\_N[i]}.  Similarly, the statement {\tt A\_N[i] =
  B\_N[i] * v\_N;} in $\PP_{N}$ gives rise to the statement {\tt
  A\_N[i] = A\_Nm1[i] * (B\_N[i] * (1/B\_Nm1[i])) * (v\_N *
  (1/v\_Nm1));} under the assumption {\tt B\_Nm1[i] * v\_Nm1} $\neq
0$.
\end{example}

The program ${\PP_N}$ may have statements that aggregate/accumulate
values in scalars.  This kind of statement requires special processing
when generating the difference program ${\partial \PP_N}$.  The next
example shows the computation of rectified values of variables/arrays
in statements that accumulate values in scalar variables.

\begin{example}
Consider the loop {\tt for(i=0; i<N; i++) \{ sum\_N = sum\_N +
  A\_N[i]; \} } in program $\PP_N$. The difference {\tt A\_N[i] + (-
  A\_Nm1[i])} is aggregated over all indices from $0$ through $N-2$.
In this case, the loop in ${\partial \PP_N}$ that rectifies the value
of {\tt sum\_N} has the following form: {\tt sum\_N = sum\_Nm1; for
  (i=0; i<N-1; i++) \{ sum\_N = sum\_N + (A\_N[i] + (- A\_Nm1[i]));
  \}}.  A similar aggregation for multiplicative ratios can also be
shown.
\end{example}

%% Note that the computation of difference statements is not fully
%% optimized. As result the difference statements may have expressions
%% that can be further simplified based on the computation in other
%% statements from the generated difference program ${\partial \PP_N}$.
%% The function \textsc{Simplify} used in \textsc{AssignmentDiff}
%% and \textsc{AggregateAssignmentDiff} performs these optimizations
%% aggressively and simplifies the statements in the difference program.
%% Let us take an example to illustrate the effect of the
%% \textsc{Simplify} function.

%% \begin{example}
%% Suppose the difference program ${\partial \PP_N}$ has statements of
%% the form {\tt B\_N[i] = B\_Nm1[i] + expr1;} and {\tt v\_N =
%% expr2*v\_Nm1;}.  If {\tt expr1} and {\tt expr2} are constants or
%% functions of $N$ and loop counters, then expressions such as {\tt
%% (B\_N[i] - B\_Nm1[i])} and {\tt (v\_N/v\_Nm1)} can often be simplified
%% from the statements in the difference program.  The expression {\tt
%% expr1} is substituted for {\tt B\_N[i] - B\_Nm1[i]} and {\tt expr2}
%% for {\tt v\_N/v\_Nm1} respectively.
%% \end{example}

Conditional branch statements pose a considerable challenge to the
computation of difference programs.  A branch condition may evaluate
to different outcomes in $\PP_N$ and $\PP_{N-1}$, for the same value
of $N$.  When this happens, programs $\PP_N$ and $\PP_{N-1}$ execute
totally unrelated blocks of statements.  In such situations, it is
immensely difficult to rectify the values of variables/arrays computed
along the unrelated branches, and hence, our algorithm avoids doing
so.  Only when we can determine that the condition evaluates to the
same value in $\PP_N$ and $\PP_{N-1}$, we rectify values of
variables/arrays computed along the corresponding branches.  Next we
present examples with branch conditions to highlight this.

\begin{example}
Consider the conditional branch statement {\tt if(t3 == 0) } in line
$10$ of Fig. \ref{fig:ex}.  The branch condition evaluates to the same
value in $\PP_N$ and $\PP_{N-1}$ because the condition has no
dependence on $N$.  Therefore, the branch statement is used as is
during the computation of the difference program.  However, recall
that since the arrays accessed in the program are not affected, none
of the loops are retained in the difference program shown in
Fig. \ref{fig:ex-ind}.

Consider another conditional branch statement {\tt if(A[i] == N)} in
$\PP_N$.  The corresponding statement in $\PP_{N-1}$ is {\tt if(A[i]
  == N-1)}.  Clearly, the conditions in these statements do not
evaluate to the same value in $\PP_N$ and $\PP_{N-1}$.  Thus, our
algorithm flags a failure to compute the difference program and
terminates.
\end{example}

\begin{figure}[h]
\begin{alltt}
x = N;
y = N;
for(i=0; i<N; i++) \{
  if(x == y)
    A[i] = i;
\}
\end{alltt}
\begin{center}(a)\end{center}
\begin{alltt}
x = N-1;
y = N-1;
for(i=0; i<N-1; i++) \{
  if(x == y)
    A[i] = i;
\}
\end{alltt}\begin{center}(b)\end{center}
\caption{Example program (a) $\PP_N$ and (b) $\PP_{N-1}$}
\label{fig:ex-br-diff}
\end{figure}

There are programs where conditional branch statements with dependence
on $N$ evaluate to the same value. For example, consider the program
$\PP_N$ in Fig. \ref{fig:ex-br-diff}(a).  The program $\PP_{N-1}$ is
shown in Fig. \ref{fig:ex-br-diff}(b).  While the branch condition
(indirectly) depends on the value of $N$, it evaluates to the same
value in $\PP_N$ and $\PP_{N-1}$, since the amount of change in the
value of variables {\tt x} and {\tt y} used in the branch condition is
same.  Our algorithm can successfully compute the difference program
in such cases.

The restriction on branch conditions that use affected
variables/arrays can be further relaxed by handling the case when the
condition evaluates to $\true$ in $\PP_{N-1}$ and to $\false$ in
$\PP_N$ by restoring the values of variables/arrays to their values at
the predecessor of the branch node.  However, when a branch condition
evaluates to $\false$ in $\PP_{N-1}$ and to $\true$ in $\PP_N$, the
entire computation within the branch has to be performed again in
$\partial\PP_N$ instead of just executing the rectification code.  For
example, consider the branch statement, {\tt if(i < N) Loop;}.  If the
branch condition {\tt i $<$ N-1} in $\PP_{N-1}$ evaluates to $\false$,
then the condition {\tt i $<$ N} in $\PP_N$ definitely evaluates to
$\true$.  This will require the difference program to execute the
entire computation performed by the code fragment {\tt Loop;} and not
just the difference program corresponding to {\tt Loop;}.  This will
hamper the progress guarantees on the class of programs that our
technique can verify.  Hence, we currently avoid handling these cases
in the algorithms and consider them as a part of future work.
%% Computing the difference of program traces \cite{diff-popl} may be
%%  considered as another way of handling the conditional statements
%%  in the program, since the control structures are eliminated in the
%%  traces.

We now prove the soundness of the routine \textsc{ProgramDiff} from
Algorithm \ref{alg:diff-generation}.  For the following lemma, we
assume that $\partial \PP_N$ is the difference program computed when
function \textsc{ProgramDiff} is invoked on the renamed program
$\PP_N$.  Suppose both $\PP_N^p$ and $\PP_{N-1}; \partial \PP_N$ are
executed from the same initial state $\sigma$.  We assume $\pi:
(n_{0}, n_{1}, \ldots, n_{k})$ to be the path in the CFG of $\partial
\PP_N$ corresponding to the execution of the difference program from
the state obtained after $\PP_{N-1}$ has terminated (in $\PP_{N-1};
\partial \PP_N$).  We assume $\pi': (n_{0}', n_{1}', \ldots, n_{k}')$
to be the corresponding path in the CFG of the peeled program
$\PP^p_N$.

\begin{lemma} \label{lemma:rectified-use-available}
For every node $n_j$ in $\pi$, the rectified values of all
variables/array elements used at $n_j$ during the execution of
$\partial \PP_N$ are identical to the values of the same
variables/array elements at the corresponding node $n'_j$ during the
execution of $\PP^p_N$.
\end{lemma}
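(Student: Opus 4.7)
The plan is to proceed by strong induction on the index $j$ of the node $n_j$ along the path $\pi$ in $\partial\PP_N$, simultaneously maintaining the stronger invariant that (i) for every variable/array element assigned at some $n_i$ with $i<j$, its current value in the $\partial\PP_N$ execution equals its value at $n'_i$ in the $\PP_N^p$ execution, and (ii) the corresponding prefixes of $\pi$ and $\pi'$ stay in lock-step (so $n_{i}$ and $n'_i$ genuinely correspond). The target conclusion is then the specialization of (i) to the uses of $n_j$.

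For the base case $j=0$, no statement of $\partial\PP_N$ has executed yet, so the state at the entry of $\partial\PP_N$ is exactly the post-state of $\PP_{N-1}$. Any variable/array element used at $n_0$ was last written either (a) inside $\PP_{N-1}$, in which case Lemma~\ref{lemma:affected-correct} (via $\PPP{N}{N-1}$) together with Lemma~\ref{lemma:diff-program-without-affected-vars} gives equality with the value at $n'_0$ provided the variable is not in $\affectedvars$, or (b) it is referenced through its $\PP_{N-1}$-version (the \texttt{\_Nm1} copy kept alive by the renaming of Sect.~\ref{sec:renaming}), which is precisely what the operator-algebra identities of Lemmas~\ref{lemma:diffcomp1}--\ref{lemma:diffcomp3} require as input to synthesize the rectified value.

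For the inductive step, I would do a case analysis on $\mu'(n_j)$ as produced by \textsc{ProgramDiff}. If $n_j$ is a peeled node, its right-hand side is the original statement of $\PP_N^p$ and each of its operands is either non-affected (match by Lemmas~\ref{lemma:affected-correct} and \ref{lemma:diff-program-without-affected-vars}) or was rectified earlier along $\pi$ (match by the induction hypothesis). If $n_j$ is a rectification node synthesized by \textsc{NodeDiff} for an affected variable $w$, its right-hand side is an algebraic combination of the \texttt{\_Nm1}-version of $w$ (preserved intact due to renaming plus the no-overwriting property of Lemma~\ref{lemma:peel:no-overwriting}) with the rectified values of the other operands (available by the induction hypothesis, or by the non-affected argument above); Lemma~\ref{lemma:diffcomp2} or \ref{lemma:diffcomp3} then yields exactly the $\PP_N^p$ value at $n'_j$. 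If $n_j$ is a retained conditional branch, \textsc{ProgramDiff} accepted it only because either its guard uses no affected variable or it admits $C_N \Leftrightarrow C_{N-1}$; in either case the inductive hypothesis forces the guard to evaluate identically in $\partial\PP_N$ and in $\PP_N^p$ at $n'_j$, re-establishing the lock-step between $\pi$ and $\pi'$ for the next step.

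The main obstacle I anticipate is the simultaneous availability, at each rectification node, of both the \texttt{\_Nm1}-version and the rectified \texttt{\_N}-version of each operand (including array elements whose last writer may be far back in $\pi$). Closing this gap cleanly requires combining three separate guarantees: the separation of the two versions via renaming (Lemma~\ref{lemma:rename:no-overwriting}), the fact that peeling does not destroy no-overwriting for non-peel writes (Lemma~\ref{lemma:peel:no-overwriting}), and the closure conditions on $\affectedvars$ (Lemma~\ref{lemma:not-affected}) which ensure that any operand not carrying a rectified value is genuinely safe to read directly from the post-state of $\PP_{N-1}$. Once these are aligned, applying the algebraic identities of Lemmas~\ref{lemma:diffcomp1}--\ref{lemma:diffcomp3} is mechanical and the induction closes.
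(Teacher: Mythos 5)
Your proposal is correct and follows essentially the same route as the paper's own proof: split on whether the operand is affected (non-affected handled via Lemma~\ref{lemma:diff-program-without-affected-vars} and no-overwriting, affected handled by induction along $\pi$ with Lemmas~\ref{lemma:diffcomp1}--\ref{lemma:diffcomp3} supplying the rectified value from the unrectified \texttt{\_Nm1} version and the inductively available rectified operands). Your version is slightly more explicit than the paper's about maintaining the lock-step correspondence of $\pi$ and $\pi'$ across retained branch nodes, but this is a refinement of the same argument, not a different one.
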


\begin{proof}
  If $vA$ is not identified as an affected variable/array by function
  \textsc{ComputeAffected}, the result follows from the proof of
  Lemma~\ref{lemma:diff-program-without-affected-vars} and the
  no-overwriting property of renaming.

  If $vA$ is identified as an affected variable, we induct on the
  length of $\pi$.  The only difference in this case is that we also
  need to consider the assignment statements modified by function
  \textsc{NodeDiff} at lines \ref{line:diff:stdiff-loop} and
  \ref{line:diff:stdiff} of function \textsc{ProgramDiff}.  Lemmas
  \ref{lemma:diffcomp1}, \ref{lemma:diffcomp2} and
  \ref{lemma:diffcomp3} guarantee the correctness of the rectified
  value of $vA$ computed by these additional statements, given the
  unrectified value of $vA$ and rectified and unrectified values of
  all variables and array elements used in the right hand side of the
  assignment.  By the inductive hypothesis, the rectified values of
  the latter set of variables and array elements are indeed available.
  By the no-overwriting property, the unrectified values of $vA$ and
  all other variables and arrays used in the right hand side of the
  assignment are also available.  Therefore, the correct rectified
  value of $vA$ is computed at each node in $\pi$.

  Finally, note that once a rectified value is generated at a non-glue
  node in the difference program, renaming ensures that it is not
  re-defined by subsequent statements in the difference program.
  Therefore, rectified values, once computed in the difference
  program, are available for use at subsequent nodes in the execution
  path.  Putting the above parts together completes the proof.
\end{proof}

\begin{theorem}
\label{lemma:diff-gen-sound}
$\partial \PP_N$ generated by \textsc{ProgramDiff} is such that, for
all $N > 1$, $\{\varphi(N)\} \;\PP_{N-1};\partial \PP_N \;
\{\psi(N)\}$ holds iff $\{\varphi(N)\} \;\PP_{N} \; \{\psi(N)\}$
holds.
\end{theorem}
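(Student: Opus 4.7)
The plan is to reduce the bi-implication to showing that $\PP_N$ and $\PP_{N-1};\partial \PP_N$, executed from any state $\sigma$ satisfying $\varphi(N)$, produce identical final values for every variable/array that appears in $\psi(N)$. Once this is established, the two Hoare triples stand or fall together. The first step is to appeal to Lemma~\ref{lemma:alg-peelloops} to replace $\PP_N$ with its peeled counterpart $\PP_N^p$, which is Hoare-equivalent. It then suffices to show, for each variable/array $vA$ appearing in $\psi(N)$, that its final value under $\PP_N^p$ matches its final value under $\PP_{N-1};\partial \PP_N$.

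The second step is a case split on whether $vA$ is in the set $\affectedvars$ computed by \textsc{ComputeAffected}. If $vA$ is not affected, I would chain the semantics-preserving transformations of Sect.~\ref{sec:diff-prog}: Lemma~\ref{lemma:transform-pn-correct} gives equivalence of $\PP_N^p$ and $\TRPP_N^p$ on $vA$, Lemma~\ref{lemma:sound-reordering-wo-affected} preserves $vA$ across reordering into $\TRPP_N^o$, Lemma~\ref{lemma:reverse} carries it to $\PP_N^o \equiv \PPP{N}{N-1};\Peel{\PP_N^p}$, and finally Lemma~\ref{lemma:diff-program-without-affected-vars} gives agreement with $\PP_{N-1};\Peel{\PP_N^p}$. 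The remaining subtlety is that $\partial \PP_N$ is not literally $\Peel{\PP_N^p}$: Algorithm~\ref{alg:diff-generation} additionally retains rectification statements writing to affected variables/arrays. However, by the no-overwriting property (Lemma~\ref{lemma:peel:no-overwriting}) and the fact that rectification code only defines affected quantities, these extra statements cannot alter the final value of the non-affected $vA$, so the conclusion transfers to $\PP_{N-1};\partial \PP_N$.

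If $vA$ is affected, I would invoke Lemma~\ref{lemma:rectified-use-available} directly. Let $n_k$ be the last node along the execution path $\pi$ of $\partial \PP_N$ at which $vA$ is updated, and let $n_k'$ be the corresponding node along the path $\pi'$ in $\PP_N^p$. The lemma asserts that the rectified value written at $n_k$ coincides with the value written at $n_k'$. Since no later node on either path re-defines $vA$ (by the no-overwriting property on the renamed, peeled program), the terminal values of $vA$ on the two paths must agree. Combining the affected and non-affected cases yields that all variables/arrays referenced in $\psi(N)$ receive identical final values, which entails the required bi-implication.

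The main obstacle is ensuring that the correspondence between the paths $\pi$ and $\pi'$ used in Lemma~\ref{lemma:rectified-use-available} is actually well-defined along conditional branches. This rests on the safeguard in \textsc{NodeDiff} (line~\ref{line:diff:branch}), which throws a failure unless every retained branch provably evaluates to the same truth value in $\PP_N$ and $\PP_{N-1}$; because the theorem only concerns a $\partial \PP_N$ that \textsc{ProgramDiff} successfully returned, we may assume this condition holds on all surviving branches. The remaining care involves handling branches whose guards mention only non-affected variables, where Lemmas~\ref{lemma:affected-correct} and \ref{lemma:diff-program-without-affected-vars} are needed to conclude that the guards evaluate identically along the paired executions so that $\pi$ and $\pi'$ continue to track each other node-for-node until termination.
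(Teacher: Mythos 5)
Your proposal is correct and follows essentially the same route as the paper's proof: reduce to the peeled program via Lemma~\ref{lemma:alg-peelloops}, then establish that $\PP_N^p$ and $\PP_{N-1};\partial\PP_N$ agree on the final values of the variables/arrays mentioned in $\psi(N)$. The only difference is presentational — the paper simply cites Lemma~\ref{lemma:rectified-use-available} for the second step, whereas you inline the affected/non-affected case split and the branch-correspondence concerns that are already carried inside that lemma's own proof.
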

\begin{proof}
  Lemma~\ref{lemma:alg-peelloops} guarantees that $\{\varphi(N)\}
  \;\PP_{N} \; \{\psi(N)\}$ holds iff $\{\varphi(N)\} \;\PP_{N}^p \;
  \{\psi(N)\}$ holds.  Furthermore,
  Lemma~\ref{lemma:rectified-use-available} ensures that for every
  state $\sigma$ satisfying $\varphi(N)$, if we execute $\PP_N^p$ and
  $\PP_{N-1}; \partial \PP_N$ starting from $\sigma$, then if the
  final state after termination of $\PP_N^p$ satisfies $\psi(N)$, so
  does the final state after termination of $\PP_{N-1}; \partial
  \PP_N$.  This proves the theorem.
\end{proof}

\begin{example}
We illustrate the difference computation performed by the routine
\textsc{ProgramDiff} in Algorithm~\ref{alg:diff-generation} on our
running example.  Fig.~\ref{fig:ss-diff} shows the difference program
obtained after executing the algorithm on the program in
Fig.~\ref{fig:ss-peel}.  Notice that while some program statements in
Fig.~\ref{fig:ss-diff} are syntactically similar to corresponding
statements in Fig.~\ref{fig:ss-peel}, the additional statements
(e.g. at lines 3, 6 and 8) in Fig.~\ref{fig:ss-diff} have no syntactic
counterpart in Fig.~\ref{fig:ss-peel}.  For the first loop, since
variable {\tt S} is not affected, only the peeled iteration is
retained.  Since {\tt A1} and {\tt S1} both are affected, the
statements in the second and the third loop are replaced with
statements that rectify their values, along with inserting the peeled
statements for both loops.  Initialization of variable {\tt S1} is
also replaced with the statement that rectifies its value since it
depends on the value of {\tt S} computed in a peel.
\end{example}

%\begin{wrapfigure}[20]{r}{0.38\textwidth}
%\vspace{-2ex}

\begin{figure}[h]
%% \begin{minipage}{0.53\linewidth}
\begin{alltt}
// assume(\(\forall\)i\(\in\)[0,N) A[i] = 1)

1.  S = S_Nm1 + A[N-1];

2.  for(i=0; i<N-1; i++) \{
3.    A1[i] = A1_Nm1[i] + (S - S_Nm1);
4.  \}
5.  A1[N-1] = A[N-1] + S;

6.  S1 = S1_Nm1 + (S - S_Nm1);
7.  for(i=0; i<N-1; i++) \{
8.    S1 = S1 + (A1[i] - A1_Nm1[i]);
9.  \}
10. S1 = S1 + A1[N-1];

// assert(S1 = N \(\times\) (N+2))
\end{alltt}
%% \end{minipage}
%% \begin{minipage}{0.46\linewidth}
%% \caption
%% { Difference program obtained after executing the routine
%%   \textsc{ProgramDiff} in Algorithm~\ref{alg:diff-generation} on the
%%   peeled program in Fig.~\ref{fig:ss-peel} without invoking
%%   \textsc{Simplify}. \newline For the first loop, since variable {\tt
%%     S} is not affected, only the peeled iteration is retained.  Since
%%          {\tt A1} and {\tt S1} both are affected, the statements
%%          within the second and the third loop are replaced with
%%          difference statements that rectify their values, along with
%%          inserting the peeled statements for both loops.  Note that
%%          the initialization of {\tt S1} is also replaced with its
%%          difference statement.  }
%% \end{minipage}
\caption{Hoare triple with the difference program}
\label{fig:ss-diff}
\end{figure}

%\end{wrapfigure}
  %% Difference program of the running example

  %% Section on generating the difference program

%\paragraph{\bfseries Simplifying the Difference Program.}
\subsection{Simplifying the Difference Program}
\label{sec:simp-diff}

\begin{algorithm*}[!t]
  \caption{\textsc{SimplifyDiff}($(\mathit{Locs}, CE, \mu)$: difference program $\partial \PP_N$)}
  \label{alg:simpdiff}
%  \scriptsize
  \begin{algorithmic}[1]
%    \State $ \partial \PP_N := (Locs, CE, \mu)$;
    \State $ \partial \PP'_N := (Locs', CE', \mu')$,
    where $Locs' := Locs$, $CE' := CE$, and $\mu' := \mu$;
    \For{each loop $L \in \textsc{Loops}(\partial \PP'_N)$}
      \For{each node $m \in \textsc{Nodes}(L)$} \label{line:simp:simplify1}
        \State $\mu'(m)$ := \textsc{Simplify}($\mu'(m)$); \label{line:simp:simplify2}  \Comment{Simplify the statement}
      \EndFor
      \State $(n_1,n,c) := \textsc{IncomingEdge}(L)$;  \Comment{$c$ is the label of the edge from $n_1$ to $n$}
      \State$(n,n_2,\lfalse):=\textsc{ExitEdge}(L)$;
      \If{body of $L$ is of the form $w_{N} := w_{N}$ {\tt op} $expr$, wherein $w_{N}$ is a scalar variable}  \label{line:simp:check-form}
        \State $n_{acc}$ = $\textsc{FreshNode}()$;  \label{line:simp:fresh-node}
        \If{{\tt op} $\in \{ +, -\}$}     \label{line:simp:accelerate-begin}
          \State $\mu'(n_{acc})$ := ($w_{N} := w_{N}$ {\tt op} \textsc{Simplify}$(k_L(N-1) \times expr)$); \Comment{Accelerated statement}
        \ElsIf{{\tt op} $\in \{\times, \div \}$}
          \State $\mu'(n_{acc})$ := ($w_{N} := w_{N}$ {\tt op} \textsc{Simplify}$(expr^{k_L(N-1)})$);  \Comment{Accelerated statement}
        \Else {}
          \State {\bf throw} ``Specified operator not handled'';  \label{line:simp:throw}
        \EndIf    \label{line:simp:accelerate-end}
        \State $CE' := CE'$ $\union$ $\{(n_1, n_{acc},c), (n_{acc}, n_2, \Unlabeled)\}$ $\setminus$ $\{(n_1,n,c), (n, n_2, \lfalse)\}$;  \label{line:simp:add-remove-begin}
        \State $Locs'$ := $Locs'$ $\union$ $\{ n_{acc} \}$ $\setminus$ $\textsc{Nodes}(L)$;  \label{line:simp:add-remove-end}
      \EndIf
      \If{body of $L$ is of the form $w_{N} := w_{N-1}$ or $w_{N} := w_{N}$}  \label{line:simp:redundant-begin}  \Comment{Remove redundant loops}
        \State $CE' := CE'$ $\union$ $\{(n_1, n_2,c)\}$ $\setminus$ $\{(n_1,n,c), (n, n_2, \lfalse)\}$;
        \State $Locs' := Locs'$ $\setminus$ $\textsc{Nodes}(L)$;
      \EndIf  \label{line:simp:redundant-end}
    \EndFor
    \State \Return $\partial \PP'_N$;
  \end{algorithmic}
\end{algorithm*}
  %% Algorithm SimplifyDiff to simplify the difference program 

While we have described a simple strategy to generate a difference
program ${\partial \PP_N}$ above, this may lead to unoptimized as well
as redundant statements in the naively generated difference program.
Our implementation aggressively optimizes $\partial \PP_N$ and removes
redundant code, renaming variables/arrays as needed.  The routine
\textsc{SimplifyDiff} in Algorithm~\ref{alg:simpdiff} simplifies
program statements that compute rectified values, removes redundant
loops from the difference program and substitutes loops with the
summarized statements computed using acceleration.  This helps in
${\partial \PP_N}$ having fewer and simpler loops in a lot of cases.
Below, we describe these optimizations and illustrate them using
examples.

Since the generation of statements that compute rectified values is
not fully optimized, these statements may have expressions that can be
further simplified using the values computed in other statements in
the generated difference program ${\partial \PP_N}$.  The function
\textsc{Simplify} performs this optimization aggressively and
simplifies the statements in the difference program
(lines~\ref{line:simp:simplify1}--\ref{line:simp:simplify2} in
Algorithm~\ref{alg:simpdiff}).  Let us take an example to illustrate
the effect of the \textsc{Simplify} function.

\begin{example}
Suppose the difference program ${\partial \PP_N}$ has statements of
the form {\tt B\_N[i] = B\_Nm1[i] + expr1;} and {\tt v\_N =
expr2*v\_Nm1;}.  If {\tt expr1} and {\tt expr2} are constants or
functions of $N$ and loop counters, then expressions such as {\tt
(B\_N[i] - B\_Nm1[i])} and {\tt (v\_N/v\_Nm1)} can often be simplified
from the statements in the difference program.  The expression {\tt
expr1} is substituted for {\tt B\_N[i] - B\_Nm1[i]} and {\tt expr2}
for {\tt v\_N/v\_Nm1} respectively.
\end{example}

The difference program ${\partial \PP_N}$ may contain loops that
perform redundant computation, for example, copying values across
versions of an array corresponding to $\PP_N$ and $\PP_{N-1}$, due to
the simplification of the statements that compute its rectified value.
We remove such loops from ${\partial \PP_N}$ in lines
\ref{line:simp:redundant-begin}--\ref{line:simp:redundant-end} of
Algorithm \ref{alg:simpdiff}.  Let us illustrate this with an example.

\begin{example}
Suppose the difference program $\partial \PP_N$ has the loop {\tt
  for(i=0; i<N-1; i++) \{ A\_N[i] = A\_Nm1[i]; \}} where {\tt A\_N} is
not used subsequently in the program.  Such loops can be removed from
the difference program, as these loops only copy values from the
version of array {\tt A} in $\PP_{N-1}$ to its version in $\PP_N$, and
hence, are redundant.
\end{example}

The difference program ${\partial \PP_N}$ may also contain loops that
compute values of variables that can be accelerated.  We perform this
optimization in lines
\ref{line:simp:check-form}--\ref{line:simp:add-remove-end} of
\textsc{SimplifyDiff} in Algorithm~\ref{alg:simpdiff}.  We first check
if the body of a loop $L$ is in the specific form eligible for this
optimization in line \ref{line:simp:check-form}.  If so, we create a
fresh node in line \ref{line:simp:fresh-node} to replace $L$.  Lines
\ref{line:simp:accelerate-begin}--\ref{line:simp:accelerate-end} of
Algorithm \ref{alg:simpdiff} label the fresh node with the accelerated
statement.  If we encounter operators that are not supported, then we
report a failure of our technique using the {\bf throw} statement in
line \ref{line:simp:throw}.  Next, we replace the loop with the fresh
node in lines
\ref{line:simp:add-remove-begin}--\ref{line:simp:add-remove-end}.  We
demonstrate this optimization with the following example.

\begin{example}
Suppose the difference program has the loop {\tt for(i=0; i<N-1; i++)
  \{ sum = sum + 1; \}}.  The semantics of the loop can be summarized
using the accelerated statement {\tt sum = sum + (N-1);}.
\textsc{SimplifyDiff} removes this loop from the program and
introduces the accelerated statement instead.
\end{example}

In the following lemma, we use $\partial \PP'_N$ to denote the program
generated by \textsc{SimplifyDiff}.

\begin{lemma}
\label{lemma:simp-sound}
$\{\varphi(N)\} \;\PP_{N-1};{\partial \PP'_N} \;
\{\psi(N)\}$ holds iff $\{\varphi(N)\} \;\PP_{N-1};{\partial \PP_N} \;
\{\psi(N)\}$ holds.
\end{lemma}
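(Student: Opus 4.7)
The plan is to prove the lemma by showing that $\partial \PP'_N$ and $\partial \PP_N$ are semantically equivalent with respect to all variables and array elements that are live at the exit of $\partial \PP_N$ (in particular, those referred to by $\psi(N)$). Since both Hoare triples share the pre-condition $\varphi(N)$ and the prefix $\PP_{N-1}$, it is enough to show that, for every intermediate state $\sigma'$ reachable after executing $\PP_{N-1}$ from a $\varphi(N)$-state, the terminal states produced by $\partial \PP'_N$ and $\partial \PP_N$ starting from $\sigma'$ agree on every scalar/array element mentioned in $\psi(N)$. The implication will then follow in both directions.

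I would structure the argument as a case analysis over the three local rewrites that \textsc{SimplifyDiff} applies loop by loop, showing that each one preserves the exit-state of the loop being transformed on all variables/arrays that might be used subsequently. The three cases are: (i) in-place invocation of \textsc{Simplify} on each statement label $\mu'(m)$ inside a loop (lines \ref{line:simp:simplify1}--\ref{line:simp:simplify2}); (ii) removal of a loop whose body has the form $w_{N} := w_{N-1}$ or $w_{N} := w_{N}$ (lines \ref{line:simp:redundant-begin}--\ref{line:simp:redundant-end}); and (iii) replacement of a loop with body $w_{N} := w_{N}$ {\tt op} $\mathit{expr}$ by a single accelerated statement (lines \ref{line:simp:accelerate-begin}--\ref{line:simp:accelerate-end}).

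For case (i), the correctness of \textsc{Simplify} (it replaces expressions with semantically equivalent expressions) immediately gives that each statement preserves its input/output behaviour, hence so does the whole loop. For case (ii), if the body is $w_{N} := w_{N}$, the loop is a no-op, so deleting it is trivially sound; if the body is $w_{N}[i] := w_{N-1}[i]$, renaming (Lemma \ref{lemma:rename:no-overwriting}) together with the condition for this rewrite guarantees that $w_{N}$ is not referenced by any subsequent statement nor by $\psi(N)$, so deleting the copy loop cannot alter any variable/array of interest in the terminal state. For case (iii), I would unroll the loop $k_L(N-1)$ times from the initial value $w_N^{(0)}$ and use the associativity (and, where needed, commutativity) of $+,-,\times,\div$ to obtain the closed form $w_N^{(0)}\;\mathtt{op}\;(k_L(N-1)\cdot \mathit{expr})$ in the additive case and $w_N^{(0)}\;\mathtt{op}\;\mathit{expr}^{k_L(N-1)}$ in the multiplicative case; this is exactly the expression inserted by the algorithm. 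The argument that the body of the loop does not modify any other scalar/array is immediate from the syntactic shape enforced in line \ref{line:simp:check-form}.

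The main obstacle will be case (iii): making rigorous the fact that $\mathit{expr}$ is genuinely loop-invariant, i.e.\ that across all $k_L(N-1)$ iterations the value of $\mathit{expr}$ does not change. This requires two observations about programs produced by \textsc{ProgramDiff}: first, by the grammar of Fig.~\ref{fig:grammar} a loop body is a single assignment statement, so the only variable written in one iteration is $w_N$ itself; second, the syntactic pattern $w_N := w_N\;\mathtt{op}\;\mathit{expr}$ tested in line \ref{line:simp:check-form} requires $\mathit{expr}$ to be an expression built from constants, $N$ and the loop counter, but any occurrence of the loop counter in $\mathit{expr}$ would violate the pattern being accelerated (since the accelerated closed form treats $\mathit{expr}$ as a constant with respect to iteration). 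Under the assumption that the syntactic check in line \ref{line:simp:check-form} is interpreted to enforce loop-invariance of $\mathit{expr}$, the acceleration is sound, and the three cases together give the claimed equivalence of the two Hoare triples.
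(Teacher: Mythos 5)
Your proposal is correct and follows essentially the same route as the paper, whose own proof is a one-sentence assertion that the three local rewrites of \textsc{SimplifyDiff} (statement simplification, removal of redundant copy/no-op loops, and acceleration) are each semantics-preserving on the variables of interest. Your case analysis simply fills in the details the paper leaves implicit, and you correctly isolate the one genuinely non-trivial point — that acceleration is sound only when $\mathit{expr}$ is loop-invariant, which the syntactic check of line \ref{line:simp:check-form} (together with the prior \textsc{Simplify} pass) is meant to guarantee.
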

\begin{proof} Follows trivially from the fact that
\textsc{SimplifyDiff} in Algorithm~\ref{alg:simpdiff} optimizes
program statements, removes only redundant statements/loops, and
replaces loops using semantically equivalent accelerated statements.
\end{proof}

%\begin{wrapfigure}[15]{r}{0.39\textwidth}
%\vspace{-2ex}

\begin{figure}[h]
%% \begin{minipage}{0.53\linewidth}
\begin{alltt}
// assume(\(\forall\)i\(\in\)[0,N) A[i] = 1)

1.  S = S_Nm1 + A[N-1];

2.  for(i=0; i<N-1; i++) \{
3.    A1[i] = A1_Nm1[i] + 1;
4.  \}
5.  A1[N-1] = A[N-1] + S;

6.  S1 = S1_Nm1 + A[N-1];
7.  S1 = S1 + (N-1);
8.  S1 = S1 + A1[N-1];

// assert(S1 = N \(\times\) (N+2))
\end{alltt}
%% \end{minipage}
%% \begin{minipage}{0.46\linewidth}
%% \caption
%% { Program obtained after simplification using the routine
%%   \textsc{SimplifyDiff} from Algorithm~\ref{alg:simpdiff} on the
%%   program in Fig.~\ref{fig:ss-diff} after the \textsc{Simplify}
%%   function replaces the difference terms with their corresponding
%%   values.  Notice that the second loop is accelerated and the
%%   statement obtained after acceleration is shown in line 7.  }
%% \end{minipage}
\caption{Simplification of the difference program}
\label{fig:ss-diff-simp}
\end{figure}

%\end{wrapfigure}
  %% Simplified difference program of the running example

\begin{example}
We illustrate the application of the simplification routine
\textsc{SimplifyDiff} from Algorithm~\ref{alg:simpdiff} on our running
example.  The program in Fig.~\ref{fig:ss-diff-simp} is obtained after
simplification of $\partial \PP_N$ in Fig.~\ref{fig:ss-diff}.  The
difference terms are replaced with the simplified expressions from the
difference program itself.  Notice that the loop that rectifies the
value of {\tt S1} is accelerated and the statement obtained after this
optimization is shown in line 7.
\end{example}

%% For the ease of presentation and analysis, we henceforth assume that
%% assignment statements refer to at most three varaibles and one binary
%% operator.  Statements that use multiple operators and more than three
%% variables can be easily converted to this form by introducing
%% temporary variables.  This is similar to the three address code
%% intermediate representation generated and used by all modern
%% compilers.
  %% Section on simplifying the difference program

%\paragraph{\bfseries Generating the Difference Pre-condition $\mathbf{\partial \varphi(N)}$.}
\subsection{Generating the Difference Pre-condition $\mathbf{\partial \varphi(N)}$}
\label{sec:diff-pre}

\begin{algorithm*}[!t]
  \caption{\textsc{SyntacticDiff}({$\varphi(N)$}: pre-condition)}
  \label{alg:synt-diff}
%  \scriptsize
  \begin{algorithmic}[1]
    \If{$\varphi(N)$ is of the form $\forall i \in \{0 \ldots N\}\; \widehat{\varphi}(i)$}
      \If{$\varphi(N) \rightarrow \varphi(N-1)$ is invalid} \label{dpre:line:testconj}
        \State {\bf throw} ``Unable to compute the difference pre-condition"; \label{dpre:line:throw1}
      \EndIf
      \State ${\partial \varphi(N)} := \widehat{\varphi}(N)$; \label{dpre:line:base1}
    \ElsIf{$\varphi(N)$ is of the form $\exists i \in \{0 \ldots N\}\; \widehat{\varphi}(i)$}
      \If{$\varphi(N-1) \rightarrow \varphi(N)$ is invalid} \label{dpre:line:testdisj}
        \State {\bf throw} ``Unable to compute the difference pre-condition"; \label{dpre:line:throw2}
      \EndIf
      \State ${\partial \varphi(N)} := \widehat{\varphi}(N)$; \label{dpre:line:base2}
    \ElsIf{$\varphi(N)$ is of the form $\varphi^1(N)$ $\wedge$ $\cdots$ $\wedge$ $\varphi^k(N)$}
      \State ${\partial \varphi(N)} := $\textsc{SyntacticDiff}({$\varphi^1(N)$}) $\boldsymbol{\vee}$ $\cdots$ $\boldsymbol{\vee}$ \textsc{SyntacticDiff}({$\varphi^k(N)$}); \label{dpre:line:conjunct}
    \ElsIf{$\varphi(N)$ is of the form $\varphi^1(N)$ $\vee$ $\cdots$ $\vee$ $\varphi^k(N)$}
      \State ${\partial \varphi(N)} := $\textsc{SyntacticDiff}({$\varphi^1(N)$}) $\vee$ $\cdots$ $\vee$ \textsc{SyntacticDiff}({$\varphi^k(N)$}); \label{dpre:line:disjunct}
    \Else \label{dpre:line:default}
      \State ${\partial \varphi(N)} := \true$; \label{dpre:line:dptrue1}
    \EndIf
    \If{$\PP_{N-1}$ updates scalars or array elements in $\partial \varphi(N)$} \label{dpre:line:case2a}
      \State ${\partial \varphi(N)} := \true$; \label{dpre:line:dptrue2}
    \EndIf
    \State \Return $\partial \varphi(N)$;
  \end{algorithmic}
\end{algorithm*}
  %% SyntacticDiff Algorithm

We now present a syntactic routine, called
\textsc{SyntacticDiff}, in Algorithm \ref{alg:synt-diff} for
generation of the difference pre-condition ${\partial \varphi(N)}$.
Although this suffices for all our experiments, for the sake of
completeness, we present later a more sophisticated algorithm for
generating ${\partial \varphi(N)}$ simultaneously with $\ppre(N)$ in
Sect. \ref{sec:fpi-ext}.

%% Formally, given $\varphi(N)$, the function \textsc{SyntacticDiff} from
%% Algorithm \ref{alg:synt-diff} generates a formula
%% ${\partial \varphi(N)}$ such that $\varphi(N)$ $\rightarrow$
%% $(\varphi(N-1) \wedge {\partial \varphi(N)})$. Observe that if such a
%% ${\partial \varphi(N)}$ exists, then $\varphi(N)$ $\rightarrow$
%% $\varphi(N-1)$ holds as well.  Therefore, we can use the validity of
%% $\varphi(N)$ $\rightarrow$ $\varphi(N-1)$ as a test to decide the
%% existence of ${\partial \varphi(N)}$.

Formally, given $\varphi(N)$, the function \textsc{SyntacticDiff} from
Algorithm \ref{alg:synt-diff} generates a formula
${\partial \varphi(N)}$ such that $\varphi(N)$ $\rightarrow$
$(\varphi(N-1) \odot {\partial \varphi(N)})$, where $\odot$ is
$\wedge$ when $\varphi(N)$ is a universally quantified formula and is
$\vee$ when $\varphi(N)$ is a existentially quantified formula.
Observe that if such a ${\partial \varphi(N)}$ exists for universally
quantified formulas $\varphi(N)$, then $\varphi(N)$ $\rightarrow$
$\varphi(N-1)$ must hold.  Similarly, if such a
${\partial \varphi(N)}$ exists for existentially quantified formulas
$\varphi(N)$, then $\varphi(N-1)$ $\rightarrow$ $\varphi(N)$ must
hold.  Therefore, we can use the validity of $\varphi(N)$
$\rightarrow$ $\varphi(N-1)$ and $\varphi(N-1)$ $\rightarrow$
$\varphi(N)$, as a test to decide the existence of
${\partial \varphi(N)}$ for universally and existentially quantified
formulas respectively.

Algorithm \ref{alg:synt-diff} incorporates the scenarios described
above and boolean combinations thereof.  When $\varphi(N)$ is of the
syntactic form $\forall i\in \{0 \ldots N\}\; \widehat{\varphi}(i)$,
we first check the validity of $\varphi(N)$ $\rightarrow$
$\varphi(N-1)$ in line \ref{dpre:line:testconj}.  If this test fails,
we report failure using the {\bf throw} statement in line
\ref{dpre:line:throw1}.  Otherwise, ${\partial \varphi(N)}$ is set to
$\widehat{\varphi}(N)$ in line \ref{dpre:line:base1}.  Similarly, when
$\varphi(N)$ is of the syntactic form $\exists i\in \{0 \ldots N\}\;
\widehat{\varphi}(i)$, then ${\partial \varphi(N)}$ is set to
$\widehat{\varphi}(N)$ in line \ref{dpre:line:base2}, after checking
the validity of the $\varphi(N-1)$ $\rightarrow$ $\varphi(N)$ (line
\ref{dpre:line:testdisj}).  If the test in line
\ref{dpre:line:testdisj} fails, again we report failure using the {\bf
  throw} statement in line \ref{dpre:line:throw2}.  When $\varphi(N)$
is of the syntactic form $\varphi^1(N)$ $\wedge$ $\cdots$ $\wedge$
$\varphi^k(N)$, ${\partial \varphi(N)}$ is computed by taking the
difference of each individual conjunct and {\em disjuncting} them as
${\partial \varphi^1(N)}$ $\vee$ $\cdots$ $\vee$ ${\partial
  \varphi^k(N)}$ (line \ref{dpre:line:conjunct}).  Note that this
operation results in an over-approximation of the difference
pre-condition.  When $\varphi(N)$ is of the form $\varphi^1(N)$ $\vee$
$\cdots$ $\vee$ $\varphi^k(N)$, ${\partial \varphi(N)}$ is computed by
taking the difference of each individual disjunct as ${\partial
  \varphi^1(N)}$ $\vee$ $\cdots$ $\vee$ ${\partial \varphi^k(N)}$
(line \ref{dpre:line:disjunct}).  Finally, if $\varphi(N)$ does not
belong to any of these syntactic forms (line \ref{dpre:line:default})
or if condition 2(a) of Theorem \ref{thm:full-prog-ind-sound} is
violated by the ${\partial \varphi(N)}$ computed in this manner (line
\ref{dpre:line:case2a}), then we over-approximate ${\partial
  \varphi_N}$ by $\true$ in lines \ref{dpre:line:dptrue1} and
\ref{dpre:line:dptrue2}.

%% For a large fraction of our benchmarks, the pre-condition $\varphi(N)$
%% was $\true$, and hence ${\partial \varphi(N)}$ was also $\true$.

\begin{lemma}\label{lemma:diff-pre-condition}
  The difference pre-condition $\partial\varphi(N)$ computed by
  \textsc{SyntacticDiff} is such that (i) $\varphi(N)$ $\rightarrow$
  $(\varphi(N-1) \odot {\partial \varphi(N)})$, where $\odot \in$
  \{$\wedge$, $\vee$\}, and (ii) $\PP_{N-1}$ does not modify
  variables/arrays in $\partial\varphi(N)$.
\end{lemma}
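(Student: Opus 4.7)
The plan is to prove properties (i) and (ii) by structural induction on the syntactic form of $\varphi(N)$, mirroring the case analysis of \textsc{SyntacticDiff}. For (i), I will choose $\odot := \wedge$ when $\varphi(N)$ is (a conjunction reducing to) a universally quantified formula, and $\odot := \vee$ otherwise. For each syntactic case I need to exhibit the appropriate implication, and I will treat the final guard at line~\ref{dpre:line:case2a} as a post-processing step whose effect on (i) must be checked separately.

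In the universal base case $\varphi(N) = \forall i \in \{0,\ldots,N\}\,\widehat{\varphi}(i)$, the check at line~\ref{dpre:line:testconj} succeeds, giving $\varphi(N) \rightarrow \varphi(N-1)$; since $\varphi(N)$ also entails $\widehat{\varphi}(N) = \partial\varphi(N)$, I obtain $\varphi(N) \rightarrow \varphi(N-1) \wedge \partial\varphi(N)$. The existential case is dual: a witness $i \leq N$ either lies in $\{0,\ldots,N-1\}$, yielding $\varphi(N-1)$, or equals $N$, yielding $\widehat{\varphi}(N)$; the check at line~\ref{dpre:line:testdisj} is automatic in this form, so $\varphi(N) \rightarrow \varphi(N-1) \vee \partial\varphi(N)$ holds. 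For the compositional cases I invoke the induction hypothesis on each $\varphi^j(N)$. In the conjunction case (line~\ref{dpre:line:conjunct}), combining the IH implications for each conjunct gives $\varphi(N) \rightarrow \bigwedge_j \varphi^j(N-1) \wedge \bigwedge_j \partial\varphi^j(N)$; weakening the second conjunction to the disjunction $\bigvee_j \partial\varphi^j(N)$ returned by the algorithm preserves the implication and matches the declared $\partial\varphi(N)$. The disjunction case (line~\ref{dpre:line:disjunct}) is immediate by distributing $\vee$ over the per-disjunct implications. Finally, in the fall-through at line~\ref{dpre:line:dptrue1}, $\partial\varphi(N) = \true$ trivially validates (i) with $\odot := \vee$.

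Property (ii) is enforced directly by the guard at line~\ref{dpre:line:case2a}: if any scalar or array element referenced in the computed $\partial\varphi(N)$ is modified by $\PP_{N-1}$, the formula is overwritten to $\true$, which refers to no program variables. The main subtlety, and the step I expect to be the central obstacle, is showing that this overwrite does not break (i) in any of the earlier cases. I plan to dispatch this by observing that replacing $\partial\varphi(N)$ with a logically weaker formula can only relax the implication $\varphi(N) \rightarrow \varphi(N-1) \odot \partial\varphi(N)$: for $\odot = \vee$ this is immediate since $\true$ dominates any disjunct, and for $\odot = \wedge$ it requires $\varphi(N) \rightarrow \varphi(N-1)$, which was explicitly verified at line~\ref{dpre:line:testconj} in the only case (universal) where we selected $\odot = \wedge$. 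Consequently both (i) and (ii) hold for the formula actually returned by \textsc{SyntacticDiff}, completing the proof.
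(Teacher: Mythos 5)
Your overall skeleton is the same as the paper's (very terse) proof: condition (i) is discharged case-by-case from the validity checks inside \textsc{SyntacticDiff}, and condition (ii) from the final guard at line~\ref{dpre:line:case2a}; your explicit verification that the overwrite to $\true$ cannot break (i) is a worthwhile addition that the paper leaves implicit. However, there is a genuine gap in your existential base case. You argue that a witness $i\le N$ for $\widehat{\varphi}(i)$ either equals $N$ (giving $\partial\varphi(N)$) or lies in $\{0,\dots,N-1\}$ (giving $\varphi(N-1)$), and that the check at line~\ref{dpre:line:testdisj} is therefore ``automatic.'' Both halves fail when $\widehat{\varphi}$ itself mentions the parameter $N$, which the admitted pre-condition forms $\exists I\,(\Phi(I,N)\wedge\Psi(\mathcal{A},\mathcal{V},I,N))$ explicitly allow: $\varphi(N-1)$ is obtained by substituting $N-1$ for $N$ \emph{everywhere}, not merely by shrinking the quantifier range, so a witness $i_0\le N-1$ for $\widehat{\varphi}(i_0)$ under parameter $N$ need not witness $\varphi(N-1)$. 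The paper's own last example, $\varphi(N):=\exists i\in[0,N)\;A[i]=N$, is precisely a case where that check fails, contradicting your claim that it is automatic; the check exists to filter out such $N$-dependence, and your argument must actually invoke it (or an assumption that $\widehat{\varphi}$ is $N$-free) rather than dismiss it --- and even then the passage from the check to implication (i) needs an argument beyond the witness split.

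A second, smaller imprecision: in the conjunction case you combine the inductive hypotheses into $\bigwedge_j\varphi^j(N-1)\wedge\bigwedge_j\partial\varphi^j(N)$, which presumes every conjunct is universal. The algorithm (see the paper's example with $\exists i\,A[i]\ge 100\;\wedge\;\exists j\,A[j]\le 10$) allows existential conjuncts, for which the hypothesis only yields $\varphi^j(N-1)\vee\partial\varphi^j(N)$; the desired implication still follows with $\odot=\vee$ via $\bigwedge_j(a_j\vee b_j)\rightarrow(\bigwedge_j a_j)\vee(\bigvee_j b_j)$, but that step needs to be stated. Neither issue changes the fact that your route is the paper's route fleshed out; but as written the existential step would fail on admissible inputs.
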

\begin{proof}
  Condition (i) follows from the checks implemented in lines
  \ref{dpre:line:testconj} and \ref{dpre:line:testdisj} of function
  \textsc{SyntacticDiff}.  The check in line \ref{dpre:line:case2a}
  ensures condition (ii).  This concludes the proof.
\end{proof}

\begin{example}
Consider the pre-condition $\varphi(N) := \forall i\in [0,N)\; A[i] =
1$ from our running example in Fig. \ref{fig:ss-diff-simp}.  The
difference pre-condition computed by function \textsc{SyntacticDiff}
in Algorithm \ref{alg:synt-diff} is $\partial \varphi(N) := A[N-1] = 1$
shown by the assume statement in Fig. \ref{fig:ss-wp}.
\end{example}

\begin{example}
Consider the pre-condition $\varphi(N) := \forall i\in [0,N)\; A[i] =
1 \vee \forall i\in [0,N)\; A[i] = 2$.  \textsc{SyntacticDiff} in
Algorithm \ref{alg:synt-diff} enters the recursive case in
line \ref{dpre:line:disjunct}.  The recursive invocations with
inputs $\varphi_1(N) := \forall i\in [0,N)\; A[i] = 1$ and
$\varphi_2(N) := \forall i\in [0,N)\; A[i] = 2$ compute the
difference pre-conditions $\partial \varphi_1(N) := A[N-1] = 1$ and
$\partial \varphi_2(N) := A[N-1] = 2$ respectively.  On returning
from the recursive case, the algorithm stores the formula
$A[N-1] = 1 \vee A[N-1] = 2$ in $\partial \varphi(N)$.
\end{example}

\begin{example}
Consider the pre-condition $\varphi(N) := \exists i\in [0,N)\;
A[i] \geq 100$ $\wedge$ $\exists j\in [0,N)\; A[j] \leq 10$.  The
difference pre-condition computed by 
function \textsc{SyntacticDiff} in Algorithm \ref{alg:synt-diff} is
$\partial \varphi(N) := A[N-1] \geq 100 \vee A[N-1] \leq 10$.  Notice
that the computed difference pre-condition is an over-approximation.
Had we computed the difference pre-condition as 
$\partial \varphi(N) := A[N-1] \geq 100 \wedge A[N-1] \leq 10$, then
it would have resulted in a contradiction.
\end{example}

\begin{example}
For pre-condition formulas $\varphi(N)$ := $\forall i\in [0,N)\;$
$A[i] = N$ and $\varphi(N)$ := $\exists i\in [0,N)\;$ $A[i] = N$ the
validity checks at lines \ref{dpre:line:testconj}
and \ref{dpre:line:testdisj} respectively in
Algorithm \ref{alg:synt-diff} fail.  Hence, the algorithm terminates
with out being able to compute an appropriate pre-condition.
\end{example}

  %% Section on computing the difference pre-condition

\section{Algorithms for Full-Program Induction}
\label{sec:algorithms}

In this section, we discuss the algorithms for {\em full-program
induction}.  The algorithm primarily focuses on generation of the
three crucial components: \emph{difference program}
${\partial \PP_N}$, \emph{difference pre-condition}
${\partial \varphi(N)}$, and the formula $\ppre(N)$ for strengthening
pre- and post-conditions.  We have already seen the computation of the
difference program and the difference pre-condition in
Sect. \ref{sec:diffcomp}.  Before describing the algorithm for
full-program induction, however, we present the strategy for computing
the formula $\ppre(N)$.

%\paragraph{\bfseries Generating the Formula $\mathbf{\ppre(N-1)}$.}
\subsection{Generating the Formula $\mathbf{\ppre(N-1)}$}
\label{sec:wp}

We use Dijkstra's weakest pre-condition computation to obtain
$\ppre(N-1)$ after the difference pre-condition
${\partial \varphi(N)}$ and the difference program
${\partial \PP_N}$ have been generated.  The weakest pre-condition can
always be computed using quantifier elimination engines in
state-of-the-art SMT solvers like Z3 if ${\partial \PP_N}$ is
loop-free.  In such cases, we use a set of heuristics to simplify the
calculation of the weakest pre-condition before harnessing the power
of the quantifier elimination engine.  If ${\partial \PP_N}$ contains
a loop, it may still be possible to obtain the weakest pre-condition
if the loop doesn't affect the post-condition.  Otherwise, we compute
as much of the weakest pre-condition as can be computed from the
non-loopy parts of ${\partial \PP_N}$, and then try to recursively
solve the problem by invoking full-program induction on
${\partial \PP_N}$ with appropriate pre- and post-conditions.

%\begin{wrapfigure}[15]{r}{0.39\textwidth}
%\vspace{-2ex}

\begin{figure}[h]
%% \begin{minipage}{0.56\linewidth}
\begin{alltt}
// assume(A[N-1] = 1)            //\(\partial\varphi\)(N)
// assume(S1_Nm1 = (N-1)\(\times\)(N+1)) //\(\psi\)(N-1)
// assume(\(\forall\)i\(\in\)[0,N-1) A1_Nm1[i] = N)
// assume(S_Nm1 = N-1)

1.  S = S_Nm1 + A[N-1];

2.  for(i=0; i<N-1; i++) \{
3.    A1[i] = A1_Nm1[i] + 1;
4.  \}
5.  A1[N-1] = A[N-1] + S;

6.  S1 = S1_Nm1 + A[N-1];
7.  S1 = S1 + (N-1);
8.  S1 = S1 + A1[N-1];

// assert(S1 = N\(\times\)(N+2))         //\(\psi\)(N)
// assert(\(\forall\)i\(\in\)[0,N) A1[i] = N+1)
// assert(S = N)
\end{alltt}
%% \end{minipage}
%% \begin{minipage}{0.43\linewidth}
%% \caption
%% { Program with strengthened pre-condition and post-condition stated
%%   using the assume and assert statements for the difference program
%%   from Fig~\ref{fig:ss-diff-simp}.  }
%% \end{minipage}
\caption{Strengthening the pre- and post-conditions}
\label{fig:ss-wp}
\end{figure}

%\end{wrapfigure}
  %% Running example with the strengthened pre and post

\begin{example}
We apply Dijkstra's weakest pre-condition computation on the Hoare
triple from our running example in Fig. \ref{fig:ss-diff-simp}.  The
Hoare triple in Fig. \ref{fig:ss-wp} shows the difference
pre-condition $\partial \varphi(N)$, post-condition $\psi(N)$ and the
formula $\psi(N-1)$ from the induction hypothesis as well as the
strengthened pre- and post-condition formulas.  The first application
of weakest pre-condition computation generates the pre-condition {\tt
  A1[N-1] = N+1} on array {\tt A1}.  This is lifted to the quantified
form $\forall i$ $\in [0,\mathtt{N})\;$ $\mathtt{A1}[i] = \mathtt{N} +
  1$ in a natural way and is used to strengthen the post-condition.
  We substitute $N$ with $N-1$ and rename the array to get the formula
  $\forall i$ $\in [0,\mathtt{N})\;$ $\mathtt{A1\_Nm1}[i] =
    \mathtt{N}$, which is used to strengthen the pre-condition.
    Re-applying weakest pre-condition computation generates the
    predicates on $\mathtt{S}$ and $\mathtt{S\_Nm1}$ that further
    strengthen the pre- and post-condition as shown in
    Fig. \ref{fig:ss-wp}.
\end{example}
  %% Section on computing the Dijkstra's weakest pre-condition

%\paragraph{\bfseries Verification by Full-program Induction.}
\subsection{Verification by Full-program Induction}
\label{sec:fpi}

\begin{algorithm*}[!t]
  \caption{\textsc{FPIVerify}({$\PP_N$}: program, {$\varphi(N)$}: pre-condition, {$\psi(N)$}: post-condition)}
  \label{alg:fpi}
%  \scriptsize
  \begin{algorithmic}[1]
    \If{Base case check \{$\varphi(1)$\} $\PP_1$ \{$\psi(1)$\} fails}  \label{line:fpi:base1}
      \State {\bf print} ``Counterexample found!'';  \label{line:fpi:print-ce}
      \State \Return $\false$; \label{line:fpi:ret-base}
    \EndIf

    \State $\langle \PP_N, \varphi(N), \psi(N), \gluenodes \rangle$ := \textsc{Rename}($\PP_N$, $\varphi(N)$, $\psi(N)$);  \label{line:fpi:rename}  \Comment{Renaming as described in Sect. \ref{sec:renaming}}
    \State $\partial \varphi(N)$ := \textsc{SyntacticDiff}($\varphi(N)$); \label{line:fpi:syntdiff}

    \State $\partial \PP_N$ := \textsc{ProgramDiff}($\PP_N$, $\gluenodes$); \label{line:fpi:progdiff}
    \State $\partial \PP_N$ := \textsc{SimplifyDiff}($\partial \PP_N$); \label{line:fpi:progdiffsimp} \Comment{Simplify and Accelerate loops}

    \State $i := 0$;
    \State $\ppre_i(N) := \psi(N)$; \label{line:fpi:wp-init}
    \State $c\_\ppre_i(N) := \true$; \label{line:fpi:cumu-wp}	\Comment{Cumulative conjoined pre-condition}

    \Do
      \label{line:fpi:doloop}
      \If{ \{$c\_\ppre_i(N-1) \wedge \psi(N-1) \wedge \partial \varphi(N)$\} $\partial\PP_N$ \{$c\_\ppre_i(N) \wedge \psi(N)$\} }  \label{line:fpi:ind}
      \State \Return $\true$;  \label{line:fpi:ret-ind}  \Comment{Assertion verified}
      \EndIf
      \State $i := i+1$;
      \State $\ppre_i(N-1) := \textsc{LoopFreeWP}( \ppre_{i-1}(N), \partial\PP_N)$;  \label{line:fpi:wp}  \Comment{Dijkstra's $\mathsf{WP}$ sans $\mathsf{WP}$-for-loops}
      \If {no new $\ppre_i(N-1)$ obtained} \label{line:fpi:no-wp}  \Comment{Can happen if ${\partial \PP_N}$ has a loop}
        \If {\textsc{CheckProgress}($\PP_N$, $\partial \PP_N$)}  \label{line:fpi:check-progress}
          \State \Return \textsc{FPIVerify}(${\partial \PP_N}$, $c\_\ppre_{i-1}(N-1) \wedge \psi(N-1) \wedge \partial \varphi(N)$, $c\_\ppre_{i-1}(N) \wedge \psi(N)$);  \label{line:fpi:ret-recursive}
        \Else
          \State \Return $\false$; \Comment{Failed to prove by full-program induction} \label{line:fpi:ret-prog-inconc}
        \EndIf
      \Else
         \State $c\_\ppre_i(N) := c\_\ppre_{i-1}(N) \wedge \ppre_i(N)$;  \label{line:fpi:cpre}
      \EndIf
    \doWhile{Base case check \{$\varphi(1)$\} $\PP_1$ \{$c\_\ppre_i(1)$\} passes}; \label{line:fpi:base2}
    \State \Return $\false$; \Comment{Failed to prove by full-program induction} \label{line:fpi:ret-inconc}
  \end{algorithmic}
\end{algorithm*}
  %% Algorithm FPIVerify

The basic version of full-program induction algorithm is presented as
routine \textsc{FPIVerify} in Algorithm \ref{alg:fpi}.  The important
steps of Algorithm \ref{alg:fpi} include checking conditions 3(a),
3(b) and 3(c) of Theorem \ref{thm:full-prog-ind-sound} (lines
\ref{line:fpi:base1}, \ref{line:fpi:base2} and \ref{line:fpi:ind}
resp.), calculating the weakest pre-condition of the relevant part of
the post-condition (line \ref{line:fpi:wp}), recursively invoking our
routine \textsc{FPIVerify} with the strengthened pre- and
post-conditions (line \ref{line:fpi:ret-recursive}), and accumulating
the the weakest pre-condition predicates thus calculated for
strengthening the pre- and post-conditions (line \ref{line:fpi:cpre}).
We now discuss the algorithm in detail.

We first check the base case of the analysis in line
\ref{line:fpi:base1}.  The base case of our induction reduces to
checking the validity of a Hoare triple of a loop-free program.  This
is achieved by compiling the pre-condition, program and post-condition
into a first-order logic formula.  The validity of the formula can be
checked with an off-the-shelf back-end SMT solver like Z3.  If the
check fails, we have found a valid counter-example that is reported to
the user in line \ref{line:fpi:print-ce}, and the algorithm terminates
in line \ref{line:fpi:ret-base}.

Next, we rename the variables and arrays in the program $\PP_N$ as
well as the pre- and post-conditions (as described in Sect.
\ref{sec:renaming}) and collect the set of glue nodes (line
\ref{line:fpi:rename}).  Then, in line \ref{line:fpi:syntdiff}, we
compute the difference pre-condition $\partial \varphi(N)$ using
function \textsc{SyntacticDiff} (described in Sect.
\ref{sec:diff-pre}).  We then compute the difference program $\partial
\PP_N$, in line \ref{line:fpi:progdiff}, using function
\textsc{ProgramDiff} from Sect. \ref{sec:diff-prog}.  Note that this
function can compute the difference program when the scalar
variables/arrays of interest are identified as affected.  In line
\ref{line:fpi:progdiffsimp}, we simplify the statements in the
computed difference program, remove redundant statements and try to
accelerate loops, if any, using function \textsc{SimplifyDiff} from
Algorithm \ref{alg:simpdiff}.

The do-while loop in lines \ref{line:fpi:doloop}--\ref{line:fpi:base2}
iteratively checks if the given assertion can be proved.  Once the
base case succeeds, we check the inductive step in line
\ref{line:fpi:ind}.  If the loop terminates via the {\tt return}
statement in line \ref{line:fpi:ret-ind}, then the inductive claim has
been successfully proved.  Otherwise, in line \ref{line:fpi:wp}, we
compute Dijkstra's weakest pre-condition using the formula
$\ppre_i(N)$, over the difference program.  The formula $\ppre_i(N)$
is initialized to $\psi(N)$ in line \ref{line:fpi:wp-init}.  We denote
the computed weakest pre-condition as $\ppre_i(N-1)$.  Note that, the
formula $\ppre_i(N-1)$ strengthens the pre-condition and the same
formula $\ppre_i(N)$, but with $N$ substituted for $N-1$, strengthens
the post-condition.  The variable $c\_\ppre_i(N-1)$, initialized to
$\true$ in line \ref{line:fpi:cumu-wp}, accumulates weakest
pre-condition formulas from each loop iteration (line
\ref{line:fpi:cpre}).

In case no further weakest pre-conditions can be generated, checked in
line \ref{line:fpi:no-wp}, we recursively invoke \textsc{FPIVerify} on
${\partial \PP_N}$ in line \ref{line:fpi:ret-recursive}.  Prior to the
recursive invocation, we check if it will be beneficial using function
\textsc{CheckProgress}, in line \ref{line:fpi:check-progress}.
Discussion about \textsc{CheckProgress} is deferred to
Sect. \ref{sec:progress}.  The recursive invocation helps in
situations where the computed difference program ${\partial \PP_N}$
has loops.  To present an example of this scenario, we modify the
program in Fig. \ref{fig:ex} by having the statement {\tt C[t3] = N;}
(instead of {\tt C[t3] = 0;}) in line $10$.  In this case, ${\partial
  \PP_N}$ retains a loop that rectifies the value of {\tt C[t3]}
corresponding to its computation in the third loop in
Fig. \ref{fig:ex}.  The recursive invocation of full-program induction
on ${\partial \PP_N}$ as input for the example described here will
result in a loop-free difference program.  If the check in line
\ref{line:fpi:check-progress} reports that further application of
full-program induction will not yield any benefits then we report the
failure of our technique in line \ref{line:fpi:ret-prog-inconc}.

When weakest pre-condition computation succeeds, we conjoin the
computed strengthening predicate $\ppre_i(N)$ with the variable
$c\_\ppre_{i-1}(N)$ in line \ref{line:fpi:cpre}.  Since the weakest
pre-condition ($\ppre_i(N-1)$ in line \ref{line:fpi:wp}) computed in
every iteration of the loop is conjoined to strengthen the inductive
pre-condition ($c\_\ppre_i(N-1)$ in line \ref{line:fpi:cpre}), it
suffices to compute the weakest pre-condition of $\ppre_{i-1}(N)$
(instead of $c\_\ppre_i(N) \wedge \psi(N)$) in line \ref{line:fpi:wp}.
Possibly multiple iterations of the loop in lines
\ref{line:fpi:doloop}--\ref{line:fpi:base2} are required to strengthen
the pre- and post-conditions.  After each iteration, the base case is
checked again in line \ref{line:fpi:base2} with the strengthened pre-
and post-conditions.  If the loop terminates due to violation of the
base-case with the strengthened post-condition (line
\ref{line:fpi:base2}), we report the failure of our method by
returning $\false$ in line \ref{line:fpi:ret-inconc}.

\begin{lemma}  \label{lemma:fpi-alg}
  Upon successful termination, if function \textsc{FPIVerify} returns
  $\true$, then $\{\varphi_N\}$ $\;\PP_N\;$ $\{\psi_N\}$ holds for all
  $N \ge 1$.
\end{lemma}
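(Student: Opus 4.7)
The plan is to establish the lemma by showing that every successful return of $\true$ from \textsc{FPIVerify} instantiates the three premises of Theorem~\ref{thm:full-prog-ind-sound}, whose conclusion is exactly what the lemma asserts. There are two return sites that yield $\true$: the direct inductive-step discharge at line~\ref{line:fpi:ret-ind}, and the recursive invocation at line~\ref{line:fpi:ret-recursive}. I would treat the direct case first, then extend by an outer induction on the depth of recursion.

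For the direct case, suppose the algorithm returns $\true$ at line~\ref{line:fpi:ret-ind} after $i$ iterations of the do-while loop. First I would argue that it suffices to prove the Hoare triple for the renamed program, using Lemma~\ref{lemma:rename:sound}; then Theorem~\ref{lemma:diff-gen-sound} combined with Lemma~\ref{lemma:simp-sound} supplies condition~1 of Theorem~\ref{thm:full-prog-ind-sound} for the simplified difference program produced by the combination of \textsc{ProgramDiff} and \textsc{SimplifyDiff} at lines~\ref{line:fpi:progdiff}--\ref{line:fpi:progdiffsimp}. Lemma~\ref{lemma:diff-pre-condition} supplies condition~2 for $\partial\varphi(N)$ computed by \textsc{SyntacticDiff} at line~\ref{line:fpi:syntdiff}. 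For condition~3, I instantiate $M=1$ and $\ppre := c\_\ppre_i$: part~3(a) is the passing base-case check at line~\ref{line:fpi:base1}; part~3(b) is the passing base-case check at line~\ref{line:fpi:base2} (the do-while loop could not have iterated past this test without succeeding); and part~3(c) is precisely the successful inductive check at line~\ref{line:fpi:ind}. Theorem~\ref{thm:full-prog-ind-sound} then yields the conclusion.

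For the recursive case, I would proceed by strong induction on the depth of recursive invocations of \textsc{FPIVerify}. The base of this outer induction is the direct case above. For the inductive step, assume the algorithm returns at line~\ref{line:fpi:ret-recursive}. By the outer inductive hypothesis, the recursive call correctly establishes the Hoare triple
\[
\{c\_\ppre_{i-1}(N-1)\wedge \psi(N-1)\wedge \partial\varphi(N)\}\; \partial\PP_N\;\{c\_\ppre_{i-1}(N)\wedge \psi(N)\}
\]
for all $N\ge 1$; this serves as condition~3(c) of Theorem~\ref{thm:full-prog-ind-sound} with $\ppre := c\_\ppre_{i-1}$. Conditions~1 and 2, as in the direct case, come from the same sequence of transformations and from \textsc{SyntacticDiff}. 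The base cases 3(a) and 3(b) are witnessed by the successful checks at lines~\ref{line:fpi:base1} and~\ref{line:fpi:base2} respectively, invoking the loop invariant that the do-while body can only be re-entered after $c\_\ppre_i$ passes the strengthened base-case test.

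The main obstacle I anticipate is the careful bookkeeping around the accumulator $c\_\ppre_i$ and the partial weakest-pre-condition operator \textsc{LoopFreeWP}: since \textsc{LoopFreeWP} may be weaker than the true weakest pre-condition when $\partial\PP_N$ contains residual loops, I must argue that the strengthened pre- and post-conditions are nonetheless consistent, namely that $c\_\ppre_i(N)$ is entailed by the strengthened post-condition used in the inductive/recursive check and that the base-case re-check at line~\ref{line:fpi:base2} is actually valid for every intermediate cumulative predicate that survives until an inductive or recursive discharge. Formalising this loop invariant of the do-while loop, and tying it to the pre-condition presented to the recursive call, is the delicate step; once it is in place, the appeal to Theorem~\ref{thm:full-prog-ind-sound} is routine.
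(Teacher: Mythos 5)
Your proposal is correct and takes essentially the same route as the paper's proof: both reduce the claim to Theorem~\ref{thm:full-prog-ind-sound} by matching condition~1 to Theorem~\ref{lemma:diff-gen-sound} and Lemma~\ref{lemma:simp-sound}, condition~2 to Lemma~\ref{lemma:diff-pre-condition}, and conditions~3(a)--(c) to the checks at lines~\ref{line:fpi:base1}, \ref{line:fpi:base2} and \ref{line:fpi:ind}, with the return at line~\ref{line:fpi:ret-recursive} handled by appealing to the soundness of the recursive call. Your version is merely more explicit than the paper's (the outer induction on recursion depth, the renaming step via Lemma~\ref{lemma:rename:sound}, and the do-while invariant ensuring the strengthened base case has been re-checked before each inductive discharge), but the underlying argument is the same.
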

\begin{proof}
  Verifying the given Hoare triple requires establishing the
  conditions mentioned in Theorem \ref{thm:full-prog-ind-sound}.  The
  functions \textsc{ProgramDiff} invoked in line
  \ref{line:fpi:progdiff} and \textsc{SimplifyDiff} invoked in line
  \ref{line:fpi:progdiffsimp} ensure condition $1$ of Theorem
  \ref{thm:full-prog-ind-sound} (refer Theorem
  \ref{lemma:diff-gen-sound} and Lemma \ref{lemma:simp-sound}).  The
  call to \textsc{SyntacticDiff} in line \ref{line:fpi:syntdiff} in
  \textsc{FPIVerify} computes the difference pre-conditions that
  satisfy conditions $2$(a) and $2$(b) (refer Lemma
  \ref{lemma:diff-pre-condition}).  The conditions $3$(a) and $3$(b)
  of Theorem \ref{thm:full-prog-ind-sound} are checked in lines
  \ref{line:fpi:base1} and \ref{line:fpi:base2} respectively.  The
  check in line \ref{line:fpi:ind} ensures that the return statement
  in line \ref{line:fpi:ret-ind} executes only when condition $3$(c)
  of Theorem \ref{thm:full-prog-ind-sound} is ensured.  Similarly, the
  statement in line \ref{line:fpi:ret-recursive} returns $\true$ only
  if the recursive call to \textsc{FPIVerify} proves all conditions in
  Theorem \ref{thm:full-prog-ind-sound}.  Hence, we conclude that
  $\{\varphi_N\}$ $\;\PP_N\;$ $\{\psi_N\}$ holds for all $N \ge 1$.
\end{proof}

%% The loop in line \ref{line:fpi:doloop} returns true at two
%% locations.
  %% Section on the full-program induction technique

%\paragraph{\bfseries Generalized FPI Algorithm.}
\subsection{Generalized FPI Algorithm}
\label{sec:fpi-ext}

\begin{algorithm*}[!t]
  \caption{\textsc{FPIDecomposeVerify}( i : integer )}
  \label{alg:fpi-ext}
%  \scriptsize
  \begin{algorithmic}[1]
    \Do
      \State $\langle\ppre_i'(N-1), \partial \varphi_i'(N)\rangle$ := $\textsc{NextDecomposition}(\ppre_i(N-1))$;
      \State Check if (a) $\partial \varphi_i'(N) \wedge \ppre_i'(N-1)  \rightarrow  \ppre_i(N-1)$, \label{line:decomp:conda} \\
      \hspace*{0.75in}(b) $\varphi(N) \rightarrow \varphi(N-1) \odot \left(\partial \varphi'_i(N) \wedge \partial \varphi(N)\right)$,  \Comment{where $\odot \in \{\wedge, \vee\}$} \label{line:decomp:condb} \\
      \hspace*{0.75in}(c) $\PP_{N-1}$ does not update any variable or array element in $\partial \varphi_i'(N)$ \label{line:decomp:condc}
      \If{any check in lines \ref{line:decomp:conda} - \ref{line:decomp:condc} fails}
      \If{$\textsc{HasNextDecomposition}(\ppre_i(N-1))$}
      \State \textbf{continue};
      \Else
      \State \Return $\false$;
      \EndIf
      \EndIf
      
      \If{\{$c\_\ppre_{i-1}(N-1) \wedge \psi(N-1) \wedge \ppre_i(N-1) \wedge \partial \varphi(N)$\} $\partial\PP_N$ \{$c\_\ppre_{i-1}(N) \wedge \psi(N) \wedge \ppre_i'(N)$\}}  \label{line:decomp:ind}
        \State \Return $\true$; \Comment{Assertion verified}
      \Else
        \State $c\_\ppre_i(N) := c\_\ppre_{i-1}(N) \wedge \ppre_i'(N)$;
        \State $\ppre_{i+1}(N-1) := \textsc{LoopFreeWP}( \ppre_i'(N), \partial\PP_N)$; \Comment{Dijkstra's $\mathsf{WP}$ sans $\mathsf{WP}$-for-loops}

        \If {\{$\varphi(1)$\} $\PP_1$ \{$c\_\ppre_i(1) \wedge \ppre_{i+1}(1)$\} holds}  \label{line:decomp:base}
           \State $prev\_\partial \varphi(N)$ := $\partial \varphi(N)$;
           \State $\partial \varphi(N)$ := $\partial \varphi'_i(N) \wedge \partial \varphi(N)$;
           \If{\textsc{FPIDecomposeVerify}$(i+1)$} \Comment{Recursive invocation}  \label{line:decomp:recursive1}
             \State \Return $\true$; \Comment{Assertion verified}  \label{line:decomp:ret-recursive}
           \Else
             \State $\partial \varphi(N)$ := $prev\_\partial \varphi(N)$;  \label{line:decomp:recursive2}
          \EndIf
        \EndIf
        \State $i := i+1$;
      \EndIf
      \doWhile{$\textsc{HasNextDecomposition}(\ppre_i(N-1))$};

      \State \Return $\false$;  \Comment{Failed to prove by full-program induction}
    \end{algorithmic}
\end{algorithm*}
  %%  FPIDecomposeVerify Algorithm

While the algorithm \textsc{FPIVerify} suffices for all of our
experiments, it may not always be the case.  Specifically, even if
${\partial \PP_N}$ is loop-free, the analysis may exit the loop in
lines \ref{line:fpi:doloop}--\ref{line:fpi:base2} of
\textsc{FPIVerify} by violating the base case check in line
\ref{line:fpi:base2}.  To handle (at least partly) such cases, we
propose the following strategy. Whenever a (weakest) pre-condition
$\ppre_i(N-1)$ is generated, instead of using it directly to
strengthen the current pre- and post-conditions, we ``decompose'' it
into two formulas $\ppre_i'(N-1)$ and ${\partial \varphi_i'(N)}$ with
a two-fold intent: (a) potentially weaken $\ppre_i(N-1)$ to
$\ppre_i'(N-1)$, and (b) potentially strengthen the difference formula
${\partial \varphi(N)}$ to ${\partial \varphi_i'(N)} \wedge {\partial
  \varphi(N)}$.  The checks for these intended usages of
$\ppre_i'(N-1)$ and ${\partial \varphi_i'(N)}$ are implemented in
lines \ref{line:decomp:conda}, \ref{line:decomp:condb},
\ref{line:decomp:condc}, \ref{line:decomp:ind} and
\ref{line:decomp:base} of routine \textsc{FPIDecomposeVerify}, shown
as Algorithm \ref{alg:fpi-ext}.  This routine is meant to be invoked
as \textsc{FPIDecomposeVerify}$(i)$ after each iteration of the loop
in lines \ref{line:fpi:doloop}--\ref{line:fpi:base2} of routine
\textsc{FPIVerify} (so that $\ppre_i(N)$, $c\_\ppre_i(N)$ etc. are
initialized properly).  In general, several ``decompositions'' of
$\ppre_i(N)$ may be possible, and some of them may work better than
others.  \textsc{FPIDecompseVerify} permits multiple decompositions to
be tried through the use of the functions $\textsc{NextDecomposition}$
and $\textsc{HasNextDecomposition}$.  The meaning of both these functions
is intuitive from their names.  Lines
\ref{line:decomp:recursive1}--\ref{line:decomp:recursive2} of
\textsc{FPIDecomposeVerify} implement a simple back-tracking strategy,
allowing a search of the space of decompositions of $\ppre_i(N-1)$.
Observe that when we use \textsc{FPIDecomposeVerify}, we
simultaneously compute a difference formula $({\partial \varphi'_i(N)}
\wedge {\partial \varphi(N)})$ and an inductive pre-condition
$(c\_\ppre_{i-1}(N) \wedge \ppre_i'(N))$.

\begin{lemma} \label{lemma:fpi-dv-alg}
  Upon successful termination, if function \textsc{FPIDecomposeVerify}
  returns $\true$, then $\{\varphi_N\}$ $\;\PP_N\;$ $\{\psi_N\}$ holds
  for all $N \ge 1$.
\end{lemma}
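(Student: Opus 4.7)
\textit{(Sketch plan)}
The plan is to mirror the structure of the proof of Lemma~\ref{lemma:fpi-alg}, reducing the claim to a verification of the three conditions of Theorem~\ref{thm:full-prog-ind-sound}, and then to handle the extra machinery introduced by decomposition and back-tracking by induction on the recursion depth of \textsc{FPIDecomposeVerify}. The preliminary steps are inherited from the caller \textsc{FPIVerify}: by the time \textsc{FPIDecomposeVerify}$(i)$ is entered, renaming, peeling, \textsc{ProgramDiff}, \textsc{SimplifyDiff} and \textsc{SyntacticDiff} have already been executed, so by Theorem~\ref{lemma:diff-gen-sound}, Lemma~\ref{lemma:simp-sound} and Lemma~\ref{lemma:diff-pre-condition} condition~$1$ of Theorem~\ref{thm:full-prog-ind-sound} holds for $\partial\PP_N$, and the initial $\partial\varphi(N)$ satisfies conditions~$2(a)$ and~$2(b)$.

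I would then treat the decomposition step as the central invariant of the loop. At each iteration the running difference pre-condition is augmented from $\partial\varphi(N)$ to $\partial\varphi'_i(N) \wedge \partial\varphi(N)$. The check in line~\ref{line:decomp:condb} is exactly condition~$2(b)$ of Theorem~\ref{thm:full-prog-ind-sound} for this augmented formula, and the check in line~\ref{line:decomp:condc} is condition~$2(a)$; together with the inductively preserved fact that $\partial\varphi(N)$ itself satisfies $2(a)$, the augmented formula satisfies~$2$ in its entirety. The check in line~\ref{line:decomp:conda}, namely $\partial\varphi'_i(N) \wedge \ppre'_i(N-1) \rightarrow \ppre_i(N-1)$, is what justifies treating the decomposition as semantically a weakening of the Dijkstra weakest pre-condition previously computed, so no invariant needed downstream is dropped.

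Given these invariants, the proof concludes by case analysis on how \textsc{FPIDecomposeVerify} returns $\true$. In the non-recursive exit (line after \ref{line:decomp:ind}), the Hoare triple check in line~\ref{line:decomp:ind} directly discharges condition~$3(c)$ of Theorem~\ref{thm:full-prog-ind-sound} with $c\_\ppre_{i-1}(N) \wedge \ppre'_i(N)$ playing the role of $\ppre(N)$; condition~$3(a)$ is the base case already verified in \textsc{FPIVerify} (line~\ref{line:fpi:base1}), and condition~$3(b)$ with the strengthened post-condition follows from the check in line~\ref{line:decomp:base} executed in the previous loop iteration. In the recursive exit (line~\ref{line:decomp:ret-recursive}), I apply the induction hypothesis on recursion depth to the call \textsc{FPIDecomposeVerify}$(i+1)$, which is entered only after $\partial\varphi(N)$ has been updated to the strengthened form and the base case in line~\ref{line:decomp:base} has succeeded; the returned $\true$ certifies that the remaining conditions of Theorem~\ref{thm:full-prog-ind-sound} hold for this strengthened configuration, and hence that $\{\varphi(N)\}\,\PP_N\,\{\psi(N)\}$ holds for all $N \ge 1$.

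The main obstacle I anticipate is the book-keeping around back-tracking in lines~\ref{line:decomp:recursive1}--\ref{line:decomp:recursive2}: one must show that restoring $\partial\varphi(N)$ to $prev\_\partial\varphi(N)$ when a recursive call fails does not compromise any previously verified condition, and that the accumulated $c\_\ppre$ remains consistent with the currently active $\partial\varphi(N)$. Handling this cleanly will require a careful formulation of the loop invariant over the pair $(c\_\ppre_i, \partial\varphi(N))$, ensuring that every partial state the algorithm reaches corresponds to an honest instantiation of the premises of Theorem~\ref{thm:full-prog-ind-sound}. Once this invariant is stated, the remaining steps are routine.
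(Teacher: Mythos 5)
Your proposal is correct and follows essentially the same route as the paper's proof: both reduce the claim to verifying the conditions of Theorem~\ref{thm:full-prog-ind-sound}, inheriting condition~$1$ and the initial conditions~$2$(a)--(b) from \textsc{FPIVerify}, using the checks in lines~\ref{line:decomp:conda}--\ref{line:decomp:condc} to preserve condition~$2$ for the augmented difference pre-condition, discharging~$3$(c) via the check in line~\ref{line:decomp:ind}, and handling the recursive exit by appealing to the recursive call. Your treatment is somewhat more careful than the paper's (which does not explicitly address the back-tracking book-keeping or spell out the induction on recursion depth), but the underlying argument is the same.
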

\begin{proof}
  The conditions mentioned in Theorem \ref{thm:full-prog-ind-sound}
  are a pre-requisite to verifying the given Hoare triple.  Condition
  $1$ of Theorem \ref{thm:full-prog-ind-sound} is ensured by
  difference computation (functions \textsc{ProgramDiff} and
  \textsc{SimplifyDiff}) in \textsc{FPIVerify}.  Conditions $2$(a) and
  $2$(b) are established in \textsc{FPIVerify} (via the call to
  function \textsc{SyntacticDiff}) and the checks on lines
  \ref{line:decomp:conda}--\ref{line:decomp:condc} in
  \textsc{FPIDecomposeVerify} ensure that these conditions continue to
  hold.  Further, \textsc{FPIVerify} also ensures conditions $3$(a)
  and $3$(b) before it invokes \textsc{FPIDecomposeVerify}.  Now, the
  check in line \ref{line:decomp:ind} in \textsc{FPIDecomposeVerify}
  ensures condition $3$(c) of Theorem \ref{thm:full-prog-ind-sound}.
  Similarly, the statement in line \ref{line:decomp:ret-recursive} in
  \textsc{FPIDecomposeVerify} returns $\true$ only if the recursive
  call to \textsc{FPIDecomposeVerify} proves all the conditions in
  Theorem \ref{thm:full-prog-ind-sound}.  Hence, we conclude that
  $\{\varphi_N\}$ $\;\PP_N\;$ $\{\psi_N\}$ holds for all $N \ge 1$.
\end{proof}
  %% Section on the generalized fpi verification technique

\section{Progress Measures}
\label{sec:progress}
Recall from Sect.~\ref{sec:fpi} that given the parameterized Hoare
triple $\{\varphi(N)\}$ $\PP_{N}$ $\{\psi(N)\}$, our technique
recursively computes difference programs until the given
post-condition $\psi(N)$ is proved.  The difference computation must
eventually result in programs $\partial \PP_N$ that can be easily
verified without the need of further applying inductive reasoning or
indicate otherwise.  In this section, we define a progress measure
that can be used to check if the difference computation will
eventually simplify the program to the extent that it can be verified
using a back-end SMT solver.  The measure is based on the
characteristics of the difference programs computed by our technique.

Ranking functions have been traditionally used to show program
termination~\cite{rank-fun1,rank-fun2,termination,lexico-cite}.  We
use the notion of ranking functions to measure the progress that our
technique has made towards verifying the given post-condition using
the difference programs.  Several different criteria have been used in
the literature to define ranking functions.  Our ranking function
links with each difference program a value from a well-founded domain.
We assign the minimal rank to programs that can be effectively
verified, for example using a back-end SMT solver.  A difference
program gets a smaller rank compared to another difference program if
it is ``closer'' (in a natural way) to programs that can be proved.
Here, we list some criteria that can be used for defining the ranking
function for our technique based on the syntactic changes in
difference programs vis-a-vis the given program.

%% This progress measure can be used to abandon the verification
%% attempt for eample if the rank of the difference program has
%% stabilized and does not decrease any further.

The main hurdle in proving the given Hoare Triple, $\{\phi(N)\}$
$\;\PP_N\;$ $\{\psi(N)\}$, are the loops in the given program $\PP_N$.
Once the difference program is loop-free, the post-condition in such
programs can be easily verified by an SMT solver and our technique is
no longer required to recursively apply induction any further for
proving such programs.  Thus, the difference programs for which our
technique terminates are loop-free programs and programs in which
loops can be accelerated or optimized away with known techniques.
Hence, reduction in the number of loops in the difference program
$\partial \PP_N$ vis-a-vis the given program $\PP_N$ is the main
criterion to measure progress in our technique.

Further, the difference computation can potentially reduce the
dependence on the value of $N$.  For programs with expressions that do
not directly or indirectly \footnote{By indirect dependence, we mean
the dependence via another value computed in a peeled or non-peeled
statements in the program.} rely on $N$, the difference program
consists of only the peeled iterations of loops.  Clearly, when the
difference program $\partial \PP_N$ is impervious to the value of $N$,
additional code to rectify the values of variables is no longer
required in the subsequent recursive invocations.  This indicates that
we have made progress.  We thus use the presence of variables in the
program whose value directly or indirectly depends on the value of $N$
as another criteria to measure progress.  As previously stated in
Sect.~\ref{sec:affected}, if the value of a variable/array depends on
$N$ or on a value computed in a peeled statement, then we call such
variables/arrays as \emph{affected variables/arrays}.  Our technique
computes the set of affected variables during each recursive attempt
to verify the post-condition.  The difference program must rectify the
values of these affected variables/arrays.  When the difference
program has no affected variables, the verification attempt can be
terminated after the next invocation of our technique.

We also consider the complexity of expressions in the program and use
it as a criteria for measuring progress.  For every expression
appearing in assignment statements, its expression complexity can be
defined in many possible ways.  Once this complexity is defined for
expressions, we can take the maximum complexity of all the expressions
as the expression complexity of the entire program.  For programs with
polynomial expressions, we can use the highest degree of the affected
variables/arrays in the expression as the expression complexity.
Similarly, several other criteria can be used to define the expression
complexity.  These include nesting levels of array indices,
size/weight of the expression trees in $\partial \PP_N$ vis-a-vis
$\PP_N$, number of variables, operators and constants in the
expressions and so on.  It is worth pointing out that such notions
have been previously studied in term rewriting
systems~\cite{simporder,decomporder}.

%% It indicates the degree of dependence on the value of $N$. It
%% decides how quickly our differencing procedure can reduce/remove
%% the dependence on $N$.  Such a measure can be defined in various
%% ways and there is a ton of literature on this.

\iffalse
We now show an example ranking function that we use to make the
decision to continue or abandon the verification attempt in the
recursive invocations of our technique.  We define the set $\domain$
to consist of triples where the first element of the triple is the
number of loops in the program, the second element in the triple is
the number of affected variables in the program and the third element
of the triple is a tuple of affectedness quotient of the common set
variable in programs $\PP_N$ and $\partial \PP_N$.  We define the
domain $\langle \domain, \prec \rangle$ consisting of the set
$\domain$ and the binary relation $\prec$ on the elements of
$\domain$.  Our ranking functions operate on this well-founded domain.
\fi

Note that each criterion discussed so far, including the number of
loops, the number of affected variables and the expression complexity,
is well-founded.  Hence, the domain of values represented by their
Cartesian product is also well-founded and represents a lexicographic
ordering on the difference programs computed by our method.  Progress
is guaranteed if each recursive invocation of our technique in the
cycle reduces this measure assigned by such a ranking function.  We
argue that the cycle of recursive invocations to our technique must
eventually terminate, as there are no infinite descending chains of
elements in the well-founded domain.  We present an algorithm that can
compute values from this domain on the fly and return the result of
the comparison between the computed quantities.  Note that no user
intervention is required for checking progress.

\begin{algorithm}[!h]
  \caption{\textsc{CheckProgress}({$\PP_N$}: program, {$\partial \PP_N$}: difference program)}
  \label{alg:progress}
  \begin{algorithmic}[1]
    \State $\mathsf{LoopList}$ := \textsc{Loops}($\PP_N$);
    \State $\mathsf{LoopList'}$ := \textsc{Loops}($\partial \PP_N$);
    \If{\#$\mathsf{LoopList'}$ $<$ \#$\mathsf{LoopList}$} \label{algline:lpcmp}
      \State \Return $\true$;
    \EndIf
    \State $\affectedvars$ := \textsc{ComputeAffected}($\PP_N$);
    \State $\affectedvars'$ := \textsc{ComputeAffected}($\partial \PP_N$);
    \If{\#$\affectedvars'$ $<$ \#$\affectedvars$} \label{algline:affcmp}
      \State \Return $\true$;
    \EndIf
    \State $\mathsf{EC}$ := \textsc{ExpressionComplexity}($\PP_N$);
    \State $\mathsf{EC'}$ := \textsc{ExpressionComplexity}($\partial \PP_N$);
    \If{$\mathsf{EC'}$ $<$ $\mathsf{EC}$} \label{algline:degcmp}
      \State \Return $\true$;
    \EndIf
    \State \Return $\false$;
  \end{algorithmic}
\end{algorithm}

The routine \textsc{CheckProgress} in Algorithm~\ref{alg:progress} is
used for checking progress after the difference program is computed.
The algorithm is based on the change in the number of loops, number of
affected variables and the expression complexity of the given program
$\PP_N$ vis-a-vis the difference program $\partial \PP_N$.  First, we
compute the number of loops in programs $\PP_N$ and $\partial \PP_N$.
We compare the number of loops in $\PP_N$ and $\partial \PP_N$ in
line~\ref{algline:lpcmp}.  If the difference program has fewer loops
than $\PP_N$, then we return $\true$ concluding that the
$\partial \PP_N$ is simpler to verify than the given program.  Note
that, we do not consider the glue loops in the difference program that
were introduced during the renaming step to copy values across
versions.  If the number of loops does not decrease in an invocation
of our technique, we check if the number of affected variables has
decreased.  We compute the set of affected variables in $\PP_N$ and
$\partial \PP_N$ using the routine \textsc{ComputeAffected} from
Algorithm~\ref{alg:pdg-affected}.  In line~\ref{algline:affcmp}, we
compare the number of affected variables in both the programs.  The
algorithm returns $\true$ if the difference program has fewer affected
variables than $\PP_N$.  Subsequently, we check if the expressions in
the difference program $\partial \PP_N$ are ``simpler'', and easier to
reason with, than $\PP_N$.  We assume the availability of a
routine \textsc{ExpressionComplexity} that can compute this complexity
measure for programs $\PP_N$ and $\partial \PP_N$.  In
line~\ref{algline:degcmp} we check if the expression complexity of the
difference program $\partial \PP_N$ is less than that of the given
program $\PP_N$, in which case the algorithm returns $\true$.  If none
of these criteria are met, then the algorithm returns $\false$.

\begin{lemma}  \label{lemma:alg-progress}
If \textsc{CheckProgress} in Algorithm~\ref{alg:progress} returns
$\true$, then the difference program $\partial \PP_N$ is ``simpler''
to verify (using the full-program induction technique) as compared to
the given program $\PP_N$.
\end{lemma}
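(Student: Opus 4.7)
The plan is to make precise the informal notion of ``simpler'' alluded to in the statement by exhibiting an explicit well-founded ordering on programs, and then to show by a case analysis on the three branches of \textsc{CheckProgress} that a $\true$ answer corresponds to a strict decrease in this ordering. Concretely, I would define the ranking triple
\[
  \rho(\PP) \;=\; \bigl(\#\textsc{Loops}(\PP),\; \#\affectedvars(\PP),\; \textsc{ExpressionComplexity}(\PP)\bigr)
\]
and order such triples lexicographically. Since each component takes values in $\mathbb{N}$ (or a well-founded subset thereof, as argued in the text preceding the algorithm), $\langle\rho,\prec_{\mathrm{lex}}\rangle$ is well-founded. I would then identify ``simpler'' with ``strictly smaller under $\prec_{\mathrm{lex}}$,'' and note that the minimal elements of this order are precisely the programs that full-program induction can discharge without further recursive invocation (loop-free programs with no affected variables and trivial expression complexity), which is what makes this a meaningful progress measure.

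Next, I would carry out the case analysis corresponding to the three return-$\true$ sites of Algorithm~\ref{alg:progress}. In the first case (line~\ref{algline:lpcmp}), the test $\#\mathsf{LoopList}' < \#\mathsf{LoopList}$ gives a strict decrease in the first coordinate of $\rho$, so $\rho(\partial\PP_N) \prec_{\mathrm{lex}} \rho(\PP_N)$ regardless of the values of the remaining coordinates. In the second case (line~\ref{algline:affcmp}), we have reached this check only because the first coordinate did not strictly increase---i.e.\ $\#\mathsf{LoopList}' \ge \#\mathsf{LoopList}$. Here I would need an auxiliary observation: by construction of $\partial\PP_N$ via \textsc{ProgramDiff} and \textsc{SimplifyDiff} (and the non-growth of loop counts guaranteed by peeling followed by pruning/acceleration in Sect.~\ref{sec:simp-diff}), the first coordinate in fact cannot strictly increase, so $\#\mathsf{LoopList}' = \#\mathsf{LoopList}$. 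Combined with the strict inequality $\#\affectedvars' < \#\affectedvars$, this again yields $\rho(\partial\PP_N) \prec_{\mathrm{lex}} \rho(\PP_N)$. The third case (line~\ref{algline:degcmp}) is analogous: the earlier failed tests force equality in the first two coordinates, and the strict decrease in \textsc{ExpressionComplexity} gives the lexicographic decrease.

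The main obstacle is the monotonicity claims needed in the second and third cases, namely that reaching those checks guarantees equality (not merely non-strict inequality) in the preceding coordinates. The inequality $\#\mathsf{LoopList}' \le \#\mathsf{LoopList}$ follows from the fact that \textsc{ProgramDiff} introduces no new loops beyond those already present (it only retains or rectifies existing loop bodies), and \textsc{SimplifyDiff} can only remove or accelerate loops. Showing $\#\affectedvars' \le \#\affectedvars$ requires a short argument that affectedness is preserved backwards under the difference transformation: any variable affected in $\partial\PP_N$ must have had its defining statement either inherited from $\PP_N$ or introduced to rectify an already affected variable, so the set of affected variables can only shrink or stay the same. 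I would state and prove these two monotonicity lemmas first, then invoke them in the case analysis above. Finally, I would conclude that the lexicographic decrease $\rho(\partial\PP_N) \prec_{\mathrm{lex}} \rho(\PP_N)$ is exactly the formal rendering of ``simpler to verify,'' since any infinite sequence of recursive \textsc{FPIVerify} calls would yield an infinite descending chain in a well-founded order---a contradiction---so progress is guaranteed.
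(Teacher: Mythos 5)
Your proposal is correct and takes essentially the same route as the paper: a case analysis on the three return-$\true$ sites combined with the observation that the triple (number of loops, number of affected variables, expression complexity) under lexicographic order forms a well-founded ranking function. You are in fact slightly more careful than the paper's own proof, which asserts each case yields a ``simpler'' program without explicitly stating the monotonicity facts ($\#\mathsf{LoopList}' \le \#\mathsf{LoopList}$ and $\#\affectedvars' \le \#\affectedvars$) needed to turn the failed earlier checks into equalities; the paper only records those non-increase claims later, in the proof of Lemma~\ref{lemma:alg-fpi-termination}.
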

\begin{proof}
The difference program $\partial \PP_N$ has strictly less loops than
$\PP_N$ when the check in line \ref{algline:lpcmp} is satisfied.  In
this case, verifying $\partial \PP_N$ is simpler than verifying
$\PP_N$.  Further, reduction in the number of affected
variables/arrays means less code is retained to rectify their values.
Hence, when the check on line \ref{algline:affcmp} is satisfied,
$\partial \PP_N$ is simpler than $\PP_N$.  By
Lemma \ref{lemma:diff-program-without-affected-vars}, when none of the
variables/arrays of interest are identified as affected, only the
peeled iterations of loops (referred as $\Peel{\PP_N}$) suffice as the
difference program $\partial \PP_N$.  This also makes verifying
$\partial \PP_N$ simpler as compared to $\PP_N$.  Similarly, the last
condition ensures that the expressions in the difference program are
easier to reason with than the given program $\PP_N$.  Further, these
characteristics of the program and the ordering among them as
specified by \textsc{CheckProgress} forms a lexicographic ranking
function \cite{lexico-cite}.  Hence, these quantities are bound to
reduce with each application of our technique, making the difference
program simpler to verify each time.  This concludes the lemma.
\end{proof}

%% The characteristics of the program and the ordering among them as
%% specified by algorithm \ref{alg:progress} forms a lexicographic
%% ranking function~ \cite{lexico-cite}.
%% We prove the lemma by contradiction.

\begin{lemma}  \label{lemma:alg-fpi-termination}
The routine \textsc{FPIVerify} in Algorithm~\ref{alg:fpi} eventual
  terminates.
\end{lemma}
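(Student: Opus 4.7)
The plan is to prove termination by combining (i) termination of each subroutine invoked from \textsc{FPIVerify}, (ii) termination of its internal do-while loop, and (iii) well-foundedness of the recursion in line \ref{line:fpi:ret-recursive}. First I would observe that every non-recursive subroutine called from \textsc{FPIVerify} terminates: \textsc{Rename}, \textsc{SyntacticDiff}, \textsc{ProgramDiff} and \textsc{SimplifyDiff} each perform a bounded traversal of a finite CFG (or of the syntax of $\varphi(N)$) and make a bounded number of calls to an off-the-shelf SMT solver; the base-case and inductive-step Hoare-triple checks at lines \ref{line:fpi:base1}, \ref{line:fpi:ind} and \ref{line:fpi:base2} reduce to (un)satisfiability queries over loop-free programs with the peels and glue code unrolled, which the back-end SMT solver discharges in finite time under the usual decidability assumptions about the theory fragment used.

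Next I would argue termination of the do-while loop in lines \ref{line:fpi:doloop}--\ref{line:fpi:base2}. In each iteration exactly one of the following happens: (a) the inductive step at line \ref{line:fpi:ind} succeeds and the routine returns $\true$; (b) \textsc{LoopFreeWP} yields a fresh pre-condition $\ppre_i(N-1)$ that is conjoined into the cumulative formula $c\_\ppre_i$; (c) \textsc{LoopFreeWP} yields nothing new, triggering the branch at lines \ref{line:fpi:check-progress}--\ref{line:fpi:ret-prog-inconc}; or (d) the strengthened base-case check at line \ref{line:fpi:base2} fails and the routine returns $\false$ at line \ref{line:fpi:ret-inconc}. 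Only case (b) could threaten termination. The key observation is that \textsc{LoopFreeWP} applied to $\partial \PP_N$ produces predicates whose atoms are substitution-instances of atoms already occurring in $\psi(N)$ and in the statements of $\partial \PP_N$; hence the iterates $c\_\ppre_i$ live, modulo logical equivalence, in a finite Boolean algebra over this bounded atom-set. Since $c\_\ppre_i$ is obtained by conjoining $\ppre_i(N)$ to $c\_\ppre_{i-1}(N)$ it is monotonically strengthening, so it must reach a fixpoint in finitely many iterations, at which point the guard ``no new $\ppre_i(N-1)$ obtained'' at line \ref{line:fpi:no-wp} fires and the loop exits via case (c) or (d).

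Finally I would close the recursion. Whenever \textsc{FPIVerify} recurses at line \ref{line:fpi:ret-recursive}, the call \textsc{CheckProgress}$(\PP_N, \partial \PP_N)$ has already returned $\true$, so by Lemma \ref{lemma:alg-progress} the difference program is strictly smaller than $\PP_N$ in the lexicographic order on the tuple $(\#\text{loops},\; \#\text{affected variables},\; \textsc{ExpressionComplexity})$. Each component of this tuple ranges over a well-founded set of natural numbers (the expression-complexity measure being defined on the finite set of syntactic expressions and hence also well-founded), so its lexicographic product is well-founded and admits no infinite descending chain. Consequently the chain of recursive invocations of \textsc{FPIVerify} has finite length, and combining this with the termination of the non-recursive subroutines and of the do-while loop yields termination of the outermost invocation.

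The main obstacle will be case (b) above: giving a rigorous ceiling on how many distinct weakest pre-conditions \textsc{LoopFreeWP} can emit before saturation. The cleanest route is to state, as an auxiliary lemma, that \textsc{LoopFreeWP} over a loop-free $\partial \PP_N$ preserves a finite syntactic footprint of atoms (the substitution of right-hand-side expressions for defined variables does not introduce new predicate symbols, and index expressions stay within the bounded set of expressions already in the program); the fix-point argument then follows immediately from finiteness of the induced Boolean algebra. Should this footprint-preservation fail for a pathological mix of pre/post-conditions and $\partial \PP_N$, the algorithm is still sound (it would simply not terminate on that input), which is consistent with the underlying undecidability of the verification problem and with the declared scope of our technique.
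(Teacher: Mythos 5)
Your third step---closing the recursion via \textsc{CheckProgress} and the lexicographic ranking on $(\#\text{loops}, \#\text{affected variables}, \text{expression complexity})$---is essentially the paper's entire proof: the paper argues that infinite recursion would require \textsc{CheckProgress} at line~\ref{line:fpi:check-progress} to return $\true$ infinitely often, notes that the first two components never increase under difference computation, and concludes by well-foundedness of the lexicographic order. So that part of your argument is correct and matches the intended route (you should, like the paper, make explicit that the earlier tuple components do not \emph{increase} when a later one decreases, since \textsc{CheckProgress} as written only tests each component for a decrease in order).

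The genuine gap is in your second step, the termination of the do-while loop in lines~\ref{line:fpi:doloop}--\ref{line:fpi:base2}. Your key claim---that \textsc{LoopFreeWP} over a fixed loop-free $\partial \PP_N$ keeps the iterates $c\_\ppre_i$ inside a finite Boolean algebra over a bounded atom set---is false in general. One pass of weakest pre-condition through a loop-free program is a fixed substitution $\sigma$ of right-hand sides for defined variables, so $\ppre_i$ is essentially $\psi(N)\sigma^i$; iterating a substitution can generate an unbounded chain of pairwise inequivalent atoms (e.g.\ an assignment \texttt{x = x + c} turns the atom $x = N$ into $x = N - c$, then $x = N - 2c$, and so on), so there is no saturation and the guard ``no new $\ppre_i(N-1)$ obtained'' at line~\ref{line:fpi:no-wp} need never fire. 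Your closing paragraph concedes exactly this, but then the argument for case (b) collapses and the do-while loop is not shown to terminate. The paper's own proof simply does not address this loop at all---it reasons only about the recursion---and elsewhere the paper explicitly acknowledges that the iterative strengthening ``may not converge easily.'' So you cannot repair the gap by the finite-footprint lemma you propose; if you want a complete termination argument for the do-while loop you would need an extra hypothesis (e.g.\ an a priori bound on the number of strengthening iterations, or semantic stabilization of the $c\_\ppre_i$ chain), which neither the algorithm nor the paper supplies.
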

\begin{proof}
Function \textsc{FPIVerify} presented in Algorithm \ref{alg:fpi} can
execute in infinite recursion only when the invocation
of \textsc{CheckProgress} in line \ref{line:fpi:check-progress}
returns $\true$ infinitely often.  From difference program
computation, we know that the number of loops and affected
variables/arrays in the difference program $\partial\PP_N$ never
increase beyond their counts in the given program $\PP_N$, they either
decrease or remain the same.  Further, if the expression complexity of
all the statements that update an affected variable/array does not
decrease then our method returns $\false$, and consequently we report
failure.  Thus, these three characteristics with the specified
ordering among them form a lexicographic ranking
function \cite{lexico-cite}.  Since the value of the lexicographic
ranking function strictly decreases in each recursive application of
our method, it ensures that function \textsc{FPIVerify} eventually
terminates.
\end{proof}

\section{Full-program Induction in Generalized Settings}
\label{sec:generalized}
%\paragraph{\bfseries Full-program Induction in Generalized Settings.}

For brevity and ease of explanation, we have presented our technique
in simple settings.  We have so far considered Hoare triples that have
a single parameter $N$.  In this section, we show how our technique
can be adapted to Hoare triples with multiple parameters as well as
peeling loops in different directions for our inductive reasoning.  We
also state the limitations of our technique.

Based on the ideas previously described, our technique can already
verify several interesting scenarios in programs.  Our technique can
verify programs that manipulate arrays of different sizes as well as
loops with non-uniform termination conditions that are a linear
function of $N$. It does so by computing a (possibly different) {\em
peel count} for each loop that manipulates different arrays.  For the
ease of presentation, our algorithm computes the rectified values of
variables/arrays in statements with a single operator.  When the
program statements have two or more operators, such statements can be
split into multiple statements, by introducing temporary variables
such that each statement has a single operator, and then computing the
difference program using our algorithm.

\paragraph{\bfseries Multiple independent program parameters.}
Consider proving Hoare triples with multiple parameters
$N_1,N_2,...,N_k$.  Suppose that the values of these parameters are
independent of each other.  Verifying Hoare triples for all values of
these parameters can be done by inducting on one program parameter at
a time while keeping the other parameters fixed.  We explain this with
the help of a simple example with two parameters. To prove that the
Hoare Triple $\{\varphi(N_1,N_2)\}$ $\;\PP_{N_1,N_2}\;$
$\{\psi(N_1,N_2)\}$ for all $N_1 \geq a \wedge N_2 \geq b$, we prove
the following three sub-goals.  First, in the base-case we prove that
the triple $\{\varphi(a,b)\}$ $\;\PP_{a,b}\;$ $\{\psi(a,b)\}$ holds.
Second, induction over the parameter $N_1$, where we assume the Hoare
Triple $\{\varphi(k,N_2)\}$ $\;\PP_{k,N_2}\;$ $\{\psi(k,N_2)\}$ holds
with $k \geq a \wedge N_2 \geq b$, and prove the Hoare Triple
$\{\varphi(k+1,N_2)\}$ $\;\PP_{k+1,N_2}\;$ $\{\psi(k+1,N_2)\}$,
treating $N_2$ as a symbolic parameter unchanged during the induction.
Third, induction over the parameter $N_2$, where we assume that the
Hoare Triple $\{\varphi(N_1,l)\}$ $\;\PP_{N_1, l}\;$ $\{\psi(N_1,l)\}$
holds where $N_1 \geq a \wedge l \geq b$, and prove the Hoare Triple
$\{\varphi(N_1, l+1)\}$ $\;\PP_{N_1, l+1}\;$ $\{\psi(N_1, l+1)\}$,
treating $N_1$ as a symbolic parameter unchanged in the induction.
This can be easily extended to Hoare triples with more than two
parameters.  For programs that manipulate arrays of different
independent sizes, we treat each variable representing the symbolic
size of arrays as a parameter.  As described above, our technique
verifies such programs by inducting on each parameter one at a time.

\paragraph{\bfseries Different direction of peeling loops.}
Recall that the difference program $\partial \PP_N$ is sequentially
composed with $\PP_{N-1}$.  Earlier, we have been peeling the last
iterations of the loops in $\PP_{N}$ so that $\PP_{N-1}$ and $\PP_N$
have the same number of iterations in each loop.  However, there are
programs where peeling the last iterations of loops may not be
possible such that our technique can compute a difference program.  In
such cases, we may need to peel the initial iterations of the loops in
the program.  As an example, consider a loop where the value of the
loop counter decreases in each iteration.  A possible way is to rotate
these loops to fit the template of loops defined in our grammar and
then apply our technique.  However, not all loops are such that they
can be rotated easily using the standard loop transformation
techniques.  For such loops, we may need to peel it at the beginning.
We peel the initial iterations of these loops and add the code that
rectifies values of variables computed in the loop after the peeled
iterations such that $\PP_{N-1} ; \partial \PP_N$ is semantically
equivalent to $\PP_N$.  Thus, by peeling initial iterations of loops
when computing the difference program, our technique can be easily
adapted to programs with such loops in a sound way.

\subsection{Limitations}
\label{sec:limitations}

There are several scenarios under which the full-program induction
technique may not produce a conclusive result.

Program computation with side-effects may make it difficult to compute
the difference program such that there is a clear separation between
the program $\PP_{N-1}$ and the rest of the computation that can make
up $P_N$.  Computation that results in side-effects includes I/O
operations, allocation, de-allocation and modification of heap memory
and other operations that modify the environment which is not local to
the given program.  When the given program is not free of such
side-effects, our technique may not be able to decompose it into
$\PP_{N-1}$ and $\partial\PP_N$.  Note that we only disallow the
computation that impacts the post-condition to be proved.  In our
experience, a large class of array manipulating programs are naturally
free of side-effects.  In particular, the programs discussed in this
paper (including Fig. \ref{fig:ex}) and those used for experimentation
are free of side-effects.

Currently our technique is unable to verify programs with branch
conditions that are dependent on the parameter $N$.  Computing the
difference program becomes cumbersome in such cases.  This stems from
the fact that the branch condition may evaluate to different outcomes
in $\PP_N$ and $\PP_{N-1}$, for the same value of $N$, and hence, may
require us to compute the difference of two arbitrary pieces of code
blocks.  We identify such cases while computing the difference program
in Algorithm \ref{alg:diff-generation} and suspend our verification
attempt on line \ref{line:diff:branch} of the routine
\textsc{NodeDiff}.  Note that this does not include the loop
conditions, which are handled by peeling the loop.  To illustrate this
case, consider the Hoare triple shown in Fig. \ref{fig:ex-challenge}.
The first loop in the program initializes array {\tt A} and the second
loop updates array {\tt A} within a branch statement with the
conditional expression {\tt N\%2 == 0}.  It is easy to see that this
branch condition will evaluate to different outcomes in $\PP_N$ and
$\PP_{N-1}$.  As a result, it is difficult for our technique to
compute a difference program.  Invariant generation techniques may be
better suited for verifying this example.  The weakest loop invariants
needed to prove the post-condition in this example are: $\forall j \in
[0,i)\; (A[j] = 0)$ for the first loop and $\forall k \in [0,i)\;
    (A[k]\%2 = N\%2)$ for the second loop.

\begin{figure}[h]
\begin{alltt}
// assume(\(\true\))

1.  for(i=0; i<N; i++) \{
2.    A[i] = 0;
3.  \}

5.  for(i=0; i<N; i++) \{
6.    if( N%2 == 0 ) \{
7.      A[i] = A[i] + 2;
8.    \} else \{
9.      A[i] = A[i] + 1;
10.   \}
11. \}

// assert(\(\forall\)i \(\in\) [0,N), A[i]%2 = N%2)
\end{alltt}
\caption{Challenge example}
\label{fig:ex-challenge}
\end{figure}

The difference program includes all peeled iterations of $\PP_N$ that
are missed in $\PP_{N-1}$.  Hence, our technique needs to know the
symbolic upper bound on the value of the loop counter to be able to
compute the number of iterations to be peeled from the program.
Further, when programs have loops with non-linear termination
conditions, the construction of the difference program becomes
challenging.  The number of peeled iterations itself may be a function
of $N$ and possibly result in a loop in the difference program.  For
example, consider a loop in $\PP_N$ with the counter $i$ initialized
to $0$ and the loop termination condition ``$i < N^2$''.  The
corresponding loop in $\PP_{N-1}$ has the same initialization but the
termination condition is ``$i < (N-1)^2$''.  ``$2 \times N +1$''
iterations must be peeled from this loop.  For such loop conditions,
an entire loop appears as the peel in the difference program.  Since
the number of iterations to be peeled is not a constant number,
currently while computing this peel (in line \ref{peel:line:fail} of
Algorithm \ref{alg:peelloops}), our technique reports a failure to
handle such programs.  Further, our grammar restricts the shape of
loops that can be verified using our technique.  Most notably, we
analyze programs with only non-nested loops.  We have designed a
variant of the full-program induction technique \cite{diffy-cav21}
that can verify a class of programs with nested loops. The technique
greatly simplifies the computation of difference programs.  It infers
and uses relations between two slightly different versions of the
program during the inductive step.  We refer the interested reader to
\cite{diffy-cav21}.

The inductive reasoning may remain inconclusive when the rank of the
difference programs, as defined in Sect. \ref{sec:progress}, does not
reduce during the successive invocations to verify the post-condition
using our technique.  Continuous reduction in the rank/progress
measure is crucial to the success of full-program induction.  When no
progress is observed, we suspend the verification attempt in line
\ref{line:fpi:ret-prog-inconc} of the routine \textsc{FPIVerify} in
Algorithm \ref{alg:fpi}.  Though the ranking functions can be defined
in many possible ways, there are programs that pose a challenge in
computing the difference program in a way that the rank of the
computed difference program does not reduce.  However, such programs
are rarely seen in practice.

Our technique may fail to verify a correct program if the heuristics
used for weakest pre-condition either fail or return a pre-condition
that causes violation of the base-case checked on
line~\ref{line:fpi:base2} of the routine \textsc{FPIVerify} in
Algorithm \ref{alg:fpi}.

Apart from the conceptual limitations mentioned above, our prototype
implementation has a few limitations.  We currently support
expressions in assignment statements with only $\{ +, -, \times, \div
\}$ operators.  In the implementation we support a single program
parameter and peel only the last iterations of loops.  Despite all
these limitations, our experiments show that full-program induction
performs remarkably well on a large suite of benchmarks.

\section{Experimental Results}
\label{sec:experiments}
In this section, we present an extensive experimental evaluation of
the {\emph{full-program induction} technique on a large set of array
manipulating benchmarks.

%\paragraph{\bf Implementation.}
\subsection{Implementation}

We have implemented our technique in a prototype tool called
{\ourtool}.  Our tool and the benchmarks used in the experiments are
publicly available at \cite{vajra-artifact}.  {\ourtool} takes a C
program in SV-COMP format as input.  The tool, written in {\tt C++},
is built on top of the LLVM/CLANG \cite{clang} $6.0.0$ compiler
infrastructure.  We use CLANG front-end to obtain LLVM bitcode.
Several normalization passes such as constant propagation, dead code
elimination, static single assignment (SSA) generation for renaming
variables and arrays, loop normalization for running loop-dependent
passes that identify program constructs such as loop counter, lower
bound and upper bound expressions, branch conditions and so on, are
performed on the bitcode.  We use {\zthree} \cite{z3} v$4.8.7$ as the
SMT solver to prove the validity of the parametric Hoare triples for
loop-free programs and to compute weakest pre-conditions.  We have
also implemented a Gaussian elimination based procedure that
propagates array equalities and simplifies select store nests in the
generated SMT formula to compute weakest pre-conditions.

%\paragraph{\bf Benchmarks.}
\subsection{Benchmarks}

We have evaluated {\ourtool} on a test-suite of $231$ benchmarks
inspired from different algebraic functions that compute polynomials
as well as a standard array operations such as copy, min, max and
compare.  Of these there are $121$ safe benchmarks and $110$ unsafe
benchmarks.  All our programs take a symbolic parameter $N$ which
specifies the size of each array as well as the number of times each
loop executes.  Several benchmarks in the test-suite follow different
types of templates wherein either the number of loops in the program
increases or they use potentially different data values.  Program from
the first kind of templates allow us to gauge the scalability aspect
of our technique as the number of loops in the program increases.
Programs from the latter templates allow for checking the robustness
of our technique to the content of arrays and scalars.

Assertions in the benchmarks are either universally quantified or
quantifier free safety properties.  The predicates in these assertions
are (in-)equalities over scalar variables, array elements, and
possibly non-linear polynomial terms over $N$.  Although our technique
can handle some classes of existentially quantified assertions as
discussed in Sect. \ref{sec:diff-pre}, all the examples considered for
our experiments have universally quantified or quantifier-free
assertions.  The approach described in the paper is naturally
applicable to programs with such assertions, given that the underlying
SMT solver can discharge the verification conditions containing
formulas with existential quantification and quantifier alternation
when a loop-free difference program is automatically computed.
Handling post-conditions with existential quantification and
quantifier alternation are part of future work.

%\paragraph{\bf Setup.}
\subsection{Setup}

All experiments were performed on a Ubuntu 18.04 machine with 16GB RAM
and running at 2.5 GHz.  We have compared our tool {\ourtool} against
the verifiers for array programs {\viap} (v1.1) \cite{viap},
{\veriabs} (v1.3.10) \cite{veriabs20}, {\booster}
(v0.2) \cite{booster}, {\vaphor} (v1.2) \cite{vaphor} and {\freqhorn}
(v.0.5) \cite{freqhorn}.  C programs were manually converted to
mini-Java as required by {\vaphor} and CHC formulae as required by
{\freqhorn}.  Since {\freqhorn} does not automatically find
counterexamples, so we used the supplementary tool {\expl} from its
repository on unsafe benchmarks as recommend by them.  We have used
the same version of {\veriabs} that was used to perform the
experiments in \cite{tacas20}, since the later version of {\veriabs}
invokes our tool {\ourtool} in its pipeline for verifying array
programs (refer \cite{veriabs20}).  A timeout of $100$ seconds was set
for these experiments.

\begin{figure}[t!]
\centering
\pgfplotstabletypeset[
every even row/.style={before row={\rowcolor[gray]{0.8}}},
every head row/.style={before row=\toprule,after row=\midrule},
every last row/.style={after row=\bottomrule},
col sep=comma,string type]
{eval/tabledata.csv}
\caption{Summary of the experimental results}
\label{tab:exp-results}
%\label{tab:ressummary}
\end{figure}

%\paragraph{\bf Results and Analysis.}
\subsection{Summary of the Results}

We executed all six tools on the entire set of $231$ benchmarks.  A
table with the summary of obtained results is shown in
Fig. \ref{tab:exp-results}.  We present the results on safe and unsafe
benchmarks separately for a fair representation of each tool on the
set of benchmarks.


\begin{figure*}[t!]
  \begin{center}
%%  \begin{minipage}{0.65\textwidth}
    \begin{tikzpicture}
      \begin{axis}[
%          title={Cactus plot},
%          ymode=log,
          xlabel={\em \#Benchmarks},
          ylabel={\em Time (s)},
          xmin=0, xmax=125,
          ymin=0, ymax=105,
%          xtick distance=10,
%          ytick distance=10,
          minor x tick num=1,
          minor y tick num=1,
          axis lines=left,
          legend pos=outer north east,
%          grid style=dashed,
%          ymajorgrids=true,
%          xmajorgrids=true,
        ]
        \addplot[color=blue,mark=halfcircle*,mark size=2pt] table {eval/safepd/vajra.csv}; \addlegendentry{\ourtool}
        \addplot[color=lime,mark=halfsquare*,mark size=2pt] table {eval/safepd/viap.csv}; \addlegendentry{\viap}
        \addplot[color=cyan,mark=asterisk,mark size=2pt] table {eval/safepd/veriabs.csv}; \addlegendentry{\veriabs}
        \addplot[color=olive,mark=triangle,mark size=2pt] table {eval/safepd/booster.csv}; \addlegendentry{\booster}
        \addplot[color=teal,mark=halfdiamond*,mark size=2pt] table {eval/safepd/vaphor.csv}; \addlegendentry{\vaphor}
        \addplot[color=purple,mark=star,mark size=2pt] table {eval/safepd/freqhorn.csv}; \addlegendentry{\freqhorn}
      \end{axis}
    \end{tikzpicture}
  \end{center}
  %% \end{minipage}\hfill
  %% \begin{minipage}{0.34\textwidth}
    \caption{Quantile plot showing the performance of all the tools on safe benchmarks}
    \label{plot:safe}
  %% \end{minipage}
%  \vspace{-.5em}
\end{figure*}  %% Quantile plot of safe benchmarks


\begin{figure*}[t!]
  \begin{center}  
  %% \begin{minipage}{0.65\textwidth}
    \begin{tikzpicture}
      \begin{axis}[
%          title={Cactus plot},
%          ymode=log,
          xlabel={\em \#Benchmarks},
          ylabel={\em Time (s)},
          xmin=0, xmax=115,
          ymin=0, ymax=105,
%          xtick distance=10,
%          ytick distance=10,
          minor x tick num=1,
          minor y tick num=1,
          axis lines=left,
          legend pos=outer north east,
%          grid style=dashed,
%          ymajorgrids=true,
%          xmajorgrids=true,
        ]
        \addplot[color=blue,mark=halfcircle*,mark size=2pt] table {eval/unsafepd/vajra.csv}; \addlegendentry{\ourtool}
        \addplot[color=lime,mark=halfsquare*,mark size=2pt] table {eval/unsafepd/viap.csv}; \addlegendentry{\viap}
        \addplot[color=cyan,mark=asterisk,mark size=2pt] table {eval/unsafepd/veriabs.csv}; \addlegendentry{\veriabs}
        \addplot[color=olive,mark=triangle,mark size=2pt] table {eval/unsafepd/booster.csv}; \addlegendentry{\booster}
        \addplot[color=teal,mark=halfdiamond*,mark size=2pt] table {eval/unsafepd/vaphor.csv}; \addlegendentry{\vaphor}
        \addplot[color=purple,mark=star,mark size=2pt] table {eval/unsafepd/freqhorn.csv}; \addlegendentry{\freqhorn}
      \end{axis}
    \end{tikzpicture}
    \end{center}
  %% \end{minipage}\hfill
  %% \begin{minipage}{0.34\textwidth}
    \caption{Quantile plot showing the performance of all the tools on unsafe benchmarks}
    \label{plot:unsafe}
  %% \end{minipage}
  %% \vspace{-.5em}
\end{figure*}  %% Quantile plot of unsafe benchmarks

%\paragraph{\bf Analysis of safe benchmarks.}
\subsection{Analysis on Safe Benchmarks}

{\ourtool} verified $110$ safe benchmarks, compared to $58$ verified
by {\viap}, $50$ by {\veriabs}, $36$ by {\booster}, $27$ by {\vaphor}
and $26$ by {\freqhorn}.  {\ourtool} was inconclusive on $11$
benchmarks.  The reasons for the inability of our tool to generate a
conclusive result are as follows: (1) the difficulty in computing a
difference program due to the presence of a branch condition dependent
on $N$ or complex operations such as modulo, (2) difficulty in
computing the required strengthening of the pre- and post-conditions
and (3) the back-end SMT solver returning an inconclusive result.

{\ourtool} verified $52$ benchmarks on which {\viap} diverged,
primarily due to the inability of {\viap}'s heuristics to get closed
form expressions.  {\viap} verified $5$ benchmarks that could not be
verified by the current version of {\ourtool} due to syntactic
limitations.  {\ourtool}, however, is two orders of magnitude faster
than {\viap} on programs that were verified by both (refer
Fig. \ref{plot:safe}).

{\ourtool} proved $60$ benchmarks on which {\veriabs} diverged.
{\veriabs} ran out of time on programs where loop shrinking and
merging abstractions were not strong enough to prove the assertions.
{\veriabs} reported $1$ program as unsafe due to the imprecision of its
abstractions and it proved $3$ benchmarks that {\ourtool} could not.

{\ourtool} verified $74$ benchmarks that {\booster} could not.
{\booster} reported $27$ benchmarks as unsafe due to imprecise
abstractions, its fixed-point computation engine reported unknown
result on $17$ benchmarks and it ended abruptly on $41$ benchmarks.
{\booster} also proved $2$ benchmarks that could not be handled by the
current version of {\ourtool} due to syntactic limitations.

{\ourtool} verified $83$ benchmarks that {\vaphor} could not.  The
distinguished cell abstraction technique implemented in {\vaphor} is
unable to prove safety of programs, when the value at each array index
needs to be tracked.  {\vaphor} reported $9$ programs unsafe due to
imprecise abstraction, returned unknown on $2$ programs and ended
abruptly on $83$ programs.  {\vaphor} proved $2$ benchmarks that
{\ourtool} could not.

{\ourtool} verified $84$ programs on which {\freqhorn} diverged,
especially when constants and terms that appear in the inductive
invariant are not syntactically present in the program.  {\freqhorn}
ran out of time on $76$ programs, reported unknown result or ended
abruptly on $19$ benchmarks.  {\freqhorn} verified a benchmark with a
single loop that {\ourtool} could not.

All the benchmarks that are uniquely solved by {\ourtool} have
multiple sequentially composed loops and/or a form of
aggregation/cross-iteration dependence via a scalar variable or an
array.

%\paragraph{\bf Analysis of unsafe benchmarks.}
\subsection{Analysis on Unsafe Benchmarks}

{\ourtool} disproved $109$ benchmarks, compared to $108$ disproved by
{\viap}, $102$ by {\veriabs}, $84$ by {\booster}, $106$ by {\vaphor}
and $99$ by {\expl}, the supplementary tool that comes with
{\freqhorn}.  {\ourtool} was unable to disprove $1$ benchmark.

{\ourtool} disproved $1$ benchmark which {\viap} could not.  {\viap}
concluded $1$ benchmark as safe and timed out on $1$ benchmark.  Even
on unsafe benchmarks, {\ourtool}, is an order of magnitude faster than
{\viap} (refer Fig. \ref{plot:unsafe}).
{\ourtool} disproved $7$ benchmarks which {\veriabs} could not.
{\veriabs} ran out of time on $8$ programs.

{\ourtool} disproved $25$ benchmarks that {\booster} could not.
{\booster} reported unknown result on $15$ benchmarks and it timed out
on $11$ benchmarks.
{\ourtool} disproved $3$ benchmarks that {\vaphor} could not.
{\vaphor} proved $1$ program as safe, returned unknown on $2$
programs and timed out on $2$ programs.
{\ourtool} disproved $10$ programs which {\expl} could not.  {\expl}
ran out of time on $11$ programs.

\subsection{Comparing the Performance}

\begin{figure*}[t!]
\centering
\pgfplotstabletypeset[
empty cells with={--},
every even row/.style={before row={\rowcolor[gray]{0.8}}},
every head row/.style={before row=\toprule,after row=\midrule},
every last row/.style={after row=\bottomrule},
col sep=comma, string type]
{eval/zerosum.csv}
\caption{Template-wise analysis of the results for the `zerosum' benchmarks}
\label{tab:zerosum-results}
\end{figure*}

The quantile plots in Figs. \ref{plot:safe} and \ref{plot:unsafe} show
the performance of each tool on all the safe and unsafe benchmarks
respectively in terms of time taken to produce the result.  {\ourtool}
verified/disproved each benchmark within three seconds.  In
comparison, as seen from the plots, other tools took significantly
more time in proving the programs.

As mentioned previously, the test-suite has several benchmarks that
are instantiated from different templates.  For such instantiated
benchmarks that only change the data values in the instances of the
templates, we did not see any change in the performance of {\ourtool}.
Hence, we do not discuss them further.  We now discuss the results for
a set of templates where the number of loops in the benchmarks
instantiated from them increases.  In Fig. \ref{tab:zerosum-results},
we present the results of executing {\ourtool} on the benchmarks
instantiated from the `zerosum' templates.  The first column indicates
the benchmark instance number, the second column indicates the number
of loops in the instantiated benchmark, columns three to six indicate
the benchmark template name that is instantiated and give the time (in
seconds) taken by {\ourtool} to prove the given assertion in the
benchmark instance.  It can be seen from the table that as the number
of loops increase in the benchmark, our tool requires more time in
solving the benchmark.  This is primarily attributed to pre- and
post-condition strengthening step in our technique that requires our
technique to infer and prove auxiliary predicates iteratively during
the inductive step.

\section{Related Work}
\label{sec:related}
Earlier work on inductive techniques can be broadly categorized into
those that require loop-specific invariants to be provided or
automatically generated, and those that work without them.  Requiring
a ``good'' inductive invariant for every loop in a program effectively
shifts the onus of assertion checking to that of invariant generation.
Among techniques that do not require explicit inductive invariants or
mid-conditions for each loop, there are some that require loop
invariants to be implicitly generated by a constraint solver.  These
include techniques based on constrained Horn clause
solving \cite{chc,quic3,freqhorn,vaphor}, acceleration and lazy
interpolation for arrays \cite{booster} and those that use inductively
defined predicates and recurrence
solving \cite{trace20,viap,aligators}, among others.

%Approaches that reason with Horn clauses have been proposed
%for proving quantified properties of array manipulating programs.
%These can be viewed as generate inductive invariants implicitly
%\cite{quic3} or explicitly \cite{freqhorn}.

\textsc{QUIC3} \cite{quic3}, {\freqhorn} \cite{freqhorn} and the
technique in \cite{chc} infer universally quantified inductive
invariants of array programs specified as Constrained Horn
Clauses.  \textsc{QUIC3} \cite{quic3} extends the IC3 framework to a
combination of SMT theories and performs lazy quantifier
instantiations.  {\freqhorn} \cite{freqhorn} infers universally
quantified invariants over arrays within its syntax-guided synthesis
framework and can reason with complex array index expressions by
adopting the \emph{tiling} \cite{sas17} ideas.

{\vaphor} \cite{vaphor} transforms array programs to array-free Horn
formulas.  Their technique is parameterized by the number of array
cells to be tracked resulting in an eager quantifier instantiation.
%It uses Horn clause solvers such as \zthree\cite{z3}, \spacer\cite{spacer}
%and \eldarica\cite{eldarica} to verify the generated array-free Horn
%formulas.

{\booster} \cite{booster} combines
acceleration \cite{acceleration1,acceleration2} and lazy abstraction
with interpolation for arrays \cite{lazyabsarray}.  Performing
interpolation to infer universally quantified array properties is
difficult \cite{Jhala,Monniaux2015}.  The technique does not always
succeed, especially for programs where simple interpolants are
difficult to compute \cite{sas17}.

VIAP \cite{viap} translates the program to an array-free quantified
first order logic formula in the theory of equality and uninterpreted
functions using the scheme proposed in \cite{viaptheory}.  They use
several tactics to simplify the generated formula and apply induction
over array indices to prove the property.
%VIAP proved more programs than any other tool in the arrays
%sub-category in software verification competition SV-COMP 2019
%\cite{svcomp}.
Unlike our method, it does not have heuristics for finding additional
pre-conditions that are required for the induction proof to succeed
which our method successfully infers.
%Their transformation, however, is independent of the
%property to be proven unlike our technique that also
%transforms the pre- and post-conditions
%as well as uses a gaussian elimination based algorithm
%for computing weakest pre-conditions while verifying the property.

\cite{trace20} uses theorem provers to introduce and
prove lemmas that implicitly capture inductive loop invariants at
arbitrary points in the program described in trace logic.

Thanks to the impressive capabilities of modern constraint solvers and
the effectiveness of carefully tuned heuristics for stringing together
multiple solvers, approaches that rely on constraint solving have shown
a lot of promise in recent years.  However, at a fundamental level,
these formulations rely on solving implicitly specified loop
invariants garbed as constraint solving problems.

%Among the techniques that automatically generate inductive invariants,
Template-based techniques \cite{Gulwani,Srivastava09,Dirk07} search
for inductive invariants by instantiating the parameters of a fixed
set of templates within the abstract interpretation framework.  They
can generate invariants with alternating quantifiers, however, the
user must supply invariant templates and the cost of generating
invariants is quite high.

%% Expand the description of the related inductive techniques
A large number of techniques have been proposed in literature that use
induction \cite{ind-bdds,ind-wsst,ind-tempo,ind-tlm,ind-te,ind-smt,ind-hc}
and its pragmatically more useful version k-induction
\cite{kind,kind-rtv,kind-lustre,kind-race,kind-par,kind-comb,kind-boost,kind-kinvs,kind-bmc,kind-interpol,kind-invinf,kind-hmc}.
These techniques generate and use loop invariants, especially when
aimed at verifying safety properties of programs.  In contrast, our
novel technique does not rely on generation or use of loop-specific
invariants and differs significantly from these methods in the way in
which the inductive step is formulated using the computed difference
programs and difference pre-conditions.

There are yet other inductive techniques, such as that
in \cite{lopstr12,sas17,squeezing,diffy-cav21}, that truly do not
depend on loop invariants being generated.  In fact, the technique
of \cite{lopstr12} comes closest to our work in principle.
However, \cite{lopstr12} imposes severe restrictions on the input
programs to move the peel of one loop across the next sequentially
composed loop such that the program with the peeled loops composed
with the program fragment consisting of only the peeled iterations is
semantically equivalent to the input program.
They call these restrictions on the input programs
as \emph{commutativity of statements}.  In practice, such restrictive
conditions and data dependencies are not satisfied by a large class of
programs.
% (e.g. that in Fig.~\ref{fig:ex}(a) or that in Fig.~\ref{fig:ex}(b)
%with the assignment {\tt C[t3] = N;} in line 12).  This
%makes \cite{lopstr12} less useful and our technique more relevant.
%Hence, the technique of \cite{lopstr12} cannot be used to analyze
%these programs.
For instance, the example in Fig. \ref{fig:ex} and our running example
in Fig. \ref{fig:ss} do not meet these restrictions.  The technique
of \cite{lopstr12} is thus applicable only to a small part of the
program-assertion space over which our technique works.

The tiling \cite{sas17} technique for verifying universally quantified
properties of array programs reasons one loop at a time and applies
only when loops have simple data dependencies across iterations
(called \emph{non-interference} of tiles in \cite{sas17}).  It
effectively uses a slice of the post-condition of a loop as an
inductive invariant.  In the case of sequentially composed loops, it
also requires strong enough mid-conditions to be automatically
generated or supplied by the user.  Our full-program induction
technique circumvents all of these requirements.

The method proposed in \cite{squeezing} proves programs correct by
induction on a rank, chosen as the size of program states.  It
constructs a safety proof by automatically synthesizing a squeezing
function that can map higher-ranked states to a lower-ranked state,
while ensuring that original states are faithfully simulated by their
squeezed counterparts.  This allows the method to shrink program
traces of unbounded length, limiting the reasoning to only
minimally-ranked states.  A guess-and-check approach combined with
heuristics for making educated guesses is employed for computing the
squeezing functions necessary to prove a given program.  Successful
synthesis of a squeezing function is equivalent to establishing the
inductive step.  These functions can be quite useful in practice, for
example, to prove programs that may not have a first-order
representable loop invariant.  In general, squeezing functions are not
easy to synthesize and automatically searching for such functions is a
non-trivial and an exceedingly time consuming task.  Further, the
squeezing functions can only consist of commutative and invertible
operations, restricting their applicability.  The technique may be
used in tandem with the classical loop invariant based methods.  In
comparison, our technique generates and uses difference invariants in
an explicit inductive step and it does not rely on generation and use
of squeezers to shrink the state space of the program.

The technique presented in \cite{diffy-cav21} also performs induction
on the entire program and is a parallel line of work.  As stated
in \cite{diffy-cav21}, full-program induction forms the basis of their
technique but the way in which the inductive step is formulated
differs significantly from ours.  It coins {\em difference invariants}
that relate two slightly different versions of the given program.
They use just the peeled iterations of loops as difference programs
and amend the inductive reasoning using difference invariants.  The
technique supports nested loops as well as branch conditions with
value dependent on the program parameter $N$.  The prototype
tool \textsc{Diffy} \cite{diffy} implements the method.  We believe
that there are programs for which \cite{diffy-cav21} may not be able
to successfully infer and use difference invariants, but full-program
induction (with its recursive invocation ability) will be able to
verify the post-conditions in such programs.

There are several techniques that approximate program computation
during verification.  \cite{prophecytacas21} has proposed a
counterexample-guided abstraction refinement scheme for programs that
manipulate arrays. Their idea relies on prophecy variables to refine
the abstraction.  {\veriabs} \cite{veriabs20} is an abstraction-based
verifier to prove properties of programs.  It implements a portfolio
of abstractions that enable the tool to leverage bounded model
checking.  These abstractions tend to restrict the array manipulating
loops to a fixed number of (possibly initial) iterations.  The tool
makes a series of attempts to prove the property and uses program
features to choose the next abstraction/strategy to be applied.
Fluid updates \cite{fluid} uses bracketing constraints, which are
over- and under-approximations of indices, to specify the concrete
elements being updated in an array without explicit partitioning.
While their abstraction is independent of the given property, they
assume that only a single index expression updates the array in each
loop, severely restricting the technique.
Analyses proposed in \cite{Gopan,Halbwachs} partition the
array into symbolic slices and abstracts each slice with a numeric
scalar variable.
%These techniques find it computationally difficult
%to analyze the overlap between slices and they do not handle updates
%to multiple indices in the array as well as non-contiguous array
%partitions.
%
Abstract interpretation based techniques \cite{Rival,ArrayCousotCL11}
propose an abstract domain which utilizes cell contents to split array
cells into groups.
In particular, the technique in \cite{Rival} is useful when array
cells with similar properties are non-contiguously present in the
array.
These approaches require the implementation of abstract transformers
for each specialized domain which is not a necessity with our framework.
Other techniques for analyzing array manipulating programs
include \cite{Jhala,verifast}.

Program differencing \cite{paige-differencing}, program
integration \cite{integration} and differential static
analysis \cite{dsa-hoare} have been studied in literature for various
purposes.  Incremental computation of expensive
expressions \cite{liu-incrementalization}, optimizing the execution
time of programs that manipulate arrays \cite{liu-optimization},
reducing the cost of regression testing \cite{diff-testing} and
checking data-structure invariants \cite{ditto07} are some
applications of such techniques. {\symdiff} \cite{symdiff} tool, based
on differential static analysis \cite{dsa-hoare}, displays semantic
differences between different program versions and checks equivalence.
However, the method neither supports checking quantified
post-conditions nor does it support loops and arrays of potentially
unbounded size.  Unfortunately, these techniques do not always
generate code fragments that are well suited for property
verification, especially when the input programs manipulate arrays.
To the best of our knowledge, full-program induction is the first
technique to successfully employ difference computation customized for
verification in an inductive setting.

Full-program induction also offers several other advantages over the
existing techniques.  For instance, it can reason with different
quantifiers over multiple variables, it does not require
implementation of specialized abstract domains for handling quantified
formulas and it can enable the use of existing tools and techniques
for reasoning over arrays.
We believe that verification tools need to have an arsenal of
techniques to be able to efficiently prove a wide range of challenging
problems.
Our novel technique, full-program induction, is a suitable fit for
such an arsenal and has been adopted by verifiers such as {\veriabs}
in practice.  Since the 2020 edition of the international software
verification competition (SV-COMP), {\veriabs} \cite{veriabs20}
invokes our tool {\ourtool} in its pipeline of tools for verifying
programs with arrays from the set of benchmarks in the competition.

\section{Conclusion \& Future Work}
\label{sec:conc}
We presented a novel property-driven verification technique, called
full-program induction, that performs induction over the entire
program via parameter $N$.  Significantly, our analysis obviates the
need for loop-specific invariants during verification.  The technique
automatically computes the difference program and difference
pre-condition that enable the inductive step of the reasoning.  Our
technique is general and can be applied to programs manipulating
arrays that store integers, matrices, polynomials, vectors and so on.
This give our technique the potential of verifying apis used in
machine learning and cryptography libraries.  Experiments show that
{\ourtool} performs remarkably well vis-a-vis state-of-the-art tools
for analyzing array manipulating programs.

 Possible directions of future work include investigations into
possible ways of incorporating automatically generated and externally
supplied invariants during our analysis, especially for computing
simpler difference programs and handling programs with nested loops.
Automated support for handling assertions with existential
quantification and quantifier alternation and for verifying
heap-manipulating programs as well as programs that operate on tensors
using our technique.  Investigations into the use of synthesis-based
techniques for automatically computing the difference programs and
adapting them to programs from various interesting domains forms
another line of work.  Improvements to the algorithms for simultaneous
strengthening of pre- and post-conditions can be considered.

\bibliography{fpi}

%% BioMed_Central_Bib_Style_v1.01

\begin{thebibliography}{78}
% BibTex style file: bmc-mathphys.bst (version 2.1), 2014-07-24
\ifx \bisbn   \undefined \def \bisbn  #1{ISBN #1}\fi
\ifx \binits  \undefined \def \binits#1{#1}\fi
\ifx \bauthor  \undefined \def \bauthor#1{#1}\fi
\ifx \batitle  \undefined \def \batitle#1{#1}\fi
\ifx \bjtitle  \undefined \def \bjtitle#1{#1}\fi
\ifx \bvolume  \undefined \def \bvolume#1{\textbf{#1}}\fi
\ifx \byear  \undefined \def \byear#1{#1}\fi
\ifx \bissue  \undefined \def \bissue#1{#1}\fi
\ifx \bfpage  \undefined \def \bfpage#1{#1}\fi
\ifx \blpage  \undefined \def \blpage #1{#1}\fi
\ifx \burl  \undefined \def \burl#1{\textsf{#1}}\fi
\ifx \doiurl  \undefined \def \doiurl#1{\url{https://doi.org/#1}}\fi
\ifx \betal  \undefined \def \betal{\textit{et al.}}\fi
\ifx \binstitute  \undefined \def \binstitute#1{#1}\fi
\ifx \binstitutionaled  \undefined \def \binstitutionaled#1{#1}\fi
\ifx \bctitle  \undefined \def \bctitle#1{#1}\fi
\ifx \beditor  \undefined \def \beditor#1{#1}\fi
\ifx \bpublisher  \undefined \def \bpublisher#1{#1}\fi
\ifx \bbtitle  \undefined \def \bbtitle#1{#1}\fi
\ifx \bedition  \undefined \def \bedition#1{#1}\fi
\ifx \bseriesno  \undefined \def \bseriesno#1{#1}\fi
\ifx \blocation  \undefined \def \blocation#1{#1}\fi
\ifx \bsertitle  \undefined \def \bsertitle#1{#1}\fi
\ifx \bsnm \undefined \def \bsnm#1{#1}\fi
\ifx \bsuffix \undefined \def \bsuffix#1{#1}\fi
\ifx \bparticle \undefined \def \bparticle#1{#1}\fi
\ifx \barticle \undefined \def \barticle#1{#1}\fi
\bibcommenthead
\ifx \bconfdate \undefined \def \bconfdate #1{#1}\fi
\ifx \botherref \undefined \def \botherref #1{#1}\fi
\ifx \url \undefined \def \url#1{\textsf{#1}}\fi
\ifx \bchapter \undefined \def \bchapter#1{#1}\fi
\ifx \bbook \undefined \def \bbook#1{#1}\fi
\ifx \bcomment \undefined \def \bcomment#1{#1}\fi
\ifx \oauthor \undefined \def \oauthor#1{#1}\fi
\ifx \citeauthoryear \undefined \def \citeauthoryear#1{#1}\fi
\ifx \endbibitem  \undefined \def \endbibitem {}\fi
\ifx \bconflocation  \undefined \def \bconflocation#1{#1}\fi
\ifx \arxivurl  \undefined \def \arxivurl#1{\textsf{#1}}\fi
\csname PreBibitemsHook\endcsname

%%% 1
\bibitem{kind}
\begin{bchapter}
\bauthor{\bsnm{Sheeran}, \binits{M.}},
\bauthor{\bsnm{Singh}, \binits{S.}},
\bauthor{\bsnm{St{\aa}lmarck}, \binits{G.}}:
\bctitle{Checking safety properties using induction and a {SAT}-solver}.
In: \bbtitle{Proc. of {FMCAD}},
pp. \bfpage{127}--\blpage{144}
(\byear{2000})
\end{bchapter}
\endbibitem

%%% 2
\bibitem{chc}
\begin{bchapter}
\bauthor{\bsnm{Komuravelli}, \binits{A.}},
\bauthor{\bsnm{Bjorner}, \binits{N.}},
\bauthor{\bsnm{Gurfinkel}, \binits{A.}},
\bauthor{\bsnm{McMillan}, \binits{K.L.}}:
\bctitle{Compositional verification of procedural programs using {Horn} clauses
  over integers and arrays}.
In: \bbtitle{Proc. of {FMCAD}},
pp. \bfpage{89}--\blpage{96}
(\byear{2015})
\end{bchapter}
\endbibitem

%%% 3
\bibitem{quic3}
\begin{bchapter}
\bauthor{\bsnm{Gurfinkel}, \binits{A.}},
\bauthor{\bsnm{Shoham}, \binits{S.}},
\bauthor{\bsnm{Vizel}, \binits{Y.}}:
\bctitle{Quantifiers on demand}.
In: \bbtitle{Proc. of {ATVA}},
pp. \bfpage{248}--\blpage{266}
(\byear{2018})
\end{bchapter}
\endbibitem

%%% 4
\bibitem{freqhorn}
\begin{bchapter}
\bauthor{\bsnm{Fedyukovich}, \binits{G.}},
\bauthor{\bsnm{Prabhu}, \binits{S.}},
\bauthor{\bsnm{Madhukar}, \binits{K.}},
\bauthor{\bsnm{Gupta}, \binits{A.}}:
\bctitle{Quantified invariants via syntax-guided-synthesis}.
In: \bbtitle{Proc. of {CAV}},
pp. \bfpage{259}--\blpage{277}
(\byear{2019})
\end{bchapter}
\endbibitem

%%% 5
\bibitem{viap}
\begin{bchapter}
\bauthor{\bsnm{Rajkhowa}, \binits{P.}},
\bauthor{\bsnm{Lin}, \binits{F.}}:
\bctitle{Extending {VIAP} to handle array programs}.
In: \bbtitle{Proc. of {VSTTE}},
pp. \bfpage{38}--\blpage{49}
(\byear{2018})
\end{bchapter}
\endbibitem

%%% 6
\bibitem{aligators}
\begin{bchapter}
\bauthor{\bsnm{Henzinger}, \binits{T.A.}},
\bauthor{\bsnm{Hottelier}, \binits{T.}},
\bauthor{\bsnm{Kov{\'a}cs}, \binits{L.}},
\bauthor{\bsnm{Rybalchenko}, \binits{A.}}:
\bctitle{Aligators for arrays (tool paper)}.
In: \bbtitle{Proc. of {LPAR}},
pp. \bfpage{348}--\blpage{356}
(\byear{2010})
\end{bchapter}
\endbibitem

%%% 7
\bibitem{veriabs20}
\begin{bchapter}
\bauthor{\bsnm{Afzal}, \binits{M.}},
\bauthor{\bsnm{Chakraborty}, \binits{S.}},
\bauthor{\bsnm{Chauhan}, \binits{A.}},
\bauthor{\bsnm{Chimdyalwar}, \binits{B.}},
\bauthor{\bsnm{Darke}, \binits{P.}},
\bauthor{\bsnm{Gupta}, \binits{A.}},
\bauthor{\bsnm{Kumar}, \binits{S.}},
\bauthor{\bsnm{Babu~M}, \binits{C.}},
\bauthor{\bsnm{Unadkat}, \binits{D.}},
\bauthor{\bsnm{Venkatesh}, \binits{R.}}:
\bctitle{Veriabs : Verification by abstraction and test generation (competition
  contribution)}.
In: \bbtitle{Proc. of {TACAS}},
pp. \bfpage{383}--\blpage{387}
(\byear{2020})
\end{bchapter}
\endbibitem

%%% 8
\bibitem{sas17}
\begin{bchapter}
\bauthor{\bsnm{Chakraborty}, \binits{S.}},
\bauthor{\bsnm{Gupta}, \binits{A.}},
\bauthor{\bsnm{Unadkat}, \binits{D.}}:
\bctitle{{Verifying Array Manipulating Programs by Tiling}}.
In: \bbtitle{Proc. of {SAS}},
pp. \bfpage{428}--\blpage{449}
(\byear{2017})
\end{bchapter}
\endbibitem

%%% 9
\bibitem{vaphor}
\begin{bchapter}
\bauthor{\bsnm{Monniaux}, \binits{D.}},
\bauthor{\bsnm{Gonnord}, \binits{L.}}:
\bctitle{{Cell Morphing}: From array programs to array-free horn clauses}.
In: \bbtitle{Proc. of {SAS}},
pp. \bfpage{361}--\blpage{382}
(\byear{2016})
\end{bchapter}
\endbibitem

%%% 10
\bibitem{booster}
\begin{bchapter}
\bauthor{\bsnm{Alberti}, \binits{F.}},
\bauthor{\bsnm{Ghilardi}, \binits{S.}},
\bauthor{\bsnm{Sharygina}, \binits{N.}}:
\bctitle{Booster: An acceleration-based verification framework for array
  programs}.
In: \bbtitle{Proc. of {ATVA}},
pp. \bfpage{18}--\blpage{23}
(\byear{2014})
\end{bchapter}
\endbibitem

%%% 11
\bibitem{bmc}
\begin{barticle}
\bauthor{\bsnm{Clarke}, \binits{E.}},
\bauthor{\bsnm{Biere}, \binits{A.}},
\bauthor{\bsnm{Raimi}, \binits{R.}},
\bauthor{\bsnm{Zhu}, \binits{Y.}}:
\batitle{Bounded model checking using satisfiability solving}.
\bjtitle{FMSD}
\bvolume{19}(\bissue{1}),
\bfpage{7}--\blpage{34}
(\byear{2001})
\end{barticle}
\endbibitem

%%% 12
\bibitem{z3}
\begin{bchapter}
\bauthor{\bparticle{de} \bsnm{Moura}, \binits{L.M.}},
\bauthor{\bsnm{Bj{\o}rner}, \binits{N.}}:
\bctitle{{Z3:} an efficient {SMT} solver}.
In: \bbtitle{Proc. of {TACAS}},
pp. \bfpage{337}--\blpage{340}
(\byear{2008})
\end{bchapter}
\endbibitem

%%% 13
\bibitem{Gopan}
\begin{bchapter}
\bauthor{\bsnm{Gopan}, \binits{D.}},
\bauthor{\bsnm{Reps}, \binits{T.W.}},
\bauthor{\bsnm{Sagiv}, \binits{S.}}:
\bctitle{A framework for numeric analysis of array operations}.
In: \bbtitle{Proc. of {POPL}},
pp. \bfpage{338}--\blpage{350}
(\byear{2005})
\end{bchapter}
\endbibitem

%%% 14
\bibitem{Halbwachs}
\begin{bchapter}
\bauthor{\bsnm{Halbwachs}, \binits{N.}},
\bauthor{\bsnm{P{\'{e}}ron}, \binits{M.}}:
\bctitle{Discovering properties about arrays in simple programs}.
In: \bbtitle{Proc. of {PLDI}},
pp. \bfpage{339}--\blpage{348}
(\byear{2008})
\end{bchapter}
\endbibitem

%%% 15
\bibitem{Rival}
\begin{bchapter}
\bauthor{\bsnm{Liu}, \binits{J.}},
\bauthor{\bsnm{Rival}, \binits{X.}}:
\bctitle{Abstraction of arrays based on non contiguous partitions}.
In: \bbtitle{Proc. of {VMCAI}},
pp. \bfpage{282}--\blpage{299}
(\byear{2015})
\end{bchapter}
\endbibitem

%%% 16
\bibitem{ArrayCousotCL11}
\begin{bchapter}
\bauthor{\bsnm{Cousot}, \binits{P.}},
\bauthor{\bsnm{Cousot}, \binits{R.}},
\bauthor{\bsnm{Logozzo}, \binits{F.}}:
\bctitle{A parametric segmentation functor for fully automatic and scalable
  array content analysis}.
In: \bbtitle{Proc. of {POPL}},
pp. \bfpage{105}--\blpage{118}
(\byear{2011})
\end{bchapter}
\endbibitem

%%% 17
\bibitem{Gulwani}
\begin{bchapter}
\bauthor{\bsnm{Gulwani}, \binits{S.}},
\bauthor{\bsnm{McCloskey}, \binits{B.}},
\bauthor{\bsnm{Tiwari}, \binits{A.}}:
\bctitle{Lifting abstract interpreters to quantified logical domains}.
In: \bbtitle{Proc. of {POPL}},
pp. \bfpage{235}--\blpage{246}
(\byear{2008})
\end{bchapter}
\endbibitem

%%% 18
\bibitem{Srivastava09}
\begin{barticle}
\bauthor{\bsnm{Srivastava}, \binits{S.}},
\bauthor{\bsnm{Gulwani}, \binits{S.}}:
\batitle{Program verification using templates over predicate abstraction}.
\bjtitle{ACM Sigplan Notices}
\bvolume{44}(\bissue{6}),
\bfpage{223}--\blpage{234}
(\byear{2009})
\end{barticle}
\endbibitem

%%% 19
\bibitem{Dirk07}
\begin{bchapter}
\bauthor{\bsnm{Beyer}, \binits{D.}},
\bauthor{\bsnm{Henzinger}, \binits{T.A.}},
\bauthor{\bsnm{Majumdar}, \binits{R.}},
\bauthor{\bsnm{Rybalchenko}, \binits{A.}}:
\bctitle{Invariant synthesis for combined theories}.
In: \bbtitle{Proc. of {VMCAI}},
pp. \bfpage{378}--\blpage{394}
(\byear{2007})
\end{bchapter}
\endbibitem

%%% 20
\bibitem{Jhala}
\begin{bchapter}
\bauthor{\bsnm{Jhala}, \binits{R.}},
\bauthor{\bsnm{McMillan}, \binits{K.L.}}:
\bctitle{Array abstractions from proofs}.
In: \bbtitle{Proc. of {CAV}},
pp. \bfpage{193}--\blpage{206}
(\byear{2007})
\end{bchapter}
\endbibitem

%%% 21
\bibitem{houdini}
\begin{bchapter}
\bauthor{\bsnm{Flanagan}, \binits{C.}},
\bauthor{\bsnm{Leino}, \binits{K.R.M.}}:
\bctitle{Houdini, an annotation assistant for {ESC/Java}}.
In: \bbtitle{Proc. of {FME}},
pp. \bfpage{500}--\blpage{517}
(\byear{2001})
\end{bchapter}
\endbibitem

%%% 22
\bibitem{daikon}
\begin{barticle}
\bauthor{\bsnm{Ernst}, \binits{M.D.}},
\bauthor{\bsnm{Perkins}, \binits{J.H.}},
\bauthor{\bsnm{Guo}, \binits{P.J.}},
\bauthor{\bsnm{McCamant}, \binits{S.}},
\bauthor{\bsnm{Pacheco}, \binits{C.}},
\bauthor{\bsnm{Tschantz}, \binits{M.S.}},
\bauthor{\bsnm{Xiao}, \binits{C.}}:
\batitle{The {Daikon} system for dynamic detection of likely invariants}.
\bjtitle{Sci. Comput. Program.}
\bvolume{69}(\bissue{1-3}),
\bfpage{35}--\blpage{45}
(\byear{2007})
\end{barticle}
\endbibitem

%%% 23
\bibitem{clang}
\begin{bchapter}
\bauthor{\bsnm{Lattner}, \binits{C.}},
\bauthor{\bsnm{Adve}, \binits{V.}}:
\bctitle{Llvm: A compilation framework for lifelong program analysis \&
  transformation}.
In: \bbtitle{Proc. of {CGO}},
pp. \bfpage{75}--\blpage{86}
(\byear{2004})
\end{bchapter}
\endbibitem

%%% 24
\bibitem{tacas20}
\begin{bchapter}
\bauthor{\bsnm{Chakraborty}, \binits{S.}},
\bauthor{\bsnm{Gupta}, \binits{A.}},
\bauthor{\bsnm{Unadkat}, \binits{D.}}:
\bctitle{Verifying array manipulating programs with full-program induction}.
In: \bbtitle{Proc. of {TACAS}},
pp. \bfpage{22}--\blpage{39}
(\byear{2020})
\end{bchapter}
\endbibitem

%%% 25
\bibitem{ssa}
\begin{bchapter}
\bauthor{\bsnm{Rosen}, \binits{B.K.}},
\bauthor{\bsnm{Wegman}, \binits{M.N.}},
\bauthor{\bsnm{Zadeck}, \binits{F.K.}}:
\bctitle{Global value numbers and redundant computations}.
In: \bbtitle{Proc. of {POPL}},
pp. \bfpage{12}--\blpage{27}
(\byear{1988})
\end{bchapter}
\endbibitem

%%% 26
\bibitem{arrayssa}
\begin{bchapter}
\bauthor{\bsnm{Knobe}, \binits{K.}},
\bauthor{\bsnm{Sarkar}, \binits{V.}}:
\bctitle{Array {SSA} form and its use in parallelization}.
In: \bbtitle{Proc. of {POPL}},
pp. \bfpage{107}--\blpage{120}
(\byear{1998})
\end{bchapter}
\endbibitem

%%% 27
\bibitem{divyesh-phdthesis}
\begin{botherref}
\oauthor{\bsnm{Unadkat}, \binits{D.P.}}:
{Techniques for Precise and Scalable Verification of Array Programs}.
PhD thesis,
Indian Institute of Technology Bombay
(2022)
\end{botherref}
\endbibitem

%%% 28
\bibitem{ddg1}
\begin{bbook}
\bauthor{\bsnm{Towle}, \binits{R.A.}}:
\bbtitle{Control and Data Dependence for Program Transformations.}
\bpublisher{Ph.D. dissertation, University of Illinois at Urbana-Champaign},
\blocation{USA}
(\byear{1976})
\end{bbook}
\endbibitem

%%% 29
\bibitem{ddg2}
\begin{bbook}
\bauthor{\bsnm{Kuck}, \binits{D.L.}}:
\bbtitle{Structure of Computers and Computations}.
\bpublisher{John Wiley \& Sons, Inc.},
\blocation{USA}
(\byear{1978})
\end{bbook}
\endbibitem

%%% 30
\bibitem{pdg}
\begin{barticle}
\bauthor{\bsnm{Ferrante}, \binits{J.}},
\bauthor{\bsnm{Ottenstein}, \binits{K.J.}},
\bauthor{\bsnm{Warren}, \binits{J.D.}}:
\batitle{The program dependence graph and its use in optimization}.
\bjtitle{TOPLAS}
\bvolume{9}(\bissue{3}),
\bfpage{319}--\blpage{349}
(\byear{1987})
\end{barticle}
\endbibitem

%%% 31
\bibitem{pdg2}
\begin{bchapter}
\bauthor{\bsnm{Horwitz}, \binits{S.}},
\bauthor{\bsnm{Reps}, \binits{T.}}:
\bctitle{The use of program dependence graphs in software engineering}.
In: \bbtitle{ICSE},
pp. \bfpage{392}--\blpage{411}
(\byear{1992})
\end{bchapter}
\endbibitem

%%% 32
\bibitem{kennedy-book}
\begin{bbook}
\bauthor{\bsnm{Kennedy}, \binits{K.}},
\bauthor{\bsnm{Allen}, \binits{J.R.}}:
\bbtitle{Optimizing Compilers for Modern Architectures: A Dependence-based
  Approach},
(\byear{2001})
\end{bbook}
\endbibitem

%%% 33
\bibitem{rank-fun1}
\begin{bchapter}
\bauthor{\bsnm{Dams}, \binits{D.}},
\bauthor{\bsnm{Gerth}, \binits{R.}},
\bauthor{\bsnm{Grumberg}, \binits{O.}}:
\bctitle{A heuristic for the automatic generation of ranking functions}.
In: \bbtitle{Workshop on Advances in Verification},
pp. \bfpage{1}--\blpage{8}
(\byear{2000})
\end{bchapter}
\endbibitem

%%% 34
\bibitem{rank-fun2}
\begin{bchapter}
\bauthor{\bsnm{Col{\'o}n}, \binits{M.A.}},
\bauthor{\bsnm{Sipma}, \binits{H.B.}}:
\bctitle{Synthesis of linear ranking functions}.
In: \bbtitle{Proc. of {TACAS}},
pp. \bfpage{67}--\blpage{81}
(\byear{2001})
\end{bchapter}
\endbibitem

%%% 35
\bibitem{termination}
\begin{bchapter}
\bauthor{\bsnm{Col{\'o}n}, \binits{M.A.}},
\bauthor{\bsnm{Sipma}, \binits{H.B.}}:
\bctitle{Practical methods for proving program termination}.
In: \bbtitle{Proc. of {CAV}},
pp. \bfpage{442}--\blpage{454}
(\byear{2002})
\end{bchapter}
\endbibitem

%%% 36
\bibitem{lexico-cite}
\begin{bchapter}
\bauthor{\bsnm{Podelski}, \binits{A.}},
\bauthor{\bsnm{Rybalchenko}, \binits{A.}}:
\bctitle{A complete method for the synthesis of linear ranking functions}.
In: \bbtitle{Proc. of {VMCAI}},
pp. \bfpage{239}--\blpage{251}
(\byear{2004})
\end{bchapter}
\endbibitem

%%% 37
\bibitem{simporder}
\begin{barticle}
\bauthor{\bsnm{Dershowitz}, \binits{N.}}:
\batitle{Orderings for term-rewriting systems}.
\bjtitle{Theoretical computer science}
\bvolume{17}(\bissue{3}),
\bfpage{279}--\blpage{301}
(\byear{1982})
\end{barticle}
\endbibitem

%%% 38
\bibitem{decomporder}
\begin{barticle}
\bauthor{\bsnm{Lescanne}, \binits{P.}}:
\batitle{Some properties of decomposition ordering, a simplification ordering
  to prove termination of rewriting systems}.
\bjtitle{RAIRO. Informatique th{\'e}orique}
\bvolume{16}(\bissue{4}),
\bfpage{331}--\blpage{347}
(\byear{1982})
\end{barticle}
\endbibitem

%%% 39
\bibitem{diffy-cav21}
\begin{bchapter}
\bauthor{\bsnm{Chakraborty}, \binits{S.}},
\bauthor{\bsnm{Gupta}, \binits{A.}},
\bauthor{\bsnm{Unadkat}, \binits{D.}}:
\bctitle{Diffy: Inductive reasoning of array programs using difference
  invariants}.
In: \bbtitle{Proc. of {CAV}},
pp. \bfpage{911}--\blpage{935}
(\byear{2021})
\end{bchapter}
\endbibitem

%%% 40
\bibitem{vajra-artifact}
\begin{botherref}
\oauthor{\bsnm{Chakraborty}, \binits{S.}},
\oauthor{\bsnm{Gupta}, \binits{A.}},
\oauthor{\bsnm{Unadkat}, \binits{D.}}:
{Verifying Array Manipulating Programs with Full-program Induction - Artifacts
  TACAS 2020}.
figshare \url{https://doi.org/10.6084/m9.figshare.11875428.v1}
(2020)
\end{botherref}
\endbibitem

%%% 41
\bibitem{trace20}
\begin{bchapter}
\bauthor{\bsnm{Georgiou}, \binits{P.}},
\bauthor{\bsnm{Gleiss}, \binits{B.}},
\bauthor{\bsnm{Kov{\'a}cs}, \binits{L.}}:
\bctitle{Trace logic for inductive loop reasoning}.
In: \bbtitle{Proc. of {FMCAD}},
pp. \bfpage{255}--\blpage{263}
(\byear{2020})
\end{bchapter}
\endbibitem

%%% 42
\bibitem{acceleration1}
\begin{bchapter}
\bauthor{\bsnm{Bozga}, \binits{M.}},
\bauthor{\bsnm{Iosif}, \binits{R.}},
\bauthor{\bsnm{Konecn{\'{y}}}, \binits{F.}}:
\bctitle{Fast acceleration of ultimately periodic relations}.
In: \bbtitle{Proc. of {CAV}},
pp. \bfpage{227}--\blpage{242}
(\byear{2010})
\end{bchapter}
\endbibitem

%%% 43
\bibitem{acceleration2}
\begin{bchapter}
\bauthor{\bsnm{Jeannet}, \binits{B.}},
\bauthor{\bsnm{Schrammel}, \binits{P.}},
\bauthor{\bsnm{Sankaranarayanan}, \binits{S.}}:
\bctitle{Abstract acceleration of general linear loops}.
In: \bbtitle{Proc. of {POPL}},
pp. \bfpage{529}--\blpage{540}
(\byear{2014})
\end{bchapter}
\endbibitem

%%% 44
\bibitem{lazyabsarray}
\begin{bchapter}
\bauthor{\bsnm{Alberti}, \binits{F.}},
\bauthor{\bsnm{Bruttomesso}, \binits{R.}},
\bauthor{\bsnm{Ghilardi}, \binits{S.}},
\bauthor{\bsnm{Ranise}, \binits{S.}},
\bauthor{\bsnm{Sharygina}, \binits{N.}}:
\bctitle{Lazy abstraction with interpolants for arrays}.
In: \bbtitle{Proc. of {LPAR}},
pp. \bfpage{46}--\blpage{61}
(\byear{2012})
\end{bchapter}
\endbibitem

%%% 45
\bibitem{Monniaux2015}
\begin{bchapter}
\bauthor{\bsnm{Monniaux}, \binits{D.}},
\bauthor{\bsnm{Alberti}, \binits{F.}}:
\bctitle{A simple abstraction of arrays and maps by program translation}.
In: \bbtitle{Proc. of {SAS}},
pp. \bfpage{217}--\blpage{234}
(\byear{2015})
\end{bchapter}
\endbibitem

%%% 46
\bibitem{viaptheory}
\begin{barticle}
\bauthor{\bsnm{Lin}, \binits{F.}}:
\batitle{A formalization of programs in first-order logic with a discrete
  linear order}.
\bjtitle{Artificial Intelligence}
\bvolume{235},
\bfpage{1}--\blpage{25}
(\byear{2016})
\end{barticle}
\endbibitem

%%% 47
\bibitem{ind-bdds}
\begin{bchapter}
\bauthor{\bsnm{D{\'e}harbe}, \binits{D.}},
\bauthor{\bsnm{Moreira}, \binits{A.M.}}:
\bctitle{Using induction and bdds to model check invariants}.
In: \bbtitle{Advances in Hardware Design and Verification},
pp. \bfpage{203}--\blpage{213}
(\byear{1997})
\end{bchapter}
\endbibitem

%%% 48
\bibitem{ind-wsst}
\begin{bchapter}
\bauthor{\bsnm{Bjesse}, \binits{P.}},
\bauthor{\bsnm{Claessen}, \binits{K.}}:
\bctitle{Sat-based verification without state space traversal}.
In: \bbtitle{FMCAD},
pp. \bfpage{409}--\blpage{426}
(\byear{2000})
\end{bchapter}
\endbibitem

%%% 49
\bibitem{ind-tempo}
\begin{barticle}
\bauthor{\bsnm{E{\'e}n}, \binits{N.}},
\bauthor{\bsnm{S{\"o}rensson}, \binits{N.}}:
\batitle{Temporal induction by incremental sat solving}.
\bjtitle{Electronic Notes in Theoretical Computer Science}
\bvolume{89}(\bissue{4}),
\bfpage{543}--\blpage{560}
(\byear{2003})
\end{barticle}
\endbibitem

%%% 50
\bibitem{ind-tlm}
\begin{bchapter}
\bauthor{\bsnm{Gro{\ss}e}, \binits{D.}},
\bauthor{\bsnm{Le}, \binits{H.M.}},
\bauthor{\bsnm{Drechsler}, \binits{R.}}:
\bctitle{Induction-based formal verification of systemc tlm designs}.
In: \bbtitle{Workshop on Microprocessor Test and Verification},
pp. \bfpage{101}--\blpage{106}
(\byear{2009})
\end{bchapter}
\endbibitem

%%% 51
\bibitem{ind-te}
\begin{bchapter}
\bauthor{\bsnm{Claessen}, \binits{K.}},
\bauthor{\bsnm{Johansson}, \binits{M.}},
\bauthor{\bsnm{Ros{\'e}n}, \binits{D.}},
\bauthor{\bsnm{Smallbone}, \binits{N.}}:
\bctitle{Automating inductive proofs using theory exploration}.
In: \bbtitle{CADE},
pp. \bfpage{392}--\blpage{406}
(\byear{2013})
\end{bchapter}
\endbibitem

%%% 52
\bibitem{ind-smt}
\begin{bchapter}
\bauthor{\bsnm{Reynolds}, \binits{A.}},
\bauthor{\bsnm{Kuncak}, \binits{V.}}:
\bctitle{Induction for smt solvers}.
In: \bbtitle{VMCAI},
pp. \bfpage{80}--\blpage{98}
(\byear{2015})
\end{bchapter}
\endbibitem

%%% 53
\bibitem{ind-hc}
\begin{bchapter}
\bauthor{\bsnm{Unno}, \binits{H.}},
\bauthor{\bsnm{Torii}, \binits{S.}},
\bauthor{\bsnm{Sakamoto}, \binits{H.}}:
\bctitle{Automating induction for solving horn clauses}.
In: \bbtitle{CAV},
pp. \bfpage{571}--\blpage{591}
(\byear{2017})
\end{bchapter}
\endbibitem

%%% 54
\bibitem{kind-rtv}
\begin{bchapter}
\bauthor{\bsnm{De~Moura}, \binits{L.}},
\bauthor{\bsnm{Rue{\ss}}, \binits{H.}},
\bauthor{\bsnm{Sorea}, \binits{M.}}:
\bctitle{Bounded model checking and induction: From refutation to
  verification}.
In: \bbtitle{CAV},
pp. \bfpage{14}--\blpage{26}
(\byear{2003})
\end{bchapter}
\endbibitem

%%% 55
\bibitem{kind-lustre}
\begin{bchapter}
\bauthor{\bsnm{Hagen}, \binits{G.}},
\bauthor{\bsnm{Tinelli}, \binits{C.}}:
\bctitle{Scaling up the formal verification of lustre programs with smt-based
  techniques}.
In: \bbtitle{FMCAD},
pp. \bfpage{1}--\blpage{9}
(\byear{2008})
\end{bchapter}
\endbibitem

%%% 56
\bibitem{kind-race}
\begin{bchapter}
\bauthor{\bsnm{Donaldson}, \binits{A.F.}},
\bauthor{\bsnm{Kroening}, \binits{D.}},
\bauthor{\bsnm{R{\"u}mmer}, \binits{P.}}:
\bctitle{Automatic analysis of scratch-pad memory code for heterogeneous
  multicore processors}.
In: \bbtitle{Proc. of {TACAS}},
pp. \bfpage{280}--\blpage{295}
(\byear{2010})
\end{bchapter}
\endbibitem

%%% 57
\bibitem{kind-par}
\begin{bchapter}
\bauthor{\bsnm{Kahsai}, \binits{T.}},
\bauthor{\bsnm{Tinelli}, \binits{C.}}:
\bctitle{Pkind: A parallel k-induction based model checker}.
In: \bbtitle{PDMC},
pp. \bfpage{55}--\blpage{62}
(\byear{2011})
\end{bchapter}
\endbibitem

%%% 58
\bibitem{kind-comb}
\begin{bchapter}
\bauthor{\bsnm{Donaldson}, \binits{A.F.}},
\bauthor{\bsnm{Haller}, \binits{L.}},
\bauthor{\bsnm{Kroening}, \binits{D.}},
\bauthor{\bsnm{R{\"u}mmer}, \binits{P.}}:
\bctitle{Software verification using k-induction}.
In: \bbtitle{Proc. of {SAS}},
pp. \bfpage{351}--\blpage{368}
(\byear{2011})
\end{bchapter}
\endbibitem

%%% 59
\bibitem{kind-boost}
\begin{bchapter}
\bauthor{\bsnm{Beyer}, \binits{D.}},
\bauthor{\bsnm{Dangl}, \binits{M.}},
\bauthor{\bsnm{Wendler}, \binits{P.}}:
\bctitle{Boosting k-induction with continuously-refined invariants}.
In: \bbtitle{Proc. of {CAV}},
pp. \bfpage{622}--\blpage{640}
(\byear{2015})
\end{bchapter}
\endbibitem

%%% 60
\bibitem{kind-kinvs}
\begin{bchapter}
\bauthor{\bsnm{Brain}, \binits{M.}},
\bauthor{\bsnm{Joshi}, \binits{S.}},
\bauthor{\bsnm{Kroening}, \binits{D.}},
\bauthor{\bsnm{Schrammel}, \binits{P.}}:
\bctitle{Safety verification and refutation by k-invariants and k-induction}.
In: \bbtitle{Proc. of {SAS}},
pp. \bfpage{145}--\blpage{161}
(\byear{2015})
\end{bchapter}
\endbibitem

%%% 61
\bibitem{kind-bmc}
\begin{barticle}
\bauthor{\bsnm{Gadelha}, \binits{M.Y.}},
\bauthor{\bsnm{Ismail}, \binits{H.I.}},
\bauthor{\bsnm{Cordeiro}, \binits{L.C.}}:
\batitle{Handling loops in bounded model checking of c programs via
  k-induction}.
\bjtitle{STTT}
\bvolume{19}(\bissue{1}),
\bfpage{97}--\blpage{114}
(\byear{2017})
\end{barticle}
\endbibitem

%%% 62
\bibitem{kind-interpol}
\begin{bchapter}
\bauthor{\bsnm{Krishnan}, \binits{H.G.V.}},
\bauthor{\bsnm{Vizel}, \binits{Y.}},
\bauthor{\bsnm{Ganesh}, \binits{V.}},
\bauthor{\bsnm{Gurfinkel}, \binits{A.}}:
\bctitle{Interpolating strong induction}.
In: \bbtitle{International Conference on Computer Aided Verification},
pp. \bfpage{367}--\blpage{385}
(\byear{2019})
\end{bchapter}
\endbibitem

%%% 63
\bibitem{kind-invinf}
\begin{barticle}
\bauthor{\bsnm{Alhawi}, \binits{O.M.}},
\bauthor{\bsnm{Rocha}, \binits{H.}},
\bauthor{\bsnm{Gadelha}, \binits{M.R.}},
\bauthor{\bsnm{Cordeiro}, \binits{L.C.}},
\bauthor{\bsnm{Batista}, \binits{E.}}:
\batitle{Verification and refutation of c programs based on k-induction and
  invariant inference}.
\bjtitle{STTT}
\bvolume{23}(\bissue{2}),
\bfpage{115}--\blpage{135}
(\byear{2021})
\end{barticle}
\endbibitem

%%% 64
\bibitem{kind-hmc}
\begin{bchapter}
\bauthor{\bsnm{Yu}, \binits{E.}},
\bauthor{\bsnm{Biere}, \binits{A.}},
\bauthor{\bsnm{Heljanko}, \binits{K.}}:
\bctitle{Progress in certifying hardware model checking results}.
In: \bbtitle{Proc. of {CAV}},
pp. \bfpage{363}--\blpage{386}
(\byear{2021})
\end{bchapter}
\endbibitem

%%% 65
\bibitem{lopstr12}
\begin{bchapter}
\bauthor{\bsnm{Seghir}, \binits{M.N.}},
\bauthor{\bsnm{Brain}, \binits{M.}}:
\bctitle{Simplifying the verification of quantified array assertions via code
  transformation}.
In: \bbtitle{Proc. of {LOPSTR}},
pp. \bfpage{194}--\blpage{212}
(\byear{2012})
\end{bchapter}
\endbibitem

%%% 66
\bibitem{squeezing}
\begin{bchapter}
\bauthor{\bsnm{Ish-Shalom}, \binits{O.}},
\bauthor{\bsnm{Itzhaky}, \binits{S.}},
\bauthor{\bsnm{Rinetzky}, \binits{N.}},
\bauthor{\bsnm{Shoham}, \binits{S.}}:
\bctitle{Putting the squeeze on array programs: Loop verification via inductive
  rank reduction}.
In: \bbtitle{Proc. of {VMCAI}},
pp. \bfpage{112}--\blpage{135}
(\byear{2020})
\end{bchapter}
\endbibitem

%%% 67
\bibitem{diffy}
\begin{botherref}
\oauthor{\bsnm{Chakraborty}, \binits{S.}},
\oauthor{\bsnm{Gupta}, \binits{A.}},
\oauthor{\bsnm{Unadkat}, \binits{D.}}:
Diffy: Inductive Reasoning of Array Programs using Difference Invariants.
figshare
(2021).
\doiurl{10.6084/m9.figshare.14509467}
\end{botherref}
\endbibitem

%%% 68
\bibitem{prophecytacas21}
\begin{bchapter}
\bauthor{\bsnm{Mann}, \binits{M.}},
\bauthor{\bsnm{Irfan}, \binits{A.}},
\bauthor{\bsnm{Griggio}, \binits{A.}},
\bauthor{\bsnm{Padon}, \binits{O.}},
\bauthor{\bsnm{Barrett}, \binits{C.}}:
\bctitle{Counterexample-guided prophecy for model checking modulo the theory of
  arrays}.
In: \bbtitle{Proc. of {TACAS}}
(\byear{2021})
\end{bchapter}
\endbibitem

%%% 69
\bibitem{fluid}
\begin{bchapter}
\bauthor{\bsnm{Dillig}, \binits{I.}},
\bauthor{\bsnm{Dillig}, \binits{T.}},
\bauthor{\bsnm{Aiken}, \binits{A.}}:
\bctitle{{Fluid Updates: Beyond Strong vs. Weak Updates}}.
In: \bbtitle{Proc. of {ESOP}},
pp. \bfpage{246}--\blpage{266}
(\byear{2010})
\end{bchapter}
\endbibitem

%%% 70
\bibitem{verifast}
\begin{bchapter}
\bauthor{\bsnm{Jacobs}, \binits{B.}},
\bauthor{\bsnm{Smans}, \binits{J.}},
\bauthor{\bsnm{Philippaerts}, \binits{P.}},
\bauthor{\bsnm{Vogels}, \binits{F.}},
\bauthor{\bsnm{Penninckx}, \binits{W.}},
\bauthor{\bsnm{Piessens}, \binits{F.}}:
\bctitle{{VeriFast}: {A} powerful, sound, predictable, fast verifier for {C}
  and {Java}}.
In: \bbtitle{Proc. of {NFM}},
pp. \bfpage{41}--\blpage{55}
(\byear{2011})
\end{bchapter}
\endbibitem

%%% 71
\bibitem{paige-differencing}
\begin{barticle}
\bauthor{\bsnm{Paige}, \binits{R.}},
\bauthor{\bsnm{Koenig}, \binits{S.}}:
\batitle{Finite differencing of computable expressions}.
\bjtitle{TOPLAS}
\bvolume{4}(\bissue{3}),
\bfpage{402}--\blpage{454}
(\byear{1982})
\end{barticle}
\endbibitem

%%% 72
\bibitem{integration}
\begin{barticle}
\bauthor{\bsnm{Horwitz}, \binits{S.}},
\bauthor{\bsnm{Prins}, \binits{J.}},
\bauthor{\bsnm{Reps}, \binits{T.}}:
\batitle{Integrating noninterfering versions of programs}.
\bjtitle{TOPLAS}
\bvolume{11}(\bissue{3}),
\bfpage{345}--\blpage{387}
(\byear{1989})
\end{barticle}
\endbibitem

%%% 73
\bibitem{dsa-hoare}
\begin{bchapter}
\bauthor{\bsnm{Lahiri}, \binits{S.K.}},
\bauthor{\bsnm{Vaswani}, \binits{K.}},
\bauthor{\bsnm{Hoare}, \binits{C.A.}}:
\bctitle{Differential static analysis: Opportunities, applications, and
  challenges}.
In: \bbtitle{Workshop on Future of Software Engineering Research},
pp. \bfpage{201}--\blpage{204}
(\byear{2010})
\end{bchapter}
\endbibitem

%%% 74
\bibitem{liu-incrementalization}
\begin{barticle}
\bauthor{\bsnm{Liu}, \binits{Y.A.}},
\bauthor{\bsnm{Stoller}, \binits{S.D.}},
\bauthor{\bsnm{Teitelbaum}, \binits{T.}}:
\batitle{Static caching for incremental computation}.
\bjtitle{TOPLAS}
\bvolume{20}(\bissue{3}),
\bfpage{546}--\blpage{585}
(\byear{1998})
\end{barticle}
\endbibitem

%%% 75
\bibitem{liu-optimization}
\begin{barticle}
\bauthor{\bsnm{Liu}, \binits{Y.A.}},
\bauthor{\bsnm{Stoller}, \binits{S.D.}},
\bauthor{\bsnm{Li}, \binits{N.}},
\bauthor{\bsnm{Rothamel}, \binits{T.}}:
\batitle{Optimizing aggregate array computations in loops}.
\bjtitle{TOPLAS}
\bvolume{27}(\bissue{1}),
\bfpage{91}--\blpage{125}
(\byear{2005})
\end{barticle}
\endbibitem

%%% 76
\bibitem{diff-testing}
\begin{bchapter}
\bauthor{\bsnm{Binkley}, \binits{D.W.}}:
\bctitle{Using semantic differencing to reduce the cost of regression testing.}
In: \bbtitle{Proc. of {ICSM}},
pp. \bfpage{41}--\blpage{50}
(\byear{1992})
\end{bchapter}
\endbibitem

%%% 77
\bibitem{ditto07}
\begin{barticle}
\bauthor{\bsnm{Shankar}, \binits{A.}},
\bauthor{\bsnm{Bodik}, \binits{R.}}:
\batitle{Ditto: Automatic incrementalization of data structure invariant checks
  (in java)}.
\bjtitle{ACM SIGPLAN Notices}
\bvolume{42}(\bissue{6}),
\bfpage{310}--\blpage{319}
(\byear{2007})
\end{barticle}
\endbibitem

%%% 78
\bibitem{symdiff}
\begin{bchapter}
\bauthor{\bsnm{Lahiri}, \binits{S.K.}},
\bauthor{\bsnm{Hawblitzel}, \binits{C.}},
\bauthor{\bsnm{Kawaguchi}, \binits{M.}},
\bauthor{\bsnm{Reb{\^e}lo}, \binits{H.}}:
\bctitle{Symdiff: A language-agnostic semantic diff tool for imperative
  programs}.
In: \bbtitle{Proc. of {CAV}},
pp. \bfpage{712}--\blpage{717}
(\byear{2012})
\end{bchapter}
\endbibitem

\end{thebibliography}

\end{document}